\definecolor{blueviolet}{RGB}{60,50,200}
\definecolor{oliveg}{RGB}{40,200,30}
\newtheorem{theorem}{Theorem}[section]
\theoremstyle{definition}
\newtheorem{definition}[theorem]{Definition}
\newtheorem{proposition}[theorem]{Proposition}
\newtheorem{lemma}[theorem]{Lemma}
\newtheorem{fact}[theorem]{Fact}
\newtheorem{corollary}[theorem]{Corollary}
\newtheorem{claim}[theorem]{Claim}
\newtheorem{remark}[theorem]{Remark}
\newenvironment{breakablealgorithm}
{% \begin{breakablealgorithm}
	% \begin{center}
	\refstepcounter{algorithm}% New algorithm
	\hrule height.8pt depth0pt \kern2pt% \@fs@pre for \@fs@ruled
	\renewcommand{\caption}[2][\relax]{% Make a new \caption
		{\raggedright\textbf{\fname@algorithm~\thealgorithm} ##2\par}%
		\ifx\relax##1\relax % #1 is \relax
		\addcontentsline{loa}{algorithm}{\protect\numberline{\thealgorithm}##2}%
		\else % #1 is not \relax
		\addcontentsline{loa}{algorithm}{\protect\numberline{\thealgorithm}##1}%
		\fi
		\kern2pt\hrule\kern2pt
	}
}{% \end{breakablealgorithm}
	\kern2pt\hrule\relax% \@fs@post for \@fs@ruled
	% \end{center}
}
\renewcommand{\Comment}[1]{\Statex {\begin{center}\textcolor{Gray}{/* #1  */} \end{center}}}
\algrenewcommand\alglinenumber[1]{\footnotesize #1.}
\renewcommand{\bar}{\overline}
\newcommand{\comment}[1]{}
\newcommand{\paren}[1]{\left(#1\right)}
\newcommand{\brac}[1]{\left[#1\right]}
\newcommand{\set}[1]{\left\{#1\right\}}
\newcommand{\ceil}[1]{\left\lceil #1 \right\rceil}
\newcommand{\floor}[1]{\left\lfloor #1 \right\rfloor}
\newcommand{\floorceil}[1]{\left\lfloor #1 \right\rceil}
\newcommand{\abs}[1]{\left\lvert#1\right\rvert}
\newcommand{\vecw}{{\mathbf{w}}}
\newcommand{\vecx}{{\mathbf{x}}}
\newcommand{\vecy}{{\mathbf{y}}}
\newcommand{\vecz}{{\mathbf{z}}}
\newcommand{\F}{\mathbb{F}}
\newcommand{\K}{\mathbb{K}}
\newcommand{\N}{\mathbb{N}}
\newcommand{\V}{\mathbb{V}}
\newcommand{\Z}{\mathbb{Z}}
\newcommand{\cJ}{\mathcal{J}}
\newcommand{\cM}{\mathcal{M}}
\newcommand{\cP}{\mathcal{P}}
\newcommand{\cS}{\mathcal{S}}
\newcommand{\cT}{\mathcal{T}}
\newcommand{\cV}{\mathcal{V}}
\newcommand{\cX}{\mathcal{X}}
\newcommand{\rank}{\mathrm{rank}}
\newcommand{\spacespanned}[1]{\left\langle#1\right\rangle}
\newcommand{\VNP}{\mathsf{VNP}}
\newcommand{\der}[2]{\frac{\partial{#1}}{\partial{#2}}}
\newcommand{\poly}{\textnormal{poly}}
\newcommand{\SP}{\mathsf{SP}}
\newcommand{\APP}{\mathsf{APP}}
\newcommand{\size}{\textnormal{size}}
\newcommand{\fract}[1]{\set{#1}}
\newcommand{\degseq}{\text{\sc Deg-seq}}
\newcommand{\funK}{\text{\sc Upt-K}}
\newcommand{\funJ}{\cJ}
\newcommand{\entropy}{\mathsf{residue}}
\newcommand{\unbiased}{\emph{-unbiased}}
\newcommand{\partialf}{{\mathbf{\boldsymbol\partial}}}
\newcommand{\leaves}{\textnormal{leaves}}
\newcommand{\canon}{\textnormal{can}}
\newcommand{\encoding}{\textnormal{encoding}}
\newcommand{\swap}{\textnormal{swap}}
\newcommand{\Fsm}{{\mathbb{F}}_{sm}}
\newcommand{\SkewP}{\mathsf{SkewP}}
\newcommand{\evalDim}{\mathsf{evalDim}}
\newcommand{\relrk}{\mathsf{relrk}}
\newcommand{\PD}{\mathsf{PD}}
\newcommand{\FF}{\mathbb{F}}
\newcommand{\VV}{\mathbb{V}}
\newcommand{\inbrace}[1]{\left \{ #1 \right \}}
\definecolor{DSgray}{cmyk}{0,0,0,0.7}
\newcommand\numberthis{\addtocounter{equation}{1}\tag{\theequation}}
\date{}
\renewcommand\thefootnote{\fnsymbol{2}}
\begin{document}
	\title{Low-depth arithmetic circuit lower bounds via shifted partials}
	\author{Prashanth Amireddy\thanks{Work done while the author was a research fellow at Microsoft Research India.}\\
	    \normalsize{Harvard University}\\
	    \normalsize{{\tt pamireddy@g.harvard.edu}}
	    \and {Ankit Garg}\\
		\normalsize{Microsoft Research India}\\
		\normalsize{\tt{garga@microsoft.com}}
		\and {Neeraj Kayal}\\
		\normalsize{Microsoft Research India}\\
		\normalsize{{\tt neeraka@microsoft.com}}
	    \and {Chandan Saha\thanks{Partially supported by a MATRICS grant of the Science and Engineering Research Board, DST, India.}}\\
		\normalsize{Indian Institute of Science}\\
		\normalsize{{\tt chandan@iisc.ac.in}}
		\and {Bhargav Thankey\thanks{Supported by the Prime Minister's Research Fellowship, India.}}\\
		\normalsize{Indian Institute of Science}\\
		\normalsize{{\tt thankeyd@iisc.ac.in}}}
	\maketitle
	\thispagestyle{empty}
 
        \renewcommand\thefootnote{\textcolor{red}{\arabic{footnote}}}
    
	\begin{abstract}
		We prove superpolynomial lower bounds for low-depth arithmetic circuits using the \emph{shifted partials} measure \cite{GKKS14, Kayal12a} and the \emph{affine projections of partials} measure \cite{GargKS20, KayalNS20}. The recent breakthrough work of Limaye, Srinivasan and Tavenas \cite{lst1} proved these lower bounds by proving lower bounds for low-depth \emph{set-multilinear} circuits. An interesting aspect of our proof is that it does not require conversion of a circuit to a set-multilinear circuit, nor does it involve random restrictions. We are able to  upper bound the measures for homogeneous formulas directly, without going via set-multilinearity. Our lower bounds hold for the iterated matrix multiplication as well as the Nisan-Wigderson design polynomials. We also define a subclass of homogeneous formulas which we call \emph{unique parse tree} (UPT) formulas, and prove superpolynomial lower bounds for these. This generalizes the superpolynomial lower bounds for \emph{regular} formulas \cite{KayalSS14, FournierLMS15}.
	\end{abstract}

        \newpage
	  \thispagestyle{empty}
	\tableofcontents
	\newpage
	\pagenumbering{arabic}
	
	\addtocontents{toc}{\protect\setcounter{tocdepth}{2}}
	\newtheorem{openProblem}[theorem]{Open Problem}

%\tableofcontents

\section{Introduction} \label{sec:intro}

Arithmetic circuits are a natural model for computing polynomials using the basic operations of addition and multiplication. One of the most fundamental questions about arithmetic circuits is about finding a family of explicit polynomials (if they exist) that cannot be computed by polynomial-sized arithmetic circuits. The existence of such explicit polynomials was conjectured by Valiant in 1979 \cite{Valiant79a} and is the famed $\text{VP}$ vs $\text{VNP}$ conjecture. Arithmetic circuit lower bounds are expected to be easier than Boolean circuit lower bounds. Among many reasons, one is due to the phenomenon of depth reduction. Arithmetic circuits can be converted into low-depth circuits preserving the output polynomial and not blowing up the size too much \cite{ValiantSBR83, AgrawalV08, Koiran12, Tavenas15, GKKS16}. Due to this, strong enough lower bounds even for restrictive models of computation like depth-$3$ circuits or homogeneous depth-$4$ circuits can lead to superpolynomial arithmetic circuit lower bounds.

Arithmetic formulas are an important subclass of arithmetic circuits where the out-degree of every gate is at most $1$. For constant-depth, formulas and circuits are polynomially related. Also, all our results deal with formulas. So we will only refer to formulas from here on. We consider (families of) polynomials having degree at most polynomial in $n$, the number  of inputs/variables. One of the first results studying low-depth arithmetic formulas was that of \cite{NisanW97}, who proved lower bounds for homogeneous depth-$3$ formulas. Progress was stalled for a while, and then various lower bounds for homogeneous depth-$4$ formulas were proven in a series of works \cite{Kayal12a, GKKS14, KayalSS14, FournierLMS15, KumarS14a, KayalLSS17, KumarS17}. There was limited progress for higher-depth formulas, and lower bounds remained open even for depth-$5$ formulas. In a recent breakthrough work, \cite{lst1} proved superpolynomial lower bounds for constant-depth arithmetic formulas. Their lower bounds are of the form $n^{\Omega(\log(n)^{c_\Delta})}$ for a constant $0 < c_\Delta < 1$ depending on the depth $\Delta$ of the formula. The following two open problems naturally emerge out of their work.

\begin{openProblem}\label{open:formulaLB}
Prove superpolynomial lower bounds for general arithmetic formulas. This is interesting even for homogeneous arithmetic formulas.
\end{openProblem}

\begin{openProblem}\label{open:expLB}
Prove exponential lower bounds for constant-depth arithmetic formulas. This is interesting even for homogeneous depth-$5$ formulas.
\end{openProblem}

Towards answering Open Problems 1.1 and 1.2, let us examine the lower bound proof in \cite{lst1} at a high level. Their proof has two main steps: First, they reduce the problem of proving lower bounds for low-depth formulas to the problem of proving lower bounds for low-depth \emph{set-multilinear} formulas; set-multilinear formulas are special homogeneous formulas with an underlying partition of the variables into subsets (see Section \ref{sec: homogeneous circuits}). \cite{lst1} calls such reductions `hardness escalation'. Second, they use an interesting adaptation of the rank of the partial derivatives matrix measure \cite{Nisan91} to prove a lower bound for low-depth set-multilinear formulas. They call this measure \textit{relative rank} ($\relrk$). The effectiveness of the $\relrk$ measure crucially depends on a certain `imbalance' between the sizes of the sets used to define set-multilinear polynomials. The proof in \cite{lst1} raises two natural and related questions: \\
	
	\noindent\textbf{Question 1:} Can we skip the hardness escalation, i.e., the set-multilinearization, step? \\
	
	\noindent\textbf{Question 2:} Can we design a measure that exploits some weakness of homogeneous (but not necessarily set-multilinear) formulas directly? \\
	
	\noindent\textbf{Motivations for studying Question 1:} Set-multilinearization comes with an exponential blow up in size -- a homogeneous, depth-$\Delta$ formula computing a set-multilinear polynomial of degree $d$ can be converted to a set-multilinear formula of depth $\Delta$ and size $d^{O(d)}\cdot s$ (see Lemma 12 in \cite{lst1}). So, an exponential lower bound for low-depth set-multilinear formulas does not seem to imply an exponential lower bound for low-depth homogeneous formulas since we are restricted to work with $d \leq \frac{\log n}{\log \log n}$.\footnote{Note that a homogeneous degree-$d$ polynomial can be computed by a depth-$2$ homogeneous formula of size $n^{O(d)}$ (assuming $d \ll n$). So, to obtain an exponential lower bound, $d$ must be $n^{\Omega(1)}$.} Indeed, it is possible to strengthen and refine the argument in \cite{lst1} to get an exponential lower bound for low-depth set-multilinear formulas (Appendix~\ref{sec:sml-lb}). An approach that evades the hardness escalation step and directly works with homogeneous formulas has the potential to avoid the $d^{O(d)}$ loss and give an exponential lower bound for low-depth homogeneous formulas. For instance, a variant of the shifted partials measure \cite{GKKS14, Kayal12a} yields an exponential lower bound for homogeneous depth-$4$ formulas \cite{KayalLSS17, KumarS17}. If we go via the hardness escalation approach, then we only get a quasi-polynomial lower bound for the same model.
	
	Besides, such a direct argument can be used to prove lower bounds for polynomials which are not necessarily set-multilinear or do not have a 'non-trivial' set-multilinear component. \\
 
	\noindent\textbf{Motivations for studying Question 2:} Typical measures used for proving lower bounds for arithmetic circuits include the partial derivatives measure ($\PD$) \cite{NisanW97, ShpilkaW01}, the rank of the partial derivatives matrix measure (a.k.a. evaluation dimension -- in short, $\evalDim$) \cite{Nisan91, Raz06, RazY09}, the shifted partials measure ($\SP$) and its variants \cite{GKKS14, KayalSS14, KayalLSS17}, the affine projections of partials measure ($\APP$) \cite{GargKS20, KayalNS20}, etc. All these measures are defined for \emph{any} polynomial, which is not necessarily set-multilinear.\footnote{To define $\evalDim$ for a polynomial in $\F[\vecx]$, we just need to partition $\vecx$ into two variable sets $\vecy$ and $\vecz$ such that the $\vecy$ variables are used for evaluations, and the remaining variables are $\vecz = \vecx\setminus \vecy$.} Whereas the $\relrk$ measure used in \cite{lst1}, although very effective, is defined only for set-multilinear polynomials. Measures such as $\PD$, $\SP$, and $\APP$ have the geometrically appealing property that they are invariant under the application of invertible linear transformations on the variables. Since low-depth formulas, as well as low-depth homogeneous formulas, are closed under linear transformations, it is natural to look for measures that do not blow up much on applying linear transformations.
	
	Another important motivation for studying Question 2 is towards learning low-depth homogeneous formulas. While the `hardness escalation' paradigm  of reducing to the set-multilinear case works for proving lower bounds, it is not clear how to exploit it to design learning algorithms. Lower bounds for arithmetic circuits are intimately connected to learning \cite{FortnowK09, Volkovich16, KayalS19, GargKS20}. Hence if we have a lower bound measure that directly exploits the weakness of low-depth homogeneous formulas, it opens up the possibility of new learning algorithms for such models.

\subsection{Our Results}

We give a direct lower bound for low-depth homogeneous formulas via the shifted partials measure which was used in the series of works on homogeneous depth-$4$ lower bounds. While our proof also yields lower bounds only in the low degree setting, the hope is that it could potentially lead to a stronger lower bound in the future.

Consider the shifted partials measure: $\SP_{k,\ell}(f) := \dim \langle \vecx^\ell\cdot\partialf^k (f) \rangle$, where $f$ is a polynomial. That is, $\SP_{k,\ell}(f)$ is the dimension of the space spanned by the polynomials obtained on multiplying degree $\ell$ monomials to partial derivatives of $f$ of order $k$. Also, for convenience, let us denote by $M(n,k) := {n + k - 1 \choose k}$ the number of monomials of degree $k$ in $n$ variables. Then note that for a homogeneous polynomial $f$ of degree $d$, 
$$
\SP_{k,\ell}(f) \le \min\{M(n,k) M(n,\ell), M(n, d - k + \ell)\}.
$$
We show that for polynomials computed by low-depth homogeneous formulas, the shifted partials measure with an appropriate setting of $k$ and $\ell$ is substantially smaller than the above upper bound. At the same time, we exhibit explicit `hard' polynomials for which the shifted partials measure is close to the above bound, hence yielding a lower bound.

\begin{theorem}[\textbf{Lower bound for low-depth homogeneous formulas via shifted partials}]
Let $C$ be a homogeneous formula of size $s$ and product-depth $\Delta$ that computes a polynomial of degree $d$ in $n$ variables. Then for an appropriate value of $k$ and $\ell$, 
$$
\SP_{k,\ell}(C) \le \frac{s  \: 2^{O(d)}}{n^{\Omega(d^{2^{1-\Delta}})}} \: \min\{M(n,k) M(n,\ell), M(n, d - k + \ell)\}.
$$
At the same time, there are homogeneous polynomials $f$ of degree $d$ in $n$ variables (e.g. an appropriate projection of iterated matrix multiplication polynomial, Nisan-Wigderson design polynomial etc.) s.t. 
$$
\SP_{k,\ell}(f) \ge 2^{-O(d)} \: \min\{M(n,k) M(n,\ell), M(n, d - k + \ell)\}.
$$
This gives a lower bound of $s \ge \frac{n^{\Omega(d^{2^{1-\Delta}})}}{2^{O(d)}}$ on the size of homogeneous product-depth $\Delta$ formulas computing these polynomials.
\end{theorem}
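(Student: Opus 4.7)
The plan is to upper bound $\SP_{k,\ell}(C)$ and lower bound $\SP_{k,\ell}(f)$ for the explicit targets separately, then divide. The interesting part is the upper bound: I would avoid the set-multilinearization step of \cite{lst1} and argue directly on homogeneous formulas. The first ingredient is a structural lemma saying that any size-$s$, product-depth-$\Delta$ homogeneous formula computing a degree-$d$ polynomial can be written as
\[
C \;=\; \sum_{i=1}^{s'}\, \prod_{j=1}^{t_i} Q_{i,j}, \qquad s' \le s\cdot 2^{O(d)},
\]
in which every product has a ``balanced'' degree profile: concretely, at least one factor $Q_{i,j^\star}$ has $\deg Q_{i,j^\star} \asymp d^{2^{1-\Delta}}$, while no single factor is too large relative to this window. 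I would obtain the decomposition by induction on product-depth, mimicking the Tavenas-style level-windowing argument but keeping the intermediate polynomials homogeneous at every stage rather than re-encoding them set-multilinearly, so that the $d^{O(d)}$ blow-up of set-multilinearization is avoided.

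\textbf{Shifted partials of a single product.} For a single product $\prod_j Q_j$, Leibniz gives $\partial^\beta \bigl(\prod_j Q_j\bigr) \in \spa\bigl\{\prod_j \partial^{\beta_j} Q_j : \sum_j |\beta_j|=k,\ |\beta_j| \le \deg Q_j\bigr\}$. Absorbing the shift $\vecx^\alpha$ into one distinguished factor $Q_{j^\star}$ yields
\[
\SP_{k,\ell}\Bigl(\prod_j Q_j\Bigr) \;\le\; \sum_{(k_j):\, \sum_j k_j = k} M\bigl(n, \ell + d_{j^\star} - k_{j^\star}\bigr) \cdot \prod_{j \ne j^\star} M(n, d_j - k_j).
\]
The number of distributions $(k_j)$ is at most $2^{O(d)}$. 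Allocating $k_j \approx (k/d)\, d_j$ proportionally and absorbing into the window factor $Q_{j^\star}$ from the structural step, a short binomial-ratio computation shows that each summand is smaller than $\min\{M(n,k)M(n,\ell),\, M(n,d-k+\ell)\}$ by a factor of $n^{-\Omega(d^{2^{1-\Delta}})}$, provided $k$ and $\ell$ are tuned so that $\ell$ is polynomially large relative to the window degree (making the shift-absorption trick genuinely beneficial). Summing over the $s'$ products in the decomposition yields the claimed upper bound.

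\textbf{Lower bound on explicit polynomials.} For the second half I would lower bound $\SP_{k,\ell}(f)$ by exhibiting many distinct leading monomials in $\vecx^\ell \cdot \partialf^k(f)$. For the suitable projection of $\IMM_{n,d}$ used in \cite{lst1}, order-$k$ partials select contiguous subproducts of matrix entries whose leading monomials, once multiplied by a degree-$\ell$ shift, already cover a $2^{-O(d)}$ fraction of all degree-$(d-k+\ell)$ monomials. For the Nisan--Wigderson design polynomial, the combinatorial design property (distinct codewords agreeing at few positions) controls monomial collisions in the shifted partials, again yielding $\SP_{k,\ell}(f) \ge 2^{-O(d)} \min\{M(n,k)M(n,\ell),\, M(n,d-k+\ell)\}$. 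Dividing the upper and lower bounds then produces $s \ge n^{\Omega(d^{2^{1-\Delta}})}/2^{O(d)}$.

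\textbf{Main obstacle.} The hardest ingredient is the structural decomposition combined with the optimization in the product step: achieving the exponent $d^{2^{1-\Delta}}$ (rather than the cruder $d^{1/\Delta}$ of naive depth reduction) requires a recursive, product-depth-sensitive separation of ``small-degree'' from ``medium-degree'' gates, carried out purely in terms of factor degrees so that homogeneity is preserved and the $d^{O(d)}$ loss is avoided. I expect most of the technical work to lie in verifying that the multinomial sum in the product bound is genuinely dominated by the proportional split $k_j \approx (k/d)d_j$ once the window-factor condition is imposed, and in correctly choosing $k$ and $\ell$ so that the $\min$ of the two standard upper bounds is realized by the shift-absorption estimate.
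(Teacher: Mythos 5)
Your high-level plan matches the paper's approach: decompose the homogeneous formula directly into a sum of products (avoiding set-multilinearization), bound the shifted partials of each product, and divide by a lower bound for the explicit target. However, there are two genuine gaps in the execution, and the second one is fatal as written.

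First, the structural condition ``at least one factor $Q_{i,j^\star}$ has $\deg Q_{i,j^\star} \asymp d^{2^{1-\Delta}}$'' is too weak. The gain $n^{\Omega(d^{2^{1-\Delta}})}$ in the paper comes from a quantity $\entropy_k(d_1,\dots,d_t) = \tfrac12\min_{k_j\in\Z}\sum_j |k_j - (k/d)d_j|$, which measures the total failure of the proportional split $k_j = (k/d)d_j$ to be integral across \emph{all} factors. A single distinguished factor of the right degree contributes only $O(1)$ to this sum and hence gives only a polynomial gain. The paper's decomposition (Lemma~\ref{lem:low-depth-structural}) must, and does, guarantee \emph{many} factors in the degree window (either $\ge d^{2^{1-\Delta}}$ linear factors, or $\ge d^{2^{1-\delta}}-1$ factors of degree $\approx_2 d^{2^{1-\delta}}$ for some $\delta$), each of which contributes $\Omega(1)$ to the residue.

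Second, and more seriously, the single-product bound
\[
\SP_{k,\ell}\Bigl(\prod_j Q_j\Bigr) \;\le\; \sum_{(k_j)} M\bigl(n, \ell + d_{j^\star} - k_{j^\star}\bigr) \cdot \prod_{j \ne j^\star} M(n, d_j - k_j)
\]
does not give any gain, because $\prod_{j\ne j^\star} M(n, d_j - k_j)$ is \emph{much larger} than $M(n, d - k - (d_{j^\star}-k_{j^\star}))$. If there are $t$ factors each contributing $a_j := d_j - k_j \approx (d-k)/t$, then $\prod_j M(n,a_j) / M(n, \sum_j a_j) \approx t^{\,d-k}$, which for $t = \Theta(\sqrt d)$ is a $d^{\Theta(d)}$ loss --- exactly the loss one wanted to avoid by skipping set-multilinearization. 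The paper's key technical move (Lemma~\ref{lem:der-structure}) is to \emph{not} treat the partials $\partial^{\beta_j} Q_j$ of the lightly-differentiated factors as separate polynomials: instead, for the set $S := \{j : k_j \le (k/d)d_j\}$ one re-assembles $\prod_{j\in S}\partial^{\beta_j}Q_j$ back into a genuine member of $\partialf^{k_0}(\prod_{j\in S}Q_j)$ where $k_0 = \sum_{j\in S}k_j$, whose span has dimension at most $M(n,k_0)$ --- not the product of $M(n, d_j - k_j)$. This re-combination is nontrivial (the assignment $k_j = |\beta_j|$ enters as a constraint) and is carried out by an induction over $|S|$ in Claims~\ref{clm:ders-as-rs} and~\ref{clm:cx}. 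The heavily-differentiated factors $j \notin S$ are absorbed as a single shift of total degree $\ell_0 = \sum_{j\notin S}(d_j - k_j)$, and the constraint $k_0 + \frac{k}{d-k}\ell_0 \le k - \entropy_k(d_1,\dots,d_t)$ is precisely what makes $M(n,k_0)\cdot M(n,\ell+\ell_0)$ smaller than $\min\{M(n,k)M(n,\ell),\, M(n,d-k+\ell)\}$ by the claimed factor. Without the re-combination and the residue-based constraint, the calculation cannot go through.

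A minor point: the paper's decomposition keeps $s' \le s$; the $2^{O(d)}$ in the final bound arises from the $2^t\cdot d^2$ factor in the single-product bound (Lemma~\ref{lem:measure-upper-bd}), not from the decomposition itself. Your placement of the $2^{O(d)}$ loss in the decomposition step would also work, but only if the decomposition were combined with the correct single-product bound.
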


\begin{remark} Here are a few remarks about the above theorem.
\begin{enumerate}
    \item We also show that the above lower bound can be derived using the affine projections of partials measure, or even skewed partials (Lemma \ref{lem:ckt_APP}). 
    \item The above lower bound is slightly better than the lower bound of \cite{lst1}. Instead of the $d^{O(d)}$ loss incurred due to converting homogeneous formulas to set-multilinear formulas, our direct analysis only incurs a $2^{O(d)}$ loss.\footnote{In fact, this loss can be brought down to $2^{O(k)}$, but we ignore this distinction because eventually we set $k = \Theta(d)$ to get the lower bounds.} So, for example, for homogeneous product-depth $2$ formulas, our superpolynomial lower bound continues to hold for a higher degree ($\log^2(n)$ vs $\left(\log(n)/ \log \log(n)\right)^2$ in \cite{lst1}). While the improvement maybe insignificant, this hints at something interesting going on with the direct approach.
    \item The dependence on $\Delta$ is slightly better as compared to \cite{lst1}. But this is not because of the direct approach but due to improvements in the underlying number theoretic calculations. In fact the paper \cite{BhargavDS22} improves these calculations even further but similar to \cite{lst1}, they also work with set-multilinear formulas and not directly with homogeneous formulas.
\end{enumerate}
\end{remark}

Lower bounds for general-depth arithmetic formulas are expected to be easier than arithmetic circuit lower bounds. However, despite several approaches and attempts (e.g. via tensor rank lower bounds \cite{Raz13}), we still do not have superpolynomial arithmetic formula lower bounds. However, there has been some success in proving lower bounds for some natural restricted models (apart from the depth restrictions considered above). For example, \cite{KayalSS14} considered the model of \emph{regular} arithmetic formulas. These are formulas which consist of alternating layers of addition ($+$) and multiplication ($\times$) gates s.t. the fanin of all gates in any fixed layer is the same. This is a natural model and the best known formulas for many interesting polynomial families like determinant, permanent, iterated matrix multiplication etc. are all regular. \cite{KayalSS14} proved a superpolynomial lower bound on the size of regular formulas for an explicit polynomial and later \cite{FournierLMS15} proved a tight lower bound for the iterated matrix multiplication polynomial.

We prove superpolynomial lower bounds for a more general model.\footnote{The model in \cite{KayalSS14, FournierLMS15} allowed non-homogeneity but with the formal degree upper bounded by a small constant times the actual degree. However, we only work with homogeneous formulas.} Consider a model of homogeneous arithmetic formulas consisting of alternating layers of addition ($+$) and multiplication ($\times$) gates s.t. the fanin of all addition gates can be arbitrary but fanin of product gates in any fixed layer is the same. We call these \emph{product-regular}. We prove super-polynomial lower bounds for (homogeneous) product-regular formulas. Previously we did not know of lower bounds for even a much simpler model where the fanins of all the product gates are fixed to $2$.

In fact, we prove lower bounds for an even more general model which we call \emph{Unique Parse Tree} (UPT) formulas. A parse tree of a formula is a tree where for every addition gate, one picks exactly one child and for every product gate, we pick all the children. Then we ``short circuit'' all the addition gates. Parse trees capture the way monomials are generated in a formula. We say that a formula is UPT if all its parse trees are isomorphic. A product-regular formula is clearly UPT. In the theorem below, $IMM_{n,\log n}$ is the iterated multiplication polynomial of degree $\log n$ (see Section \ref{subsec: poly families}).

\begin{theorem}\label{thm:uptLB}
Any UPT formula computing $IMM_{n,\log(n)}$ has size at least $n^{\Omega(\log\log(n))}$. A similar lower bound holds for the Nisan-Wigderson design polynomial.
\end{theorem}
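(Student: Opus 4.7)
The plan is to upper bound the shifted partials measure $\SP_{k,\ell}$ (or the related $\APP$ measure) for any UPT formula and compare with the lower bound on $\SP_{k,\ell}(IMM_{n,\log n})$ established in Theorem 1.1. This parallels the proof of Theorem 1.1, but uses the unique parse-tree shape as the structural handle in place of a bounded product-depth.

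First I would fix a UPT formula $F$ of size $s$ with common parse-tree shape $T$ on $d = \log n$ leaves, and locate a centroid node $v$ in $T$ whose sub-tree $T_v$ has between $d/3$ and $2d/3$ leaves. Since every parse tree of $F$ is isomorphic to $T$, every gate of $F$ sits at a well-defined position in $T$, and factoring $F$ through the gates at position $v$ yields
\[
F \;=\; \sum_{g} A_g \cdot B_g,
\]
where $g$ runs over the at most $s$ gates of $F$ occupying position $v$, $A_g$ is a UPT sub-formula of shape $T \setminus T_v$ (with a hole at $v$), and $B_g$ is a UPT sub-formula of shape $T_v$. Both $A_g$ and $B_g$ have degree strictly less than $d$, and each is itself a UPT sub-formula of size at most $s$.

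Next, I would bound $\SP_{k,\ell}(F)$ using sub-additivity under sums together with a Leibniz-style product rule: writing $\partial^k(A_g B_g) = \sum_{k_1+k_2=k}\binom{k}{k_1}\partial^{k_1}A_g \cdot \partial^{k_2}B_g$ and using that $\dim\{\vecx^\ell \cdot p \cdot q : p \in \partial^{k_1}A_g, q \in \partial^{k_2}B_g\} \le \SP_{k_1,\ell}(A_g)\cdot \dim \partial^{k_2}B_g$, one obtains
\[
\SP_{k,\ell}(F) \;\le\; s \cdot \sum_{k_1+k_2=k} \binom{k}{k_1}\, \max_{g}\, \SP_{k_1,\ell}(A_g) \cdot M(n,\deg B_g - k_2).
\]
Applying the centroid cut recursively to each sub-formula (with $k,\ell$ rescaled proportionally to the remaining sub-degree) for $O(\log d) = O(\log\log n)$ levels, one bottoms out at constant-degree UPT sub-formulas whose shifted partials are trivially tiny. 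A careful accounting, with the combinatorial overhead absorbed into the $2^{O(d)}$ slack available in Theorem 1.1, yields $\SP_{k,\ell}(F) \le n^{-\Omega(\log\log n)} \cdot \min\{M(n,k)M(n,\ell),\, M(n,d-k+\ell)\}$. Combined with $\SP_{k,\ell}(IMM_{n,\log n}) \ge 2^{-O(d)} \cdot \min\{M(n,k)M(n,\ell),\, M(n,d-k+\ell)\}$ from Theorem 1.1 (and the analogous bound for the Nisan--Wigderson design polynomial), this yields $s \ge n^{\Omega(\log\log n)}$.

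The principal obstacle will be controlling the loss in the product rule for $\SP_{k,\ell}$ through the $O(\log\log n)$-deep recursion, since the shifted partials measure is not naturally sub-multiplicative under products. This is likely easier to handle by switching to the skewed-partials or $\APP$ variant, which interacts more cleanly with products; in either case, the shift parameter must be adaptively rescaled level-by-level so that the losses do not compound beyond the target factor $n^{-\Omega(\log\log n)}$. A secondary subtlety is ensuring that a usable centroid cut exists uniformly across all parse-tree shapes $T$ --- including path-like $T$ which has no balanced cut at small depth --- so that every level of the recursion yields comparable sub-structural bounds and the argument is not sensitive to the particular shape $T$ that the UPT formula happens to realize.
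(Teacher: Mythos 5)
Your high-level plan (decompose via the parse tree, upper bound the shifted partials of each piece, compare to the lower bound for $IMM$) matches the paper's framework, but the proposal has a genuine gap: it lacks the residue machinery, which is the mechanism that actually produces the $n^{-\Omega(\log\log n)}$ factor, and the recursive strategy you use in its place does not work.

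Specifically, the paper does \emph{not} recurse on $\SP$ at all. It first applies a \emph{flat} log-product decomposition (Lemma~\ref{lem:log-pdt}): using the right spine of the canonical parse tree (Algorithm~\ref{alg:1}), every UPT formula of degree $d$ can be written as $\sum_{i=1}^s Q_{i,1}\cdots Q_{i,t}$ with $t = \Theta(\log d)$ factors per summand, and — crucially — with the \emph{same} degree sequence $(d_1,\dots,d_t)$ in every summand. This uniformity across summands is the structural payoff of UPT-ness; it lets one pick a \emph{single} order of derivative $k$ tailored to that degree sequence via Algorithm~\ref{alg:2}, and Lemma~\ref{lem:gamma-upt-lb} shows $\entropy_k(d_1,\dots,d_t) = \Omega(\log d)$ for that $k$. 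The general structural lemma about partials of a product (Lemma~\ref{lem:der-structure}) and Lemma~\ref{lem:sp-final-lb} then immediately give a multiplicative loss of $n^{-\Omega(\entropy_k)} = n^{-\Omega(\log d)}$ per summand, and the bound follows. Nothing is iterated on $\SP$.

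Your recursion $\SP_{k,\ell}(F) \le s\sum_{k_1+k_2=k}\binom{k}{k_1}\max_g \SP_{k_1,\ell}(A_g)\cdot M(n,\deg B_g - k_2)$ will not reproduce this. Two concrete problems. First, a single centroid cut gives a two-factor product $A_g\cdot B_g$, and the residue $\entropy_k(\deg A_g, \deg B_g)$ of a two-term degree sequence is trivially at most $1/2$ (pick $k_1,k_2$ independently to be nearest integers). You need $\Theta(\log d)$ simultaneous factors with a carefully chosen $k$ to get residue $\Omega(\log d)$; that accounting does not arise from summing $1/2$ over $O(\log d)$ \emph{levels} of an $\SP$ recursion, because the recursion does not track the joint integrality deficits $\sum_j |k_j - \frac{k}{d} d_j|$ across all factors of one summand. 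Second, each level of your recursion introduces multiplicative overhead of the form $\binom{k}{k_1}$ and $M(n,\deg B_g - k_2)$, and you additionally propose rescaling $k,\ell$ at each level; after $\Theta(\log\log n)$ levels these factors compound with no mechanism to cancel them, and there is no reason the accumulated loss should resolve to the clean $n^{-\Omega(\log\log n)}$ form. Your instinct that $\SP$ ``is not naturally sub-multiplicative under products'' is exactly the reason the paper proves a one-shot containment (Lemma~\ref{lem:der-structure}) for the full product $Q_1\cdots Q_t$ instead of recursing, and introduces $\entropy$ precisely to quantify the gain. The missing idea is the residue concept and the observation that UPT-ness forces a common degree sequence across summands, which lets one choice of $k$ be ``bad'' for all of them at once.
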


\begin{remark} Here are a few remarks about the above theorem.
\begin{enumerate}
\item While (homogeneous) product-regular formulas are restricted to compute polynomials with only certain degrees (e.g. higher product-depth cannot compute prime degrees), (homogeneous) UPT formulas do not suffer from this restriction. For example, see Figure \ref{fig:sub-first} for a UPT formula of product-depth $2$ that computes a degree $3$ polynomial.
\item While this result could possibly also be obtained by defining a similar model in the set-multilinear world, proving a lower bound there and then transporting it back to the homogeneous world, our framework has fewer number of moving parts and hence makes it easier to derive such results.
\end{enumerate}
\end{remark}

\subsection{Techniques and proof overview}

A lot of lower bounds in arithmetic complexity follow the following outline.
\\

\noindent {\bf Step 1: Depth reduction/decomposition.} One first shows that if $f(\vecx)$ is computed by a {\em small} circuit from some restricted subclass of circuits, then there is a corresponding subclass of depth four circuits such that $f(\vecx)$ is also computed by a {\em relatively small} circuit from this subclass\footnote{
   Some major results in the area such as \cite{Raz03, lst1} did not originally proceed via a depth reduction but instead analysed formulas directly. These results can however be restated as first doing a depth reduction and then applying the appropriate measure. 
}. The resulting subclass is of the form:

$$
f(\vecx) = \sum_{i=1}^s \prod_{j=1}^{t_i} Q_{i,j}
$$

Usually there are simple restrictions on the degrees of $Q_{i,j}$'s. For example, they could be upper bounded by some number.
\\

\noindent {\bf Step 2: Employing a suitable set of linear maps.} Let $\F[\vecx]^{=d}$ be the vector space of homogeneous polynomials of degree $d$, 
    $W$ be a suitable vector space, and $\mathrm{Lin}(\F[\vecx]^{=d}, W)$ be the space of linear maps from $\F[\vecx]^{=d}$  to $W$. We choose a suitable set of linear maps $\mathcal{L} \subseteq \mathrm{Lin}(\F[\vecx]^{=d}, W)$. 
    These define a complexity measure
        $$ \mu_{\mathcal{L}}(f) := \dim(\mathcal{L}(f)), \quad \text{where~} \mathcal{L}(f) := \langle \left\{ L(f) \ : \ L \in \mathcal{L} \right\} \rangle. $$

We would like to choose the set of linear maps $\mathcal{L}$ so that they identify some weakness of the terms $\prod_{j=1}^t Q_j$ in the depth-$4$ circuit. That is $\mu_{\mathcal{L}}\left(\prod_{j=1}^t Q_j\right)$ should be much smaller than $\mu_{\mathcal{L}}(f)$ for a generic $f$. For example, if $Q_j$'s are all linear polynomials, then we can choose $\mathcal{L}$ to be the partial derivative maps of order $k$, $\partialf^{k}$. Then $\mu_{\mathcal{L}}\left(\prod_{j=1}^t Q_j\right) \le {t \choose k} \ll {n+k-1 \choose k}$ which is the value for a generic $f$ (for $k \le t/2$). This is the basis of homogeneous depth-$3$ formula lower bounds in \cite{NisanW97}.

For proving lower bounds for bounded bottom fan-in depth-$4$ circuits (i.e. when degree of $Q_j$'s is upper bounded by some number), \cite{GKKS14, Kayal12a} introduced the shifted partials measure and used the linear maps $\mathcal{L} = \vecx^\ell\cdot\partialf^k$. The main insight in their proof was that if we apply a partial derivative of order $k$ on $\prod_{j=1}^t Q_j$ and use the product rule, then at least $t-k$ of the $Q_j$'s remain untouched. This structure can then be exploited by the shifts to get a lower bound. This intuition however completely breaks down for $k \ge t$ (see Appendix \ref{sec:geometric} for more discussion on this). Due to this, progress remain stalled for higher depth arithmetic circuit lower bounds.

\cite{lst1} get around the above obstacle by working with set-multilinear circuits\footnote{
    \cite{lst1} use only a special case of their relative rank framework (with two different set sizes). We only consider this special case here in a slightly different interpretation.} 
which entails working with polynomials over $d$ sets of variables $(\vecx_1,\ldots,\vecx_d)$, $|\vecx_i| = n$. 
Let us use the shorthand $\vecx_S = (\vecx_i)_{i \in S}$. The products they deal with are of the form $\prod_{j=1}^t Q_j(\vecx_{S_j})$ where $S_1, S_2, \ldots, S_t$ form a partition of $[d]$. The set of linear maps they use are $\mathcal{L} = \Pi \circ \partial_{\vecx_A}$ for a subset $A \subseteq [d]$. 
$\Pi$ is a map which sets $n - n_0$ variables in each of the variable sets in $\vecx_{[d]\backslash A}$ to $0$. They observe  (for the appropriate choice of $n_0$) that
%Note that
%$$
%\partial_{\vecx_A} \prod_{j=1}^t Q_j(\vecx_{S_j}) = %\prod_{j=1}^t \partial_{\vecx_{A \cap S_j}} Q_j(\vecx_{S_j})
%$$
%Then we can upper bound
%\begin{align*}
%\mu_{\mathcal{L}}\left(\prod_{j=1}^t Q_j(\vecx_{S_j}) \right) &\le \prod_{j=1}^t \min\left\{n^{|A \cap S_j|}, n_0^{|S_j \backslash A|}\right\} \\
%&\le \frac{\prod_{j=1}^t \sqrt{\{n^{|A \cap S_j|} n_0^{|S_j \backslash A|}}}{2^{\frac{1}{2}\sum_{j=1}^t \text{imbalance}_j}} \\
%&= \frac{n^{|A|/2} n_0^{(d - |A|)/2}}{2^{\frac{1}{2}\sum_{j=1}^t \text{imbalance}_j}}
%\end{align*}
\begin{align*}
\mu_{\mathcal{L}}\left(\prod_{j=1}^t Q_j(\vecx_{S_j}) \right) &\le \frac{n^{|A|}}{2^{\frac{1}{2}\sum_{j=1}^t \text{imbalance}_j}}
\end{align*}

Here $\text{imbalance}_j = ||A \cap S_j|\log(n) - |S_j \backslash A| \log(n_0)|$. For the appropriate choice of $n_0$, a generic set-multilinear $f$ satisfies $\mu_{\mathcal{L}}(f) = n^{|A|}$, so that lower bound (on the number of summands of the low depth circuit) obtained is exponential in the total imbalance 
$\sum_{j=1}^t \text{imbalance}_j$. \cite{lst1} observe that this quantity is {\em somewhat large} for the depth four circuits obtained from various circuit models that they consider.

We observe here that that the core of the argument \cite{lst1} allows us to unravel some structure in partial derivatives of order $k$ applied on $\prod_{j=1}^t Q_j$ for values of $k \gg t$. 
%Let us first briefly see the argument in \cite{lst1}.
%We give an the homogeneous setting. 
%We have a single set of variables $\vecx$. 
We use this  to derive a structure for the partial derivative space of a product $\prod_{j=1}^t Q_j(\vecx)$. Consider a partial derivative operator of order $k$ indexed by a multiset $\alpha$ of size $k$. Then using Leibniz's differentiation rule,
$$
\partial_{\alpha} \prod_{j=1}^t Q_j = \sum_{\alpha_1,\ldots, \alpha_t: \: \sum_{i=1}^t \alpha_i = \alpha} c^{\alpha}_{\alpha_1,\ldots, \alpha_t} \prod_{j=1}^t \partial_{\alpha_j} Q_j
$$
for appropriate constants $c^{\alpha}_{\alpha_1,\ldots, \alpha_t}$'s. In the product $\prod_{j=1}^t \partial_{\alpha_j} Q_j$, we can try to club terms into two groups depending on if the size of $|\alpha_j|$ is small or large. It turns out that the right threshold for $|\alpha_j|$ is $k \ \text{deg}(Q_j)/d$ (i.e. if we divide the order of the derivative proportional to the degrees of the terms). Let 
$$
S := \{j: |\alpha_j| \le k\: \text{deg}(Q_j)/d\}
$$
Define $k_0 := \sum_{j \in S} |\alpha_j|$ and $\ell_0 := \sum_{j \in \bar{S}} (\deg(Q_j) - |\alpha_j|)$. Notice that we can write the product $\prod_{j=1}^t \partial_{\alpha_j} Q_j$ as
$$
P \prod_{j \in S} \partial_{\alpha_j} Q_j
$$
for a degree $\ell_0$ polynomial $P$. Hence $\partial_{\alpha} \prod_{j=1}^t Q_j$ can be written as sums of this form. While it is not immediate (due to the condition on $\alpha_j$'s in $S$), it turns out with a bit more work, one can combine the product of partials into a single partial. 

What can we say about $k_0$ and $\ell_0$? It turns out that the quantity that comes up in calculations is the following: $k_0 + \frac{k}{d-k} \ell_0$ and it satisfies
\begin{align*}
    k_0 + \frac{k}{d-k} \ell_0 \le k.
\end{align*}
Note that $k_0$ is between $0$ and $k$, and $\ell_0$ between $0$ and $d-k$. So the normalization brings $\ell_0$ to the right 'scale'. 
%Now notice that
%\begin{align*}
%k_0 + \frac{k}{d-k} \ell_0 &= \sum_{j \in S} |\alpha_j| + \frac{k}{d-k} \sum_{j \in \bar{S}} (\deg(Q_j) - |\alpha_j|) \\
%&\le \sum_{j \in S} |\alpha_j| + \frac{k}{d-k} \sum_{j \in \bar{S}} \frac{d-k}{k} |\alpha_j| \\
%&= \sum_{j \in [t]} |\alpha_j| = k
%\end{align*}

It turns out we can give a better bound in terms of a quantity we call as residue defined as 
\begin{equation*}
		\entropy_k(d_1,\dots,d_t) := \frac{1}{2} \cdot \min \limits_{k_1,\dots,k_t \in \Z} ~\sum \limits_{j=1}^t \abs{k_j - \frac{k}{d} \cdot d_j}.
\end{equation*}
and having the property that:
\begin{proposition} Let $k_0$ and $\ell_0$ be defined as above. Then
    $$
    k_0 + \frac{k}{d-k} \ell_0 \le k - \entropy_k(d_1,\dots,d_t)
    $$
    where $d_j = \text{deg}(Q_j)$.
\end{proposition}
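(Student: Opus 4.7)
The plan is to test the natural integer tuple $k_j := |\alpha_j|$ in the minimization defining $\entropy_k$. Under this choice, $\sum_j k_j = k = \tfrac{k}{d}\sum_j d_j$, and the partition $(S, \bar S)$ in the hypothesis coincides exactly with the sign pattern of $k_j - \tfrac{k}{d}d_j$ (nonpositive on $S$, positive on $\bar S$). So the positive and negative contributions to $\sum_j \bigl(k_j - \tfrac{k}{d}d_j\bigr)$ cancel perfectly. Setting
$$
X \;:=\; \sum_{j \in S}\!\Bigl(\tfrac{k}{d}d_j - k_j\Bigr) \;=\; \sum_{j \in \bar S}\!\Bigl(k_j - \tfrac{k}{d}d_j\Bigr),
$$
one reads off $\sum_j |k_j - \tfrac{k}{d}d_j| = 2X$, and feasibility of $(k_1,\dots,k_t)$ for the minimum gives $X \ge \entropy_k(d_1,\dots,d_t)$.

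The second step is to express $k_0$ and $\ell_0$ in closed form using $X$. From the two equalities above, $k_0 = \sum_{j \in S} k_j = \tfrac{k}{d}\sum_{j \in S} d_j - X$, and $\ell_0 = \sum_{j \in \bar S}(d_j - k_j) = \tfrac{d-k}{d}\sum_{j \in \bar S} d_j - X$. Plugging these in and using $\sum_{j \in S} d_j + \sum_{j \in \bar S} d_j = d$, a one-line algebraic manipulation (the $X$ terms combine as $1 + \tfrac{k}{d-k} = \tfrac{d}{d-k}$) yields the clean identity
$$
k_0 + \frac{k}{d-k}\,\ell_0 \;=\; k \;-\; \frac{d}{d-k}\,X.
$$

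To finish, assume the nontrivial case $k < d$ (if $k = d$ then $\ell_0 = 0$ automatically and the claim reduces to $k_0 \le k$, which is immediate from $\sum_j k_j = k$). Since $\tfrac{d}{d-k} \ge 1$ and $X \ge \entropy_k(d_1,\dots,d_t) \ge 0$, we may bound $\tfrac{d}{d-k}X \ge X \ge \entropy_k(d_1,\dots,d_t)$, giving the desired inequality. I do not anticipate any real obstacle; the only subtle step is recognizing that the chosen $S$ is precisely the sign-partition, which is what makes the two halves of $X$ equal and turns the absolute-value sum into an algebraic cancellation. Everything else is bookkeeping.
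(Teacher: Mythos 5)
Your proof is correct and follows essentially the same approach as the paper: plug $k_j = |\alpha_j|$ into the minimization defining the residue, observe that the sign pattern of $k_j - \frac{k}{d}d_j$ matches the partition $(S,\bar S)$, and use $\frac{d}{d-k}\ge 1$. You package the argument as the exact identity $k_0 + \frac{k}{d-k}\ell_0 = k - \frac{d}{d-k}X$ followed by the bound $\frac{d}{d-k}X \ge X \ge \entropy_k$, whereas the paper arrives at the same conclusion through a chain of manipulations (dropping the $\frac{d}{d-k}$ factor first, then an add-and-subtract step to convert $\sum_{\bar S}$ into $\frac{1}{2}\sum_{[t]}|\cdot|$), but the algebraic content is identical.
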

We want to spread the derivatives equally among all terms but cannot due to integrality issues. The residue captures this quantitatively and as described below, is what gives us our lower bounds! 

Combined with the above discussion, we get the following structural lemma about the partial derivative space of $\prod_{j=1}^t Q_j$.

\begin{lemma}
$$\spacespanned{\partialf^{k}\paren{Q_1\cdots Q_t}}  \subseteq
	\sum \limits_{\substack{S \subseteq [t], ~k_0 \in [0..k],~\ell_0 \in [0..(d-k)], \\ k_0 + \frac{k}{d-k}\cdot \ell_0 ~\le~ k - \entropy_k(d_1,\dots,d_t)}} \spacespanned{ \vecx^{\ell_0}\cdot\partialf^{k_0}\paren{\prod_{j\in S} Q_j}}.$$
\end{lemma}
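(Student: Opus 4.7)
The plan is to start from Leibniz's rule and group the resulting terms so that each group lands in one of the target subspaces on the right-hand side. Expanding,
\begin{equation*}
\partial_\alpha(Q_1 \cdots Q_t) \;=\; \sum_{(\alpha_j)_j : \sum_j \alpha_j = \alpha} \binom{\alpha}{\alpha_1, \ldots, \alpha_t} \prod_{j=1}^t \partial_{\alpha_j} Q_j ,
\end{equation*}
we would attach to each summand the set $S = S(\alpha) := \{j : |\alpha_j| \le k\,\deg(Q_j)/d\}$ of indices whose derivative share does not exceed the proportional share. Setting $k_0 := \sum_{j \in S}|\alpha_j|$ and $\ell_0 := \sum_{j \in \bar S}(\deg(Q_j) - |\alpha_j|)$, the factor $P := \prod_{j \in \bar S}\partial_{\alpha_j}Q_j$ is a polynomial of degree exactly $\ell_0$, so it already lies in $\vecx^{\ell_0}$.

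The main step is to combine, for each choice of $S$ and each fixed sub-tuple $(\alpha_j)_{j \in \bar S}$, the inner sum over the remaining indices into a single partial derivative of $\prod_{j \in S} Q_j$. Setting $\beta := \alpha - \sum_{j \in \bar S}\alpha_j$ (so $|\beta| = k_0$), if the inner sum were unrestricted---taken over all $(\alpha_j)_{j \in S}$ with $\sum_{j \in S}\alpha_j = \beta$---then Leibniz's rule would identify it, up to a multinomial factor, with $\partial_\beta(\prod_{j \in S}Q_j)$; multiplying by $P$ would then place the group's contribution inside $\spacespanned{\vecx^{\ell_0}\cdot\partialf^{k_0}(\prod_{j \in S}Q_j)}$. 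The subtlety flagged in the statement is the side-condition $|\alpha_j| \le k\,\deg(Q_j)/d$ for $j \in S$, which excludes the ``bad'' summands where some $j \in S$ exceeds its proportional share; those excluded configurations are precisely the ones that belong to groups indexed by strictly smaller sets $S' \subsetneq S$. Our plan for closing this gap is to add back the bad terms (completing the Leibniz sum for the current $S$) and then absorb the added terms by induction on $|S|$, so that the decomposition telescopes across groups. We expect this combining step to be the main obstacle.

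The final step is a direct verification of the residue constraint. Introducing $\delta_j := |\alpha_j| - k\,\deg(Q_j)/d$, we have $\sum_j \delta_j = 0$ with $\delta_j \le 0$ on $S$ and $\delta_j > 0$ on $\bar S$, and a short substitution gives
\begin{equation*}
k_0 + \frac{k}{d-k}\,\ell_0 \;=\; k - \frac{d}{d-k}\sum_{j \in \bar S}\delta_j .
\end{equation*}
Since the tuple $(|\alpha_j|)_j \in \Z^t$ is an eligible candidate in the minimization defining $\entropy_k(d_1,\dots,d_t)$, we have $\sum_j|\delta_j| = 2\sum_{j \in \bar S}\delta_j \ge 2\,\entropy_k(d_1,\dots,d_t)$, and combined with $d/(d-k) \ge 1$, the required inequality $k_0 + \frac{k}{d-k}\ell_0 \le k - \entropy_k(d_1,\dots,d_t)$ follows.
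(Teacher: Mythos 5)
Your approach matches the paper's: group the Leibniz terms by the set $S$ of proportionally small indices and the fixed derivative allocation on $\bar S$, then telescope by induction on $|S|$ to recover a genuine shifted-partial. The paper realizes the ``add back the bad terms and absorb by induction'' plan you sketch by defining, for each $S$ and each map $\widetilde{\kappa}$ fixing the large-derivative allocation on $\bar S$, the partial sum $R_{S,\widetilde{\kappa}}$, proving the grouping identity (Claim 3.3) and the telescoping recurrence $U_{S,\kappa} = R_{S,\widetilde{\kappa}} + \sum_{T\subsetneq S}R_{T,\kappa''\sqcup\widetilde{\kappa}}$ (Claim 3.4), which is exactly the combining step you flag as the main obstacle; the bookkeeping there (tracking how each ``bad'' configuration for $S$ reappears as a ``good'' configuration for a unique smaller $T$ with an extended large-part assignment) is the real technical content, so your proposal is a correct plan but not yet a proof of that step. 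Your residue verification via $\delta_j := |\alpha_j| - \tfrac{k}{d}d_j$, the identity $k_0 + \tfrac{k}{d-k}\ell_0 = k - \tfrac{d}{d-k}\sum_{j\in\bar S}\delta_j$, and $\sum_j|\delta_j| = 2\sum_{j\in\bar S}\delta_j$ is correct and is a slightly cleaner presentation of the same algebra the paper carries out.
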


Now we have the choice to utilize the above structure using an additional set of linear maps. Both shifts and projections give similar lower bounds, so let us explain shifts here. Note that there is an intriguing possibility of getting even better lower bounds (in terms of dependence on degree $d$) using other sets of linear maps! Because of the above structural result, we have
$$
\spacespanned{\vecx^{\ell} \cdot \partialf^{k}\paren{Q_1\cdots Q_t}} \subseteq
	\sum \limits_{\substack{S \subseteq [t], ~k_0 \in [0..k],~\ell_0 \in [0..(d-k)], \\ k_0 + \frac{k}{d-k}\cdot \ell_0 ~\le~ k - \entropy_k(d_1,\dots,d_t)}} \spacespanned{ \vecx^{\ell + \ell_0}\cdot\partialf^{k_0}\paren{\prod_{j\in S} Q_j}}
$$
Thus we can upper bound,
$$\SP_{k, \ell}(\paren{Q_1\cdots Q_t}) \le 2^t\cdot d^{2}\cdot \max \limits_{\substack{k_0, \ell_0 \ge 0 \\ k_0 + \frac{k}{d-k}\cdot \ell_0~\le~k - {\entropy_k(d_1,\dots,d_t)}}} M(n, k_0)\cdot M(n,\ell_0 + \ell)$$
It turns out that $2^t d^2$ is usually not a problem and the calculations yield that
\begin{align*}
&\max \limits_{\substack{k_0, \ell_0 \ge 0 \\ k_0 + \frac{k}{d-k}\cdot \ell_0~\le~k - {\entropy_k(d_1,\dots,d_t)}}} M(n, k_0)\cdot M(n,\ell_0 + \ell) \\ 
&\le \frac{ 2^{O(d)}}{n^{\entropy_k(d_1,\dots,d_t)}} \: \min\{M(n,k) M(n,\ell), M(n, d - k + \ell)\}
\end{align*}

Now to upper bound the shifted partial dimension of polynomials computed by low-depth formulas, we give a decomposition for such formulas into sums of products of polynomials (Lemma \ref{lem:low-depth-structural}) where the degree sequences are carefully chosen so that that the residues can be simultaneously lower bounded for all the terms (Lemma \ref{lem:low-k-gamma}). While in a different context, these calculations do bear similarity with related calculations in \cite{lst1}.
\\

\noindent {\bf Step 3: Lower Bounding $\dim(\mathcal{L}(f))$ for an explicit $f$.} As a last step, one shows that for some explicit candidate hard polynomial (such as the iterated matrix multiplication polynomial $IMM_{n, d}$) that $\dim(\mathcal{L}(f))$ is large and thereby obtains a lower bound. It turns out that the calculations involved in this step are easier for \cite{lst1} and somewhat more tedious for our choice of linear maps $\mathcal{L}$. 
\\

\noindent \textbf{Application to other subclasses of formulas.} We observe here that for the subclass of homogeneous formulas that we call UPT formulas, one can do a depth-reduction to obtain a depth-four formula in which all the summands have the same factorization pattern (i.e. the sequence of degrees of the factors in all the summands is that same) - see Lemma \ref{lem:deg-seq}. We further observe (Lemma \ref{lem:uptResLB}) that for any fixed sequence of degrees, there exists a suitable value of the parameter $k$ such that the residue is {\em not too small}. This gives us the superpolynomial lower bound for UPT formulas as stated in Theorem \ref{thm:uptLB}. \\ 

\noindent\textbf{Challenges to using the $\PD$ and the $\SP$ measures.} Let us remark briefly why it is a bit surprising that we are able to prove low-depth lower bounds via shifted partials. \cite{GesmundoL19, Rez92} showed that the $\PD$ measure of the polynomial $(x_1^2 + \cdots + x_n^2)^\frac{d}{2}$ is the maximum possible when the order of derivatives, $k$, is at most $\frac{d}{2}$. Notice that $(x_1^2 + \cdots + x_n^2)^\frac{d}{2}$ can be computed by a homogeneous depth-$4$ formula of size $O(nd)$. So, it is not possible to prove super-polynomial lower bounds for low-depth homogeneous formulas using the $\PD$ measure as it is. One may ask if the $\SP$ measure also has a similar limitation.
  
    Some of the finer separation results in \cite{KumarS14, KumarS19} indicate that the $\SP$ measure (and some of its variants) can be fairly large for homogeneous depth-$4$ and depth-$5$ formulas for some particular choices of the order of the derivatives $k$. Also, the exponential lower bounds for homogeneous depth-4 circuits in \cite{KayalLSS17, KumarS17} use random restrictions along with a variant of the $\SP$ measure. It is not clear how to leverage random restrictions for homogeneous depth-5 circuits -- this is also pointed out in the Appendix of \cite{lst1}. 
    
    Fortunately, \cite{KumarS14, KumarS19} do not rule out the possibility of using $\SP$ for all choices of parameters, like, say, $k \approx \frac{d}{2}$, to prove lower bounds for low-depth homogeneous formulas. But, the original intuition from algebraic geometry that led to the development of the $\SP$ measure (see \cite{GKKS14} Section 2.1) breaks down when $k$ is so large (see Appendix \ref{sec:geometric}). Despite these apparent hurdles, and to our surprise, we overcome these challenges and are able to use $\SP$ with $k \approx \frac{d}{2}$ to prove super-polynomial lower bounds for low-depth homogeneous formulas! To the best of our knowledge, no previous work uses $\SP$ with this high a value of $k$. 

\subsection{Prior Work}
We now give a brief account of some of the known arithmetic circuit lower bounds that are related to our work. \\

\noindent \textbf{General circuits and formulas.} Not much is known about lower bounds for general arithmetic circuits and formulas computing explicit polynomials. Baur and Strassen \cite{BaurS83, S73} proved that any arithmetic circuit computing the power symmetric polynomial ($\mathrm{PSym}_{n,d} := x_1^d + \cdots + x_n^d$) or the elementary symmetric polynomial ($\mathrm{ESym}_{n,d} := \sum_{\substack{S \subseteq [n]: \\ |S| = d}} \prod_{i \in S} x_i$) must have size $\Omega(n\log d)$. There has been no improvement on this bound and their result continues to be the best-known lower bound for general arithmetic circuits. 

The best-known lower bound for general arithmetic formulas is quadratic. \cite{Kalorkoti85} proved that any arithmetic formula computing the polynomial $\sum_{i,j \in [n]}x_i^j y_j$ must have size $\Omega(n^2)$. Recently, \cite{ChatterjeeKSV22} proved an $\Omega(n^2)$ lower bound for arithmetic formulas computing $\mathrm{ESym}_{n, 0.1n}$. They also showed that any `layered' Algebraic Branching Program (ABP) computing $\mathrm{PSym}_{n,n}$ has size $\Omega(n^2)$. ABPs are algebraic analogues of (Boolean) branching programs and, as a model of computation, are known to be at least as powerful as formulas. %i.e. a polynomial that can be computed by a size-$s$ formula can also be computed by an ABP of size at most $s$. 
Because of the apparent difficulty of proving lower bounds for general models of computation, restricted classes of circuits like multilinear, homogeneous, and low-depth circuits have received a lot of attention in the last few decades. We now discuss a few results for these models. \\

\noindent \textbf{Multilinear and set-multilinear circuits.} A circuit or formula is said to be multilinear if every gate in it computes a multilinear polynomial. %\footnote{In a multilinear polynomial, the \textit{individual} degree of every variable is at most 1.}Some of the most commonly studied polynomials  in algebraic complexity theory, like the permanent, determinant, the iterated matrix multiplication polynomial, the elementary symmetric polynomials, the Nisan-Wigderson design polynomial etc. are multilinear. So it is natural to attempt to understand the complexity of multilinear circuits and formulas computing these polynomials. 
\cite{RazSY08} showed a lower bound of $\Omega(n^{4/3}/\log^2 n)$ for syntactically multilinear circuits. This lower bound was improved to an $\Omega(n^2/\log^2 n)$ bound in \cite{AlonKV18}. Unlike the case of general circuits and formulas, a super-polynomial separation is known between multilinear circuits and formulas. \cite{Raz06, RazY08} proved that there exists a polynomial computable by polynomial-size multilinear circuits but can only be computed by multilinear formulas of size $n^{\Omega({\log n})}$. \cite{DvirMPY12} showed a similar lower bound but for a polynomial computable by a polynomial-size multilinear ABP. 

Exponential lower bounds are known for low-depth multilinear circuits. A lower bound of $2^{n^{\Omega(1/\Delta})}$ for multilinear circuits of product-depth $\Delta = o(\log n/\log \log n)$ computing the $n \times n$ permanent and determinant was shown in \cite{RazY09}. \cite{ChillaraL019} proved a lower bound of $2^{\Omega(\Delta d^{1/\Delta})}$ for multilinear circuits of product-depth at most $\Delta \leq \log d$ computing $IMM_{2,d}$. A quasi-polynomial separation between product-depth $\Delta$ multilinear circuits and  product-depth $\Delta+1$ multilinear circuits was proved in \cite{RazY09} and improved to an exponential separation in \cite{ChillaraEL018}.

Notice that the lower bounds mentioned in the previous paragraph are of the form $n^{O(1)}\cdot f(d)$ where $f(d)$ is a superpolynomial but sub-exponential function of the degree. Borrowing terminology from parameterised complexity, \cite{lst1} calls such lower bounds FPT lower bounds. As pointed out in \cite{lst1}, it is unclear if FPT bounds can be used to prove lower bounds for low-depth circuits. \cite{lst1} and later \cite{BhargavDS22} prove a non-FPT  lower bound of $n^{d^{\exp(-\Delta)}}$ for set-multilinear formulas of product-depth $\Delta$ computing $IMM_{n,d}$ when $d = O(\log n)$. These lower bounds are then used to prove super-polynomial lower bounds for low-depth circuits. In \cite{lst2}, a non-FPT lower bound of $(\log n)^{\Omega(\Delta d^{1/\Delta})}$ is proved for set-multilinear formulas of product-depth $\Delta = O(\log d)$ computing $IMM_{n,d}$. They also prove a lower bound of $(\log n)^{\Omega(\log d)}$ for any set-multilinear circuit computing $IMM_{n,d}$. Recently, \cite{KushS22} proved a lower bound of $n^{\Omega\paren{n^{1/\Delta}/\Delta}}$ for set-multilinear formulas of product-depth $\Delta$ computing the Nisan-Wigderson design polynomial. \\

\noindent \textbf{Homogeneous and low-depth circuits.} \cite{ShoupS97, Raz10} proved a lower bound of $\Omega(\Delta n^{1 + 1/\Delta})$ for depth $\Delta$ circuits with multiple output gates. In a classic work \cite{NisanW97}, Nisan and Wigderson showed that any homogeneous depth 3 circuit computing $\mathrm{ESym}_{n,d}$ has size $n^{\Omega(d)}$. A series of papers \cite{Kayal12a, GKKS14, KayalSS14, FournierLMS15, KayalLSS17, KumarS17} resulted in an $n^{\Omega(\sqrt{d})}$ lower bound for homogeneous depth 4 circuits computing the Nisan-Wigderson design polynomial and $IMM_{n,d}$. 

\cite{ShpilkaW01} proved a quadratic lower bound for depth 3 circuits computing elementary symmetric polynomials of degree $\Omega(n)$. This was improved to an almost cubic lower bound in \cite{KayalST16} for a polynomial in VNP. Subsequently, \cite{BalajiLS16, Yau16} proved similar lower bounds for polynomials in VP. \cite{GuptaST20} obtained a lower bound of $\widetilde{\Omega}(n^{2.5})$ for depth 4 circuits computing the Nisan-Wigderson design polynomial. As mentioned before, in a recent breakthrough work \cite{lst1}, Limaye, Srinivasan, and Tavenas proved a lower bound of $n^{\Omega(d^{1/(2^\Delta - 1)}/\Delta)}$ for product-depth $\Delta$ circuits computing $IMM_{n,d}$, $d = O(\log n)$. \cite{BhargavDS22} improved this to a lower bound of $n^{\Omega(d^{1/\phi^{2\Delta}}/\Delta)}$ where $\phi = (\sqrt{5} + 1)/2 \approx 1.618$. \\

\noindent\textbf{Circuits with a few parse trees.}  Circuits with a bounded number of parse trees have been studied before in the non-commutative setting \cite{LagardeMP19, lls19}. \cite{lls19} proved a lower bound of $n^{d^{1/4}}$ for non-commutative circuits having at most $2^{d^{1/4}}$ parse trees. This was an improvement on \cite{LagardeMP19}, which proved a lower bound of $2^{\Omega(n)}$ for non-commutative UPT circuits computing the $n \times n$ permanent and determinant. 

The UPT formulas that we study in this work are also related to regular formulas considered in \cite{KayalSS14, FournierLMS15}. A formula is said to be regular if it has alternating levels of addition and multiplication gates and all gates at the same level have the same fan-in. Recall that we call a formula product-regular if the fan-ins of the addition gates in a formula are arbitrary, but the multiplication gates at the same level are restricted to having the same fan-ins. It is easy to see that UPT formulas are a generalization of homogeneous product-regular formulas. \cite{KayalSS14} obtains a lower bound of $n^{\Omega(\log n)}$ for regular formulas computing a polynomial in VNP. \cite{FournierLMS15} later obtained a lower bound of $n^{\Omega(\log d)}$ for regular formulas computing $IMM_{n,d}$. \\

\noindent {\bf Organization.} After describing preliminaries in Section~\ref{sec:prelim}, we present a structural theorem about the partial derivative space of a product of homogeneous polynomials in Section~\ref{sec:der-structural}. This result is then directly used to upper bound both the SP and APP measures of a product of polynomials. Using this result and a decomposition result for low-depth formulas, we obtain lower bounds for low-depth formulas in Section~\ref{sec:low-depth-lb}. Finally, we prove lower bounds for UPT formulas in Section~\ref{sec:upt-lb}.

	\section{Preliminaries} 
\label{sec:prelim}

In this section, we describe the algebraic models we are interested in, and the complexity measures and polynomials used for proving lower bounds for those models. We begin by establishing standard notations and terminology.

\subsection{Notations}

\noindent {\bf Basics.} 
%By the notations ``$a := b$'' and ``$a \gets b$'', we mean that $a$ is defined as (or set to) the value/object $b$. 
We will attempt to stick to the following usage of symbols: $C,D$ for circuits; $P,Q$ for polynomials; $i,j$ for indices; $d$ for the degree of a polynomial; $s$ for the size of a formula or a sum-of-products decomposition; $k$ for the order of derivatives and $\ell$ for the order of shifts; $t$ for the number of polynomials; $m$ or $X$ for monic monomials; $\alpha, \tau$ as some real parameters; $\mu, \kappa, \sigma$ for maps; $\cS, \cT$ for spaces (sets) of polynomials; and $\cP,\cT$ for binary trees.

Let $a,b,c$ be real numbers. Then we define the sets $[a..b]:= \set{x \in \Z: x \in [a,b]}$ and $[a]:= [1..a]$. 
For a constant $c\ge 1$ and $b \ge 0$, we say $a \approx_{c} b$ if $a \in \brac{b/c,b}$. We write $a \approx b$ if $a \approx_{c} b$ for some (unspecified) constant $c$. 
All logarithms have base 2 unless specified otherwise. 
We define the integer part of $a$ as $\floor{a}:= \max \set{n \in \Z: n \le a}$, the ceiling of $a$ as $\ceil{a}:= \min \set{n \in \Z: n \ge a}$, the fractional part of $a$ as $\fract{a}:= a - \floor{a}$, the nearest integer of $a$ by $\floorceil{a}$ (i.e., if $\fract{a}\le 1/2$, $\floorceil{a}=\floor{a}$ and otherwise $\floorceil{a}=\ceil{a}$), and the absolute value of $a$ by $\abs{a}$.

The following quantity will be crucially used in the proofs of our lower bounds. Here we think of $d_1,\dots,d_t$ as degrees of certain homogeneous polynomials, $d$ as the degree of the product of those polynomials, and $k$ is the order of partial derivatives used for the complexity measures.

\begin{definition}[$\entropy$]\label{defn:entropy}
	For non-negative integers $d_1,\dots,d_t$ such that $d:= \sum \limits_{i=1}^t d_i \ge 1$ and $k \in [0..(d-1)]$, we define 
	\begin{equation*}
		\entropy_k(d_1,\dots,d_t) := \frac{1}{2} \cdot \min \limits_{k_1,\dots,k_t \in \Z} ~\sum \limits_{i=1}^t \abs{k_i - \frac{k}{d} \cdot d_i}.
	\end{equation*}
    The factor of half has been included in the definition just to make the statements of some of the lemmas in our analysis simple. It is easy to show that $\entropy_k(d_1,\dots,d_t)$ is a real number (but not necessarily an integer) that is at most $\frac{k}{2}$. The minimum is attained when for all $i \in [t]$, $k_i = \floorceil{\frac{k}{d}\cdot d_i}$. When we use $\entropy$ in the analysis of complexity measures, we would also have the following additional constraints that $k_i \ge 0$ and $k_i \le d_i$,  $k_1 + \cdots + k_n = k$, where $k$ shall be the order of derivatives. However, imposing these constraints does not alter the value of $\entropy$ by much, so we omit them.
\end{definition}
 
%\label{defn:basics}~\\
\noindent {\bf Sets and functions.} When some sets $S_1, \dots, S_t$ are pair-wise disjoint, we write their union as $S_1 \sqcup \dots \sqcup S_t$. For a function $\mu : S \to T$ and a subset $A \subseteq S$, we define the multiset $\mu(A) := \set{\mu(x) : x \in A}$. Clearly $|\mu(A)| = |A|$\footnote{For a multiset $B$, $|B|$ denotes its size, i.e. the number of elements in $B$ counted with their respective multiplicities.}. We denote the power set of a set $S$ by $2^S$. We say that a function $\mu: S \to T$ extends $\kappa: A \to T$ if $A \subseteq S$ and for all $x \in A$, $\mu(x)=\kappa(x)$. Let $\mu: S \to T$ and $\kappa: A \to T$ be functions such that $S \cap A = \emptyset$. Then $\mu \sqcup \kappa: S \sqcup A \to T$ is defined by setting $(\mu \sqcup \kappa)(x) = \mu(x)$ for all $x \in S$ and $(\mu \sqcup \kappa)(x) = \kappa(x)$ for all $x \in A$. \\

%the function obtained by taking the union of the relations over $S \sqcup A \times T$ underlying the functions $\alpha$ and $\beta$.\\

\noindent {\bf Binomial coefficients.} For non-negative integers $a,b$, we shall denote the quantity $\binom{a+b-1}{b}$ by $M(a,b)$. Note that $M(a,b)$ is the number of (monic) monomials of degree $b$ over $a$ many variables.
 
The following lemma is useful when dealing with binomial coefficients (see Appendix~\ref{app:lem:binomial-coeffs} for a proof).
 
 \begin{lemma}[{\bf Approximations for $M(a,b)$}]
 \label{lem:binomial-coeffs}
	For positive integers $a\ge b\ge c$ and $d$, we have 
		\begin{enumerate}
			\item $(a/b)^b \le M(a,b) \le (6a/b)^b$,
			\item $(a/2b)^c \le \frac{M(a,b+c)}{M(a,b)} \le (2a/b)^c$,
			\item $\frac{M(c,d)}{M(b,d)} \ge \paren{c/b}^d$.
		\end{enumerate}
\end{lemma}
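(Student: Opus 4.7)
The plan is to expand each binomial coefficient as a product of linear factors and bound the pieces one at a time. For part (1), I would invoke the standard inequalities
$$\left(\frac{n}{k}\right)^k \le \binom{n}{k} \le \left(\frac{en}{k}\right)^k$$
with $n = a+b-1$ and $k = b$. The lower bound is immediate since $(a+b-1)/b \ge a/b$. For the upper bound, the hypothesis $a \ge b$ gives $a+b-1 \le 2a$, so $(e(a+b-1)/b)^b \le (2ea/b)^b \le (6a/b)^b$ because $2e < 6$.

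For part (2), I would observe that the ratio telescopes:
$$\frac{M(a,b+c)}{M(a,b)} \;=\; \frac{\binom{a+b+c-1}{b+c}}{\binom{a+b-1}{b}} \;=\; \prod_{i=1}^{c} \frac{a+b+i-1}{b+i}.$$
Each factor equals $1 + (a-1)/(b+i)$, which is decreasing in $i$, so the maximum factor occurs at $i = 1$ and the minimum at $i = c$. A short cross-multiplication using $a \ge b$ verifies $(a+b)/(b+1) \le 2a/b$, which raised to the power $c$ yields the upper bound. Similarly, the hypothesis $a \ge b \ge c$ gives $(a+b+c-1)/(b+c) \ge a/(2b)$ (this reduces to $2b^2 + 2bc - 2b \ge 0$ after cross-multiplying and using $ac \le ab$), giving the lower bound.

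For part (3), the analogous telescoping produces
$$\frac{M(c,d)}{M(b,d)} \;=\; \prod_{i=0}^{d-1} \frac{c+i}{b+i} \;=\; \left(\frac{c}{b}\right)^d \prod_{i=0}^{d-1} \frac{1 + i/c}{1 + i/b},$$
and since $c \le b$ each factor $(1 + i/c)/(1 + i/b)$ is at least $1$, yielding the claimed bound.

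The entire argument is routine manipulation of binomials, so there is no real obstacle. The only mildly fiddly part is pinning down the small constants $2$ and $6$ in parts (1) and (2); I would handle these by explicit cross-multiplication, relying on the fact that the hypothesis $a \ge b \ge c \ge 1$ provides just enough slack that one does not need more refined estimates like Stirling.
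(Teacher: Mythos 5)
Your proof is correct and follows essentially the same route as the paper's: both expand $M(a,b)$ and the relevant ratios as telescoping products of linear factors and bound each factor, and both obtain the upper bound in part (1) via the classical estimate $\binom{n}{k}\le (en/k)^k$ (equivalently, $k!\ge(k/e)^k$). Your part (2) is slightly more careful than the paper's, which merely asserts the factor bounds; your explicit monotonicity-plus-cross-multiplication argument fills in that routine justification.
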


\noindent {\bf Polynomials, derivatives, and affine projections.} Let $n$ and $n_0$ be positive integers. Define variable sets $\vecx:=\set{x_1,\dots,x_n}$ and $\vecz := \set{z_1,\dots,z_{n_0}}$, where $x_1,\dots,x_n$ and 
$z_1,\dots,z_{n_0}$ are distinct variables. Then $\F[\vecx]$ denotes the set of all multivariate polynomials in $\vecx$-variables over the field $\F$. The degree of a polynomial $P \in \F[\vecx]$ is denoted by $\deg(P)$.
% \label{defn:basics-2}

There is a natural one-to-one correspondence between monic monomials over $\vecx$ and multisets over $\vecx$ obtained by associating the monomial $m = \prod_{i \in [n]}x_i^{e_i}$ with the multiset $X$ containing $e_i$ many copies of $x_i$ for all $i \in [n]$. For a monic monomial $m$, its corresponding multiset $X$, and a $P \in \F[\vecx]$, we define $\partial_m P = \partial_X P \in \F[\vecx]$ to be the polynomial obtained by successively taking partial derivatives with respect to all the elements of $X$ (the order of elements does not matter). For a function  $\mu:A \to \vecx$, we may simply denote $\partial_{\mu(A)}P$ by $\partial_{\mu} P$ if the domain of $\mu$ is clear from the context. We will use the following facts about partial derivatives. 

\begin{proposition}[{\bf Sum and product rules of derivatives}]
    \label{prop:ders}
    Let $k$ be a positive integer and $P,Q, Q_1,\dots,\\ Q_t \in \F[\vecx]$. Suppose $\mu$ is a function from $[k]$ to $\vecx$. Then
    \begin{enumerate}
        \item $\partial_{\mu}(P+Q) = \partial_{\mu} P + \partial_{\mu} Q$,
%        \item $\partial_{x_1}(f \cdot g) = f \cdot \partial_{x_1} g + \partial_{x_1} f \cdot g$
        \item $\partial_{\mu}(Q_1 \cdots Q_t) = \sum_{\substack{\kappa:[t]\to 2^{[k]}\text{~s.t.~}\\ \sqcup_{i \in [t]} \kappa(i) = [k]}}~ \partial_{\mu(\kappa(1))}Q_1 \cdots  \partial_{\mu(\kappa(t))}Q_t$.
    \end{enumerate}
\end{proposition}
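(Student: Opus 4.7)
The plan is to prove both parts by induction on $k$, the size of the domain of $\mu$.

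For the sum rule, the base case $k = 1$ is the linearity of a single-variable partial derivative $\partial_{x_i}$ on $\F[\vecx]$, which is immediate from applying the power rule termwise. For the inductive step, I would decompose $\partial_{\mu} = \partial_{\mu(k)} \circ \partial_{\mu'}$ where $\mu' = \mu|_{[k-1]}$ (using that mixed partials commute on polynomials, verified monomial by monomial), apply the inductive hypothesis to $\partial_{\mu'}(P+Q)$, and then invoke the base case on the outer single-variable derivative.

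For the product rule, the base case $k = 1$ is the classical Leibniz identity $\partial_x(Q_1 \cdots Q_t) = \sum_{j=1}^t Q_1 \cdots (\partial_x Q_j) \cdots Q_t$; this matches the claimed right-hand side because the functions $\kappa : [t] \to 2^{\{1\}}$ with $\bigsqcup_i \kappa(i) = \{1\}$ are in bijection with the choice of a single index $j \in [t]$ (which receives the whole set $\{1\}$, the rest receiving $\emptyset$), and $\partial_{\mu(\emptyset)} Q_i = Q_i$. For the inductive step, I would set $\mu' = \mu|_{[k-1]}$ and apply the inductive hypothesis to get
$$\partial_{\mu'}(Q_1 \cdots Q_t) \;=\; \sum_{\kappa'} \; \prod_{i=1}^t \partial_{\mu'(\kappa'(i))} Q_i,$$
where $\kappa'$ ranges over functions $[t] \to 2^{[k-1]}$ whose images partition $[k-1]$. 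Applying $\partial_{\mu(k)}$ to both sides, using part 1 to push it through the sum and the base case to distribute it across each product, yields an expanded sum indexed by pairs $(\kappa', j)$ with $j \in [t]$ recording which factor absorbs the new index $k$.

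The only substantive step remaining is combinatorial bookkeeping: each function $\kappa : [t] \to 2^{[k]}$ whose images partition $[k]$ decomposes uniquely as such a pair $(\kappa', j)$ via $\kappa'(i) := \kappa(i) \setminus \{k\}$ and $j$ being the unique index with $k \in \kappa(j)$, and conversely. This bijection identifies the expanded sum with the claimed formula. The procedure is entirely formal — at no point does one use anything beyond the definition of differentiation on monomials and the distributive law in $\F[\vecx]$ — so I do not expect any real obstacle; this is essentially the textbook multinomial Leibniz rule rewritten in the paper's set-valued functional notation.
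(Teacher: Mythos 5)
Your proof is correct and is essentially a careful expansion of the paper's one-line remark that the product rule follows from the single-variable Leibniz identity $\partial_{x_j}(P\cdot Q) = P\cdot \partial_{x_j}Q + Q\cdot\partial_{x_j}P$ applied repeatedly; the induction on $k$, the peeling off of $\mu(k)$, and the $(\kappa', j)\leftrightarrow\kappa$ bijection are the natural formalization of that remark. The only implicit ingredients you supply beyond the paper's sketch are (a) commutativity of mixed partials, which is what makes $\partial_{\mu([k])}$ well-defined as a composition, and (b) the $t$-factor single-variable Leibniz rule as your base case rather than the two-factor version cited in the paper --- the former follows from the latter by a straightforward induction on $t$, so this is not a gap.
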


The product rule above can be obtained by repeatedly using the fact that $\partial_{x_j}(P \cdot Q) = P \cdot \partial_{x_j} Q + Q \cdot \partial_{x_j} P$ for appropriate polynomials $P,Q$ and index $j$. \\ %(see \cite{?} for a proof).\\

For a non-negative integer $\ell$, we define 
		$$\vecx^\ell := \set{{x_1}^{e_1}\cdots{x_n}^{e_n} :  e_1,\dots,e_n \in \Z_{\ge 0} \text{~and~} e_1 + \dots + e_n = \ell}.$$
For a non-negative integer $k$ and $P \in \F[\vecx]$, we define
		$$\partialf^k P := \set{\partial_{m} P: m \in \vecx^k}.$$
For $P \in \F[\vecx]$, a map $L:\vecx \to \spacespanned{ \vecz}$ and $\cS \subseteq \F[\vecx]$, we define $\pi_L(P)\in \F[\vecz] $ and $\pi_L(\cS)\subseteq \F[\vecz]$ as
		$$\pi_L(P) := P(L(x_1),\dots,L(x_n)) \text{~and~}$$
		$$\pi_L(\cS) :=\set{\pi_L(P): P\in \cS}.$$

We now present some elementary notions regarding polynomials that are needed to formulate our complexity measures.\\

\noindent {\bf Spaces of polynomials.} For $\cS,\cT \subseteq \F[\vecx]$, we define 
		$$\cS\cdot \cT := \set{P\cdot Q : P \in \cS \text{~ and ~} Q\in \cT} \text{~and~}\cS + \cT := \set{P + Q : P \in \cS \text{~ and ~} Q\in \cT}.$$
For a set of polynomials $\cS \subseteq \F[\vecx]$, we define  its {\em span} as $\spacespanned{ \cS } \subseteq \F[\vecx]$ to be the set of all polynomials which can be expressed as a linear combination of some elements in $\cS$. That is,
		$$\spacespanned{ \cS } := \set{P \in \F[\vecx] : \exists t \ge 0, a_1,\dots,a_t \in \F, \text{ and } P_1,\dots,P_t \in \cS \text{~such that~} P=a_1\cdot P_1 + \dots + a_t\cdot P_t }.$$
For a set of polynomials $\cS \subseteq \F[\vecx]$, its \emph{dimension}, denoted by $\dim \cS$, refers to the maximum number of {\em linearly independent} polynomials in $\cS$. It is easy to observe the following relations.

%Equivalently,
%		$$\dim S = \dim \spacespanned{S} := \min \set{|T| : T \subseteq S \text{~such that~}\spacespanned{S} = \spacespanned{ T }}.$$

\begin{proposition} 
    \label{prop:sub}
    For any two sets of polynomials $\cS,\cT \subseteq \F[\vecx]$,
    \begin{enumerate}
        \item $0 \in \spacespanned{\cS} $,
        \item $\dim \spacespanned{\cS} \le |\cS|$,
        \item If $\cS \subseteq \cT$, then $\dim \spacespanned{\cS} \le \dim \spacespanned{\cT}$,
        \item $ \spacespanned{\cS + \cT} \subseteq  \spacespanned{\cS} +  \spacespanned{\cT}$ and $\dim \spacespanned{\cS + \cT} \le \dim \spacespanned{\cS} + \dim \spacespanned{\cT}$,
        \item $ \dim \spacespanned{\cS\cdot \cT} \le  \dim \spacespanned{\cS}\cdot  \dim \spacespanned{\cT}$. \label{dim-mult}
    \end{enumerate}
    
\end{proposition}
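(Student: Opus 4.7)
The plan is to verify each of the five items directly from the definition of $\spacespanned{\cdot}$ as the set of all finite $\F$-linear combinations (including the empty combination with $t=0$, which equals $0$) of elements of the given set. None of the five items presents a real obstacle; each is a standard linear-algebra fact applied to the ambient $\F$-vector space $\F[\vecx]$, but I will still indicate the main idea for each.

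First, item 1 follows immediately because the definition permits $t=0$, so the empty sum lies in $\spacespanned{\cS}$; alternatively, if $\cS$ is non-empty and $P \in \cS$, then $0 \cdot P = 0 \in \spacespanned{\cS}$. Item 2 is the observation that $\cS$ itself is a spanning set for $\spacespanned{\cS}$, so the maximum size of a linearly independent subset of $\spacespanned{\cS}$ is bounded by $|\cS|$. For item 3, the containment $\cS \subseteq \cT$ ensures that every $\F$-linear combination of elements of $\cS$ is also a combination of elements of $\cT$, yielding $\spacespanned{\cS}\subseteq \spacespanned{\cT}$ and hence the dimension inequality.

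For item 4, a typical element of $\spacespanned{\cS + \cT}$ has the form $\sum_i a_i (P_i + Q_i)$ with $P_i \in \cS$ and $Q_i \in \cT$; rearranging gives $\big(\sum_i a_i P_i\big) + \big(\sum_i a_i Q_i\big) \in \spacespanned{\cS} + \spacespanned{\cT}$, which proves the inclusion. The associated dimension inequality $\dim \spacespanned{\cS + \cT} \le \dim \spacespanned{\cS} + \dim \spacespanned{\cT}$ then follows from the standard subspace fact $\dim(A + B) \le \dim(A) + \dim(B)$, which in turn comes from observing that the union of a basis of $A$ and a basis of $B$ spans $A + B$, combined with item 2.

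Finally, for item 5, I would fix a basis $\set{P_1, \dots, P_a}$ of $\spacespanned{\cS}$ with $a = \dim \spacespanned{\cS}$ and a basis $\set{Q_1, \dots, Q_b}$ of $\spacespanned{\cT}$ with $b = \dim \spacespanned{\cT}$. For any $P \in \cS$ and $Q \in \cT$, writing $P = \sum_i \alpha_i P_i$ and $Q = \sum_j \beta_j Q_j$ gives $P\cdot Q = \sum_{i,j} \alpha_i \beta_j (P_i Q_j)$, so every element of $\cS \cdot \cT$ lies in the span of the finite set $\set{P_i Q_j : i \in [a], j \in [b]}$. Therefore $\spacespanned{\cS \cdot \cT}$ is contained in this span, and applying item 2 yields $\dim \spacespanned{\cS \cdot \cT} \le ab = \dim \spacespanned{\cS} \cdot \dim \spacespanned{\cT}$. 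The only mildly subtle point in the whole proposition is the empty-span convention needed for item 1 when $\cS = \emptyset$, and this is handled transparently by allowing $t=0$ in the definition given in the excerpt.
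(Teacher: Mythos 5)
Your proof is correct. The paper offers no proof of this proposition (it is stated with the remark that the relations are ``easy to observe''), and your argument is exactly the standard one the authors have in mind: the empty-sum convention for item 1, spanning for item 2, span monotonicity for item 3, the rearrangement $\sum_i a_i(P_i+Q_i)=\sum_i a_i P_i+\sum_i a_i Q_i$ plus $\dim(A+B)\le\dim A+\dim B$ for item 4, and the basis-product spanning set $\{P_iQ_j\}$ for item 5. (For item 5, when one of the dimensions is infinite the inequality is vacuous, so fixing finite bases as you do covers the only non-trivial case.)
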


\subsection{Complexity measures}

We can now define the complexity measures for polynomials that we use to prove our lower bounds: the \emph{shifted partials} ($\SP$) measure and the \emph{affine projections of partials} ($\APP$) measure.
We remark here that both these measures (with different parameters)  have been used in the literature prior to our work -- for example, the shifted partials measure in \cite{GKKS14, Kayal12a} and the affine projections of partials in \cite{GargKS20, KayalNS20}.

\begin{definition}[{\bf Complexity measures}] 
	For a polynomial $P \in \F[\vecx]$, non-negative integers $k,\ell$ and $n_0 \in [n]$, we define
	\begin{itemize}
		\item $\SP_{k,\ell}(P) := \dim \spacespanned{ \vecx^\ell\cdot \partialf^k P}$, 
		\item $\APP_{k,n_0}(P) := \max \limits_{L : \vecx \to \spacespanned{\vecz} } \dim \spacespanned{ \pi_L\paren{\partialf^k P} }$.
	\end{itemize}
\end{definition}

Both the above measures are {\em sub-additive}; this can be argued using Proposition~\ref{prop:sub}.

\begin{proposition}[{\bf Sub-additivity of the measures}]
    \label{prop:sub-additive}
    For two polynomials $P,Q \in \F[\vecx]$, field constants $c_1, c_2 \in \F$, and any parameters $k,\ell,n_0$,
    \begin{enumerate}
        \item $\SP_{k,\ell}(c_1\cdot P+c_2\cdot Q) \le \SP_{k,\ell}(P) + \SP_{k,\ell}(Q) $,
        \item $\APP_{k,n_0}(c_1\cdot P+ c_2\cdot Q) \le \APP_{k,n_0}(P) + \APP_{k,n_0}(Q) $.
    \end{enumerate}
\end{proposition}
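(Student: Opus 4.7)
The plan is to establish both parts by reducing them to linearity of the basic operations involved (differentiation, multiplication by shift monomials, and affine projection), and then invoking Proposition~\ref{prop:sub}(4), which states that for sets of polynomials $\cS, \cT$, one has $\dim \spacespanned{\cS + \cT} \le \dim \spacespanned{\cS} + \dim \spacespanned{\cT}$. Everything is essentially a bookkeeping argument; no combinatorial or geometric obstacle appears, so this should be a short proof.

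For part (1), I first observe that differentiation is linear: by the sum rule in Proposition~\ref{prop:ders}, for every monomial $m \in \vecx^k$, we have $\partial_m(c_1 P + c_2 Q) = c_1 \partial_m P + c_2 \partial_m Q$. Multiplying through by a shift monomial $m' \in \vecx^\ell$ preserves this linearity, so every generator of $\vecx^\ell \cdot \partialf^k(c_1 P + c_2 Q)$ lies in $\spacespanned{\vecx^\ell \cdot \partialf^k P} + \spacespanned{\vecx^\ell \cdot \partialf^k Q}$. Taking spans gives the inclusion $\spacespanned{\vecx^\ell \cdot \partialf^k(c_1 P + c_2 Q)} \subseteq \spacespanned{\vecx^\ell \cdot \partialf^k P} + \spacespanned{\vecx^\ell \cdot \partialf^k Q}$, and Proposition~\ref{prop:sub}(4) then yields the claimed bound.

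For part (2), the same linearity argument applies, but one must additionally carry an affine projection $\pi_L$ through the estimates. I would note that $\pi_L$ is an $\F$-linear map on $\F[\vecx]$, so $\pi_L\paren{\partial_m(c_1 P + c_2 Q)} = c_1 \pi_L(\partial_m P) + c_2 \pi_L(\partial_m Q)$. Thus for each fixed $L : \vecx \to \spacespanned{\vecz}$, the space $\spacespanned{\pi_L(\partialf^k(c_1 P + c_2 Q))}$ is contained in $\spacespanned{\pi_L(\partialf^k P)} + \spacespanned{\pi_L(\partialf^k Q)}$, and hence has dimension at most $\APP_{k,n_0}(P) + \APP_{k,n_0}(Q)$ by Proposition~\ref{prop:sub}(4) and the definition of $\APP_{k,n_0}$ as a maximum over $L$. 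Finally, taking the maximum over $L$ on the left-hand side preserves the bound, since it is uniform in $L$.

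The only mild subtlety is in part (2): one must apply the dimension-additivity bound for each $L$ separately (using the ``max'' upper bounds for both summands) before taking the maximum over $L$ on the left, rather than trying to swap a max into a sum. But this is not a real obstacle; once stated correctly, it is immediate.
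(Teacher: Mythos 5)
Your argument is correct and is exactly the approach the paper intends: the paper gives no explicit proof, stating only that sub-additivity ``can be argued using Proposition~\ref{prop:sub},'' and you fill in precisely those details (linearity of $\partial_m$, of multiplication by shift monomials, and of $\pi_L$, followed by Proposition~\ref{prop:sub}(4)). Your remark about handling the max over $L$ in part~(2) is a correct and worthwhile clarification that the paper glosses over.
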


\begin{remark} \label{remark: APP vs SkewP}
    The lower bounds that we prove in this work can also be obtained using the skewed partials measure ($\SkewP$) \cite{KayalNS20}, which is a special case of $\APP$. \cite{KayalNS20} used the $\SkewP$ measure to prove an optimal ``non-FPT''\footnote{Borrowing terminology from \cite{lst1}.} lower bound of $n^{\Omega(d)}$ for multilinear depth-3 circuits computing $IMM_{n,d}$. However, we use the more general $\APP$ measure for several reasons: Firstly, $\APP$ has the geometrically appealing feature that it is invariant under the application of invertible linear transformations on the variables. Secondly, there are models for which $\APP$ gives lower bounds but $\SkewP$ does not (see Section \ref{subsec: app vs skewp}). The third reason is that for reconstruction of circuits using the recently proposed learning from lower bounds framework \cite{KayalS19, GargKS20}, $\APP$ might give weaker non-degeneracy conditions than $\SkewP$. Thus using $\APP$, we might be able to learn more circuits from a circuit class than we can learn using the $\SkewP$ measure.

    Also, there is a close connection between $\APP$ and the relative rank ($\relrk$) measure used in \cite{lst1}: Both of them are variants of $\evalDim$ with the added feature of `imbalance'. It is natural to wonder to what extent the imbalance is required. The $\relrk$ measure works with an imbalance between the sizes of the sets involved in a set-multilinear partition. An imbalance or skew between the sizes of variable sets also appears in $\APP$, albeit at a \emph{gross level}: $\APP$ uses two sets -- one for taking derivatives, the other for affine projections -- and there is an imbalance between the sizes of these two sets. Drawing analogy with $\evalDim$, one may also view these two sets as the variables used for evaluations ($\vecy$) and the remaining variables ($\vecz$). It turns out that (for set-multilinear polynomials) the ``finer'' imbalance used in the $\relrk$ measure implies an imbalance -- at a gross level -- between $\vecy$ and $\vecz$.\footnote{\cite{lst1} talks about the (relative) rank of the partial derivatives matrix. The rank of this matrix is $\evalDim$ with respect to an appropriate set $\vecy$.} One may naturally ask if an imbalance at a gross level, like in $\APP$ and its precursor $\SkewP$, is sufficient to prove lower bounds for low-depth circuits.
\end{remark}

\subsection{Algebraic circuits}

In this section, we describe the relevant algebraic models of computation -- homogeneous circuits and unique-parse-tree formulas. 

\subsubsection{Algebraic circuits and formulas}

	An \emph{algebraic/arithmetic circuit} $C$ is a directed acyclic graph (DAG) whose  source nodes (called input gates) are labelled by variables (say, from a set $\vecx$) or constants from an underlying field $\F$, all other nodes are addition ($+$) or multiplication ($\times$) gates, and edges (called wires) are labeled by field constants. Each gate $g$ in $C$ naturally computes a polynomial in $\F[\vecx]$ and the polynomial computed by the (unique) sink node (called the output gate) is said to be the polynomial computed by $C$. If the underlying DAG is a directed rooted tree, the circuit is said to be a {\em formula}.
	
	Whether $C$ refers to a circuit or the polynomial computed by it is understood from the context. For example, $\size(C)$ refers to the number of gates in the circuit $C$, whereas $\deg(C)$ refers to the degree of the polynomial computed by $C$. 
	The {\em depth} of a circuit is the maximum number of addition and multiplication gates in any path in the underlying DAG, and the {\em product-depth} is the maximum number of multiplication gates in any path.\\

\noindent {\bf Sub-circuits and substitutions.} For a gate $g$ in a circuit $C$, the circuit between\footnote{i.e., the sub-graph of $C$ induced by the nodes that lie on any directed path from a source node to $g$.} the input gates and $g$ is called the sub-circuit at $g$ and is denoted by $C_g$. We will denote by $C_{g \gets y}$ the circuit obtained by replacing $g$ with $y$. A sub-circuit of a formula is called a {\em sub-formula}.

A circuit of `low-depth' can be converted to a formula of the same depth without much blow-up in size. Thus, lower bounds against low-depth formulas also give lower bounds against low-depth circuits. In fact, it is shown in \cite{lst1} that to prove lower bounds against low-depth formulas computing `low' degree polynomials, it suffices to prove lower bounds against low-depth {\em homogeneous formulas} (provided the characteristic of the underlying field is large enough). In \cite{lst1}, the authors prove lower bounds for low-depth {\em set-multilinear} formulas which are even restrictive models. We now define these models. Unlike \cite{lst1}, we show lower bounds on homogeneous formulas directly without converting them to set-multilinear formulas.

\subsubsection{Homogeneous circuits} \label{sec: homogeneous circuits}
 
   A polynomial $P$ is said to be {\em homogeneous} if all its monomials (if any) have the same degree i.e., all its monomials have the same number of variables, counted with repetition. {We consider the zero polynomial to be homogeneous}. A circuit $C$ is said to be \emph{homogeneous} if each gate $g$ in $C$ computes a homogeneous polynomial. A homogeneous formula is defined analogously. Furthermore, we may assume without loss of generality that all the input gates in a homogeneous circuit or formula are labeled by variables.

    If there exists a partition of variables as $\vecx= \vecx_1 \sqcup \dots \sqcup \vecx_d$ such that all the monomials in a polynomial $P \in \F[\vecx]$ have exactly one variable from each of the variable sets, then $P$ is said to be {\em set-multilinear} with respect to the partition $\set{\vecx_1, \dots, \vecx_d}$. 
    We shall denote the set of all set-multilinear polynomials over $\set{\vecx_1, \dots, \vecx_d}$ by $\Fsm[\vecx_1, \dots, \vecx_d]$.
    A circuit is said to be {\em set-multilinear} (with respect to $\{\vecx_1,\dots,\vecx_d\}$) if each gate in it computes a set-multilinear polynomial with respect to a subset of $\{\vecx_1,\dots,\vecx_d\}$. Observe that a set-multilinear circuit is also a homogeneous circuit. 

    An arithmetic circuit of size $s$ computing a homogeneous polynomial of degree $d$ can be converted into a homogeneous circuit of size $\poly(s,d)$ computing the same polynomial \cite{S73a}. An arithmetic formula computing a homogeneous polynomial can also be homogenized; however, the size of the resulting homogeneous formula is $d^{O(d)}\poly(s)$ \cite{Raz13}. Notice that when $d = O(\log s/ \log \log s)$ the homogeneous formula has size $\poly(s)$. Unfortunately, the homogenization process in \cite{Raz13} does not preserve the depth of the formula. It can convert a formula of even constant depth to a homogeneous formula of depth as large as $O(\log s)$. Recently \cite{lst1} showed that a product-depth $\Delta$ formula of size $s$ computing a homogeneous, degree $d$ polynomial over a field of characteristic $0$ or more than $d$ can be converted into an equivalent homogeneous formula of product-depth $2\Delta$ and size $2^{O(\sqrt{d})}\poly(s)$.
    
    Irrespective of the above-mentioned homogenisation results, homogeneous circuits and formulas are a natural model of computation, and also many polynomials for which lower bounds are known are homogeneous. Thus, in this article, we focus on homogeneous circuits and formulas. Further, we work in the regime of `low' depth circuits and formulas computing `low' degree polynomials. Since a depth $\Delta$ circuit of size $s$ can be converted into an equivalent formula of size $s^{O(\Delta)}$, we shall work with homogeneous formulas (as opposed to homogeneous circuits) for the remainder of the article. 
    %We can also assume that the addition and multiplication gates exist in alternating layers of the circuit, although we never use this.

\subsubsection{Unique-parse-tree (UPT) formulas}

Next, we formalize certain notions about rooted trees and define a subclass of homogeneous formulas which we call \emph{UPT formulas}\footnote{Our definition for UPT formulas is more general than the model considered in a recent paper by Limaye, Srinivasan and Tavenas \cite{lst3} as we do not impose set-multilinearity.}. For this, we define parse trees of a homogeneous formula; they capture the structure of multiplication gates in the formula.

\begin{definition}[{\bf Parse trees of a homogeneous formula}]
	Given a homogeneous formula $C$ computing a degree-$d$ polynomial, we obtain a parse tree $\cP$ of $C$ as follows: Let $\widetilde{C}$ be the formula obtained by arbitrarily removing all but one sub-formula feeding into each addition gate. Viewing $\widetilde{C}$ as a rooted directed tree (with edges directed away from the root) with internal nodes being multiplication gates, leaves being variables, and ignoring the addition gates (by bypassing them) as well as the field constants labelling the edges, we get a parse tree $\cP$. We also discard the labelling of all the (multiplication) gates and the input gates  in $\cP$. 
	
	Clearly, there are only a finite number of parse trees corresponding to a given formula $C$ and they all have exactly $d$ many leaves\footnote{unless $C$ computes the 0 polynomial} as $C$ is homogeneous. For an empty formula (i.e., 0), the empty tree is a parse tree. Substituting some variables to 0 does not affect UPT-ness of a formula (the same way as it does not affect homogeneity).
\end{definition}

\begin{definition}[{\bf UPT formula}]
	A homogeneous formula $C$ is said to be a \emph{unique-parse-tree (UPT) formula} if all of its parse trees are isomorphic to each other as directed graphs. 
	
	It is easy to observe that any sub-formula of a UPT formula is also a UPT formula.  Moreover, without increasing the size by much, any UPT formula can be converted into a UPT formula in which all the multiplication gates have fain-in exactly 2 -- in other words, all the parse trees are {\em binary trees}. Henceforth, we will work with this additional structure for UPT formulas.
\end{definition}

Formulas with a bounded number of parse trees have been studied before \cite{lls19} in the non-commutative setting. \cite{lls19} proved an exponential lower bound for non-commutative circuits having at most exponentially many parse trees. While the lower bound that we prove in this work is only against formulas containing one parse tree, it is in the much more powerful commutative setting. UPT formulas are also related to regular formulas considered in \cite{KayalSS14}. A formula is said to be regular if it has alternating levels of addition and multiplication gates, and all gates at the same level have the same fan-in. It is easy to see that UPT formulas capture homogeneous formulas wherein the addition gates at the same level can have different fan-ins, and only the multiplication gates at the same level are restricted to having the same fan-ins. Hence UPT formulas are a generalisation of homogeneous regular formulas.

\begin{figure}
\begin{subfigure}{1\textwidth}
  \centering
  % include first image
  \begin{tikzpicture}[level 1/.style={sibling distance=40mm, level  distance=2cm},level 2/.style={sibling distance=30mm, level  distance=1.5cm} ,level 3/.style={sibling distance=20mm, level  distance=1.5cm}, level 4/.style={sibling distance=10mm, level  distance=1cm}]
        \node[circle, draw] {\textcolor{purple}{$+$}} 
            child {node[circle, draw] {\textcolor{purple}{$\times$} } edge from parent[stealth-]
            child {node[circle, draw] {\textcolor{purple}{$\times$}}
            child {node[rectangle, draw] {\textcolor{purple}{$x_3$}}}
            child {node[rectangle, draw] {\textcolor{purple}{$x_4$}}}}
            child {node[rectangle, draw] {\textcolor{purple}{$x_2$}}} }
            child {node[circle, draw] {\textcolor{purple}{$\times$} }
            edge from parent[stealth-]
            child {node[circle, draw] {\textcolor{purple}{$+$}}
            child {node[rectangle, draw]
            {\textcolor{purple}{$x_1$}}}
            child {node[rectangle, draw]
            {\textcolor{purple}{$x_2$}}}}
            child {node[circle, draw] {\textcolor{purple}{$\times$}}
            child {node[rectangle, draw] {\textcolor{purple}{$x_5$}}}
            child {node[rectangle, draw] {\textcolor{purple}{$x_3$}}}}}
            child {node[circle, draw, thin] {\textcolor{purple}{$\times$}} edge from parent[stealth-, ultra thick]
            child {node[rectangle, draw, thin] {\textcolor{purple}{$x_3$}}}
            child {node[circle, draw, thin] {\textcolor{purple}{$+$}}
            child {node[circle, draw, thin] {\textcolor{purple}{$\times$}}
            child {node[rectangle, draw, thin] {\textcolor{purple}{$x_5$}}}
            child {node[rectangle, draw, thin] {\textcolor{purple}{$x_3$}}}}
            child {node[circle, draw, thin] {\textcolor{purple}{$\times$}}
            edge from parent[stealth-, thin]
            child {node[circle, draw] {\textcolor{purple}{$+$}}
            child {node[rectangle, draw] {\textcolor{purple}{$x_6$}}}
            child {node[rectangle, draw] {\textcolor{purple}{$x_4$}}}}
            child {node[circle, draw] {\textcolor{purple}{$+$}}
            child [missing]
            child {node[rectangle, draw, thin] {\textcolor{purple}{$x_1$}}}}}}};
    \end{tikzpicture} 
  \caption{A UPT formula $C$ -- the dark edges produce a parse tree $\cP_1$}
  \label{fig:sub-first}
\end{subfigure}
\begin{subfigure}{.5\textwidth}
  \centering
  % include second image
  \begin{tikzpicture}
       \node [circle, draw] {} 
       child {node[rectangle, draw] {} edge from parent[-stealth]}
       child {node[circle, draw] {} edge from parent[-stealth]
       child {node[rectangle, draw] {}}
       child {node[rectangle, draw] {}}};
    \end{tikzpicture}
    ~~~~
    \begin{tikzpicture}
       \node [circle, draw] {}
       child {node[circle, draw] {} edge from parent[-stealth]
       child {node[rectangle, draw] {}}
       child {node[rectangle, draw] {}}}
       child {node[rectangle, draw] {} edge from parent[-stealth]} ;
    \end{tikzpicture}
  \caption{Two (isomorphic) parse trees of $C$: $\cP_1$ and $\cP_2$}
  \label{fig:sub-second}
\end{subfigure} 
\begin{subfigure}{.5\textwidth}
  \centering
     \begin{tikzpicture}
       \node [circle, draw] {} 
       child {node[rectangle, draw] {}edge from parent[-stealth]}
       child {node[circle, draw] {} edge from parent[-stealth]
       child {node[rectangle, draw] {}}
       child {node[rectangle, draw] {}}};
    \end{tikzpicture}
\caption{The canonical parse tree $\cT(C)$}
  \label{fig:sub-second}
\end{subfigure} 
    \caption{A UPT formula, its parse trees and canonical parse tree}
    \label{fig:upt}
\end{figure}
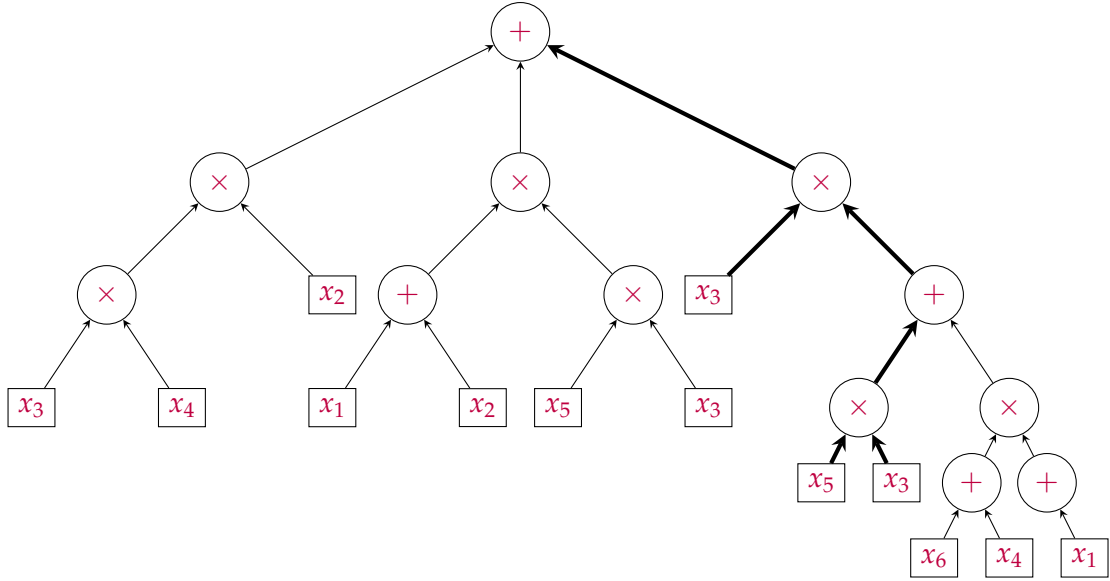
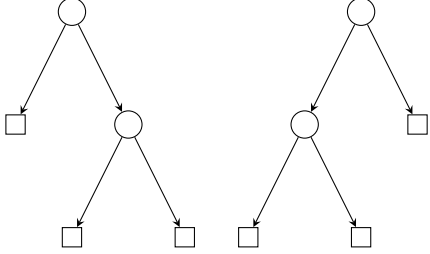
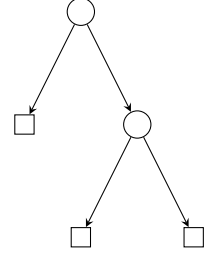

Figure~\ref{fig:upt} gives an example of a UPT formula and two of its parse trees, which can be seen to be isomorphic to each other.\\

\noindent {\bf Binary trees, isomorphism, and canonical trees.}
Unless stated otherwise, every binary tree $\cT$ which we consider in this article will be a rooted, directed (away from the root) binary tree in which all the internal nodes have a \emph{left child} and a \emph{right child}. For a node $v$ of $\cT$, $\cT_v$ denotes the subtree rooted at $v$ and $\leaves(v)$ denotes the number of leaves in $\cT_v$. 
 %A \emph{binary tree} $\cT$ is a rooted, directed (away from the root) binary tree in which all the internal nodes have a \emph{left child} and a \emph{right child}. For a node $v$ of $\cT$, $\cT_v$ denotes the subtree rooted at $v$ and $\leaves(v)$ denotes the number of leaves in $\cT_v$. 

	A binary tree $\cT$ is said to be \emph{right-heavy} if for all internal nodes $v$ with left child $v_L$ and right child $v_R$, we have $\leaves({v_L}) \le \leaves({v_R})$.
	Two binary trees $\cT$ and $\widetilde{\cT}$ are said to be \emph{equal} or {\em identical} if there is a bijection between their nodes preserving the \emph{(parent, left-child)} and \emph{(parent, right-child)} relations. They are said to be \emph{isomorphic} if there exists a bijection preserving the \emph{(parent, child)} relations.
	
	For any given binary tree $\cT$, we define its canonical tree $\canon(\cT)$ using the function in Algorithm~\ref{alg:canon} given in Appendix \ref{app:prelim}. For every node $v$ in $\cT$, that function swaps the subtrees rooted at its left and right children if the former has more leaves than the latter. The only properties of that function we need are mentioned in the following proposition; its proof is given in Section \ref{app:clm:prelim}. These properties can be verified to be true for the parse trees of the formula in Figure~\ref{fig:upt}. 

\begin{proposition}[{\bf Isomorphism means same canonical tree}]
    \label{clm:prelim}
    For any binary trees $\cT$ and $\widetilde{\cT}$,
    \begin{enumerate}
    \item $\canon(\cT)$ is right-heavy and is isomorphic to $\cT$. 
    \item $\cT$ and $\widetilde{\cT}$ are isomorphic if and only if $\canon(\cT) = \canon(\widetilde{\cT})$. Hence, $ \canon(\canon(\cT)) = \canon(\cT)$.
    \item Let $\phi$ denote an isomorphism between $\cT$ and $\canon(\cT)$. Then for a node $v$ in $\cT$, $\canon(\cT_v) = \canon(\cT)_{\phi(v)}$.
    \end{enumerate}
\end{proposition}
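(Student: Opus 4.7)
All three parts will be proved by structural induction on the tree, where the inductive step exploits the post-order shape of Algorithm \ref{alg:canon}: first recursively canonicalise the two subtrees of the root, then swap them only if the (new) left subtree has \emph{strictly} more leaves than the (new) right one. A leaf is the base case for every induction and all three claims hold trivially there. Throughout I treat isomorphism as a graph isomorphism (parent--child preserving, ignoring the left/right distinction) and equality as identity of rooted ordered binary trees.

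\noindent\textbf{Part 1.} Assume inductively that for every proper subtree $\cT_v$, $\canon(\cT_v)$ is right-heavy and isomorphic to $\cT_v$. Let $u$ be the root of $\cT$ with children giving subtrees $A$ and $B$; by the hypothesis $\canon(A)$ and $\canon(B)$ are right-heavy and respectively isomorphic to $A$ and $B$. The output of the algorithm attaches these to a new root, possibly swapped. Swapping does not affect right-heaviness inside either subtree (it is an internal property) and enforces $\leaves(\text{left}) \le \leaves(\text{right})$ at the new root, so the whole output is right-heavy. Isomorphism to $\cT$ is preserved because graph isomorphism does not distinguish left from right children, so swapping at the root and independently replacing each child subtree by an isomorphic copy yields a tree isomorphic to the original.

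\noindent\textbf{Part 2.} The ``if'' direction is immediate from Part~1: $\cT \cong \canon(\cT) = \canon(\widetilde{\cT}) \cong \widetilde{\cT}$. For ``only if'', induct on the number of nodes. Given an isomorphism between $\cT$ and $\widetilde{\cT}$, the two subtrees $\{A,B\}$ under the root of $\cT$ match, as an \emph{unordered} pair, with the two subtrees $\{\tilde A, \tilde B\}$ under the root of $\widetilde{\cT}$. By the inductive hypothesis, isomorphic subtrees canonicalise to the same tree, so $\{\canon(A),\canon(B)\}=\{\canon(\tilde A),\canon(\tilde B)\}$ as multisets. Now consider the swap step. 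If the two canonical subtrees differ in leaf count, the algorithm deterministically puts the one with fewer leaves on the left in both cases, producing identical outputs. If they have the same leaf count, either they are already equal (so the order does not matter), or else by the tie there must exist an isomorphism of $\cT$ to $\widetilde{\cT}$ that matches $A$ with $\tilde A$ (and $B$ with $\tilde B$), \emph{and also} one that matches $A$ with $\tilde B$ (and $B$ with $\tilde A$); hence by the inductive hypothesis $\canon(A)=\canon(B)$, so again the outputs agree. The consequence $\canon(\canon(\cT))=\canon(\cT)$ now follows because $\canon(\cT)\cong\cT$ by Part~1, so the ``only if'' direction forces $\canon(\canon(\cT))=\canon(\cT)$.

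\noindent\textbf{Part 3.} Fix a graph isomorphism $\phi:\cT\to\canon(\cT)$. Since $\phi$ preserves parent--child relations, it bijects the descendants of any node $v$ onto the descendants of $\phi(v)$ and the restriction $\phi|_{\cT_v}$ is a graph isomorphism from $\cT_v$ onto $\canon(\cT)_{\phi(v)}$. By Part~2 applied to these subtrees, $\canon(\cT_v) = \canon\bigl(\canon(\cT)_{\phi(v)}\bigr)$. It remains to see that $\canon(\cT)_{\phi(v)}$ is already a fixed point of $\canon$; this is built into the algorithm, since $\canon(\cT)$ is assembled by attaching already-canonicalised subtrees and performing a swap at the root that cannot disturb interior subtrees, so by induction on depth every subtree of $\canon(\cT)$ is canonical in the sense of being fixed by $\canon$. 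Hence $\canon(\cT_v)=\canon(\cT)_{\phi(v)}$, completing the proof.

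\noindent\textbf{Expected obstacle.} The only nontrivial point is the tie-breaking case in the ``only if'' direction of Part~2; handling it cleanly requires observing that an isomorphism in the tie case can always be reshuffled at the root, which forces the two canonical children to coincide. Everything else is a direct induction following the structure of Algorithm~\ref{alg:canon}.
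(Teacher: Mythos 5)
Your Parts 1 and 3 match the paper's argument. Part 2, however, has a genuine gap in the tie-breaking case that you flag yourself in your ``Expected obstacle'' paragraph and then resolve incorrectly.

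You claim that if the two canonicalised children have equal leaf counts but are not equal as trees, then ``by the tie'' both a direct isomorphism ($A\leftrightarrow\tilde A$, $B\leftrightarrow\tilde B$) and a crossed one ($A\leftrightarrow\tilde B$, $B\leftrightarrow\tilde A$) between $\cT$ and $\widetilde{\cT}$ must exist, forcing $\canon(A)=\canon(B)$. Equal leaf count gives no such thing. Take $A$ to be the $4$-leaf right comb (each internal node has a leaf as its left child) and $B$ the $4$-leaf balanced tree; both have $4$ leaves but different depths, so they are not isomorphic and $\canon(A)\neq\canon(B)$. Let $\cT$ be a root with ordered children $(A,B)$ and $\widetilde{\cT}$ a root with ordered children $(B,A)$. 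These are isomorphic only via the crossed map; no direct isomorphism exists (it would force $A$ isomorphic to $B$). After canonicalising the children, the pre-swap ordered pairs are $(\canon(A),\canon(B))$ and $(\canon(B),\canon(A))$, and with leaf-count as the sole swap criterion the two outputs would differ, contradicting the claim. The resolution is already built into the paper's Algorithm~\ref{alg:canon}: at line~\ref{line:swap-right-left-trees} the swap condition is $\leaves(v_L)>\leaves(v_R)$ \emph{or} ($\leaves(v_L)=\leaves(v_R)$ and $\encoding(\cT_{v_L})>\encoding(\cT_{v_R})$), where $\encoding$ is an injective map from binary trees to positive integers. This makes the swap decision a deterministic function of the \emph{unordered} pair of canonicalised children, so the multiset equality your inductive hypothesis correctly establishes immediately forces equal outputs -- no symmetry argument is needed, and indeed none is available. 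Replacing your tie-case reasoning with a direct appeal to this $\encoding$ tie-breaker repairs Part~2 and brings your proof in line with the paper's.
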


\begin{definition}[{\bf Canonical parse tree}]
    For a UPT formula $C$, we define its canonical parse tree as $\cT(C):=\canon(\cP)$, where $\cP$ is an arbitrary parse tree of $C$. The canonical parse tree is a binary tree and is well-defined as all parse trees of a UPT formula are isomorphic. 
\end{definition}

\subsection{Polynomial families} \label{subsec: poly families}

 The polynomials for which we prove the formula lower bounds are the Iterated Matrix Multiplication ($IMM$) polynomial and the Nisan-Wigderson design polynomial ($NW$). \\

\noindent {\bf Iterated Matrix Multiplication.}
    The iterated matrix multiplication, $IMM_{n,d}$ is a polynomial in $N = d\!\cdot\! n^2$ variables defined as the $(1,1)$-th entry of the matrix product of $d$ many $n \times n$ matrices such that the variables in each matrix and across different matrices are all distinct.

    To prove a lower bound for $IMM$, we analyze the shifted partials (and $\APP$) for a different, but related polynomial $P_{\vecw}$ that was introduced in \cite{lst1}. This polynomial is parameterized by a word $\vecw = ( w_1, \dots, w_d )$, a sequence of integers. It was shown in Lemma 22 of \cite{lst1} that $P_{\vecw}$ is a projection of $IMM$. Thus a lower bound for formulas computing $P_\vecw$ also gives a lower bound for formulas computing $IMM$. Both these polynomials can be seen to be homogeneous (in fact, they are set-multilinear), so they can indeed be computed by homogeneous as well as UPT formulas.
    
 \begin{definition}[{\bf Word polynomial $P_{\vecw}$ \cite{lst1}}]
    Given a word $\vecw=( w_1,\dots,w_d ) \in \Z^d$, let $\vecx(\vecw)$ be a tuple of $d$ pairwise disjoint sets of variables $\paren{\vecx_1(\vecw),\dots,\vecx_d(\vecw)}$ with $\abs{\vecx_i(\vecw)} = 2^{|w_i|}$ for all $i \in [d]$. We call a variable set $\vecx_i(\vecw)$ negative if $w_i < 0$ and positive otherwise. As the set sizes are powers of 2, we can map the variables in a set $\vecx_i(\vecw)$ to boolean strings of length $\abs{w_i}$. Let $\sigma:\vecx \to \set{0,1}^*$ be such a mapping.\footnote{Note that $\sigma$ may map a variable from $\vecx_i(\vecw)$ and a variable from $\vecx_j(\vecw)$ to the same string if $i \neq j$.} We extend the definition of $\sigma$ from variables to set-multilinear monomials as follows.
    
    Let $X=x_1\cdots x_r$ be a set-multilinear monomial where $x_i \in \vecx_{\phi(i)}(\vecw)$ and $\phi:[r] \to [d]$ be an increasing function -- in other words, the variables in $X$ are ordered based on the index of the corresponding variable sets. Then, we define a boolean string $\sigma(X):= \sigma(x_1) \circ \dots \circ \sigma(x_r)$, where $\circ$ denotes the concatenation of bits. 
    Let $\cM_+(\vecw)$ and $\cM_-(\vecw)$ denote the set of all (monic) set-multilinear monomials over all the positive sets and all the negative sets, respectively. For two Boolean strings $a,b$, we say $a \sim b$ if $a$ is a prefix of $b$ or vice versa.
    The word polynomial $P_{\vecw} \in \Fsm[\vecx(\vecw)]$ for a word $\vecw$ is defined as
    
    \begin{align*}
        P_{\vecw} := \sum \limits_{\substack{m_+ \in \cM_+(\vecw), ~m_- \in \cM_-(\vecw) \\ \sigma(m_+) ~\sim ~\sigma(m_-)}} m_+\cdot m_-.
    \end{align*}
  \end{definition}

Notice that if $\sum_{w_i \geq 0}|w_i| \leq \sum_{w_i < 0}|w_i|$, then for any $m_+ \in \cM_+$ and $m_- \in \cM_-$, $\sigma(m_+)$ will be a prefix of $\sigma(m_-)$ and if $\sum_{i \in [d]:w_i \geq 0}|w_i| > \sum_{i \in [d]:w_i < 0}|w_i|$, then for any $m_+ \in \cM_+$ and $m_- \in \cM_-$, $\sigma(m_-)$ will be a prefix of $\sigma(m_+)$. We will make use of the following lemma from \cite{lst1} which shows that $IMM$ is at least as hard as $P_{\vecw}$. For this, we recall the notion of {\em unbiased}-ness of $\vecw=(w_1,\dots,w_d)$ from \cite{lst1} -- we say that $\vecw$ is $h$\unbiased~if $\max_{i \in [d]} \abs{w_1 + \dots + w_i} \le h$.

 \begin{lemma}[Lemma 7 in~\cite{lst1}]
    \label{lem:imm-proj}
    Let $\vecw \in [-h..h]^d$ be $h$\unbiased.
    If for some $n \ge 2^h$, $IMM_{n, d}$ has a formula $C$ of product-depth $\Delta$ and size $s$, then $P_{\vecw}$ has a formula $C'$ of product-depth at most $\Delta$ and size at most $s$.
    
    Moreover, if $C$ is homogeneous, then so is $C'$ and if $C$ is UPT, then so is $C'$ with the same canonical parse tree, i.e., $\cT(C') = \cT(C)$.\footnote{Although the lemma in \cite{lst1} is stated for set-multilinear circuits, it also applies to homogeneous formulas and UPT formulas (albeit with a mild blow-up in size) by the same argument.}
 \end{lemma}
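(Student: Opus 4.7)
The plan is to exhibit $P_{\vecw}$ as a projection of $IMM_{n,d}$: we produce a substitution $\pi$ that sends each matrix entry $y_j[a,b]$ of $IMM$ either to $0$ or to a single variable of $\vecx_j(\vecw)$, with the property that $IMM_{n,d}$ evaluated at $\pi$ equals $P_{\vecw}$. Granting such a $\pi$, we apply it to the input gates of $C$ to obtain $C'$. Since $\pi$ acts only on leaves and adds or removes no gates, $\size(C') \le \size(C)$ and the product-depth is unchanged. Homogeneity is preserved because every $y$-variable is replaced by a single degree-$1$ object (a variable or $0$), so every gate of $C'$ continues to compute a homogeneous polynomial. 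The parse trees of $C'$ agree with those of $C$ as unlabelled rooted trees since only leaf labels change; in particular the UPT property is preserved and $\cT(C') = \cT(C)$.

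To construct $\pi$, interpret each index $i_j \in [n]$ appearing in the natural expansion
$$IMM_{n,d} = \sum_{i_1,\dots,i_{d-1} \in [n]} y_1[1,i_1]\, y_2[i_1,i_2] \cdots y_d[i_{d-1},1]$$
as a boolean-string ``state'' that records the running bits contributed so far but not yet matched between the positive and negative sides. An injection $\iota$ into $[n]$ encodes these states; the $h$\unbiased~hypothesis bounds the length of the state at every intermediate time step by $\max_j \abs{w_1+\cdots+w_j} \le h$, and $n \ge 2^h$ gives us enough indices. The substitution is then read off locally: for each triple $(j,a,b)$ we set $\pi(y_j[a,b])$ to the unique variable in $\vecx_j(\vecw)$ whose bit string realises the transition from state $\iota^{-1}(a)$ to state $\iota^{-1}(b)$ (a positive set appends bits to the positive side, a negative set consumes or extends bits on the negative side), and $\pi(y_j[a,b]) := 0$ otherwise.

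The main obstacle is the verification that $IMM_{n,d}(\pi) = P_{\vecw}$. This requires a case analysis on whether the running state encodes ``positive excess'' or ``negative excess'' (determined by the sign of $w_1+\cdots+w_j$) and on how a single variable from $\vecx_j(\vecw)$ can simultaneously cancel a prefix of the current state and contribute to the opposite excess. Once this is in place, surviving index sequences $(1,i_1,\dots,i_{d-1},1)$ correspond bijectively to pairs $(m_+,m_-) \in \cM_+(\vecw)\times \cM_-(\vecw)$ whose bit-string histories are consistent at every step, which is precisely the prefix condition $\sigma(m_+) \sim \sigma(m_-)$ from the definition of $P_{\vecw}$. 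A subtle point in getting the final index to be $1$ can be handled either by appending ``padding'' positions to $\vecw$ so that $\sum_j w_j = 0$ (which does not affect homogeneity or the parse tree structure) or by choosing $\iota$ so that index $1 \in [n]$ encodes both the initial empty state and the final valid states of length $\abs{\sum_j w_j}$.
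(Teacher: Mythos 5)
Your high-level plan matches the cited source: Lemma~7 of~\cite{lst1} is indeed proved by exhibiting $P_{\vecw}$ as a (variable-or-zero) projection $\pi$ of $IMM_{n,d}$, and your observations that applying $\pi$ to the leaves of $C$ does not increase the size or product-depth, preserves homogeneity (since each leaf is replaced by a single variable or $0$), and preserves UPT-ness with $\cT(C')=\cT(C)$ (zeroing variables only removes parse trees, never creates non-isomorphic ones --- a point the paper itself flags in the definition of parse trees) are all correct \emph{granting} the projection. The paper does not reprove the underlying claim; it imports it, so your attempt to reconstruct the argument is going beyond what the text itself supplies.

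That said, the proposal stops short of a proof precisely where the content of the lemma lives. You explicitly defer the verification that $IMM_{n,d}(\pi)=P_{\vecw}$ to an unperformed ``case analysis,'' and the boundary issue you flag is a real one that your two proposed fixes do not resolve. When $\sum_{j}w_j \neq 0$, the surviving paths at layer $d-1$ can transition to $2^{\lvert\sum_j w_j\rvert}$ distinct final excess strings, each of which corresponds to a valid pair $(m_+,m_-)$ of $P_{\vecw}$; but $\pi(y_d[a,1])$ is a single assignment, so from a fixed state $a$ only one of the (possibly several) valid last variables can be routed to the accepting index $1$, and the others are lost. ``Padding'' $\vecw$ to make its sum zero is not a free move --- it changes $d$ and hence the polynomial $P_{\vecw}$, so one would need an extra step showing $P_{\vecw}$ is a projection of the padded $P_{\vecw'}$. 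And the ``choose $\iota$ so that index $1$ encodes all valid final states'' option is ill-posed, since $1$ is a single index. The construction does go through cleanly in the balanced case $\sum_j w_j = 0$, which is in fact the case the paper actually uses (the word from Lemma~\ref{lem:hard-poly} is constructed to have zero sum); so for the paper's purposes nothing is broken, but as written your argument neither proves the lemma at the stated generality nor pins down the missing case analysis.
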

 
 %We are also able to get similar lower bounds for another polynomial known as the Nisan-Wigderson design polynomial \cite{KayalSS14}.\\
 
 \noindent {\bf Nisan-Wigderson design polynomial.} For a prime power $q$ and $d \in \N$, let $\vecx = \{x_{1,1}, \ldots, x_{1,q}, \ldots, x_{d,1},\\ \ldots, x_{d,q}\}$. For any $k \in [d]$, the Nisan-Wigderson design polynomial on $qd$ variables, denoted by $NW_{q,d,k}$ or simply $NW$, is defined as follows:
    \[NW_{q,d,k} = \sum_{\substack{h(z) \in \F_q[z]:\\ \deg(h) < k}}~~\prod_{i \in [d]} x_{i, h(i)}.\]
The $IMM$ and the $NW$ polynomials, and their variants, have been extensively used to prove various circuit lower bounds \cite{NisanW97, KayalSS14, KayalLSS17, KumarS17, KayalS16, KayalST16, KayalST16a, ForbesKS16, ChillaraL019, KumarS19, GuptaST20, lst1, KushS22}.
 
	\section{Structure of the space of partials of a product}
\label{sec:der-structural}
In this section, we bound the partial derivative space of a product of homogeneous polynomials. In the following lemma, we show that the space of $k$-th order partial derivatives of a product of polynomials is contained in a sum of shifted partial spaces with shift $\ell_0$ and order of derivatives $k_0$ such that $k_0 + \frac{k}{d-k} \cdot \ell_0$ is `small'. Using this lemma, we upper bound the $\SP$ and $\APP$ measures of a product of homogeneous polynomials. These upper bounds are then used in Sections \ref{sec:low-depth-lb} and \ref{sec:upt-lb} for proving lower bounds for low-depth homogeneous formulas and UPT formulas respectively.

\begin{lemma}[{\bf Upper bounding the partials of a product}]\label{lem:pdspace}
	Let $n$ and $t$ be positive integers and $Q_1,\dots,Q_t$ be non-constant, homogeneous polynomials in $\F[\vecx]$ with degrees $d_1,\dots,d_t$ respectively. Let $d := \deg(Q_1\cdots Q_t) = \sum \limits_{i=1}^t d_i$ and $k < d$ be a non-negative integer. Then,\footnote{As the ranges $k_0\in[0..k]$ and $\ell_0 \in [0..(d-k)]$ follow from $k_0 + \frac{k}{d-k} \cdot \ell_0 \le  k$,  we do not explicitly mention them in the proof.}
	$$\spacespanned{\partialf^{k}\paren{Q_1\cdots Q_t}}  \subseteq
	\sum \limits_{\substack{S \subseteq [t], ~k_0 \in [0..k],~\ell_0 \in [0..(d-k)], \\ k_0 + \frac{k}{d-k}\cdot \ell_0 ~\le~ k - \entropy_k(d_1,\dots,d_t)}} \spacespanned{ \vecx^{\ell_0}\cdot\partialf^{k_0}\paren{\prod_{i\in S} Q_i}}.$$
	\label{lem:der-structure}
\end{lemma}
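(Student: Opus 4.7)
My plan is to expand the left-hand side via the Leibniz product rule (Proposition~\ref{prop:ders}) and show every resulting term lies in the right-hand sum. Fixing $\mu:[k]\to\vecx$, Leibniz yields
\begin{equation*}
    \partial_\mu(Q_1 \cdots Q_t) = \sum_\kappa \prod_{j=1}^t \partial_{\mu(\kappa(j))} Q_j,
\end{equation*}
where $\kappa:[t]\to 2^{[k]}$ ranges over partitions of $[k]$. For each such term, classify $j$ as \emph{light} if $|\kappa(j)| \le \tfrac{k d_j}{d}$ and \emph{heavy} otherwise; let $S = S(\kappa)$ be the light indices, $k_0 := \sum_{j \in S}|\kappa(j)|$, and $\ell_0 := \sum_{j \in \bar S}(d_j - |\kappa(j)|)$. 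The heavy factor $\prod_{j \in \bar S} \partial_{\mu(\kappa(j))} Q_j$ is homogeneous of degree $\ell_0$ and so lies in $\spacespanned{\vecx^{\ell_0}}$; the remaining light factor $\prod_{j \in S} \partial_{\mu(\kappa(j))} Q_j$ must then be shown to lie in $\spacespanned{\partialf^{k_0}\paren{\prod_{j \in S} Q_j}}$, which places the whole term inside the desired summand.

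The residue inequality $k_0 + \tfrac{k}{d-k}\ell_0 \le k - \entropy_k(d_1,\ldots,d_t)$ reduces to arithmetic. Substituting $k_j := |\kappa(j)|$ into Definition~\ref{defn:entropy}, the inequality $\sum_j |k_j - \tfrac{k d_j}{d}| \ge 2\,\entropy_k(d_1,\ldots,d_t)$, split over $S$ and $\bar S$ (on which $k_j - \tfrac{k d_j}{d}$ is non-positive and strictly positive, respectively), becomes
\begin{equation*}
	\Big(\tfrac{k d_S}{d} - k_0\Big) + \Big((k - k_0) - \tfrac{k d_{\bar S}}{d}\Big) \;\ge\; 2\,\entropy_k(d_1,\ldots,d_t).
\end{equation*}
Using $d = d_S + d_{\bar S}$ this simplifies to $k_0 \le \tfrac{k d_S}{d} - \entropy_k(d_1,\ldots,d_t)$, and substituting $\ell_0 = (d - d_S) - k + k_0$ and simplifying $k_0 + \tfrac{k}{d-k}\ell_0$ yields the stated bound.

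The main technical obstacle is the structural step of combining the light-side product of partials into a partial of the product $\prod_{j \in S} Q_j$, since the naive identity ``product of partials equals a single partial of the product'' fails without further care (the introduction notes ``it is not immediate\ldots with a bit more work, one can combine the product of partials into a single partial''). The cleanest way I see to handle this is to bypass the $t$-fold Leibniz in favor of a single two-factor split
\begin{equation*}
    \partial_\mu\!\paren{\prod_{j \in S^*} Q_j \cdot \prod_{j \in \bar{S^*}} Q_j} = \sum_A \binom{\mu}{A}\, \partial_A\!\paren{\prod_{j \in S^*} Q_j} \cdot \partial_{\mu \setminus A}\!\paren{\prod_{j \in \bar{S^*}} Q_j},
\end{equation*}
with $S^* := \{j : k_j^* \le \tfrac{k d_j}{d}\}$ chosen from a minimizer $(k_j^*)$ of $\entropy_k(d_1,\ldots,d_t)$ (normalized to have $\sum_j k_j^* = k$, as allowed by the definition). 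Each summand is then a partial-of-product times a polynomial of matching shift degree, so no further combining is needed: summands with $|A| \le k_{S^*}^*$ go into the target with $S = S^*$, and those with $|A| > k_{S^*}^*$ into the target with $S = \bar{S^*}$ by swapping which factor plays the role of shift. The residue constraint is enforced in both regimes by the optimality identity $\tfrac{k d_{S^*}}{d} - k_{S^*}^* = \entropy_k(d_1,\ldots,d_t) = k_{\bar{S^*}}^* - \tfrac{k d_{\bar{S^*}}}{d}$, and any integer $|A|$ falling in the residual gap between the two regimes can be absorbed by recursing the Leibniz step on the relevant sub-product.
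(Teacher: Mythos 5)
Your first paragraph correctly sets up the per-term Leibniz expansion and correctly identifies the central obstacle: after classifying indices into light ($S$) and heavy ($\bar S$), the light-side product $\prod_{j \in S} \partial_{\mu(\kappa(j))} Q_j$ is a \emph{product of partials}, not a partial of a product, so it does not visibly lie in $\spacespanned{\partialf^{k_0}(\prod_{j\in S}Q_j)}$. You flag this yourself. The residue arithmetic in your second paragraph is fine (and matches the paper's calculation up to rearrangement), but it only applies to a single term $U_{S,\kappa}$ once that structural step has been made.

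The proposed workaround — a two-factor split with a \emph{globally fixed} $S^*$ coming from a minimizer of $\entropy_k$ — has a genuine gap, and it is not the minor bookkeeping issue you suggest. Writing $k_0 = |A|$, $\ell_0 = d_{\bar{S^*}} - (k - |A|)$ in Case 1 and $k_0 = k - |A|$, $\ell_0 = d_{S^*} - |A|$ in Case 2, the residue constraint $k_0 + \frac{k}{d-k}\ell_0 \le k - \entropy_k$ holds in Case 1 precisely for $|A| \le k^*_{S^*} = \frac{kd_{S^*}}{d} - \entropy_k$, while in Case 2 it holds only for $|A| \ge \frac{kd_{S^*}}{d} + \frac{(d-k)\entropy_k}{d}$. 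The uncovered window of $|A|$-values has width $\entropy_k \cdot \frac{2d-k}{d}$, i.e.\ on the order of $2\entropy_k$ — which is exactly the quantity the whole lemma is designed to extract, so it cannot be waved off as a small edge effect. The suggestion to ``recurse the Leibniz step on the relevant sub-product'' is not an argument: recursing on $\partial_A(\prod_{S^*}Q_j)$ simply reproduces the original product-of-partials difficulty one level down, and the residue of the sub-product over $S^*$ alone is in general unrelated to $\entropy_k(d_1,\dots,d_t)$. Conceptually, the two-factor split collapses $t$ summands into two, discarding the per-factor rounding losses that make $\entropy_k$ large in the first place.

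The paper resolves the obstacle not by fixing $S$ globally but by letting $S=S(\kappa)$ vary per Leibniz term (as in your first paragraph), then grouping terms sharing $S$ and the heavy-side derivative pattern $\tilde\kappa$ into auxiliary polynomials $R_{S,\tilde\kappa}$, and proving $R_{S,\tilde\kappa} \in \cV$ by induction on $|S|$. The key identity (Claim~\ref{clm:cx}) expresses $R_{S,\tilde\kappa}$ as a \emph{genuine} partial-of-product term $U_{S,\kappa} = \bigl(\partial_{\mu(P_S)}\prod_{i\in S}Q_i\bigr)\cdot\prod_{i\in\bar S}\partial_{\mu(\kappa_i)}Q_i$ minus a telescoping sum of $R_{T,\cdot}$ with $|T|<|S|$, which handles exactly the ``extra'' Leibniz terms that arise when you expand $\partial_{\mu(P_S)}\prod_{i\in S}Q_i$ and some index in $S$ receives more than its share of derivatives. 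This recursive bookkeeping is the missing piece your proposal needs; without it (or an equivalent device) the proof does not close.
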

\begin{proof} We first give some intuition about the proof. Let $m$ be any multilinear monomial (i.e., no variable appears more than once) of degree $k$, and $X$ be the corresponding set of variables. Then, by the product rule $\partial_X(Q_1\cdots Q_t)$ can be expressed as the sum
\begin{equation}\sum_{\substack{\paren{X_1,\dots,X_t}: \\ X_1 \sqcup \dots \sqcup X_t=X}} \partial_{X_{1}}Q_1 \cdots \partial_{X_{t}}Q_t.
\label{eqn:a-sum}\end{equation} 
Note that since the sizes of $X_i$'s should sum to $k$ and the degrees of $Q_i$'s should sum to $d$ in each term of the above summation, some factors are differentiated  `a lot' while  the others are differentiated only `a little'. More specifically, if $\abs{X_i} > \frac{k}{d}\!\cdot\!d_i$, we use the fact that $\partial_{X_i}Q_i \in \spacespanned{\vecx^{d_i-\abs{X_i}}}$ and otherwise we use $\partial_{X_i}Q_i \in \spacespanned{\partialf^{\abs{X_i}} Q_i}$  to conclude that 
$$\partial_{X_1}Q_1 \cdots \partial_{X_t}Q_t \in \spacespanned{\vecx^{\sum_{i \in  \overline{S}} \ell_{0,i}}\cdot \prod_{i \in S} \partialf^{k_{0,i}}Q_i}, $$
where $S:=\set{i \in [t]: |X_i| \le \frac{k}{d}\!\cdot\!d_i} $, $\overline{S} = [t]\setminus S$, $\ell_{0,i} = d_i - |X_i|$ for all $i \in \overline{S}$, and $k_{0,i} = |X_i|$ for all $i \in S$. By the nature of our construction, we can show that $k_0 + \frac{k}{d-k}\!\cdot\!\ell_0 \le k - \entropy_k(d_1,\dots,d_t)$, where $k_0:= \sum_{i \in S} k_{0,i}$ and $\ell_0 := \sum_{i \in \overline{S}} \ell_{0,i}$  (see the calculations at the end of the proof). Now suppose it holds that $\prod_{i \in S} \partialf^{k_{0,i}} Q_i = \partialf^{k_0} \prod_{i\in S} Q_i$. In such a case, we would get the space required in the R.H.S. of the lemma statement, and we would be done. However, this assumption need not be true if $|S| \geq 1$. To get around this issue, we employ an inductive argument on the size of $S$ (see Claim~\ref{clm:cx}). For this argument, it will be helpful to combine certain terms in the sum \eqref{eqn:a-sum} depending on the set of factors that are differentiated a `lot' (see Claim~\ref{clm:ders-as-rs}). We now present the proof in full detail. Since, in general, the variables in $m$ need not be distinct, it will be convenient to think of degree $k$ monomials over $\vecx$ as maps from $[k]$ to $\vecx$.

	For a function $\mu:P\to \vecx$ and any $P' \subseteq P$, recall that $\mu(P')$ refers to the multiset of images of the elements of $P'$ under $\mu$. Thus $|\mu(P')| = |P'|$. Let $\cV$ be the set of polynomials on the R.H.S., i.e., 
	
	\begin{align*}%\label{eqn:v0}
		\cV := \sum \limits_{\substack{S \subseteq [t], \ k_0 + \frac{k}{d-k}\cdot \ell_0 \\ \le~ k - \entropy_k(d_1,\dots,d_t)}} \spacespanned{ \vecx^{\ell_0}\cdot\partialf^{k_0}\paren{\prod_{i\in S} Q_i}}.
	\end{align*}
	
	\noindent We now argue that for an arbitrary total function $\mu:[k] \to \vecx$, $\partial_{\mu([k])} \paren{\prod \limits_{i\in [t]} Q_i} \in \cV$; the lemma then follows immediately. We use the following identity which is a direct consequence of the product rule for derivatives (Proposition~\ref{prop:ders}):
	
	\begin{align*}
		\partial_{\mu([k])} \paren{\prod \limits_{i\in [t]} Q_i} = \sum_{\substack{\kappa:[t]\to 2^{[k]}\text{~s.t.~}\\ \sqcup_{i \in [t]} \kappa_i = [k]}} \prod \limits_{i \in [t]} {\partial_{\mu(\kappa_i)} Q_i}.
		%\label{eqn:pdt-rule-full}
	\end{align*}
	
	\noindent In fact, the product rule yields something general: for any $P \subseteq [k]$, function $\mu:P \to \vecx$, and $S \subseteq [t]$,
	
	\begin{align}
		\partial_{\mu(P)} \paren{\prod \limits_{i\in S} Q_i} = \sum_{\substack{\kappa:S \to 2^P\text{~s.t.~}\\\sqcup_{i \in S} \kappa_i = P}} \prod \limits_{i \in S} {\partial_{\mu(\kappa_i)} Q_i}.
		\label{eqn:pdt-rule}
	\end{align}
	
    \noindent In the above identities we have used $\kappa_i$ as a shorthand for $\kappa(i)$; we shall also do so for the rest of this section.
	
	For an arbitrary $S \subseteq [t]$, recall that we denote $\overline{S} = [t] \setminus S$. Let $\widetilde{\kappa}:\overline{S} \to 2^{[k]}$ be such that $|\widetilde{\kappa}_i| > \frac{k}{d}\!\cdot\! d_i$ for all $i\in \overline{S}$. Then we define a polynomial $R_{S,\widetilde{\kappa}}$ as 
	
	\begin{align}
		R_{S,\widetilde{\kappa}} := \sum_{\substack{\kappa:[t]\to 2^{[k]}\text{~s.t.~}\\  \kappa\text{~extends~}\widetilde{\kappa} \\ \sqcup_{i \in [t]} \kappa_i = [k] \\ \forall i\in S, ~|\kappa_i| \le \frac{k}{d}\cdot d_i}} \prod \limits_{i \in [t]} \partial_{\mu(\kappa_i)}Q_i
		\label{eqn:poly-defn}.
	\end{align}
	
	\noindent The idea is to express any $k$-th order partial derivative of the product $Q_1\cdots Q_t$ in terms of $R_{S,\widetilde{\kappa}}$. Indeed we have the following claim; its proof (which uses the product rule for derivatives) can be found in Section \ref{app:clm:ders-as-rs}.
	
	\begin{claim}
	\label{clm:ders-as-rs}
	    \begin{align}
		\partial_{\mu([k])} \paren{\prod \limits_{i\in [t]} Q_i}
		 & = \sum_{\substack{S \subseteq [t]}}~\sum_{\substack{\widetilde{\kappa}:\overline{S}\to 2^{[k]}\text{ s.t. }\\ \forall i \in \overline{S}, |\widetilde{\kappa}_i|>\frac{k}{d}\cdot d_i}} R_{S,\widetilde{\kappa}}. \nonumber
	    \end{align}
	\end{claim}
    
   \noindent Hence, to show that $\partial_{\mu([k])}\paren{Q_1\cdots Q_t}\in \cV$, it suffices to argue that the polynomials $R_{S,\widetilde{\kappa}}$ are in $\cV$.  We show this by induction on the size of $S$.  In the base case of $|S| = 0$, there does not exist any function $\kappa:[t]\to 2^{[k]}$ that extends $\widetilde{\kappa}$ such that $\set{i\in [t]: |\kappa_i| \le \frac{k}{d}\cdot d_i} = S$ and $\sqcup_{i \in [t]} \kappa_i = [k]$. This is so because $|\kappa_i| = |\widetilde{\kappa}_i| > \frac{k}{d}\cdot d_i$ for all $i \in [t]$ implies that $\sum \limits_{i \in [t]} |\kappa_i| > \sum \limits_{i \in [t]}\frac{k}{d}\cdot d_i = k$, and hence $\sqcup_{i \in [t]} \kappa_i \neq [k]$. So by definition, $R_{S, \widetilde{\kappa}} = 0 \in \cV$.
	
    Suppose that $R_{T, \kappa'} \in \cV$ for all $T \subseteq [n]$ such that $|T| < |S|$.  Let $\widetilde{\kappa}:\overline{S} \to 2^{[k]}$ be any function such that $|\widetilde{\kappa}_i| > \frac{k}{d}\!\cdot\! d_i$ for all $i\in \overline{S}$, and let $\kappa:[t] \to 2^{[k]}$ be a function that extends $\widetilde{\kappa}$ such that
	\begin{align}\label{eqn:s-pi}
		\sqcup_{i \in [t]} \kappa_i = [k] ~ \text{~and~}~
		\set{i\in[t]: \abs{\kappa_i} \le \frac{k}{d}\cdot d_i} = S.
	\end{align}
	Denoting $\sqcup_{i \in \overline{S}} \kappa_i$ by $P_{\overline{S}}$ and $\sqcup_{i \in {S}} \kappa_i$ by $P_{{S}}$,
	
	\begin{flalign*}
		&& \partial_{\mu(\sqcup_{i \in S} \kappa_i)} \prod \limits_{i \in S}Q_i & = \partial_{\mu(P_S)} \prod_{i \in S} Q_i\\
		&&  &= \sum_{\substack{\kappa':S \to 2^{P_S}\text{~s.t.~}\\\sqcup_{i \in S} \kappa'_i = P_S}} \prod \limits_{i \in S} {\partial_{\mu(\kappa'_i)} Q_i}. && \text{(from Equation \eqref{eqn:pdt-rule})}
		%\label{eqn:decompose2}
	\end{flalign*}
	
	%\noindent For $U_{S,\overtilde{\kappa}} \in \F[\vecx]$ defined as $U_{S,\overtilde{\kappa}}:= \paren{\partial_{\mu(P_S)} \prod \limits_{i \in S}Q_i}\cdot \prod \limits_{i \in \overline{S}} \partial_{\mu(\kappa_i)} Q_i$, we have the following claim. It is proved in Section \ref{app:clm:cx}.

    \noindent For $U_{S,\kappa} \in \F[\vecx]$ defined as $U_{S, \kappa}:= \paren{\partial_{\mu(P_S)} \prod \limits_{i \in S}Q_i}\cdot \prod \limits_{i \in \overline{S}} \partial_{\mu(\kappa_i)} Q_i$, we have the following claim. It is proved in Section \ref{app:clm:cx}.

	\begin{claim}
	    \label{clm:cx}
	    \begin{align*}
	        R_{S,\widetilde{\kappa}} = U_{S,\kappa} -  \sum \limits_{\substack{T \subsetneq S \text{~and~} \kappa'':S\setminus T \to 2^{P_S} \text{~s.t.}\\ \forall i \in S \setminus T, |\kappa''_i| > \frac{k}{d}\cdot d_i}} R_{T,\kappa'' \sqcup \widetilde{\kappa}}.
	    \end{align*}
	\end{claim}
	
	\noindent When $T \subsetneq S$, by the induction hypothesis, all the terms $R_{T,\kappa'' \sqcup \widetilde{\kappa}}$ in the above expression  are in $\cV$. Therefore, to conclude that $R_{S,\widetilde{\kappa}}\in \cV$, it suffices to show that $U_{S,\kappa} \in \cV$.	From its definition, note that $U_{S,\kappa} \in \spacespanned{ \partialf^{k_0}\paren{\prod \limits_{i\in S} Q_i}\cdot \vecx^{\ell_0} }$ where $k_0 := \abs{\mu(P_S)} = \abs{P_S} = \sum \limits_{i \in S} |\kappa_i|$ and $\ell_0:= \sum \limits_{i \in \overline{S}} \deg(\partial_{\mu(\kappa_i)} Q_i)= \sum \limits_{i \in \overline{S}} (d_i - |\kappa_i|)$. Also,
	
	\begin{align*}
		k - k_0 - \frac{k}{d-k}\cdot \ell_0 & = k - \sum \limits_{i \in S} |\kappa_i| - \frac{k}{d-k}\cdot \sum \limits_{i \in \overline{S}} (d_i - |\kappa_i|) \\
		& = \sum \limits_{i \in \overline{S}} |\kappa_i| - \frac{k}{d-k}\cdot \sum \limits_{i \in \overline{S}} (d_i - |\kappa_i|) \tag{as from \eqref{eqn:s-pi},  $\kappa_1,\dots,\kappa_t$ form a partition of $[k]$ }\\
		& =  \sum \limits_{i \in \overline{S}} |\kappa_i| - \frac{k}{d-k}\cdot (d_i - |\kappa_i|)\\
		& = \sum \limits_{i \in \overline{S}} \frac{d}{d-k}\cdot \paren{  |\kappa_i| - \frac{k}{d}\cdot d_i}\\
		& \ge  \sum \limits_{i \in \overline{S}} |\kappa_i| - \frac{k}{d}.d_i \tag{using $d \ge d-k$ and $|\kappa_i| > \frac{k}{d}\cdot d_i$ iff $i \in \overline{S}$}\\
		& = \frac{1}{2} \paren{\sum \limits_{i \in \overline{S}}  |\kappa_i| - \frac{k}{d}\cdot d_i}  + \frac{1}{2} \paren{\sum \limits_{i \in S}  |\kappa_i| - \frac{k}{d}\cdot d_i}  \\
            &\quad+  \frac{1}{2} \paren{\sum \limits_{i \in \overline{S}}  |\kappa_i| - \frac{k}{d}\cdot d_i}  - \frac{1}{2} \paren{\sum \limits_{i \in S}  |\kappa_i| - \frac{k}{d}\cdot d_i} \\
		& = \frac{1}{2} \paren{\sum \limits_{i \in [t]}  |\kappa_i| - \frac{k}{d}\cdot d_i} +  \frac{1}{2} \paren{\sum \limits_{i \in \overline{S}}  |\kappa_i| - \frac{k}{d}\cdot d_i}  - \frac{1}{2} \paren{\sum \limits_{i \in S}  |\kappa_i| - \frac{k}{d}\cdot d_i}\\
		& = \frac{1}{2} \paren{k-\frac{k}{d}\cdot d} + \frac{1}{2} \paren{\sum \limits_{i \in \overline{S}}  |\kappa_i| - \frac{k}{d}\cdot d_i} - \frac{1}{2} \paren{\sum \limits_{i \in S}  |\kappa_i| - \frac{k}{d}\cdot d_i} \tag{since $|\kappa_i|$'s sum to $k$ and $d_i$'s sum to $d$}\\
		%& = k - \frac{1}{2}.\bigg(\sum \limits_{i \in \overline{S}} \bigg(\frac{k}{d}.d_i - |P_i|\bigg)  -\sum \limits_{i \in S} \bigg(\frac{k}{d}.d_i - |P_i|\bigg)  \bigg)\\
		& = \frac{1}{2}\cdot \sum \limits_{i \in [t]} \abs{|\kappa_i| - \frac{k}{d}\cdot d_i} \tag{from ~\eqref{eqn:s-pi}}\\
		& \ge \entropy_k(d_1,\dots,d_t) \tag{from definition of $\entropy$}.
	\end{align*}
	Hence, $U_{S, \kappa} \in \spacespanned {\vecx ^{\ell_0}\cdot \partialf^{k_0} \paren{\prod \limits_{i \in S} Q_i} } \subseteq \cV$ as $k_0 + \frac{k}{d-k}\cdot \ell_0 \le k - \entropy_k(d_1,\dots,d_t)$.
\end{proof}

	We now use the above lemma to upper bound the shifted partials and affine projections of partials measures of a product of polynomials.
	
\begin{lemma}[{\bf Upper bounding $\SP$ and $\APP$ of a product}]
	Let $Q=Q_1\cdots Q_t$ be a homogeneous polynomial in $ \F[x_1,\dots,x_n]$ of degree $d=d_1 + \dots +d_t \ge 1$, where $Q_i$ is homogeneous and $d_i:=\deg(Q_i)$ for $i\in[t]$. Then, for any non-negative integers $k < d$, $\ell \ge 0$, and $n_0 \le n$,
	\begin{enumerate}
	    \item $$\SP_{k, \ell}(Q) \le 2^t\cdot d^{2}\cdot \max \limits_{\substack{k_0, \ell_0 \ge 0 \\ k_0 + \frac{k}{d-k}\cdot \ell_0~\le~k - {\entropy_k(d_1,\dots,d_t)}}} M(n, k_0)\cdot M(n,\ell_0 + \ell),$$
	    \item $$\APP_{k, n_0}(Q) \le 2^t\cdot d^{2}\cdot \max \limits_{\substack{k_0, \ell_0 \ge 0 \\ k_0 + \frac{k}{d-k}\cdot \ell_0~\le~k - {\entropy_k(d_1,\dots,d_t)}}} M(n, k_0)\cdot M(n_0,\ell_0).$$
	\end{enumerate}
	\label{lem:measure-upper-bd}
\end{lemma}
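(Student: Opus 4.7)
The plan is to derive both bounds by inheriting the structural containment from Lemma \ref{lem:pdspace} and pushing it through one additional linear-map operation -- multiplication by $\vecx^{\ell}$ for the $\SP$ bound and substitution $\pi_L$ for the $\APP$ bound -- followed by routine dimension bookkeeping via Proposition \ref{prop:sub}.

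For the $\SP$ bound, I would multiply both sides of the inclusion from Lemma \ref{lem:pdspace} by the monomial set $\vecx^{\ell}$ to obtain
\[
\spacespanned{\vecx^{\ell}\cdot\partialf^{k}(Q)} \;\subseteq\;
\sum_{(S, k_0, \ell_0)} \spacespanned{\vecx^{\ell_0+\ell}\cdot \partialf^{k_0}\!\Bigl(\prod_{i\in S} Q_i\Bigr)},
\]
the sum ranging over admissible triples with $k_0 + \tfrac{k}{d-k}\ell_0 \le k - \entropy_k(d_1,\dots,d_t)$. By parts 3 and 4 of Proposition \ref{prop:sub}, $\SP_{k,\ell}(Q)$ is bounded by the sum of the dimensions of the summands. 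Each summand has dimension at most $M(n, k_0)\cdot M(n, \ell_0+\ell)$ via part 5 of Proposition \ref{prop:sub}, since $\partialf^{k_0}(\prod_{i\in S} Q_i)$ is a set of at most $M(n, k_0)$ polynomials (indexed by monic monomials of degree $k_0$) and $\vecx^{\ell_0+\ell}$ has size $M(n, \ell_0+\ell)$. Counting at most $2^t$ choices of $S$ and at most $(k{+}1)(d{-}k{+}1) \le d^2$ choices of $(k_0,\ell_0)$, replacing the sum of bounds by $2^t\cdot d^2$ times the maximum gives the claimed bound.

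For the $\APP$ bound, I would fix an arbitrary $L:\vecx \to \spacespanned{\vecz}$ and apply $\pi_L$ to the same containment. Since $\pi_L$ is a linear ring homomorphism on $\F[\vecx]$, it preserves the inclusion and distributes across products, yielding
\[
\spacespanned{\pi_L(\partialf^{k}(Q))} \;\subseteq\;
\sum_{(S, k_0, \ell_0)} \spacespanned{\pi_L(\vecx^{\ell_0}) \cdot \pi_L\!\bigl(\partialf^{k_0}(\prod_{i\in S}Q_i)\bigr)}.
\]
Each image $\pi_L(m)$ of an $\vecx^{\ell_0}$-monomial is a degree-$\ell_0$ polynomial in the $n_0$-variable set $\vecz$, so $\dim\spacespanned{\pi_L(\vecx^{\ell_0})} \le M(n_0, \ell_0)$, while $\dim \spacespanned{\pi_L(\partialf^{k_0}(\prod_{i\in S}Q_i))} \le M(n, k_0)$ since $\pi_L$ can only shrink dimension. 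Part 5 of Proposition \ref{prop:sub} bounds each summand by $M(n,k_0)\cdot M(n_0,\ell_0)$, and the same triple count contributes the $2^t\cdot d^2$ prefactor; taking the maximum over $L$ completes the $\APP$ bound.

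The argument is essentially bookkeeping once Lemma \ref{lem:pdspace} is in hand, so there is no serious obstacle. The only point that warrants care is confirming that $\pi_L$ interacts correctly with the $\spacespanned{\cdot}$ and product-of-sets operations (which follows from $\pi_L$ being a linear ring homomorphism) and that the residue-based constraint on $(k_0,\ell_0)$ survives this last step intact -- it does, since the triples we are summing over are exactly those supplied by the structural lemma.
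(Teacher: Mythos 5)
Your argument is correct and follows essentially the same route as the paper's own proof: apply the structural containment from Lemma \ref{lem:pdspace}, push it through multiplication by $\vecx^{\ell}$ (for $\SP$) or through $\pi_L$ (for $\APP$), bound each summand via Proposition \ref{prop:sub}, and count the at most $2^t \cdot d^2$ admissible triples $(S,k_0,\ell_0)$. The dimension bounds $\dim\spacespanned{\pi_L(\vecx^{\ell_0})} \le M(n_0,\ell_0)$ and $\dim\spacespanned{\pi_L(\partialf^{k_0}(\cdot))} \le M(n,k_0)$ are exactly as in the paper's appendix proof, so there is nothing to correct.
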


\begin{proof}
	
	We will first upper bound the shifted partials measure.
	From Lemma~\ref{lem:der-structure}, we know that 
	
	\begin{align*}
	    \spacespanned{\partialf^k \paren{Q_1\cdots Q_t}} \subseteq \sum \limits_{\substack{S \subseteq [t]; ~k_0, \ell_0 \ge 0 \\ k_0 + \frac{k}{d-k}\cdot \ell_0 \le k - \entropy_k(d_1,\dots,d_t)}} \spacespanned{\vecx^{\ell_0}\cdot \partialf^{k_0}\paren{ \prod \limits_{i \in S} Q_i}}.
%label{eqn:der-struct}
	\end{align*}
	Hence, 
	
	\begin{align}
	    \spacespanned{\vecx^\ell\cdot \partialf^k \paren{Q_1\cdots Q_t}} \subseteq \sum \limits_{\substack{S \subseteq [t]; ~k_0, \ell_0 \ge 0 \\ k_0 + \frac{k}{d-k}\cdot \ell_0 \le k - \entropy_k(d_1,\dots,d_t)}} \spacespanned{\vecx^{\ell_0 + \ell}\cdot \partialf^{k_0}\paren{ \prod \limits_{i \in S} Q_i}}.
	    \label{eqn:der-struct}
	\end{align}
	For a fixed $S\subseteq [t]$ and $k_0,\ell_0$, since $\spacespanned{\vecx^{\ell_0 + \ell}\cdot \partialf^{k_0}\paren{\prod \limits_{i \in S} Q_i}} \subseteq \spacespanned{\vecx^{\ell_0 + \ell}} \cdot \spacespanned{\partialf^{k_0} \cdot \paren{\prod \limits_{i \in S} Q_i}} $, and $\dim \spacespanned{\vecx^{\ell_0 + \ell}} \le \abs{\vecx^{\ell_0 + \ell}} = M(n, \ell_0 + \ell)$ and $\dim \spacespanned{\partialf^{k_0} \paren{\prod \limits_{i \in S} Q_i}} \le \abs{\partialf^{k_0} \paren{\prod \limits_{i \in S} Q_i} } \le \abs{ \vecx^{k_0}} = M(n, k_0)$, we have,
 
    $$\dim \spacespanned{\vecx^{\ell_0 + \ell}\cdot  \partialf^{k_0}\paren{\prod \limits_{i \in S} Q_i}} \le \dim \spacespanned{\vecx^{\ell_0 + \ell}} \cdot  \dim\spacespanned{\partialf^{k_0} \paren{\prod \limits_{i \in S} Q_i}}  \le M(n,\ell_0 + \ell) \cdot M(n,k_0).$$
    
	Adding up the above upper bound over all the $2^t\!\cdot\! d^2$ possible combinations of $S \subseteq [t]$, $k_0\in [0..k]$, and $\ell_0 \in [0..(d-k)]$ in ~\eqref{eqn:der-struct},  we get, 
	$$\SP_{k,\ell}(Q) = \dim \spacespanned{\vecx^\ell\cdot \partialf^k \paren{Q_1\cdots Q_t}} \le 2^t\cdot d^2\cdot  \max \limits_{\substack{k_0, \ell_0 \ge 0 \\ k_0 + \frac{k}{d-k}\cdot \ell_0 \le k - {\entropy_k(d_1,\dots,d_t)}}} M(n, k_0)\cdot M(n,\ell_0 + \ell).$$ 
	
    The details for an upper bound on $\APP$ are similar and can be found in Section \ref{app:lem:measure-upper-bd}.
\end{proof}

	\section{Lower bound for low-depth homogeneous formulas} 
\label{sec:low-depth-lb}

In this section, we present a superpolynomial lower bound for ``low-depth'' homogeneous formulas computing the $IMM$ and $NW$ polynomials. We begin by proving a structural result for homogeneous formulas.

\subsection{Decomposition of low-depth formulas} \label{subsec:homo_decomopositon_lemma}
 We show that any homogeneous formula can be decomposed as a sum of products of homogeneous polynomials of lower degrees, where the number of summands is bounded by the number of gates in the original formula. The decomposition lemma given below bears some resemblance to a decomposition of homogeneous formulas in \cite{HrubesY11b}. In the decomposition in \cite{HrubesY11b}, the degrees of the factors of every summand roughly form a geometric sequence, and hence each summand is a product of a `large' number of factors. Here we show that each summand has `many' low-degree factors. While the lower bound argument in \cite{lst1} does not explicitly make use of such a decomposition, their inductive argument can be formulated as a depth-reduction or decomposition lemma (with slightly different thresholds for the degrees).

\begin{lemma}[{\bf Decomposition of low-depth formulas}]
	\label{lem:low-depth-structural}
	Suppose $C$ is a homogeneous formula of product-depth $\Delta \ge 1$ computing a homogeneous polynomial in $\F[x_1, \ldots, x_n]$ of degree \emph{at least} $d > 0$. 
	%Let $\tau := d^{2^{1-\Delta}}$ be an integer. 
	Then, there exist homogeneous polynomials $\set{Q_{i,j}}_{i,j}$ in  $\F[x_1, \ldots, x_n]$  such that
	\begin{enumerate} 
		\item $C = \sum \limits_{i=1}^{s} Q_{i,1} \cdots Q_{i,t_i}$,  for some $s \le \size(C)$, and
		\item for all $i\in [s]$, 
		either $$\abs{\set{j\in [t_i]:\deg(Q_{i,j}) =1}} \ge d^{2^{1-\Delta}}~\text{, or}$$
		$$\abs{\set{j\in[t_i]:\deg(Q_{i,j}) \approx_2  d^{2^{1-\delta}}}} \ge d^{2^{1-\delta}}-1~\text{~, for some $\delta \in [2..\Delta]$.}$$
		\label{item:lds-2}
	\end{enumerate}
\end{lemma}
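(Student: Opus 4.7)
The plan is to prove this by induction on $\Delta$. For the base case $\Delta = 1$, a homogeneous product-depth-$1$ formula can be written, after unfolding its top-level addition gates, as $C = \sum_i \prod_j L_{i,j}$ with each $L_{i,j}$ a homogeneous linear form. Since the output has degree at least $d$, every such product has at least $d = d^{2^{1-1}}$ linear factors, meeting the first alternative of item 2. A standard bookkeeping of the unfolding (each addition gate of fan-in $k$ contributes $k-1$ to the summand count) bounds $s$ by $\size(C)$.

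For the inductive step $\Delta \ge 2$, the approach is to first unfold the top-level addition gates of $C$ to write $C = \sum_{i=1}^{s_0} \prod_j P_{i,j}$ where each $P_{i,j}$ has product-depth at most $\Delta - 1$ (or is a variable); the same accounting yields $s_0 \le \size(C)$. Set $\tau_\Delta := d^{2^{1-\Delta}}$ and analyze each summand by its degree sequence. If a summand has at least $\tau_\Delta - 1$ factors with degree $\approx_2 \tau_\Delta$, it already satisfies item 2 with $\delta = \Delta$ and is kept as-is. Otherwise, the plan is to pick a factor $P_{i,j^*}$ of degree at least $\sqrt{d}$ and apply the inductive hypothesis to it as a depth-$(\Delta-1)$ formula with parameter $d' = \sqrt{d}$. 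The thresholds then line up exactly: the linear-factor count $(d')^{2^{2-\Delta}}$ equals $d^{2^{1-\Delta}}$, matching the first alternative of item 2, while the medium thresholds $(d')^{2^{1-\delta'}} = d^{2^{-\delta'}}$ for $\delta' \in [2..\Delta-1]$ coincide with the desired $d^{2^{1-\delta}}$ for $\delta = \delta' + 1 \in [3..\Delta]$. Substituting the recursive decomposition of $P_{i,j^*}$ back into the summand and multiplying by the other factors produces new summands that each satisfy one of the two alternatives in item 2.

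The main obstacle is the subcase where a bad summand has no factor of degree $\ge \sqrt{d}$: all factors then have degree strictly less than $\sqrt{d}$, and yet the summand may still fail item 2 at both $\delta = 2$ and $\delta = \Delta$. Resolving this will likely require iterating the case analysis over the scales $\tau_\delta$ for $\delta = 2,3,\ldots,\Delta$ — at each scale either enough medium factors exist (in which case victory is declared with that value of $\delta$), or one recurses on a factor of degree $\ge \tau_{\delta-1}^{1/2}$ into the next layer of sub-formulas. Globally, each recursive expansion "consumes" a distinct portion of the formula's internal addition-gate structure, so a careful charging argument is needed to ensure the total number of summands accumulated across all levels of recursion remains at most $\size(C)$. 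Formulating the inductive hypothesis to track which internal gates have been unfolded (so that no gate is charged twice) will be essential to keeping this size bound intact, and this bookkeeping — rather than the degree-sequence analysis — is where the most delicate part of the argument will lie.
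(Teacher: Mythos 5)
Your inductive framework is the same as the paper's, and Case 1 (recursing into a factor of degree $\ge \sqrt{d}$ with parameter $d' = \sqrt{d}$, which makes the thresholds line up exactly as you describe) is correct. But the subcase you flag as "the main obstacle" — a summand in which every factor has degree $< \sqrt{d}$ — is not an obstacle at all, and the iterated multi-scale recursion you sketch is both unnecessary and, as you yourself note, dangerous for the size accounting.

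The missing idea is a simple \emph{factor-merging} step: if all factors $C_{i,1},\dots,C_{i,u_i}$ have degree $< \sqrt{d}$, repeatedly pick two factors each of degree $< \sqrt{d}/2$ and replace them by their product (still of degree $< \sqrt{d}$). When this stops, at most one factor has degree $< \sqrt{d}/2$; every other factor $D_{i,j}$ satisfies $\sqrt{d}/2 \le \deg(D_{i,j}) < \sqrt{d}$, i.e.\ $\deg(D_{i,j}) \approx_2 \sqrt{d} = d^{2^{1-2}}$. Since the degrees sum to $\deg(C_i) \ge d$ and each is $< \sqrt{d}$, there are $\ge \sqrt{d}$ factors, hence $\ge \sqrt{d}-1 = d^{2^{1-2}}-1$ factors of degree $\approx_2 d^{2^{1-2}}$, which is exactly item 2 with $\delta = 2$. (You were looking for the $\delta=\Delta$ alternative in this subcase, but the relevant scale when all factors are small is $\delta=2$, i.e.\ $\sqrt{d}$.) No recursion into sub-formulas is needed here — the summand contributes a single term and the size bound $s \le \size(C)$ follows immediately from summing $\tilde s_i \le \size(C_{i,j^*}) \le \size(C_i)$ over the Case~1 indices and adding one per Case~2 index, with no double-charging issue.
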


\begin{proof}
    The decomposition is constructed inductively -- at addition gates, we simply add the decompositions of the smaller sub-formulas, whereas the multiplication gates need to be handled more carefully. Consider a multiplication gate $Q_1 \times \dots \times Q_t$. If all the factors ($Q_i$'s) have `low' degrees, we use this expression directly to construct the decomposition. Otherwise, we go deeper into a factor which has a `large' degree, but do not expand the other factors. The thresholds to decide whether a factor is of `low' degree may appear arbitrary (and are indeed so) for this lemma, but we fix them to be $d^{2^{1-\delta}}$ for $\delta \in [2..\Delta]$ as these give us the desired lower bounds.

	Without loss of generality, we may assume that $C$ has alternate layers of addition and multiplication gates. Further, we can assume that the degrees of the polynomials computed by all the multiplication gates that feed into an addition gate are the same as the degree of the polynomial computed by that addition gate. This is so because disconnecting all the multiplication gates that compute polynomials of other degrees does not affect the polynomial computed by the addition gate. Also, for brevity, we will ignore the edge weights in $C$, i.e., we assume that all the field constants on the edges are equal to 1. As scaling with constant factors does not affect the homogeneity of polynomials, this is a valid assumption. Let
	
	$$C = \sum \limits_{i=1}^{u} C_i \text{ , and for $i \in [u],$ } C_i = \prod \limits_{j=1}^{u_i} C_{i,j},$$
	where $u$ and $\set{u_i}_i$ are integers and $\set{C_{i,j}}_{i,j}$ are (homogeneous) sub-formulas of $C$ of product-depth $\Delta-1$. The proof of this lemma is by induction on the product-depth. For $\Delta=1$, for all $i\in[u]$, we have $u_i \geq d$ and for all $j\in\brac{d}$, $\deg(C_{i,j})=1$ , so both the conditions in the lemma statement are met for $Q_{i,j}:=C_{i,j}$.
 
    Suppose that the lemma is true for all homogeneous formulas of product-depth at most $\Delta - 1$, $\Delta \geq 2$. For a formula $C$ with product-depth $\Delta$, we consider the term $C_{i,1}\cdots C_{i,u_i}$ for an arbitrary $i\in[u]$ and analyze the following two cases. \\
	
	\noindent \textbf{Case 1:} There exists some $j^* \in [u_i]$ such that $\deg(C_{i,j^*}) \ge \sqrt{d}$. As the product-depth of $C_{i, j^*}$ is at most $\Delta - 1$, we have the following expression for the polynomial computed by $C_{i,j^*}$ from the induction hypothesis:
	
	\begin{equation}
		C_{i,j^*} = \sum \limits_{\tilde{i}=1}^{\tilde{s}_i} \widetilde{Q}_{i,\tilde{i},1} \cdots \widetilde{Q}_{i,\tilde{i},\tilde{t}_{\tilde{i}}}~ ,
		\label{eqn:delta-1}
	\end{equation} 
	where
        \begin{equation} 
		  \tilde{s}_i \le     \size(C_{i,j^*}) \le \size(C_i),
		\label{eqn:size-bound}
	\end{equation} 
 and $\set{\widetilde{Q}_{i,\tilde{i},\tilde{j}}}_{i,\tilde{i},\tilde{j}}$ are homogeneous polynomials such that for all $\tilde{i}\in \tilde{s}_i$, either

        \begin{eqnarray}
            \abs{\set{\tilde{j}\in [\tilde{t}_{\tilde{i}}]:\deg(\widetilde{Q}_{i,\tilde{i},\tilde{j}}) =1}} \ge \sqrt{d}^{2^{1-(\Delta-1)}} = d^{2^{1-\Delta}}~,\text{or}\label{eqn:linear-forms} \\
            \abs{\set{\tilde{j}\in [\tilde{t}_{\tilde{i}}]:\deg(\widetilde{Q}_{i,\tilde{i},\tilde{j}}) \approx_2 \sqrt{d}^{2^{1-\delta}}}} \ge \sqrt{d}^{2^{1-\delta}}-1~\text{~, for some $\delta \in [2..(\Delta-1)].$} \label{eqn:some-delta}
        \end{eqnarray}
	   
	\noindent Note that since $\sqrt{d}^{2^{1-\delta}}=d^{2^{1-(\delta+1)}}$, ~\eqref{eqn:some-delta} is equivalent to 
	
	\begin{equation}
		\abs{\set{\tilde{j}\in [\tilde{t}_{\tilde{i}}]:\deg(\widetilde{Q}_{i,\tilde{i},\tilde{j}}) \approx_2 d^{2^{1-\delta}}}} \ge d^{2^{1-\delta}}-1~\text{~, for some $\delta \in [3..\Delta].$} \label{eqn:some-delta-plus-1}
	\end{equation}
	Indeed, when $\Delta =2$, the above scenario never arises and the number of linear factors is `large', i.e.,~\eqref{eqn:linear-forms} holds. 
	Denoting $\prod \limits_{j\in [u_i] \setminus \set{j^*}} C_{i,j}$ by $D_{i,j^*}$ and using ~\eqref{eqn:delta-1}, we have
	
	\begin{equation}
		C_i = C_{i,1}\cdots C_{i,u_i} = C_{i,j^*}.D_{i,j^*} = \sum \limits_{\tilde{i}=1}^{\tilde{s}_i} \widetilde{Q}_{i,\tilde{i},1} \cdots \widetilde{Q}_{i,\tilde{i},\tilde{t}_{\tilde{i}}}\cdot D_{i,j^*}.
		\label{eqn:case-1}
	\end{equation}
	Thus, we are able to decompose the sub-formula $C_i$ as a sum of at most $\size(C_i)$ many products.\\
	
	\noindent \textbf{Case 2:} For all $j\in[u_i]$, $\deg(C_{i,j}) < \sqrt{d}$. Consider the polynomials computed by $C_{i,1},\dots,C_{i,u_i}$. Suppose there exists $j_1 \ne j_2 \in [u_i]$ such that $\deg(C_{i,j_1}) < \frac{\sqrt{d}}{2}$ and $\deg(C_{i,j_2}) < \frac{\sqrt{d}}{2}$. Then $\deg(C_{i,j_1}\cdot C_{i,j_2}) < \sqrt{d}$. By repeatedly combining such low degree factors, we can express $C_i= C_{i,1} \cdots C_{i,u_i}$ as
	
	\begin{equation}
		C_i = D_{i,1}\cdots D_{i,v_i}~,
		\label{eqn:low-degree-case}
	\end{equation}
	
	\noindent where $\set{D_{i,j}}_{i,j}$ are homogeneous polynomials such that for all $j \in [v_i]$, we have $\deg(D_{i,j}) < \sqrt{d}$ and there exists at most one  index $j^* \in [v_i]$ such that $\deg(D_{i,j^*}) < \frac{\sqrt{d}}{2}$. In other words, for at least $v_i-1$ indices $j\in [v_i]$, $\deg(D_{i,j}) \approx_2 \sqrt{d}$.
	Using the fact that $C$ is a homogeneous formula,
	
	\begin{equation*}
		d \le \deg(C) = \deg(C_i) = \sum \limits_{j=1}^{v_i} \deg(D_{i,j}) \le v_i\!\cdot\! \sqrt{d}.
		\label{eqn:vi}
	\end{equation*} 
	Therefore, the number of indices $j\in[v_i]$ such that $\deg(D_{i,j}) \approx_2 \sqrt{d}$ is at least $v_i -1 \ge \sqrt{d} -1$.	In other words, 
	
	\begin{equation}
		\abs{\set{j\in[v_i]:\deg(D_{i,j}) \approx_2 d^{2^{1-\delta}}}} \ge d^{2^{1-\delta}}-1~\text{~, for $\delta = 2.$}
		\label{eqn:delta-is-2}
	\end{equation}
	
    Now, expressing $C_i$ for each $i\in[u]$ using~\eqref{eqn:case-1} if $i$ falls under Case 1, and using ~\eqref{eqn:low-degree-case} if $i$ falls under Case 2, we get

    $$C = \sum \limits_{i=1}^u C_i =  \sum \limits_{i=1}^{s} Q_{i,1}  \cdots  Q_{i,t_i},$$
    for polynomials $\set{Q_{i,j}}_{i,j}$ that are  defined appropriately based on $\set{\widetilde{Q}_{i,\tilde{i},\tilde{j}}}_{i,\tilde{i},\tilde{j}}$ and $\set{D_{i,j}}_{i,j}$. Using ~\eqref{eqn:size-bound} and ~\eqref{eqn:low-degree-case}, we get that the number of terms is $s\le \sum \limits_{i=1}^u \size(C_i) \le \size(C)$. Item~\ref{item:lds-2} in the lemma statement directly follows from ~\eqref{eqn:linear-forms}, ~\eqref{eqn:delta-is-2}, or~\eqref{eqn:some-delta-plus-1}.
\end{proof}

\subsection{Low-depth formulas have high residue}

The following lemma gives us a value for the order of derivatives $k$ with respect to which low-depth formulas yield high $\entropy$. Its proof uses Lemma~\ref{lem:low-depth-structural}.

\begin{lemma}[{\bf Low-depth formulas have high residue}]
    Suppose $C$ is a homogeneous formula of product-depth $\Delta \ge 1$ computing a homogeneous polynomial in $\F[x_1,\dots,x_n]$ of degree $d$, where $d^{2^{1-\Delta}} = \omega(1)$.\footnote{In fact the lemma holds as long as $d^{2^{1-\Delta}}$ is greater than a large enough constant.} 
	%Let $\tau := d^{2^{1-\Delta}}$ be an integer. 
	Then, there exist homogeneous polynomials $\set{Q_{i,j}}_{i,j}$ in  $\F[x_1,\dots,x_n]$  such that $C = \sum \limits_{i=1}^{s} Q_{i,1} \cdots Q_{i,t_i}$,  for some $s \le \size(C)$.   Fixing an arbitrary $i\in[s]$, let $t:=t_i$ and define $d_j := \deg(Q_{i,j})$ for $j\in [t]$. Then, $\entropy_k(d_1,\dots,d_t) \ge \Omega \paren{d^{2^{1-\Delta}}}$, where $k:=\floor{\frac{\alpha\cdot d}{1+\alpha}}$, $\alpha := \sum\limits_{\nu = 0}^{\Delta-1} \frac{(-1)^\nu}{\tau^{2^\nu - 1}}$, and $\tau:=\floor{d^{2^{1-\Delta}}}$.
	\label{lem:low-k-gamma}
\end{lemma}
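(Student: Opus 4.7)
The plan is to first apply Lemma \ref{lem:low-depth-structural} to $C$, which gives the decomposition $C = \sum_{i=1}^s Q_{i,1}\cdots Q_{i,t_i}$ with $s \le \size(C)$. For any fixed summand, the degree sequence $(d_1,\ldots,d_t)$ then satisfies one of two structural conditions: either it contains at least $\tau := \floor{d^{2^{1-\Delta}}}$ linear factors, or for some $\delta \in [2..\Delta]$ it contains at least $L - 1$ factors with degree in $[L/2, L]$, where $L := \tau^{2^{\Delta-\delta}}$. This reduces the statement to a purely arithmetic lower bound on $\entropy_k(d_1,\ldots,d_t)$ for the prescribed value of $k$.

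I would next rewrite the residue in the convenient form $\entropy_k(d_1,\ldots,d_t) = \tfrac{1}{2}\sum_{j=1}^t \mathrm{dist}(k d_j/d, \Z)$, where $\mathrm{dist}(x,\Z)$ denotes the distance from $x$ to the nearest integer. Setting $\beta := \alpha/(1+\alpha)$, so that $k = \floor{\beta d}$, one has $|k d_j/d - \beta d_j| \le d_j/d$; summing over all $t$ factors the total error is at most $1$. Hence, up to an additive $O(1)$ loss which is negligible since $\tau = \omega(1)$, it suffices to lower bound $\tfrac{1}{2}\sum_j \mathrm{dist}(\beta d_j, \Z)$ by $\Omega(\tau)$.

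In the linear-factor case the argument is short: $\alpha \in (1/2, 1]$ for $\tau \ge 2$ forces $\beta \in (1/3, 1/2]$, so each of the $\ge \tau$ linear factors contributes $\mathrm{dist}(\beta, \Z) = \beta \ge 1/3$ to the sum, totalling $\Omega(\tau)$. The substance of the proof is the second case, where one must show that the aggregate contribution from the $\ge L - 1$ factors with degrees in $[L/2, L]$ is $\Omega(\tau)$. Concretely, I would aim to prove a per-factor lower bound on $\mathrm{dist}(\beta d^*, \Z)$ for every integer $d^* \in [L/2, L]$ that is large enough that, multiplied by the count $L - 1 \ge \tau - 1$, it yields $\Omega(\tau)$.

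The main obstacle will be this arithmetic estimate. Writing $\beta = \tfrac{1}{2} - \tfrac{1-\alpha}{2(1+\alpha)}$ and using $1-\alpha = \tau^{-1} - \tau^{-3} + \tau^{-7} - \cdots + (-1)^{\Delta-1}\tau^{1-2^{\Delta-1}}$, the quantity $\beta d^* = d^*/2 - d^*(1-\alpha)/(2(1+\alpha))$ splits, for $d^*$ in the range $[L/2, L]$, into a large integer piece plus a small fractional correction. The specific alternating-series form of $\alpha$ is engineered so that when the expansion of $d^*(1-\alpha)/(2(1+\alpha))$ is computed term by term, the successive integer contributions cancel in a telescoping manner, leaving a residual fractional contribution controlled by the tail term $\tau^{1-2^{\Delta-1}}$ which, after multiplication by $d^* \sim L$ and division by $2(1+\alpha) \approx 4$, produces the required lower bound and does not collapse to zero as $\tau$ grows. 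Making this rigorous will require a careful case analysis tracking which pieces of the expansion are integer versus fractional, together with a parity split on $d^*$ (so that $d^*/2$ is either an integer or a half-integer) and bounds on the tail of the alternating series; these computations are number-theoretic in nature and mirror, in spirit, the analogous analysis in \cite{lst1} for the set-multilinear setting.
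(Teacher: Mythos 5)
Your reduction is correct up to the point where the hard case must be handled, and the overall route matches the paper's: apply Lemma \ref{lem:low-depth-structural} with threshold $\tau^{2^{\Delta-1}}$, rewrite $\entropy_k$ as $\tfrac12\sum_j \mathrm{dist}(\tfrac{k}{d}d_j,\Z)$, replace $\tfrac{k}{d}$ by $\beta=\tfrac{\alpha}{1+\alpha}$ at a cost of $O(1)$, and dispose of the linear-factor case directly. All of that is fine. The gap is in the second case, which you leave as a sketch, and the sketch contains a concrete error. You claim the fractional residue in $\beta d^*$ is ``controlled by the tail term $\tau^{1-2^{\Delta-1}}$'', and that multiplying it by $d^*\sim L=\tau^{2^{\Delta-\delta}}$ and by the count $L-1$ gives $\Omega(\tau)$. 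That computation works only when $\delta=2$. For $\delta>2$ it undershoots badly: $L^2\cdot\tau^{1-2^{\Delta-1}} = \tau^{2^{\Delta-\delta+1}+1-2^{\Delta-1}}$, and the exponent is strictly less than $1$ whenever $\delta\ge 3$ and $\Delta\ge 4$. The relevant term is not the global tail of the alternating series but the term at the scale of the factor degree, namely $\tau^{-(2^{\Delta-\delta+1}-1)}$: the earlier terms, when multiplied by an integer and scaled by $\tau^{2^{\Delta-\delta}-1}$, all become integers, and the later terms are negligible. This is exactly what the paper's Claim \ref{clm:entropy-sub} pins down (there indexed by $\delta'=\Delta-\delta$ with $\alpha_1$ the partial sum, $\alpha_2=(-1)^{\delta'+1}/\tau^{2^{\delta'+1}-1}$ the critical term, and $\alpha_3$ the discarded tail), so without isolating that specific intermediate term your ``telescoping'' argument does not give the required $\Omega(\tau/L)$ per-factor bound.

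A second, smaller issue: you propose to argue directly with $\beta d^*$, but $\beta=\alpha/(1+\alpha)$ has the factor $1/(1+\alpha)$, which has no clean finite expansion in $1/\tau$, so tracking which pieces are integral is awkward. The paper sidesteps this by working instead with $k_j-\alpha m_j = (1+\alpha)k_j-\alpha d_j$ (equal to $(1+\alpha)$ times $k_j-\beta d_j$), where only $\alpha$ itself appears and the integrality of $\tau^{2^{\delta'}-1}(k_j-m_j\alpha_1)$ is immediate. The paper's Claim \ref{clm:entropy-sub} is also split into three ranges of $m_j/k_j$ (one of which handles the boundary where $m_j/\tau^{2^{\delta'}}$ could land near $0$ or $1$), which is a different and sharper case split than the parity-of-$d^*$ split you envision. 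You will want to adopt the $k_j-\alpha m_j$ formulation and the three-way case analysis to make Case 2 go through.
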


\begin{proof}
    We will show that the decomposition proven in Lemma~\ref{lem:low-depth-structural} itself satisfies the required property. We first establish a range for the value of $k$ (and $\alpha$) given in the lemma statement. We have $\alpha \le 1$ and 
	
	$$\alpha \ge \sum\limits_{\nu=0}^1 \frac{(-1)^\nu}{\tau^{2^\nu - 1}} = 1-\frac{1}{\tau} = 1 - \frac{1}{\floor{d^{2^{1-\Delta}}}} \ge \frac{1}{2}.$$ Hence, $k\in \brac{\floor{\frac{d}{3}}, \frac{d}{2}} \subseteq \brac{\frac{d}{4}, \frac{d}{2}}$ because $d = \omega(1)$. As $C$ computes a polynomial of degree $d \geq \tau^{2^{\Delta-1}}$, we can apply Item 2 of Lemma \ref{lem:low-depth-structural} to $C$ using $\tau^{2^{\Delta-1}}$ (rather than $d$) as the threshold. Thus, we have that at least one of the following two cases will hold.
	
		\noindent \textbf{Case 1:} $\abs{\set{j\in[t]:d_j=1}} \ge \paren{\tau^{2^{\Delta-1}}}^{2^{1-\Delta}}=\tau$. Then,
		\begin{align*}
		\entropy_k(d_1,\dots,d_t) 
		& =  \frac{1}{2} \cdot \min \limits_{k_1,\dots,k_t \in \Z} \sum \limits_{j\in[t]} \abs{ k_j - \frac{k}{d}\cdot d_j } \\
		& \ge  \frac{1}{2} \cdot \sum\limits_{j\in[t]} \min\set{\fract{\frac{k}{d}\cdot d_j}, 1-\fract{\frac{k}{d}\cdot d_j}} \\ 
		& \ge  \frac{1}{2} \cdot \sum \limits_{j\in[t]:d_j=1} \min\set{\fract{\frac{k}{d}\cdot d_j}, 1-\fract{\frac{k}{d}\cdot d_j}} \\
		& \ge  \frac{1}{2} \cdot \abs{\set{j\in [t]:d_j=1}}\cdot \min\set{\fract{\frac{k}{d}}, 1-\fract{\frac{k}{d}}} \\
		& \ge \tau/8. \tag{as $k/d \in [1/4, 1/2]$}
		\end{align*}
		
	    \noindent \textbf{Case 2:} $\abs{\set{j\in [t]: d_j \approx_2 \paren{\tau^{2^{\Delta -1}}}^{2^{1-\delta}} }} \ge  \paren{\tau^{2^{\Delta -1}}}^{2^{1-\delta}}-1$ for some $\delta \in [2..\Delta]$ (this case cannot occur when $\Delta < 2$). Equivalently, there exists a $\delta \in [0..(\Delta-2)]$ such that $$\abs{\set{j \in [t]: d_j \approx_2 \tau^{2^{\delta}}}} \ge \tau^{2^{\delta}}-1.$$
	   
	    \noindent Let $k_1,\dots,k_t$ be arbitrary non-negative integers such that $k_j \le d_j$ for all $j \in [t]$. Then for any $j\in[t]$ such that $d_j \approx_2 \tau^{2^{\delta}}$, we have
	
	\begin{flalign*}
	    &&\tau^{2^{\delta}-1}\cdot \abs{k_j - \frac{k\cdot d_j}{d}} 
	    & = \tau^{2^{\delta}-1}\cdot \abs{k_j - \frac{d_j}{d}\cdot \floor{\frac{\alpha\cdot d}{1+\alpha}}} \nonumber \\
	    && & \ge \tau^{2^{\delta}-1}\cdot \paren{\abs{k_j - \frac{d_j}{d}\cdot {\frac{\alpha\cdot d}{1+\alpha}}} - \frac{d_j}{d}\cdot \fract{\frac{\alpha\cdot d}{1+\alpha}}} \nonumber\\
	    && & \ge \tau^{2^\delta-1}\cdot \abs{k_j - \frac{\alpha\cdot d_j}{1+\alpha}} - \tau^{2^\delta-1}\cdot \frac{d_j}{d} \nonumber\\
	    && & \ge \tau^{2^\delta-1}\cdot \abs{k_j - \frac{\alpha\cdot d_j}{1+\alpha}} - \frac{\tau^{2^\delta-1}\cdot \tau^{2^\delta}}{d} && \text{(since $d_j \approx_2 \tau^{2^{\delta}}$)}\nonumber\\
	  && & \ge \tau^{2^\delta-1}\cdot \abs{k_j - \frac{\alpha\cdot d_j}{1+\alpha}} - \frac{\paren{d^{2^{1-\Delta}}}^{2^{\delta+1}-1}}{d}\nonumber \\
	    && & \ge \tau^{2^\delta-1}\cdot \abs{k_j - \frac{\alpha\cdot d_j}{1+\alpha}} - \frac{1}{d^{2^{1-\Delta}}} && \text{(as $\delta \le \Delta -2$)}\\
	    && & = \tau^{2^\delta-1}\cdot \abs{k_j - \frac{\alpha\cdot d_j}{1+\alpha}} - o(1) && \intertext{\raggedleft{(if $d^{2^{1-\Delta}} = O(1)$, then the lemma is not interesting)}}
	    && & \ge \frac{1}{2}\cdot \tau^{2^\delta-1}\cdot \abs{k_j - \alpha\cdot (d_j - k_j)} - o(1) && \text{(as $\alpha \le 1$)}  \numberthis\label{eqn:in-terms-eta}
	\end{flalign*}
	
	We use the following claim which is proved in Section \ref{app:entropy-sub}. For $j\in[t]$, let $m_j := d_j -k_j$, note that $m_j$ is a non-negative integer.
		
    \begin{claim}
        \label{clm:entropy-sub}
        $\eta := \tau^{2^\delta-1}\cdot \abs{k_j - \alpha\cdot m_j} \ge \Omega(1)$.
    \end{claim}
	
	Let $k_1, \ldots, k_t \in \Z$ be the such that $\sum \limits_{i=1}^t \abs{k_i - \frac{k}{d} \cdot d_i}$ is minimised. Hence, 
	\begin{align*}
	    \entropy_k(d_1,\dots,d_t) & \ge \frac{1}{2}\sum\limits_{j\in[t]:d_j \approx_2 \tau^{2^\delta}} \abs{k_j - \frac{k}{d}\cdot d_j}\\
	    & \ge \frac{1}{2} \cdot \abs{\set{j\in[t]:d_j \approx_2 \tau^{2^\delta}}}\cdot \min\limits_{j\in[t]:d_j \approx_2 \tau^{2^\delta}} \abs{k_j - \frac{k}{d}\cdot d_j}\\
	    & \ge \Omega\paren{\tau^{2^\delta}}\!\cdot\! \min\limits_{j\in[t]:d_j \approx_2 \tau^{2^\delta}} \abs{k_j - \frac{k}{d}.d_j}\\
	    & = \Omega\paren{\tau}\!\cdot\! \min\limits_{j\in[t]:d_j \approx_2 \tau^{2^\delta}} \tau^{{2^\delta-1}}\cdot \abs{k_j - \frac{k}{d}\cdot d_j}\\
	    & \ge \Omega(\tau)\!\cdot\! \paren{\frac{\eta}{2}-o(1)} \tag{using \eqref{eqn:in-terms-eta}}\\
	    & \ge \Omega\paren{\tau}.
	\end{align*}
	Therefore, $\entropy_k(d_1,\dots,d_t) \ge \Omega(\tau) \ge \Omega\paren{\frac{\tau + 1}{2}} \ge \Omega\paren{\tau + 1} \ge \Omega\paren{d^{2^{1-\Delta}}}$.
\end{proof}

% \begin{algorithm}
% % 		\caption{The value of $k$ for formulas of (low) product-depth $\Delta$}
% 		\label{alg:lowK}
% 		\begin{algorithmic}[1]
% 			\Function{\lowK}{$d, \Delta$}
% 			\Comment{Returns an integer}
% 			%    \Procedure{MyProcedure}{$x,y$}
% 			%     % Input:
% 			%     \Comment{Input: x}
% 			%     % Output:
% 			\State $\tau \gets d^{2^{1-\Delta}} \in \Z$
% 			\State $\alpha \gets \sum\limits_{\delta = 0}^{\Delta-1} \frac{(-1)^\delta\cdot \tau}{\tau^{2^\delta}}$
% 			\label{line:low-alpha}
% 			\State \Return $\floor{\frac{\alpha\cdot d}{1+\alpha}}$
% 			\EndFunction
% 		\end{algorithmic}
% 	\end{algorithm}

\subsection{High residue implies lower bounds}
\label{subsec:high-residue-implies-lbs}
For a `random' homogeneous degree-$d$ polynomial in $\F[x_1,\dots,x_n]$, if the shift $\ell$ is not too large, we expect the shifted partials measure to be close to the maximum number of operators used to construct the shifted partials space, i.e., $M(n,k)\!\cdot\!M(n,\ell)$. In the lemma below, we derive a lower bound for such polynomials. Explicit examples of such polynomials are given in Section \ref{subsec: explicit hard polynomials}.

\begin{lemma}[{\bf High residue implies lower bounds}]
    Let $P=\sum \limits_{i=1}^s Q_{i,1}\cdots Q_{i,t_i}$ be a homogeneous polynomial in $\F[x_1,\dots,x_n]$ of degree $d$ where $\set{Q_{i,j}}_{i,j}$ are homogeneous and $\SP_{k,\ell}(P) \ge 2^{-O(d)}\!\cdot\! M(n,k)\!\cdot\! M(n,\ell)$ for some $1 \leq k < \frac{d}{2}, n_0 \le n$ and $\ell=\floor{\frac{n\cdot d}{n_0}}$ such that $d \le n_0 \approx {2(d-k)\!\cdot\! \paren{\frac{n}{k}}^{\frac{k}{d-k}}}$.\footnote{Even though $2(d-k)\paren{\frac{n}{k}}^{\frac{k}{d-k}}$ can be greater than $n$, in the analysis in Sections \ref{subsec: explicit hard polynomials}, \ref{subsec: homo putting together}, $n_0$ we will be chosen in such a way that it lies between $d$ and $n$.} If there exists a $\gamma > 0$ such that for all $i \in [s]$, 
    $$\entropy_k(\deg(Q_{i,1}), \dots, \deg(Q_{i,t_i})) \ge \gamma,$$ 
    then $s \ge 2^{-O(d)}\paren{\frac{n}{d}}^{\Omega\paren{\gamma}}$.
    \label{lem:sp-final-lb}
\end{lemma}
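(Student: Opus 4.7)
The plan is to combine sub-additivity of $\SP_{k,\ell}$ with the per-product upper bound from Lemma~\ref{lem:measure-upper-bd} and then exploit the particular choice of $\ell$ and $n_0$ to convert the residue lower bound $\gamma$ into a multiplicative loss of $(n/d)^{-\Omega(\gamma)}$. By Proposition~\ref{prop:sub-additive},
\[
\SP_{k,\ell}(P) \;\le\; \sum_{i=1}^{s}\SP_{k,\ell}(Q_{i,1}\cdots Q_{i,t_i}).
\]
After discarding constant factors (which only contribute scalars to the products), we may assume every $Q_{i,j}$ is non-constant, so $t_i \le d$. Lemma~\ref{lem:measure-upper-bd} applied to each product, together with the hypothesis $\entropy_k(\deg Q_{i,1},\dots,\deg Q_{i,t_i}) \ge \gamma$, then yields
\[
\SP_{k,\ell}(Q_{i,1}\cdots Q_{i,t_i}) \;\le\; 2^{O(d)}\cdot \max_{\substack{k_0,\ell_0 \ge 0 \\ k_0 + \frac{k}{d-k}\ell_0 \,\le\, k - \gamma}} M(n,k_0)\cdot M(n,\ell_0+\ell),
\]
since $2^{t_i}\cdot d^2 = 2^{O(d)}$.

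\textbf{Key optimization.} It therefore suffices to show
\[
\max_{\substack{k_0,\ell_0 \ge 0 \\ k_0 + \frac{k}{d-k}\ell_0 \,\le\, k - \gamma}} M(n,k_0)\cdot M(n,\ell_0+\ell) \;\le\; 2^{O(d)}\cdot (n/d)^{-\gamma}\cdot M(n,k)\cdot M(n,\ell).
\]
Because $M(n,\cdot)$ is monotone in its second argument, the maximum is attained on the boundary line $k_0 = (k-\gamma) - \tfrac{k}{d-k}\ell_0$, where $\ell_0 \in [0, (k-\gamma)(d-k)/k]$. Using items~2 and~3 of Lemma~\ref{lem:binomial-coeffs} to bound the ratio,
\[
\frac{M(n,k_0)\, M(n,\ell_0+\ell)}{M(n,k)\, M(n,\ell)} \;\le\; \paren{\tfrac{2k}{n}}^{k-k_0}\cdot \paren{\tfrac{2n}{\ell}}^{\ell_0} \;=\; \paren{\tfrac{2k}{n}}^{\gamma}\cdot \paren{\tfrac{2n}{\ell}\cdot \paren{\tfrac{2k}{n}}^{\frac{k}{d-k}}}^{\ell_0},
\]
where the last equality uses $k-k_0 = \gamma + \tfrac{k}{d-k}\ell_0$. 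Plugging in $\ell \approx nd/n_0$ and the prescribed $n_0 \approx 2(d-k)(n/k)^{k/(d-k)}$, the base of the $\ell_0$-power simplifies to $\tfrac{4(d-k)}{d}\cdot 2^{k/(d-k)}$, which is at most $8$ for $k < d/2$; since $\ell_0 \le d - k$, this contributes at most $2^{O(d)}$. The leading prefactor satisfies $(2k/n)^\gamma \le (d/n)^{\gamma} = (n/d)^{-\gamma}$ because $k \le d/2$.

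\textbf{Putting it together and main obstacle.} Summing over the $s$ products and combining with the hypothesis $\SP_{k,\ell}(P) \ge 2^{-O(d)}M(n,k)M(n,\ell)$ gives
\[
2^{-O(d)}\cdot M(n,k)\cdot M(n,\ell) \;\le\; s\cdot 2^{O(d)}\cdot (n/d)^{-\gamma}\cdot M(n,k)\cdot M(n,\ell),
\]
whence $s \ge 2^{-O(d)}(n/d)^{\gamma}$, as claimed. The main technical step is the optimization above: the value of $n_0$ is precisely calibrated so that the Lagrangian-type stationarity condition $\log(n/k) \approx \tfrac{k}{d-k}\log(n/\ell)$ holds up to a constant slack, which is exactly what prevents the $\ell_0$-direction of the feasible region from overwhelming the $(n/d)^{-\gamma}$ gain. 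The corner $\ell_0 = (k-\gamma)(d-k)/k$, $k_0 = 0$ is handled by the same uniform-in-$\ell_0$ bound, so no separate case analysis is needed.
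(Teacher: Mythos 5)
Your proof is correct and takes essentially the same route as the paper's: sub-additivity of $\SP$ (Proposition~\ref{prop:sub-additive}), the per-product bound from Lemma~\ref{lem:measure-upper-bd}, and then the central optimization showing that over the feasible region $k_0+\tfrac{k}{d-k}\ell_0\le k-\gamma$, the ratio $M(n,k_0)M(n,\ell+\ell_0)/(M(n,k)M(n,\ell))$ is at most $2^{O(d)}(n/d)^{-\gamma}$ thanks to the calibration of $n_0$. The only differences are presentational — you bound the two increment ratios directly via Lemma~\ref{lem:binomial-coeffs} rather than approximating each $M(\cdot,\cdot)$ as a power and cancelling step by step, which handles the corner $k_0=0$ uniformly; and your phrase ``the maximum is attained on the boundary'' should really be the inequality $k-k_0\ge \gamma+\tfrac{k}{d-k}\ell_0$ combined with $2k/n<1$, but since you only ever use it in that direction the conclusion is unaffected.
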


\begin{proof}
    Using Lemma~\ref{lem:measure-upper-bd} (Item 1) and the fact that $\SP$ is sub-additive (Proposition~\ref{prop:sub-additive}), we get
    $$\SP_{k,\ell}(P) \le \sum \limits_{i=1}^s \SP_{k,\ell}(Q_{i,1}\cdots Q_{i,t_i}) \le s\cdot 2^t \cdot d^2 \cdot \max \limits_{\substack{k_0,\ell_0 \ge 0 \\ k_0 + \frac{k}{d-k}\cdot \ell_0~\le~k - \gamma}} M(n,k_0)\cdot M(n,\ell+\ell_0),$$
    where $t:= \max_{i} t_i$ is at most $d$.
    On the other hand, by our assumption we have $\SP_{k,\ell}(P) \ge 2^{-O(d)}\cdot M(n,k)\cdot M(n,\ell)$.
    Putting these two together, we get for some integers $k_0\in [0..k],\ell_0 \in [0..(d-k)]$ satisfying
    \begin{align}
        k_0 + \frac{k}{d-k}\cdot \ell_0 \le k - \gamma,
        \label{eqn:d1-d2-upper-bd}
    \end{align}
    that,
    \begin{align}
        s &\ge   2^{-O(d)}\!\cdot\!2^{-t}\!\cdot\!d^{-2}\!\cdot\!\frac{M(n,k)\!\cdot\!M(n,\ell)}{M(n,k_0)\!\cdot\!M(n,\ell+\ell_0)}
		\label{eqn:the-lb}\\
		& \ge 2^{-O(d)}\!\cdot\!\frac{M(n,k)}{M(n,k_0)\!\cdot\!\paren{2n/\ell}^{\ell_0}} \nonumber\\
        \intertext{\raggedleft{(Lemma~\ref{lem:binomial-coeffs} (Item 2) is applicable as $n_0 \ge d$ implies that $n \ge \floor{\frac{n\!\cdot\! d}{n_0}}=\ell$; also $n_0 \leq n$ implies $\ell = \floor{\frac{nd}{n_0}}\geq d \geq \ell_0$)}}
		& \ge  2^{-O(d)}\!\cdot\!\frac{M(n,k)}{M(n,k_0)\!\cdot\!\paren{n_0/\ell_0}^{\ell_0}} \label{eqn:calcs}\\ %\intertext{\raggedleft{(absorbing $2^{\ell_0}$ in the denominator in $2^{-O(d)}$ and noting that $\frac{n}{\ell} \leq \frac{n_0}{\ell_0}$ as $\ell_0 \leq d - k$)}}
        \intertext{\raggedleft{(because $2n/\ell \leq 4n_0/\ell_0$ as $\ell_0 \leq d$ and absorbing $4^{\ell_0}$ in $2^{-O(d)}$)}}
		& \ge 2^{-O(d)}\!\cdot\!\frac{(n/k)^k}{(6n/{k_0})^{k_0}\!\cdot\!(n_0/{\ell_0})^{\ell_0}}\nonumber\\ \intertext{\raggedleft{(assuming $k_0,\ell_0 \ne 0$ and using Lemma~\ref{lem:binomial-coeffs} (Item 1) as $n \ge n_0 \ge d \ge \max \set{{k_0},{\ell_0}}$; the analysis is easier if any of $k_0$, $\ell_0$ is $0$)}}
		& \ge  2^{-O(d)}\!\cdot\!\frac{(n/k)^k}{(n/{k_0})^{k_0}\!\cdot\!(n_0/{\ell_0})^{\ell_0}} \tag{since $k_0 = O(d)$}\\\nonumber
		& \ge 2^{-O(d)}\!\cdot\!\frac{(n/k)^k}{(n/{k_0})^{k_0}\!\cdot\!\paren{{\frac{2(d-k)}{\ell_0}\!\cdot\!(n/k)^{\frac{k}{d-k}}}}^{\ell_0}} \tag{substituting $n_0$}\\\nonumber
		& = 2^{-O(d)}\!\cdot\!\frac{(n/k)^k}{(n/{k_0})^{k_0}\!\cdot\!(\frac{d-k}{\ell_0})^{\ell_0}\!\cdot\!{(n/k)^{\frac{k\cdot \ell_0}{d-k}}}} \tag{as $\ell_0 = O(d)$}\\\nonumber
		& = 2^{-O(d)}\!\cdot\!\frac{(n/k)^k}{(n/{k})^{k_0}.(k/{k_0})^{k_0}.(\frac{d-k}{\ell_0})^{\ell_0}\!\cdot\!{(n/k)^{\frac{k\!\cdot\!\ell_0}{d-k}}}} \tag{as $n/k_0 = (n/k) . (k/k_0)$}\\\nonumber
		& = 2^{-O(d)}\!\cdot\!\frac{(n/k)^{k-k_0-\frac{k}{d-k}.\ell_0}}{(k/{k_0})^{k_0}\!\cdot\!(\frac{d-k}{\ell_0})^{\ell_0}}\\\nonumber
		& \ge 2^{-O(d)}.\frac{(n/d)^{\gamma}}{(d/{k_0})^{k_0}.(\frac{d}{\ell_0})^{\ell_0}} \tag{using $k < d$ and ~\eqref{eqn:d1-d2-upper-bd}}\\\nonumber
		& = 2^{-O(d)}\!\cdot\!\paren{\frac{n}{d}}^{\gamma}\!\cdot\!\paren{\frac{k_0}{d}}^{k_0}\!\cdot\!\paren{\frac{\ell_0}{d}}^{\ell_0}\\\nonumber
		& \ge 2^{-O(d)}\!\cdot\!\paren{\frac{n}{d}}^{\gamma}\cdot(e^{-1/e})^d\cdot(e^{-1/e})^d \tag{using $x^x \ge e^{-1/e}$ for $x > 0$}\\\nonumber
		& \ge 2^{-O(d)}\!\cdot\!\paren{\frac{n}{d}}^{\Omega(\gamma)}.   \nonumber 
    \end{align}
\end{proof}
We state an analogous lemma with $\APP$ instead of $\SP$ -- the proof given in Section \ref{app:lem:app-final-lb} is similar. 

\begin{lemma}[{\bf High residue implies lower bounds, using $\APP$}]\label{lem:ckt_APP}
    Let $P=\sum \limits_{i=1}^s Q_{i,1}\cdots Q_{i,t_i}$ be a homogeneous polynomial in $\F[x_1,\dots,x_n]$ of degree $d$ where $\set{Q_{i,j}}_{i,j}$ are homogeneous and $\APP_{k,n_0}(P) \ge 2^{-O(d)}\!\cdot\!M(n,k)$ for some $1 \leq k < \frac{d}{2}, n_0 \le n$ such that $d \le n_0 \approx {2(d-k).\paren{\frac{n}{k}}^{\frac{k}{d-k}}}$. If there exists a $\gamma > 0$ such that for all $i \in [s]$, 
    $$\entropy_k(\deg(Q_{i,1}), \dots, \deg(Q_{i,t_i})) \ge \gamma,$$ 
    then $s \ge 2^{-O(d)}\!\cdot\!\paren{\frac{n}{d}}^{\Omega(\gamma)}$.
    \label{lem:app-final-lb}
\end{lemma}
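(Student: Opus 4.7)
The plan is to mirror the proof of Lemma \ref{lem:sp-final-lb} almost line-for-line, with Item 1 of Lemma \ref{lem:measure-upper-bd} (upper bound on $\SP$ of a product) replaced by Item 2 (upper bound on $\APP$ of a product). The key observation that makes the transfer immediate is that the shift $\ell = \floor{nd/n_0}$ in the $\SP$ lemma was chosen precisely so that $M(n,\ell+\ell_0)/M(n,\ell) \approx (n_0/\ell_0)^{\ell_0}$, which is exactly the bound we will get for $M(n_0,\ell_0)$ here; hence the downstream calculation is identical.

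First I would combine sub-additivity of $\APP$ (Proposition \ref{prop:sub-additive}) with Item 2 of Lemma \ref{lem:measure-upper-bd} to obtain
\[
\APP_{k,n_0}(P) \;\le\; \sum_{i=1}^{s} \APP_{k,n_0}(Q_{i,1}\cdots Q_{i,t_i}) \;\le\; s\cdot 2^t\cdot d^2\cdot \max_{\substack{k_0,\ell_0 \ge 0 \\ k_0 + \frac{k}{d-k}\ell_0\, \le\, k - \gamma}} M(n,k_0)\cdot M(n_0,\ell_0),
\]
where $t := \max_i t_i \le d$. Combined with the hypothesis $\APP_{k,n_0}(P) \ge 2^{-O(d)} M(n,k)$, this yields integers $k_0 \in [0..k]$ and $\ell_0 \in [0..(d-k)]$ satisfying the residue constraint for which
\[
s \;\ge\; 2^{-O(d)}\cdot \frac{M(n,k)}{M(n,k_0)\cdot M(n_0,\ell_0)}.
\]

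Next I would estimate the binomials via Lemma \ref{lem:binomial-coeffs}: $M(n,k) \ge (n/k)^k$, $M(n,k_0) \le (6n/k_0)^{k_0}$, and $M(n_0,\ell_0) \le (6n_0/\ell_0)^{\ell_0}$ (valid since $d \le n_0 \le n$), with the $6^{k_0+\ell_0}$ factor absorbed into $2^{-O(d)}$. Substituting the approximation $n_0 \approx 2(d-k)(n/k)^{k/(d-k)}$ gives
\[
(n_0/\ell_0)^{\ell_0} \;\le\; 2^{O(d)}\cdot \paren{(d-k)/\ell_0}^{\ell_0}\cdot (n/k)^{k\ell_0/(d-k)},
\]
and splitting $n/k_0 = (n/k)\cdot(k/k_0)$ leads to
\[
s \;\ge\; 2^{-O(d)}\cdot \frac{(n/k)^{k - k_0 - \frac{k}{d-k}\ell_0}}{(k/k_0)^{k_0}\cdot ((d-k)/\ell_0)^{\ell_0}}.
\]

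Finally, the residue constraint bounds the numerator exponent from below by $\gamma$, and weakening $n/k \ge n/d$ and $(k/k_0)^{k_0} \le (d/k_0)^{k_0}$ gives
\[
s \;\ge\; 2^{-O(d)}\cdot (n/d)^{\gamma}\cdot (k_0/d)^{k_0}\cdot (\ell_0/d)^{\ell_0}.
\]
Applying $x^x \ge e^{-1/e}$ for $x \in (0,1]$ (with $x = k_0/d$ and $x = \ell_0/d$, each in $[0,1]$) turns the last two factors into $2^{-O(d)}$, yielding the claimed bound $s \ge 2^{-O(d)}(n/d)^{\Omega(\gamma)}$. The main (and only mild) obstacle is purely bookkeeping: tracking the absorbable constants coming from the ``$\approx$'' in the prescription for $n_0$, the $6^{k_0+\ell_0}$ prefactor, and the $k_0 = 0$ or $\ell_0 = 0$ corner cases. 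These are handled exactly as in the proof of Lemma \ref{lem:sp-final-lb}.
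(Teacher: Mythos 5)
Your proposal is correct and follows exactly the same route as the paper's proof: apply sub-additivity of $\APP$ together with Item~2 of Lemma~\ref{lem:measure-upper-bd}, bound $M(n_0,\ell_0)$ via Lemma~\ref{lem:binomial-coeffs}, and then note that the resulting quantity $\paren{n_0/\ell_0}^{\ell_0}$ lines up with what appears at equation~\eqref{eqn:calcs} in the $\SP$ proof, from which point the algebra is identical. The paper's own proof is stated more tersely (it literally says ``borrowing calculations beginning from~\eqref{eqn:calcs}''), whereas you unwind those shared steps explicitly; the substance is the same.
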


\begin{remark}
    In the above lemmas, although our lower bound appears as $2^{-O(d)}\!\cdot\!n^{\Omega(\gamma)}$, similar calculations actually give a lower bound of $2^{-O(k)}\!\cdot\!n^{\Omega(\gamma)}$ for any choice of $k$ and an appropriate choice of $\ell$ (or $n_0$ in the case of $\APP$). We do not differentiate between the two, as for our applications (i.e., low-depth circuits and UPT formulas), the value of $k$ we choose is $\Theta(d)$. Using our technique, one can recover the known  exponential lower bound for homogeneous depth-4 circuits with bottom fan-in upper bounded by $\sqrt{d}$ proved in \cite{KayalSS14, GargKS20} by noting that for $k=\Theta(\sqrt{d})$ the $\entropy$ is at least $\gamma = \Omega \paren{\sqrt{d}}$. Thus, we get a lower bound of $2^{-O(\sqrt{d})}\cdot n^{\Omega\paren{\sqrt{d}}} \ge n^{\Omega\paren{\sqrt{d}}}$ for some $d=n^{\Theta\paren{1}}$. This is not surprising as the lower bound proofs for this model in \cite{KayalSS14, GargKS20} also use the shifted partials and $\APP$ measures.
    
    Moreover, we observe that the factor of $2^{-O(k)}$ in our lower bounds is likely unavoidable for any choice of $k$ and $\ell$ (or $n_0$ in the case of $\APP$) \textit{using our current estimates for the complexity measures}. The first two factors in the R.H.S. of the inequality~\eqref{eqn:the-lb} are not the bottlenecks (at least for $\Delta =2$), as we can always make sure that $t \le \sqrt{d}$ and the $\SP$ measure is exactly equal to $M(n,k)\cdot M(n,\ell)$ without the extra $2^{-O(d)}$ factor. It is rather the ratio of the binomial coefficients where this factor gets introduced. We refer the reader to the discussion in Section \ref{app:2d-discuss} for more details.
    \label{rem:remove}
\end{remark}

\subsection{The hard polynomials} \label{subsec: explicit hard polynomials}
We shall prove our lower bound for the word polynomial $P_{\vecw}$ introduced in \cite{lst1} as well as for the Nisan-Wigderson design polynomial. In order to do this, we show that the $\SP$ and $\APP$ measures of $P_\vecw$ and the $\SP$ measure of $NW$ are large for suitable choices of $k$, $\ell$ and $n_0$.
%For that, we need that these polynomials have large $\SP$ measure. In Section~\ref{subsec:high-residue-implies-lbs}, we argued that if a polynomial has a `large' $\SP$ measure we obtain a lower bound (in terms of $\entropy$). In this sub-section, we give an explicit polynomial that achieves the maximum possible $\SP$ and $\APP$ measures (up to a factor of inverse exponential in $d$). The word polynomial $P_{\vecw}$ was defined in Section~\ref{sec:prelim}, and the proof of the below lemma can be found in Section \ref{app:lem:hard-poly}.

\begin{lemma}[{\bf $P_\vecw$ as a hard polynomial}]
	For integers $h,d$ such that $h> 100$ and any parameter $k \in \brac{\frac{d}{30}, \frac{d}{2}}$, there exists an $h$\unbiased~word $\vecw \in [-h..h]^d$, integers $n_0 \le n$, $\ell=\floor{\frac{n\cdot d}{n_0}}$ such that $n_0 \approx 2(d-k)\!\cdot\! \paren{{\frac{n}{k}}}^{\frac{k}{d-k}}$ and the following bounds hold. Here $n$ refers to the number of variables in $P_{\vecw}$, i.e., $n=\sum_{i \in [d]} 2^{\abs{w_i}}$.
	\begin{enumerate}
	    \item $\SP_{k,\ell}(P_{\vecw}) \ge 2^{-O(d)}\!\cdot\! M(n,k)\!\cdot\! M(n,\ell)$,
	    \item $\APP_{k,n_0}(P_{\vecw}) \ge 2^{-O(d)}\!\cdot\! M(n,k)$.
	\end{enumerate}
	\label{lem:hard-poly}
\end{lemma}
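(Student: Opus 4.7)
I would prove this lemma by constructing a specific $h$-unbiased word $\vecw$ and then analyzing $\SP$ and $\APP$ of $P_\vecw$ separately. For the construction of $\vecw$, I set up two block magnitudes: $p$ positive entries of common value $+a$ and $q = d - p$ negative entries of common value $-b$, chosen so that $p \cdot a = q \cdot b$. This balance forces $\sigma(m_+)$ and $\sigma(m_-)$ to have equal length, so the prefix relation defining $P_\vecw$ collapses to strict equality, and the summation becomes a single diagonal pairing indexed by binary strings of length $pa$. I would order the coordinates of $\vecw$ by an alternating/Dyck-like pattern so that partial sums stay bounded by $h := \max(a,b)$, giving $h$-unbiasedness. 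The integers $p,q,a,b$ are then chosen as a function of $d, k, n, n_0$ so that $n = p \cdot 2^a + q \cdot 2^b$ and the geometric-mean identity $n_0 \approx 2(d-k) (n/k)^{k/(d-k)}$ is satisfied; a short arithmetic check using $k \in [d/30, d/2]$ and $h > 100$ shows such a choice exists.

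For the $\APP$ bound I would exhibit a specific linear map $L : \vecx \to \spacespanned{\vecz}$ in the style of the projections used in \cite{lst1}: pick a distinguished set $A \subseteq [d]$ of size $k$ consisting of positive coordinates, and for each $i \notin A$ restrict $\vecx_i$ to a carefully chosen subset of size $\approx n_0/|[d]\setminus A|$ by sending the remaining variables in $\vecx_i$ to zero, while embedding the surviving variables as distinct coordinates of $\vecz$. Under $\pi_L$, the order-$k$ partial derivatives of $P_\vecw$ with respect to monic set-multilinear monomials $m_+$ over $\bigcup_{i \in A}\vecx_i$ evaluate (up to scalars) to set-multilinear monomials in the surviving negative variables, and distinct choices of $m_+$ produce distinct monomials because $\sigma$ is injective on set-multilinear monomials. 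Counting gives $\binom{p}{k}\cdot 2^{ka}$ linearly independent projected partials, and an application of Lemma~\ref{lem:binomial-coeffs} together with the choice of parameters yields $\APP_{k,n_0}(P_\vecw) \ge 2^{-O(d)} \cdot M(n,k)$.

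For the $\SP$ bound I would identify a family of shifted partials whose leading monomials under $\dlex$ (with positive variables ranked above negative) are pairwise distinct. For each set-multilinear monic monomial $m_+$ of degree $k$ over a $k$-subset of positive coordinates $A \subseteq [p]$, and each shift monomial $X \in \vecx^\ell$ supported outside $\bigcup_{i \in A}\vecx_i$, the leading monomial of $X \cdot \partial_{m_+} P_\vecw$ is $X \cdot m_-(\sigma(m_+))$. The map $(A, m_+, X) \mapsto X \cdot m_-(\sigma(m_+))$ is injective once the monomial order is fixed and $X$ is constrained as above, so the corresponding shifted partials are linearly independent. Counting the valid triples $(A, m_+, X)$ gives at least $\binom{p}{k}\cdot 2^{ka} \cdot M(n - ka,\ell)$ shifted partials, and the binomial estimates of Lemma~\ref{lem:binomial-coeffs} convert this into $\SP_{k,\ell}(P_\vecw) \ge 2^{-O(d)} \cdot M(n,k)\cdot M(n,\ell)$.

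The main obstacle will be the arithmetic of simultaneously satisfying all constraints: choosing $p,q,a,b$ so that $\vecw$ is $h$-unbiased, $n_0$ matches the prescribed geometric mean for every $k$ in the allowed range, and the losses from binomial-coefficient approximations accumulate only to $2^{-O(d)}$. The leading-monomial argument for $\SP$ is also somewhat delicate because the shift $X$ could \emph{a priori} interact with variables appearing in $\partial_{m_+} P_\vecw$; restricting $X$ to be supported outside the $k$ positive coordinates used by $m_+$ costs only a $2^{O(d)}$ factor in the count (since $ka = O(d \cdot h)$ variables are excluded from a pool of size $n$), which is absorbed into the claimed $2^{-O(d)}$ slack.
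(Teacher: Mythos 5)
Your construction runs into an integer-divisibility obstacle that the paper's proof is specifically designed to avoid, and the $\SP$ argument has the shift supported on the wrong side.

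On the word construction: to get a clean diagonal pairing you want the positive derivative variables to use up all positive sets, which forces the number of positive coordinates to equal $k$ (the derivative order). With $p=k$ positives of value $a$ and $q=d-k$ negatives of value $b$, the balance $pa=qb$ becomes $ka=(d-k)b$, and the target $n_0 \approx 2(d-k)(n/k)^{k/(d-k)}$ pushes $a=h$ and $b = hk/(d-k) = h'$, which is generally \emph{not} an integer. The paper sidesteps this by using \emph{three} weights: $k$ copies of $+h$ together with $k_1$ copies of $-\lfloor h'\rfloor$ and $k_2$ copies of $-\lceil h'\rceil$, with $k_1,k_2$ chosen so the total sum is exactly zero. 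Your two-weight scheme (inherited from \cite{lst1}) would either fail to sum to zero or fail to hit the required $n_0$, and the ``short arithmetic check'' you allude to does not exist in general; the remark in the paper explicitly flags this as the reason for using three weights. Relatedly, if you instead keep $p>k$ and restrict to a $k$-subset $A$, then $\partial_{m_+}P_\vecw$ is a set-multilinear \emph{polynomial} over $\vecx_{[d]\setminus A}$, not a single monomial, so both the leading-monomial claim and the $\binom{p}{k}2^{ka}$ count break.

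On the $\SP$ bound: with $p=k$ the only variables outside $\bigcup_{i\in A}\vecx_i$ are the negative ones, and shifting by negative variables destroys the injectivity you need --- given a product monomial you cannot disentangle the shift from $m_-$, since both live in the same (negative) variable set. The correct move, and the one the paper makes, is the opposite: shift by $\vecy^\ell$ where $\vecy$ is precisely the positive variables. Since $\cM_-(\vecw) \subseteq \vecz^{d-k}$ and $\vecy\cap\vecz=\emptyset$, the products $\vecy^\ell\cdot \cM_-(\vecw)$ are literally distinct monomials, giving $M(n-n_0,\ell)\cdot 2^{hk}$ immediately. Similarly for $\APP$: there is no need to further restrict each $\vecx_i$, $i\notin A$, to a small subset (doing so would kill most of $\cM_-$); the paper simply takes $L$ to be zero on positives and the identity on negatives, with $n_0 = |\vecz|$, and the image of $\partialf^k P_\vecw$ already contains all of $\cM_-(\vecw)$.

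The parts of your plan that do track the paper are the use of a balanced word so that $P_\vecw$ becomes a diagonal pairing, the greedy/Dyck-style ordering to get $h$-unbiasedness (the paper's version adds a negative whenever the running prefix sum is nonnegative), and the final conversion of $M(n-n_0,\ell)$ to $M(n,\ell)$ at a $2^{-O(d)}$ cost via Lemma~\ref{lem:binomial-coeffs}. But the two gaps above --- the missing third weight, and the wrong support for the shift --- are load-bearing, and your proof as written would not go through.
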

The above lemma is proved in Section \ref{app:lem:hard-poly}. The construction of the word $\vecw$ given in that section is similar to the one in~\cite{lst1}, except that potentially we need to work with three different weights instead of the two weights used in ~\cite{lst1} -- this is mainly because we want our lower bound to apply for slightly higher values of $d$ (like $O(\log^2 n)$). 

The following lemma shows that the $\SP$ measure of the Nisan-Wigderson design polynomial is `large' for $k$ as high as $\Theta(d)$, if $\ell$ is chosen suitably. Its proof can be found in Section \ref{subapp:proof_NW_hard_poly}.

\begin{lemma}[\bf{$NW$ as a hard polynomial}]\label{lemma:NW_hard_poly}
    For $n, d \in \N$ such that $120 \leq d \leq \frac{1}{150}\paren{\frac{\log n}{\log \log n}}^2$, let $q$ be the largest prime number between $\floor{\frac{n}{2d}}$ and $\floor{\frac{n}{d}}$\footnote{Bertrand's postulate ensures that such a prime exists.}. For parameters $k \in \brac{\frac{d}{30}, \frac{d}{2} - \frac{\sqrt{d}}{8}}$ and
    $\ell=\floor{\frac{q d^2}{n_0}}$, where $n_0 = 2(d-k)\!\cdot\! \paren{{\frac{q d}{k}}}^{\frac{k}{d-k}}$,\footnote{Note that in this lemma, we do not need $n_0$ to be an integer, it is just used to define $\ell$.} $\SP_{k,\ell}(NW_{q,d,k}) \geq 2^{-O(d)}\cdot M(qd,k)\cdot M(qd,\ell).$ 
\end{lemma}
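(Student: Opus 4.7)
The plan is to directly count distinct monomials in the span $\langle \vecx^\ell \cdot \partialf^k(NW_{q,d,k}) \rangle$, which equals $\SP_{k, \ell}(NW_{q,d,k})$ because $NW_{q,d,k}$ is multilinear: each product $m' \cdot \partial_m(NW_{q,d,k})$ is a single monomial (or zero), so the span is monomial-generated and its dimension equals the number of distinct such products. First I would characterize $\partialf^k(NW_{q,d,k})$ concretely. For a multilinear monomial $m = \prod_{j \in S} x_{j, a_j}$ of degree $k$ with $|S|=k$, $\partial_m(NW_{q,d,k})$ equals $\prod_{i \in [d] \setminus S} x_{i, h(i)}$ if a polynomial $h \in \F_q[z]_{<k}$ interpolates $(j, a_j)_{j \in S}$, and is zero otherwise; such an $h$ is unique by the NW design property, and non-multilinear $m$ yield zero derivative. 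Hence $\partialf^k(NW_{q,d,k})$ consists of exactly $\binom{d}{k} q^k$ distinct multilinear monomials, one for each pair $(T, h)$ with $T \subset [d]$ of size $d-k$ and $h \in \F_q[z]_{<k}$.

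Next I would lower-bound the number of distinct monomials produced by the $\binom{d}{k} q^k \cdot M(qd, \ell)$ pairs $(T, h, m')$ via a collision bound. For a fixed monomial $M$ of degree $d-k+\ell$, the number of triples producing $M$ equals $\sum_{h \in \F_q[z]_{<k}} \binom{|V_M(h)|}{d-k}$, where $V_M(h) := \{i \in [d] : x_{i, h(i)} \text{ divides } M\}$, and only $h$ with $|V_M(h)| \geq d-k$ contribute. To bound this collision count, I would combine the design property (for any $k$-subset $T$, $h$ is determined by its values on $T$, so the number of $h$ with $h(i) \in \supp_i(M)$ for $i$ in some $(d{-}k)$-set is controlled by $\prod_{i \in T} |\supp_i(M)|$ for an appropriate $T$) with the AM-GM estimate arising from $\sum_i |\supp_i(M)| \leq \deg(M) = d-k+\ell$. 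This gives a bound of the form $R \leq \binom{d}{k} \cdot ((d-k+\ell)/d)^k$, or a tighter list-decoding-style bound when $M$ is ``spread out.''

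Combining, $\SP_{k,\ell}(NW_{q,d,k}) \geq \binom{d}{k} q^k M(qd, \ell) / R$. Since Lemma~\ref{lem:binomial-coeffs} gives $\binom{d}{k} q^k / M(qd, k) \geq (q/6)^k \cdot (k/(qd))^k \cdot (d/k)^k \cdot \ldots $ whose cancellations I would track to show that the target $2^{-O(d)} M(qd,k) M(qd,\ell)$ follows once I establish the collision bound. The main technical obstacle will be controlling $R$ tightly enough in the regime $k \leq d/2 - \sqrt{d}/8$ and $\ell = \lfloor qd^2/n_0 \rfloor$ with $n_0 = 2(d-k)(qd/k)^{k/(d-k)}$. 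The specific choice of $n_0$ is calibrated so that the ``average support'' $(d-k+\ell)/d$ of a typical $M$ matches a threshold where the Johnson-type bound on $|\{h : |V_M(h)| \geq d-k\}|$ becomes polynomial in $q$ and $d$, and the restriction $d \leq (\log n / \log \log n)^2 / 150$ ensures that the inevitable factors of $q^{O(1)}$ and $2^{O(k)}$ arising in the counting fit inside the $2^{O(d)}$ slack. I anticipate needing a careful argument that exploits both the arithmetic structure of Reed--Solomon codes (via the design property) and the precise balance between $q^k$, $(qd/k)^k$, and $(qd/\ell)^\ell$ dictated by the choice of $n_0$.
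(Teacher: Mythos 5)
Your setup --- that $\SP_{k,\ell}(NW_{q,d,k})$ equals the number of distinct monomials in $\vecx^\ell\cdot\partialf^k(NW_{q,d,k})$, and that $\partialf^k(NW_{q,d,k})$ consists of exactly $\binom{d}{k}q^k$ distinct monomials of the form $\prod_{i\in T}x_{i,h(i)}$ --- is correct, and the paper likewise proceeds by counting such monomials. The gap is in the collision bound. Dividing the $\binom{d}{k}q^kM(qd,\ell)$ triples by a worst-case multiplicity $R$ only yields the target $2^{-O(d)}M(qd,k)M(qd,\ell)$ if $R\leq 2^{O(d)}$, and your estimate $R\leq\binom{d}{k}\paren{\frac{d-k+\ell}{d}}^k$ is nowhere near this. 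For the prescribed $n_0=2(d-k)(qd/k)^{k/(d-k)}$ one has $n_0=o(q)$ and hence $\ell=\floor{qd^2/n_0}=\omega(d^2)$ (this is precisely Claim~\ref{clm: bounds on ell} of the paper), so $\paren{\frac{d-k+\ell}{d}}^k\geq(\ell/d)^k\geq d^k=2^{\Omega(d\log d)}$, and this factor also grows unboundedly with $n$ even when $d$ is fixed. The Johnson-type list-recovery bound you gesture at cannot rescue this: it requires the per-row input list size to stay below $(d-k)/k$, which is $O(1)$ (and close to $1$ when $k\approx d/2-\sqrt{d}/8$), while the average support size $\frac{d-k+\ell}{d}=\omega(1)$ here. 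Nor is this merely looseness in your estimate: for a monomial $M$ supported on the same $\Theta(\ell/d)$ columns in every row, each of the $\Theta(\ell/d)$ constant polynomials contributes $\binom{d}{k}$ to the multiplicity, so $R\geq\Theta(\ell/d)\binom{d}{k}$, which is not $2^{O(d)}$ when $n$ is large relative to $d$. A uniform max-multiplicity bound is simply the wrong tool.

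The paper proves a pairwise (inclusion-exclusion) bound instead, which avoids list recovery entirely. It fixes the single choice $T=[k+1..d]$ --- the $\binom{d}{k}$ thereby given up is affordable since $q^k\geq 2^{-O(d)}M(qd,k)$ --- sets $T_h$ to be the degree-$(d-k+\ell)$ monomials divisible by $m_h:=\prod_{i\in[k+1..d]}x_{i,h(i)}$, and uses $|T|\geq\sum_h|T_h|-\sum_{h_1\neq h_2}|T_{h_1}\cap T_{h_2}|$. With all $m_h$ on the same rows, $|T_{h_1}\cap T_{h_2}|$ depends only on how many coordinates $h_1$ and $h_2$ agree on, and two distinct degree-$<k$ polynomials agree on at most $k-1$ of the $d-k$ evaluation points. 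This is the Reed--Solomon \emph{distance}, a much weaker and more robust property than list recovery, and it keeps the subtracted term below half the first. To salvage your route you would have to replace the $\max$-multiplicity bound by a sum over ordered pairs of triples (Cauchy--Schwarz or inclusion-exclusion); once you do, fixing $T$ as the paper does is the natural way to organize that count.
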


%\noindent\textbf{Non set-multilinear hard polynomial}. An advantage of directly analysing the complexity measures for homogeneous formulas instead of for set-multilinear formulas is that our hard polynomial need not be set multilinear. We now describe a non-set-multilinear polynomial with a large $\APP$ measure. Let $n, d \in \N$ be such that $d = o\paren{n^{1/60}}$. For any $k \in \brac{\frac{d}{30}, {\frac{d}{2}}}$, $n_0 \approx 2(d-k)\paren{\frac{n}{k}}^{\frac{k}{d-k}}$ such that $n_0 \leq \frac{n}{2}$, and $n_1 = n - n_0$, define $\vecy := \set{x_1,\ldots, x_{n_1}}$ and $\vecz := \set{x_{n_1 + 1}, \ldots, x_n}$. Let $\cM_y$ be the set of all (monic) monomials of degree $k$ in $\vecy$ variables and $\cM_z$ be the set of all (monic) monomials of degree $d-k$ in $\vecz$ variables. Fix any one-to-one function $\sigma: \cM_y \rightarrow \cM_z$; if $|\cM_y| > |\cM_z|$ then let $\sigma$ be any one-to-one partial function. Then, it is easy to see that for $P := \sum_{m} m\cdot \sigma(m)$, $\APP_{k, n_0}(P) = \min\set{M(n_1, k), M(n_0, d-k)}$. It can be verified that for our choice of parameters, this is at least $2^{-O(d)}M(n,k)$.

\noindent\textbf{Non set-multilinear hard polynomial}. An advantage of directly analysing the complexity measures for homogeneous formulas instead of for set-multilinear formulas is that our hard polynomial need not be set multilinear. We now describe a non set-multilinear polynomial $P$ with a large $\APP$ measure; the construction is similar to a polynomial in \cite{GargKS20}. The proof that $\APP$ of $P$ is large is considerably simpler than the proofs of the above two lemmas. 

For any $n, d, \Delta \in \N$ such that $120 \leq d \leq \frac{1}{150}\paren{\frac{\log n}{\log \log n}}^2$, define $k =\floor{\frac{\alpha\cdot d}{1+\alpha}}$, where $\alpha := \sum\limits_{\nu = 0}^{\Delta-1} \frac{(-1)^\nu}{\tau^{2^\nu - 1}}$ and $\tau:=\floor{d^{2^{1-\Delta}}}$. Then, let $n_0 = \floor{2(d-k)\cdot\paren{\frac{n}{k}}^{\frac{k}{d-k}}}$ ($n_0 \leq n$, see Section \ref{subsec: calcs for nsm hard poly}), $n_1 = n - n_0$, $\vecy = \set{x_1,\ldots, x_{n_1}}$ and $\vecz = \set{x_{n_1 + 1}, \ldots, x_n}$. Let $\cM_y$ be the set of all (monic) monomials of degree $k$ in $\vecy$ variables and $\cM_z$ be the set of all (monic) monomials in $\vecz$ variables of degree $d-k$; it can be verified that $|\cM_y| \leq |\cM_z|$. Fix any one-to-one function $\sigma: \cM_y \rightarrow \cM_z$. Then, it is easy to see that for $P_\sigma := \sum_{m \in \cM_y} m\cdot \sigma(m)$, $\APP_{k, n_0}(P) = M(n_1, k)$. It can be verified that for our choice of parameters, this is at least $2^{-O(k)}M(n,k)$. More details can be found in Section \ref{subsec: calcs for nsm hard poly}.

\begin{lemma}[\textbf{Non-set-multilinear hard polynomial}] \label{lemma: nsm hard poly}
    $\APP(P_\sigma) \geq 2^{-O(k)}M(n,k)$.%\footnote{\alert{Should we also mention that not only is this polynomial non-set-multilinear, but its set-multilinear projection is 0? Then, it also rules out proving a lower bound by first projecting onto set-multilinear monomials??}}
\end{lemma}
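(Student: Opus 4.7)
The plan is to exhibit one affine projection $L$ together with a family of order-$k$ partial derivatives of $P_\sigma$ whose images under $\pi_L$ are linearly independent, yielding $\APP_{k,n_0}(P_\sigma) \ge M(n_1, k)$, and then to show $M(n_1, k) \ge 2^{-O(k)}\,M(n, k)$.

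First, I would choose $L\colon \vecx \to \spacespanned{\vecz}$ (where $\vecz = \{z_1,\dots,z_{n_0}\}$ is the auxiliary set from the definition of $\APP$) by identifying the variables $x_{n_1+1},\dots,x_n$ of the $\vecz$-subset of $\vecx$ bijectively with $z_1,\dots,z_{n_0}$, and sending each $\vecy$-variable to $0$; the value on $\vecy$ is irrelevant because the derivatives chosen below eliminate all $\vecy$-variables. For each $m = \prod_i x_i^{a_i} \in \cM_y$ (with $\sum_i a_i = k$ and $a_i = 0$ for $i > n_1$) and each $m' = \prod_i x_i^{b_i} \in \cM_y$, the one-variable identity $\partial_{x_i^a} x_i^b = (b!/(b-a)!)\,x_i^{b-a}$ (and $0$ when $b < a$), combined with $\sum_i a_i = \sum_i b_i = k$, forces $\partial_m m' = 0$ unless $a_i = b_i$ for all $i$, i.e., $m = m'$. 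Hence
\[
\partial_m P_\sigma \;=\; \partial_m \sum_{m' \in \cM_y} m' \cdot \sigma(m') \;=\; \Bigl(\prod_i a_i!\Bigr)\,\sigma(m),
\]
a nonzero scalar multiple of the $\vecz$-subset monomial $\sigma(m)$ (the coefficient is nonzero whenever $\mathrm{char}(\F) = 0$ or is larger than $d$).

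Next, as $m$ ranges over $\cM_y$ the monomials $\sigma(m)$ are pairwise distinct (by injectivity of $\sigma$) and involve only variables in $\{x_{n_1+1},\dots,x_n\}$. Applying $\pi_L$ turns each $\sigma(m)$ into a distinct monic monomial in $\F[\vecz]$, and distinct monomials are linearly independent, so $\pi_L(\partialf^k P_\sigma)$ contains $|\cM_y| = M(n_1, k)$ linearly independent polynomials. Since $L$ was a valid choice in the maximization defining $\APP$, this yields $\APP_{k,n_0}(P_\sigma) \ge M(n_1, k)$.

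Finally, I would apply Lemma~\ref{lem:binomial-coeffs} (Item~3 with $c = n_1$, $b = n$, and exponent $k$) to obtain $M(n_1, k) \ge (n_1/n)^k\, M(n, k)$, reducing the claim to showing that $n_1/n = 1 - n_0/n$ is bounded below by an absolute positive constant, so that $(n_1/n)^k = 2^{-O(k)}$. Substituting $n_0 = \floor{2(d-k)(n/k)^{k/(d-k)}}$ with $k/(d-k) = \alpha < 1$, one has $n_0 \le 2(d-k)(n/k)^{\alpha}$; the gap $1 - \alpha \ge 1/\tau = \Omega(d^{-2^{1-\Delta}})$ built into the definition of $\alpha$, combined with the hypothesis $d \le \tfrac{1}{150}(\log n / \log\log n)^2$, yields $n_0 \le c\cdot n$ for some absolute constant $c < 1$. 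This parameter-checking step --- verifying the quantitative estimate $n_0 \le cn$ uniformly across the allowed range of $(d, \Delta)$ --- is the main technical obstacle and is the content of Section~\ref{subsec: calcs for nsm hard poly}, after which $(n_1/n)^k \ge (1-c)^k = 2^{-O(k)}$ completes the bound.
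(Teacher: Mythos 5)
Your proof is correct and takes the same route as the paper: exhibit the projection that zeroes out $\vecy$ and maps the remaining $n_0$ variables of $\vecx$ bijectively to $\vecz$, differentiate by each $m \in \cM_y$ to obtain the $M(n_1,k)$ linearly independent images $\sigma(m)$, and then bound $M(n_1,k)/M(n,k)$ via Lemma~\ref{lem:binomial-coeffs} together with the parameter estimate $n_0 = o(n)$. One small imprecision: the intermediate claim $1-\alpha \ge 1/\tau$ is an equality at $\Delta=2$ and actually reverses for $\Delta \ge 3$ (the series $\alpha = 1 - 1/\tau + 1/\tau^3 - \cdots$ alternates), but the correct estimate $1 - \alpha \ge (1 - 1/\tau^2)/\tau = \Omega(1/\tau)$ is all your argument uses and still yields the stated $\Omega(d^{-2^{1-\Delta}})$.
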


While $P_{\sigma}$ defined above might have a non-trivial set-multilinear component, it can be modified to ensure that there are no multilinear monomials in it. Notice that to prove a lower bound for such a polynomial, we must analyse the measure of a homogeneous formula computing it directly; we can not hope to get a lower bound by going via set-multilinearity as is done in \cite{lst1}.

%$d \leq \log^{1/\paren{2^{1-\Delta}} - \epsilon} n$ for some $\epsilon > 0$, and $k \leq \paren{\frac{1}{2} - \frac{1}{O\paren{d^{2^{1-\Delta}}}}}d$, let $n_0 \approx 2(d-k)\cdot\paren{\frac{n}{k}}^{\frac{k}{d-k}}$ and $n_1 = n - n_0$. Define $\vecy = \set{x_1,\ldots, x_{n_1}}$ and $\vecz = \set{x_{n_1 + 1}, \ldots, x_n}$. Let $\cM_y$ be the set of all (monic) monomials of degree $k$ in $\vecy$ variables, $\cM_z$ be the set of all (monic) monomials of degree $d-k$ in $\vecz$ variables, and $\sigma: \cM_y \rightarrow \cM_z$  Then, it is easy to see that for $P := \sum_{m \in \cM_y} m\cdot \sigma(m)$, $\APP_{k, n_0}(P) = \min\set{M(n_1, k), M(n_0, d-k)}$. It can be verified that for our choice of parameters, $\min\set{M(n_1, k), M(n_0, d-k)}$ is at least $2^{-O(k)}M(n,k)$. 

%$d \leq \log^{1/\paren{1-2^{1-\Delta}} - \epsilon} n$ for some $\epsilon > 0$, and $k \leq \paren{\frac{1}{2} - \frac{1}{O\paren{d^{1/\paren{1-2^{1-\Delta}}}}}}d$, let $n_0 \approx 2(d-k)\cdot\paren{\frac{n}{k}}^{\frac{k}{d-k}}$ and $n_1 = n - n_0$. Define $\vecy = \set{x_1,\ldots, x_{n_1}}$ and $\vecz = \set{x_{n_1 + 1}, \ldots, x_n}$. Let $\cM_y$ be the set of all (monic) monomials of degree $k$ in $\vecy$ variables, $\cM_z$ be the set of all (monic) monomials of degree $d-k$ in $\vecz$ variables, and $\sigma: \cM_y \rightarrow \cM_z$  Then, it is easy to see that for $P := \sum_{m \in \cM_y} m\cdot \sigma(m)$, $\APP_{k, n_0}(P) = \min\set{M(n_1, k), M(n_0, d-k)}$. It can be verified that for our choice of parameters, $\min\set{M(n_1, k), M(n_0, d-k)}$ is at least $2^{-O(k)}M(n,k)$. 

\subsection{Putting everything together: the low-depth lower bound} \label{subsec: homo putting together}

\begin{theorem}[{\bf Low-depth homogeneous formula lower bound for $IMM$}] 
\label{thm:final-lb}
    For any positive integers $d,n,\Delta$ such that $n = \omega(d)$, 
    any homogeneous formula of product-depth at most $\Delta$ computing $IMM_{n,d}$ over any field $\F$ has size at least $2^{-O(d)}\!\cdot\! n^{\Omega\paren{d^{2^{1-\Delta}}}}$.
    
    In particular, when $d=O(\log n)$, we get a lower bound of $n^{\Omega\paren{d^{2^{1-\Delta}}}}$.
\end{theorem}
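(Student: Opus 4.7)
The plan is to combine the machinery already developed: reduce $IMM_{n,d}$ to a suitable word polynomial $P_{\vecw}$ via Lemma~\ref{lem:imm-proj}, decompose any homogeneous product-depth $\Delta$ formula for $P_{\vecw}$ using Lemma~\ref{lem:low-depth-structural}, show via Lemma~\ref{lem:low-k-gamma} that for an appropriate order of derivatives $k$ every summand in this decomposition has residue $\Omega(d^{2^{1-\Delta}})$, invoke Lemma~\ref{lem:hard-poly} to assert that the chosen $k$ (together with a matching shift $\ell$ and subspace size $n_0$) makes $\SP_{k,\ell}(P_{\vecw})$ close to the trivial upper bound, and finally apply Lemma~\ref{lem:sp-final-lb} to extract the lower bound on the number of summands.

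Concretely, given $\Delta$, set $\tau := \floor{d^{2^{1-\Delta}}}$, $\alpha := \sum_{\nu=0}^{\Delta-1} (-1)^\nu/\tau^{2^\nu-1}$, and $k := \floor{\alpha d/(1+\alpha)}$. As noted in the proof of Lemma~\ref{lem:low-k-gamma}, this $k$ lies in $[d/4, d/2]$, and in particular in the range $[d/30, d/2]$ required by Lemma~\ref{lem:hard-poly}. Pick $h = O(\log n)$ small enough that $2^h \le n$ (so Lemma~\ref{lem:imm-proj} applies to $IMM_{n,d}$) yet large enough for Lemma~\ref{lem:hard-poly} to produce an $h$-unbiased word $\vecw \in [-h..h]^d$ together with parameters $n_0 \le N$ and $\ell = \floor{Nd/n_0}$, where $N := \sum_i 2^{|w_i|}$ is the number of variables of $P_{\vecw}$. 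Lemma~\ref{lem:hard-poly} then gives
\begin{equation*}
\SP_{k,\ell}(P_{\vecw}) \ge 2^{-O(d)}\cdot M(N,k)\cdot M(N,\ell).
\end{equation*}

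Now suppose $C$ is a homogeneous product-depth $\Delta$ formula computing $IMM_{n,d}$ of size $s$. By Lemma~\ref{lem:imm-proj}, there is a homogeneous formula $C'$ of product-depth at most $\Delta$ and size at most $s$ computing $P_{\vecw}$. Applying Lemma~\ref{lem:low-depth-structural} to $C'$ yields a decomposition $P_{\vecw} = \sum_{i=1}^{s'} Q_{i,1}\cdots Q_{i,t_i}$ with $s' \le s$, and for this decomposition Lemma~\ref{lem:low-k-gamma} (applied with our specific choice of $k$) asserts that for every $i$,
\begin{equation*}
\entropy_k\bigl(\deg(Q_{i,1}),\dots,\deg(Q_{i,t_i})\bigr) \ge \Omega\bigl(d^{2^{1-\Delta}}\bigr).
\end{equation*}
All hypotheses of Lemma~\ref{lem:sp-final-lb} are now in place, with $\gamma = \Omega(d^{2^{1-\Delta}})$, so we conclude
\begin{equation*}
s \ge s' \ge 2^{-O(d)}\cdot \paren{\tfrac{N}{d}}^{\Omega(d^{2^{1-\Delta}})}.
\end{equation*}
Since $N \ge n$ (by the choice of $h$ and the construction of $\vecw$) and $n=\omega(d)$, this yields the claimed bound $2^{-O(d)}\cdot n^{\Omega(d^{2^{1-\Delta}})}$; when $d = O(\log n)$, the factor $2^{-O(d)} = n^{-O(1)}$ is absorbed into the exponent.

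The one nontrivial bookkeeping point — and the likely source of friction — is the simultaneous reconciliation of parameters across the three lemmas: the same $k$ must be used in the decomposition (Lemma~\ref{lem:low-k-gamma}), must witness a high $\SP$ measure for the candidate $P_{\vecw}$ (Lemma~\ref{lem:hard-poly}), and the shift $\ell$ together with $n_0 \in [d,N]$ must satisfy the relation $n_0 \approx 2(d-k)(N/k)^{k/(d-k)}$ required by Lemma~\ref{lem:sp-final-lb}. Fortunately Lemma~\ref{lem:hard-poly} is tailored to deliver exactly this matching, so the only thing one needs to verify is that the imbalance parameter $h$ demanded by the construction of $\vecw$ satisfies $h \le \log n$ in the regime $n = \omega(d)$; this follows because the word in Lemma~\ref{lem:hard-poly} can be taken with weights of magnitude $O(1)$ times a bound depending only on $d$, and so $2^h$ is at most a function of $d$, which is dominated by $n$.
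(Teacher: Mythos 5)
Your proposal tracks the paper's proof step for step: the same choice of $k = \lfloor \alpha d/(1+\alpha)\rfloor$ with $\tau = \lfloor d^{2^{1-\Delta}}\rfloor$, the reduction from $IMM_{n,d}$ to $P_{\vecw}$ via Lemma~\ref{lem:imm-proj} with $h=\lfloor\log n\rfloor$, the residue bound from Lemma~\ref{lem:low-k-gamma} (which already invokes Lemma~\ref{lem:low-depth-structural} internally), the $\SP$ lower bound for $P_{\vecw}$ from Lemma~\ref{lem:hard-poly}, and the final application of Lemma~\ref{lem:sp-final-lb} with $\gamma=\Omega(d^{2^{1-\Delta}})$. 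This is the paper's argument, correctly assembled.

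One caveat on your closing ``bookkeeping'' paragraph, though it does not damage the proof: $h$ in Lemma~\ref{lem:hard-poly} is a free parameter you get to choose, not something ``demanded by the construction of $\vecw$,'' and your claim that ``$2^h$ is at most a function of $d$'' is false and would actually destroy the bound --- you \emph{want} $2^h\approx n$ so that the variable count $N$ of $P_{\vecw}$ is $\Theta(n)$ (indeed $N\ge 2^h\ge n/2$), which is what turns $(N/d)^{\Omega(\gamma)}$ into $n^{\Omega(d^{2^{1-\Delta}})}$. Fortunately you already set $h=O(\log n)$ with $2^h\le n$ correctly earlier in the argument, so the confused remark is vestigial.
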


%\begin{theorem}[{\bf Low-depth homogeneous formula lower bound for $IMM$}] 
%\label{thm:final-lb}
%    For any positive integers $d,n,\Delta$ such that $d \le  \epsilon \cdot \log^{1/\paren{1-2^{1-\Delta}}} n$\footnote{For $\Delta=1$, we constraint the degree to be $d = o\paren{n^{1/60}}$.},
    % \footnote{For $\Delta=1$, what is the value of $\ell$, are we guaranteed that it is more than $d$?. For $\Delta =1$, our proof actually works till $d=O(\sqrt{n})$. For $\Delta > 1$, the given lower bound is trivial for $d\ge \Omega\paren{n^{1/60}}$.} 
%    any homogeneous formula of product-depth at most $\Delta$ computing $IMM_{n,d}$ over any field $\F$ has size at least $2^{-O(d)}\!\cdot\! n^{\Omega\paren{d^{2^{1-\Delta}}}}$.
    
%    In particular, when $d=O(\log n)$, we get a lower bound of $n^{\Omega\paren{d^{2^{1-\Delta}}}}$.
%\end{theorem}
% \begin{remark}
%     When $\Delta = 1$, there is no constraint on the degree.
% \end{remark}
\begin{proof}
    \noindent  We can assume that $d^{2^{1-\Delta}} = \omega(1)$ and $h:=\floor{\log n} > 100$, as otherwise the lower bound is trivial. Suppose $IMM_{n,d}$ has a homogeneous formula $C$ of product-depth at most $\Delta$. Consider the polynomial $P_{\vecw}$, given by Lemma~\ref{lem:hard-poly}, by setting $k:=\floor{\frac{\alpha\cdot d}{1+\alpha}}$, where $\alpha := \sum\limits_{\nu = 0}^{\Delta-1} \frac{(-1)^\nu}{\tau^{2^\nu - 1}}$ and $\tau:= \floor{d^{2^{1-\Delta}}}$; these parameters are the same as those in Lemma \ref{lem:low-k-gamma}. It is easy to show that $k \in \brac{\frac{d}{4},\frac{d}{2}}$. As $\vecw$ is $h$\unbiased, by Lemma~\ref{lem:imm-proj} there exists a homogeneous formula $C'$ of product-depth at most $\Delta$ computing $P_{\vecw}$ such that $\size(C') \le \size(C)$. Hence, by Lemma~\ref{lem:low-k-gamma}, there exist homogeneous polynomials $\set{Q_{i,j}}_{i,j}$ such that 
    $P_{\vecw} = \sum_{i\in [s]} Q_{i,1}\cdots Q_{i,t_i} $, $s \le \size(C')$
    and $\entropy_k\paren{\deg(Q_{i,1}),\dots,\deg(Q_{i,t_i})} \ge \Omega \paren{d^{2^{1-\Delta}}}$ for $i\in [s]$. Denoting the number of variables in $P_{\vecw}$ by $\widetilde{n}$, Lemma~\ref{lem:hard-poly} guarantees that $n_0 \le 2(d-k)\!\cdot\!\paren{\frac{\widetilde{n}}{k}}^{\frac{k}{d-k}}$, $\ell = \floor{\frac{\widetilde{n}\cdot d}{n_0}}$ and $\SP_{k,\ell}(P_{\vecw}) \ge 2^{-O(d)}\!\cdot\! M(\widetilde{n},k) \!\cdot\! M(\widetilde{n},\ell)$. Therefore, we can apply Lemma~\ref{lem:sp-final-lb} to the same polynomial $P_{\vecw}$ which gives that $s \ge 2^{-O(d)}\!\cdot\! \paren{\frac{\widetilde{n}}{d}}^{\Omega\paren{d^{2^{1-\Delta}}}}$. Hence, $\size(C) \ge \size(C') \ge s \ge 2^{-O(d)}\!\cdot\!n^{\Omega\paren{d^{2^{1-\Delta}}}}$, since $\widetilde{n} \ge 2^h \ge n/2 = \omega(d)$.
\end{proof}

\begin{theorem}[{\bf Low-depth homogeneous formula lower bound for $NW$}] 
\label{thm:final-lb-nw}
    Let $n, d, \Delta$ be positive integers. If $\Delta = 1$, let $d = n^{1-\epsilon}$ for any constant $\epsilon > 0$ and $k = \floor{\frac{d-1}{2}}$. Otherwise, let $d \leq \frac{1}{150}\paren{\frac{\log n}{\log \log n}}^2$, let $\tau = \floor{d^{2^{1-\Delta}}}$, $\alpha = \sum\limits_{\nu = 0}^{\Delta - 1} \frac{(-1)^\nu}{\tau^{2^\nu-1}}$, and $k = \floor{\frac{\alpha\cdot d}{1 + \alpha}}$. In both cases, let $q$ be the largest prime number between $\floor{\frac{n}{2d}}$ and $\floor{\frac{n}{d}}$. Then, any homogeneous formula of product-depth at most $\Delta$ computing $NW_{q,d,k}$ over any field $\F$ has size at least $2^{-O(d)}\!\cdot\! n^{\Omega\paren{d^{2^{1-\Delta}}}}$.
    
    In particular, when $d=O(\log n)$, we get a lower bound of $n^{\Omega\paren{d^{2^{1-\Delta}}}}$.
\end{theorem}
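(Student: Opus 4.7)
The plan is to mirror the three-step argument used for $IMM_{n,d}$ in Theorem~\ref{thm:final-lb}, substituting $NW_{q,d,k}$ for the word polynomial $P_\vecw$ and working with its $qd$ variables in place of the $n$ variables there. The case $\Delta\ge 2$ is the main case and flows directly from the existing lemmas; the case $\Delta=1$ sits in a completely different degree regime ($d=n^{1-\epsilon}$) that falls outside the stated range of Lemma~\ref{lemma:NW_hard_poly}, and will require a separate direct argument based on the design property of $NW$.

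For $\Delta\ge 2$, let $C$ be a homogeneous product-depth-$\Delta$ formula computing $NW_{q,d,k}$. Applying Lemma~\ref{lem:low-k-gamma} with exactly the $\tau,\alpha,k$ from the theorem statement yields a decomposition $NW_{q,d,k}=\sum_{i=1}^{s} Q_{i,1}\cdots Q_{i,t_i}$ with $s\le\size(C)$ and $\entropy_k(\deg Q_{i,1},\dots,\deg Q_{i,t_i})\ge\Omega(d^{2^{1-\Delta}})$ for every $i$. A short calculation---essentially the one inside the proof of Lemma~\ref{lem:low-k-gamma}---shows that this $k$ lies in $[d/4,\,d/2)$ with gap $d/2-k=\Theta(d/\tau)=\Theta(d^{\,1-2^{1-\Delta}})$, which exceeds $\sqrt{d}/8$ precisely when $\Delta\ge 2$; hence $k$ falls in the admissible range $[d/30,\,d/2-\sqrt d/8]$ of Lemma~\ref{lemma:NW_hard_poly}. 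That lemma then guarantees
\[ \SP_{k,\ell}(NW_{q,d,k}) \;\ge\; 2^{-O(d)}\,M(qd,k)\,M(qd,\ell) \]
for the shift $\ell=\floor{qd^2/n_0}$ with $n_0=2(d-k)(qd/k)^{k/(d-k)}$ prescribed in that lemma.

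Feeding the decomposition and the measure bound into Lemma~\ref{lem:sp-final-lb} with $\widetilde n=qd$ and $\gamma=\Omega(d^{2^{1-\Delta}})$ yields $s\ge 2^{-O(d)}(qd/d)^{\Omega(d^{2^{1-\Delta}})}=2^{-O(d)}\,q^{\Omega(d^{2^{1-\Delta}})}$. Since $q\ge\floor{n/(2d)}\ge n/(4d)$, and since the degree hypothesis $d\le\frac{1}{150}(\log n/\log\log n)^2$ combined with $\Delta\ge 2$ (so $1-2^{1-\Delta}\ge 1/2$, giving $d^{2^{1-\Delta}}\log d = O(d)$) implies $d^{O(d^{2^{1-\Delta}})}\le 2^{O(d)}$, the factor of $d$ in the denominator is absorbed into $2^{-O(d)}$ and we conclude $\size(C)\ge s\ge 2^{-O(d)}\,n^{\Omega(d^{2^{1-\Delta}})}$.

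The step I expect to require the most care is the $\Delta=1$ case, where the admissibility assumptions of Lemma~\ref{lemma:NW_hard_poly} fail on every front: the degree bound $d\le(\log n/\log\log n)^2/150$ is violated by $d=n^{1-\epsilon}$, and $k=\floor{(d-1)/2}$ sits outside $[d/30,\,d/2-\sqrt d/8]$. Here I would invoke Lemma~\ref{lem:low-depth-structural} at $\Delta=1$ to force every summand to be a product of at least $d$ linear forms, compute $\entropy_k(1,\dots,1)\ge\Omega(d)$ directly (using that $k/d\approx 1/2$ is bounded away from every integer), and separately establish an $\SP$-lower bound $\SP_{k,0}(NW_{q,d,k})\ge M(qd,k)/\poly(d)$ by exploiting the defining design property---that any two distinct $NW$-monomials share fewer than $k$ variables, so that their $k$-th order partial derivatives are linearly independent. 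Lemma~\ref{lem:sp-final-lb} then closes the case. The main obstacle here is precisely this quantitative counting, which is essentially the classical Nisan-Wigderson lower bound argument but must be carried out explicitly since the exact form we need is not packaged as a separate lemma in the excerpt.
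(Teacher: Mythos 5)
For $\Delta \ge 2$ your plan matches the paper's proof step for step: apply Lemma~\ref{lem:low-k-gamma} with the stated $\tau,\alpha,k$, verify that $k\in[d/30,\,d/2-\sqrt d/8]$ so that Lemma~\ref{lemma:NW_hard_poly} applies, then feed the decomposition and measure bound into Lemma~\ref{lem:sp-final-lb} and absorb the $(4d)^{O(d^{2^{1-\Delta}})}$ loss. (One small point you should make explicit is that $n_0\le qd$, which the paper extracts from Claim~\ref{clm: bounds on ell}; Lemma~\ref{lem:sp-final-lb} requires $d\le n_0\le n$.)

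Your handling of $\Delta=1$, however, diverges from the paper and has two concrete problems. First, the asserted bound $\SP_{k,0}(NW_{q,d,k})\ge M(qd,k)/\poly(d)$ is too strong. The partial-derivative space is spanned by the set-multilinear monomials $\prod_{i\notin S}x_{i,h(i)}$, so its dimension is $\binom{d}{k}q^k$, and the ratio $\binom{d}{k}q^k / M(qd,k)$ is $2^{-\Theta(d)}$ at $k=\Theta(d)$, not $1/\poly(d)$; the argument still goes through with the weaker factor, but you should not claim the stronger one. Second, and more importantly, you cannot ``close the case'' by invoking Lemma~\ref{lem:sp-final-lb} with $\ell=0$: that lemma is stated for $\ell=\floor{nd/n_0}$ with $d\le n_0\le n$, which forces $\ell\ge d>0$, and its internal binomial estimates (the step bounding $M(n,\ell+\ell_0)/M(n,\ell)$) rely on $\ell$ being positive. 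You would have to reprove a separate $\ell=0$ variant. The paper instead sidesteps the residue/shift machinery entirely for $\Delta=1$: writing $C=\sum_{i=1}^s\prod_{j=1}^d Q_{i,j}$ with linear $Q_{i,j}$, each product has $\dim\spacespanned{\partialf^k(\prod_j Q_{i,j})}\le\binom{d}{k}$ directly, whereas $\dim\spacespanned{\partialf^k NW_{q,d,k}}=\binom{d}{k}q^k$ by the design property (any two degree-$<k$ polynomials over $\F_q$ agree on at most $k-1$ of the $d-k\ge k+1$ surviving coordinates). The $\binom{d}{k}$ factors cancel and one gets $s\ge q^k=n^{\Omega(d)}$ with no shifts, no $\entropy$, and no $M(\cdot,\cdot)$ ratios. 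This is both cleaner and necessary here because the $\SP$ framework in Section~\ref{subsec:high-residue-implies-lbs} was calibrated for the low-degree regime, which $d=n^{1-\epsilon}$ leaves.
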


\begin{proof}
    We analyse the cases $\Delta = 1$ and $\Delta \geq 2$ separately. \\

    \noindent\underline{$\Delta = 1.$} Let $C$ be a homogeneous formula of product-depth $1$ computing $NW_{q,d,k}$. Then, $C = \sum_{i \in [s]}\prod_{j \in [d]}Q_{i,j}$, where $Q_{i,j}$ are linear forms. Observe that for any $i\in [k]$, $\partialf^k\paren{\prod_{j \in [d]}Q_{i,j}} \subseteq \spacespanned{\prod_{j \in [d] \setminus S}C_{i,j}: |S| = k}$. Thus, $\dim\spacespanned{\partialf^k\paren{\prod_{j \in [d]}Q_{i,j}}} \leq \binom{d}{k}$. As $\partialf^k C \subseteq \sum_{i \in [s]}\spacespanned{\partialf^k\paren{\prod_{j \in [d]}Q_{i,j}}}$, $\dim\spacespanned{\partialf^k C} \leq s\cdot \binom{d}{k}$. 
    
    On the other hand, $\dim\spacespanned{\partialf^k(NW_{q,d,k})} = \binom{d}{k}\cdot q^k$: For every $S \subseteq [d], |S| = k$, $$T_S:= \set{\prod_{i \in [d] \setminus S}x_{i, h(i)}: h \in \F[z], \deg(h) < k} \subseteq \partialf^k(NW_{q,d,k}).$$ Now, for $h_1 \neq h_2 \in \F[z], \deg(h_1), \deg(h_2) < k$, there exists an $i \in [d] \setminus S$ such that $h_1(i) \neq h_2(i)$ because $|[d]\setminus S| = d-k \geq k+1$. Thus, $\prod_{i \in [d] \setminus S}x_{i, h_1(i)} \neq \prod_{i \in [d] \setminus S}x_{i, h_2(i)}$, and $|T_S| = q^k$. Also, for $S\neq S' \subseteq [d]$, $|S| = |S'| = k$, $T_S$ and $T_{S'}$ are disjoint. Hence, $\dim\spacespanned{\partialf^k(NW_{q,d,k})} \geq \binom{d}{k}\cdot q^k$.\footnote{In fact, it can be shown that this is an equality.}  Thus, $s \geq q^k = \paren{\frac{n}{d}}^{O(d)}$ as $k = \Theta(d)$ and $qd = \Theta(n)$. Because $d \leq n^{1-\epsilon}$, this means that $s \geq n^{O(d)}$. \\
    
    \noindent\underline{$\Delta \geq 2.$}  We can assume that $d^{2^{1-\Delta}} = \omega(1)$, as otherwise the given bound is trivial.  Let $C$ be a homogeneous formula of product-depth at most $\Delta$ computing $NW_{q,d,k}$; $C$ is a formula in $q d$ variables. By Lemma~\ref{lem:low-k-gamma}, there exist homogeneous polynomials $\set{Q_{i,j}}_{i,j}$ such that 
    $NW_{q,d,k} = \sum_{i\in [s]} Q_{i,1}\cdots Q_{i,t_i} $, $s \le \size(C)$,
    and $\entropy_k\paren{\deg(Q_{i,1}),\dots,\deg(Q_{i,t_i})} \ge \Omega \paren{d^{2^{1-\Delta}}}$ for $i\in [s]$. From the proof of Lemma \ref{lem:low-k-gamma}, $k \in \brac{\frac{d}{4}, \frac{d}{2}}$. In fact, as $\frac{k}{d-k} \leq \alpha \leq 1 - \frac{1}{2\tau} \leq 1 - \frac{1}{2\sqrt{d}}$, $k \leq \frac{d}{2} - \frac{\sqrt{d}}{8}$. Thus, Lemma \ref{lemma:NW_hard_poly} guarantees that for $n_0 = 2(d-k)\!\cdot\!\paren{\frac{q d}{k}}^{\frac{k}{d-k}}$ and $\ell = \floor{\frac{q d^2}{n_0}}$, $\SP_{k,\ell}\paren{NW_{q,k,d}} \ge 2^{-O(d)}\!\cdot\! M(qd,k) \!\cdot\! M(qd,\ell)$. Also, it follows from the proof of Lemma \ref{lemma:NW_hard_poly} (see Claim \ref{clm: bounds on ell}) that for $n_0 \leq qd$. So, applying Lemma~\ref{lem:sp-final-lb} to $NW_{k,d,q}$ we get that $s \ge 2^{-O(d)}\!\cdot\! \paren{\frac{qd}{d}}^{\Omega\paren{d^{2^{1-\Delta}}}} = 2^{-O(d)}\!\cdot\! n^{\Omega\paren{d^{2^{1-\Delta}}}}$ as $qd \geq \frac{n}{4}$ and $d = o(n)$. Hence, $\size(C) \geq 2^{-O(d)}\!\cdot\!n^{\Omega\paren{d^{2^{1-\Delta}}}}$.
\end{proof}

\begin{remark}
    Notice that in the above theorem, as $k$ depends on the product-depth $\Delta$, the polynomial $NW_{q,d,k}$ may be different for different values of $\Delta$. However, much like in \cite{KayalSS14}, there is a way to `stitch' all the different $NW$ polynomials for different values of $\Delta$ into a single polynomial $P$ such that any homogeneous formula of product-depth $\Delta$ computing $P$ has size at least $2^{-O(d)}n^{\Omega\paren{d^{2^{1-\Delta}}}}$. See Theorem \ref{thm: UPT lb for NW} for more details.
\end{remark}

In \cite{lst1}, the authors showed how to convert a general circuit of product-depth $\Delta$ computing a homogeneous polynomial to a homogeneous formula of product-depth $2\Delta$ without much increase in the size. Combining Lemma 11 from \cite{lst1} with Theorems ~\ref{thm:final-lb} and \ref{thm:final-lb-nw}, we get the following corollaries.

\begin{corollary}[{\bf Low-depth circuit lower bound for $IMM$}]
    For any positive integers $d,n,\Delta$ such that $n = \omega(d)$, any circuit of product-depth at most $\Delta$ computing $IMM_{n,d}$ over any field $\F$ with characteristic $0$ or more than $d$ has size at least $2^{-O(d)}\!\cdot\!n^{\Omega\paren{\frac{d^{2^{1-2\Delta}}}{\Delta}}}$.
    
    In particular, when $d=O(\log n)$, we get a lower bound of $n^{\Omega\paren{\frac{d^{2^{1-2\Delta}}}{\Delta}}}$.
    \label{cor:gen-lb}
\end{corollary}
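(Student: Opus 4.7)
The plan is to reduce the circuit lower bound to the homogeneous formula lower bound already established in Theorem \ref{thm:final-lb}, using the circuit-to-homogeneous-formula conversion of \cite{lst1} as an intermediate step. Suppose for contradiction that $C$ is a circuit of product-depth at most $\Delta$ and size $s$ computing $IMM_{n,d}$ over a field $\F$ of characteristic $0$ or greater than $d$.

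First I would appeal to Lemma 11 of \cite{lst1} to convert $C$ into an equivalent homogeneous formula $C'$ of product-depth at most $2\Delta$. Since the excerpt records that a product-depth $\Delta$ formula of size $s$ computing a homogeneous, degree $d$ polynomial can be converted into a homogeneous formula of product-depth $2\Delta$ and size $2^{O(\sqrt{d})}\cdot\poly(s)$, and any depth-$\Delta$ circuit can first be unrolled into a depth-$\Delta$ formula at a cost of $s^{O(\Delta)}$, we obtain
\[
\size(C') \;\le\; 2^{O(\sqrt{d})}\cdot\poly(s^{O(\Delta)}) \;\le\; 2^{O(\sqrt{d})}\cdot s^{O(\Delta)}.
\]
Crucially, $IMM_{n,d}$ is homogeneous, so the hypothesis of Lemma 11 applies, and $C'$ is a homogeneous formula computing $IMM_{n,d}$ at product-depth $2\Delta$.

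Next I would instantiate Theorem \ref{thm:final-lb} (the homogeneous formula lower bound for $IMM$) with product-depth $2\Delta$ in place of $\Delta$, to conclude
\[
\size(C') \;\ge\; 2^{-O(d)}\cdot n^{\Omega\paren{d^{2^{1-2\Delta}}}}.
\]
Chaining the two inequalities gives $2^{O(\sqrt{d})}\cdot s^{O(\Delta)} \;\ge\; 2^{-O(d)}\cdot n^{\Omega(d^{2^{1-2\Delta}})}$. Taking an $O(\Delta)$-th root and absorbing the $2^{O(\sqrt d/\Delta)}$ and $2^{O(d/\Delta)}$ factors into the final $2^{-O(d)}$ term yields
\[
s \;\ge\; 2^{-O(d)}\cdot n^{\Omega\paren{d^{2^{1-2\Delta}}/\Delta}},
\]
as desired. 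The specialization to $d=O(\log n)$ is immediate, since in that regime the $2^{-O(d)}$ factor is only a $\poly(n)$ loss, which is dominated by the main $n^{\Omega(d^{2^{1-2\Delta}}/\Delta)}$ term.

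Since every ingredient (the circuit-to-homogeneous-formula conversion of \cite{lst1} and the homogeneous formula lower bound of Theorem \ref{thm:final-lb}) is now in place, there is no real obstacle; the only delicate point is bookkeeping the exponents and the additive loss in the exponent of $2$ during the $O(\Delta)$-th root extraction to verify that all lower-order terms genuinely absorb into $2^{-O(d)}$. I would present the argument as a short derivation rather than a substantive new proof.
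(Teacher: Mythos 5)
Your proposal matches the paper's intended proof exactly: the paper itself states ``Combining Lemma 11 from \cite{lst1} with Theorems~\ref{thm:final-lb} and~\ref{thm:final-lb-nw}, we get the following corollaries,'' i.e.\ apply the circuit-to-homogeneous-formula conversion of \cite{lst1} (going from product-depth $\Delta$ to $2\Delta$ at cost $2^{O(\sqrt d)}\cdot s^{O(\Delta)}$) and then invoke the homogeneous-formula lower bound at product-depth $2\Delta$. Your bookkeeping of the $s^{O(\Delta)}$ blow-up, the $O(\Delta)$-th root, and the absorption of $2^{O(\sqrt d/\Delta)}$ and $2^{O(d/\Delta)}$ into $2^{-O(d)}$ is correct, so the argument is sound and essentially identical to the paper's.
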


\begin{corollary}[{\bf Low-depth circuit lower bound for $NW$}] 
     Let $n, d, \Delta$ be positive integers. If $\Delta = 1$, let $d = n^{1-\epsilon}$ for any constant $\epsilon > 0$ and $k = \floor{\frac{d-1}{2}}$. Otherwise, let $d \leq \frac{1}{150}\paren{\frac{\log n}{\log \log n}}^2$, let $\tau = \floor{d^{2^{1-\Delta}}}$, $\alpha = \sum\limits_{\nu = 0}^{\Delta - 1} \frac{(-1)^\nu}{\tau^{2^\nu-1}}$, and $k = \floor{\frac{\alpha\cdot d}{1 + \alpha}}$. In both cases, let $q$ be the largest prime number between $\floor{\frac{n}{2d}}$ and $\floor{\frac{n}{d}}$. Then, any circuit of product-depth at most $\Delta$ computing $NW_{q,d,k}$ over any field $\F$ of characteristic $0$ or more than $d$ has size at least $2^{-O(d)}\!\cdot\! n^{\Omega\paren{\frac{d^{2^{1-2\Delta}}}{\Delta}}}$.
    
    In particular, when $d=O(\log n)$, we get a lower bound of $n^{\Omega\paren{\frac{d^{2^{1-2\Delta}}}{\Delta}}}$.
\end{corollary}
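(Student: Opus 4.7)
The proof plan is directly hinted at in the paragraph immediately preceding the corollary: combine Lemma~11 of~\cite{lst1} (circuit-to-homogeneous-formula conversion) with Theorem~\ref{thm:final-lb-nw} (lower bound for homogeneous formulas computing $NW$). No new measure, no new hard polynomial, and no new combinatorial argument is needed.

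First, I would invoke Lemma~11 of~\cite{lst1}, which converts any arithmetic circuit of product-depth $\Delta$ and size $s$ computing a homogeneous polynomial of degree $d$ (over a field of characteristic $0$ or greater than $d$) into an equivalent homogeneous formula of product-depth $2\Delta$ and size at most $s^{O(\Delta)} \cdot 2^{O(\sqrt{d})}$. Since $NW_{q,d,k}$ is set-multilinear, and hence homogeneous of degree $d$, this lemma applies to any circuit $C$ of product-depth $\Delta$ computing $NW_{q,d,k}$, yielding an equivalent homogeneous formula $C'$ of product-depth $2\Delta$.

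Next, I would apply Theorem~\ref{thm:final-lb-nw} with the formula depth being $2\Delta$ (in place of $\Delta$) to $C'$. This yields $\size(C') \ge 2^{-O(d)} \cdot n^{\Omega(d^{2^{1-2\Delta}})}$. Combining this with the size blow-up from the circuit-to-formula conversion gives
\[
  s^{O(\Delta)} \cdot 2^{O(\sqrt{d})} \;\ge\; \size(C') \;\ge\; 2^{-O(d)} \cdot n^{\Omega(d^{2^{1-2\Delta}})},
\]
and solving for $s$ produces $s \ge 2^{-O(d)} \cdot n^{\Omega(d^{2^{1-2\Delta}}/\Delta)}$, matching the claimed bound. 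The factor $1/\Delta$ in the exponent comes precisely from inverting the $s^{O(\Delta)}$ blow-up, and the additive $2^{O(\sqrt{d})}$ overhead is absorbed into the $2^{-O(d)}$ prefactor.

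The only subtlety worth flagging is the choice of parameter $k$: the corollary states the hard polynomial as $NW_{q,d,k}$ with $k$ defined from $\Delta$, whereas applying Theorem~\ref{thm:final-lb-nw} to the depth-$2\Delta$ formula $C'$ would most naturally use a $k'$ defined from $2\Delta$. One can resolve this either by (i) reading $k$ in the corollary as the parameter prescribed for the converted depth $2\Delta$ (both values lie in the range $[d/30,\, d/2-\sqrt{d}/8]$ in which Lemma~\ref{lemma:NW_hard_poly} guarantees that $\SP_{k,\ell}(NW_{q,d,k})$ is large), or (ii) by a stitching argument in the spirit of the remark following Theorem~\ref{thm:final-lb-nw}, where one combines the different $NW$ polynomials across all values of $\Delta$ into a single polynomial that is hard for every depth simultaneously. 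In either case this is bookkeeping rather than a conceptual obstacle; the heavy lifting has already been done in Theorem~\ref{thm:final-lb-nw} and Lemma~11 of~\cite{lst1}, and the main task in writing out the corollary is simply to verify that the parameter ranges line up.
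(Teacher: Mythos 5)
Your proposal is exactly the paper's intended argument: the paper gives no separate proof for this corollary, merely stating in the preceding paragraph that it follows from combining Lemma~11 of \cite{lst1} (product-depth $\Delta$ circuit $\to$ homogeneous formula of product-depth $2\Delta$ with size blow-up $s^{O(\Delta)}\cdot 2^{O(\sqrt d)}$) with Theorem~\ref{thm:final-lb-nw}. Your observation about the mismatch in the definition of $k$ — the corollary retains $\tau = \floor{d^{2^{1-\Delta}}}$ and $\alpha = \sum_{\nu=0}^{\Delta-1}(-1)^\nu/\tau^{2^\nu-1}$ rather than the versions one would write for depth $2\Delta$ — is a real gap in the corollary as literally stated; the paper silently glosses over it, and your two proposed repairs (reinterpret $k$ as the depth-$2\Delta$ parameter, or invoke the stitching remark following Theorem~\ref{thm:final-lb-nw}) are the right ways to close it.
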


We note that our lower bounds quantitatively improve on the original homogeneous formula lower bound of \cite{lst1} in terms of the dependence on the degree. While \cite{lst1} gives a lower bound of $d^{O(-d)}\!\cdot\! n^{\Omega\paren{d^{1/{2^{\Delta}-1}}}}$ (as the conversion from homogeneous to {set-multilinear formulas} increases the size by a factor of $d^{O(d)}$), our lower bound is $2^{-O(d)}\!\cdot\!n^{\Omega\paren{d^{2^{1-\Delta}}}}$. Thus, we get slight improvement both in the multiplicative factor (from $d^{O(d)}$ to $2^{O(d)}$) and in the exponent of $n$ (from $d^{\frac{1}{2^{\Delta}-1}}$ to $d^{\frac{1}{{2^{\paren{\Delta-1}}}}}$). We point out what these improvements mean for smaller depths: 

For $\Delta=2$, we observe that our lower bound for homogeneous formulas computing $IMM$ is superpolynomial as long as $d\le \epsilon\cdot \log^2 n$ for a small enough positive constant $\epsilon$, whereas the lower bound in \cite{lst1} does not work beyond $d=O\paren{\paren{\frac{\log n}{\log \log n}}^2}$. In particular, we obtain a lower bound of $n^{\Omega\paren{\log n}}$ for the size of homogeneous depth-5 formulas computing $IMM_{n,d}$ when $d = \Theta(\log^2 n)$.

% \begin{corollary}[{\bf Depth-5 homogeneous lower bound}]
%     If $d=o(\log^2 n)$, then any homogeneous formula of product-depth at most 2 computing $IMM_{n,d}$ has size at least $n^{\Omega(\sqrt{d})}$.
% \end{corollary}

Finally, we note that for $\Delta=3$ and $d \le \epsilon \cdot \log^{4/3} n$, we get a lower bound of $n^{\Omega\paren{d^{1/4}}}$, as compared to the bound $n^{\Omega{\paren{d^{1/7}}}}$ from \cite{lst1}.

% \begin{corollary}[{\bf Depth-7 homogeneous  lower bound}]
%     If $d=o(\log^{4/3} n)$, then any homogeneous formula of product-depth at most 3 computing $IMM_{n,d}$ has size at least $n^{\Omega(d^{1/4})}$.
% \end{corollary}

	\section{Lower bound for unique-parse-tree formulas} 
\label{sec:upt-lb}
In this section we show that UPT formulas computing $IMM$ must have a `large' size. We begin by giving a decomposition for such formulas.
\subsection{Decomposition of UPT formulas}
In order to upper bound the $\SP$ (or $\APP$) measure of a UPT formula, we need certain results about binary trees and UPT formulas.
For a given canonical parse tree, we define its \emph{degree sequence} $(d_1,\dots,d_t)$ using the function \degseq~ described in the following algorithm. \\

\begin{breakablealgorithm}
	\caption{Degree sequence of a right-heavy binary tree}
	\label{alg:1}
	\begin{algorithmic}[1]
		\Function{\degseq}{$\cT$}
		%\Comment{Returns a sequence of integers given a canonical (hence, right-heavy) binary tree $\cT$}
		%    \Procedure{MyProcedure}{$x,y$}
		%     % Input:
		%     \Comment{Input: x}
		%     % Output:
		\State $v_0 \gets \textnormal{root node of $\cT$}$.
		\If{$v_0$ is a leaf} 
                \State \Return $(1)$. \Comment{returning a singleton tuple}
		\EndIf
		\State $d \leftarrow \leaves(v_0)$, $i\gets 0$.
		\While {$v_i$ is not a leaf}
		\State $v_{i+1} \gets \textnormal{right child of } v_i$, $i \leftarrow i + 1$.
		\EndWhile
		\State $v \gets v_j \text{ corresponding to the largest index $j$ such that } \leaves(v_j) > \frac{d}{3}$.	\label{line:v-node}
		%\State $\cT_v \gets \textnormal{ subtree rooted at } v$
		\State $d_1 \gets d-\leaves(v)$.
		\label{line:d-d1}
   		\State \Return $(d_1, \Call{\degseq}{\cT_v})$.\label{alg:1:recurse}
   		\Comment{To avoid a tuple of tuples, we may assume that $(d_1,{\degseq}{(\cT_v)})$ is flattened before returning.}
		\EndFunction
	\end{algorithmic}
\end{breakablealgorithm}~\\~\\
We prove the following lemma in Section \ref{subsec:proof-deg-seq}. The idea here is to `break' the tree at various nodes so that the successive sizes of the smaller trees are far from each other.

\begin{lemma}
	\label{lem:deg-seq}
	For a given canonical parse tree $\cT$ with $d \ge 1$ leaves, let $(d_1,\dots,d_t) := \degseq(\cT)$, where  the function $\degseq$~is given in Algorithm~\ref{alg:1}. Also let $e_i := d - \sum \limits_{j=1}^i d_j$ for $i\in [t]$ and $e_0 := d$.  Then, for all $i\in[t-1]$, $e_{i} \in \bigg(\frac{e_{i-1}}{3}, \frac{2\cdot e_{i-1}}{3}\bigg ]$.  Additionally, $d_{t} = 1$, $e_t=0$, and $\log_{3} d + 1 \leq t \leq \log_{3/2} d + 1$.
\end{lemma}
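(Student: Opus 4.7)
The plan is to prove the lemma by strong induction on $d$, with the right-heavy property of $\cT$ doing all the real work. The base case $d=1$ is immediate: the algorithm returns $(1)$ directly, so $t=1$, $d_1 = e_0 = 1$, $e_1 = 0$, the conditions on $e_i$ for $i \in [t-1] = \emptyset$ are vacuous, and the length bounds hold with equality since $\log_3 1 + 1 = 1 = \log_{3/2} 1 + 1$.

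For the inductive step ($d \ge 2$), the crux will be to show that $\leaves(v) \in (d/3, 2d/3]$ where $v = v_j$ is the node chosen on line~\ref{line:v-node}. The lower bound $\leaves(v_j) > d/3$ is immediate from the choice of $j$. The right-heavy property gives, for any internal node $u$ with right child $u_R$, $\leaves(u_R) \ge \leaves(u)/2$ (so $\leaves(u) \le 2\leaves(u_R)$) and $\leaves(u_R) < \leaves(u)$ (since the left subtree has at least one leaf). I will split into two cases: if $v_j$ is an internal node, then $v_{j+1}$ exists and by the maximality of $j$ one has $\leaves(v_{j+1}) \le d/3$, hence $\leaves(v_j) \le 2\leaves(v_{j+1}) \le 2d/3$; if $v_j$ is a leaf, then $\leaves(v_j) = 1$ forces $d \le 2$, and for $d=2$ we have $1 \le 2d/3$. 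This yields $e_1 = \leaves(v) \in (d/3, 2d/3] = (e_0/3, 2e_0/3]$, giving the required bound for $i=1$.

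Next, I will apply the induction hypothesis to $\cT_v$, which is again right-heavy because any subtree of a right-heavy tree is right-heavy. Writing $(d_2,\dots,d_t) := \degseq(\cT_v)$ and $e_i' := e_1 - \sum_{j=2}^i d_j$ for the residuals of the recursive call, the induction hypothesis gives $e_i' \in (e_{i-1}'/3, 2e_{i-1}'/3]$ for all $i \in [2..t-1]$, together with $d_t = 1$ and $e_t' = 0$. A direct expansion shows $e_i' = e_i$ for every $i \ge 1$, so the recurrence extends from $i = 1$ (the base of this step) all the way to $i = t-1$, and the terminal conditions transfer as well.

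Finally, for the length bounds I will iterate the recurrence: from $e_i > e_{i-1}/3$ one gets $d = e_0 < 3^{t-1} e_{t-1}$, and from $e_i \le 2e_{i-1}/3$ one gets $e_{t-1} \le (2/3)^{t-1} d$. Combined with $e_{t-1} = 1$ (preserved through the recursion, since the recursion only terminates when the called subtree is a single leaf), these yield $d < 3^{t-1}$ and $(3/2)^{t-1} \le d$ respectively, which rearrange to $\log_3 d + 1 \le t \le \log_{3/2} d + 1$. The only subtle point I expect is the bookkeeping for the boundary case $d=2$, where $v_j$ is already a leaf on the very first call and the inductive step degenerates; this is handled by the two-case split on whether $v_j$ is a leaf.
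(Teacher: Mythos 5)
Your proof is correct and follows essentially the same inductive route as the paper's: establish $e_1 = \leaves(v) \in (d/3, 2d/3]$ by combining the maximality of $j$ with right-heaviness (with the degenerate case where $v$ is itself a leaf handled separately), then transfer the remaining constraints from the recursive call on $\cT_v$ after observing that the residuals coincide, and finally iterate the two-sided bound with $e_{t-1}=1$ to get the length estimates. The only cosmetic difference is in indexing the residuals of the recursive call; the content is the same.
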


As mentioned in Section \ref{subsec:homo_decomopositon_lemma}, it was shown in \cite{HrubesY11b} that a homogeneous formula can be expressed as a ``small'' sum of products of homogeneous polynomials such that in each summand, the degrees of the factors roughly form a geometric sequence. We observe that this result can be strengthened for UPT formulas; in particular, we show that for UPT formulas, the ``degree sequences'' of all the summands are identical. The idea of the proof of the following lemma is similar to the proof in \cite{HrubesY11b}; we repeatedly break the formula at gates corresponding to the nodes $v$ obtained from line \ref{line:v-node} of $\degseq(\cT(C))$. The main difference between our proof and the one in \cite{HrubesY11b} is in showing that all the summands obtained in this way have the same degree sequence. For a detailed proof of the following lemma, see Section \ref{subsec:proof-log-pdt}.

\begin{lemma}[{\bf Log-product decomposition of UPT formulas}]
	\label{lem:log-pdt}
	Let $f \in \F[\vecx]$ be a homogeneous polynomial of degree $d \ge 1$ computed by a UPT formula $C$ with canonical parse tree $\cT(C)$. Let $(d_1,\dots,d_t):=\degseq(\cT(C))$. Then there exist an integer $s \le \size(C)$ and homogeneous polynomials $\set{Q_{i,j}}_{i,j}$ where $\deg(Q_{i,j}) = d_j$ for $i\in[s]$, $j\in [t]$, such that $$f =  \sum \limits_{i=1}^{s} Q_{i,1} \cdots Q_{i, t}. $$ 
	%Furthermore, for all $i\in[t-1]$, $e_{i}$ lies in $\bigg(\frac{e_{i-1}}{3}, \frac{2.e_{i-1}}{3}\bigg ]$, where $e_i := d - \sum \limits_{j=1}^i d_j$ and $e_0 := d$.
\end{lemma}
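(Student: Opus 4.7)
The plan is to prove the lemma by induction on $t = |\degseq(\cT(C))|$, using a subsidiary ``peeling'' claim that removes the first factor $Q_{\cdot,1}$ of each summand by descending along the right spine of the canonical parse tree $\cT := \cT(C)$.

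The subsidiary claim asserts: for any UPT formula $\widetilde C$ with canonical parse tree $\widetilde{\cT}$ and any node $v$ on the right spine of $\widetilde{\cT}$, there exist polynomials $P_1, \ldots, P_{s'}$ of degree $\deg(\widetilde C) - \leaves(v)$ and sub-formulas $D_1, \ldots, D_{s'}$ of $\widetilde C$, each with canonical parse tree $\widetilde{\cT}_v$, satisfying $\widetilde C = \sum_i P_i \cdot D_i$, such that the roots of the $D_i$'s form an antichain of gates in $\widetilde C$. Since $\widetilde C$ is a formula (a tree), this antichain condition implies $\sum_i \size(D_i) \leq \size(\widetilde C)$, and in particular $s' \leq \size(\widetilde C)$. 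I would prove this by structural induction on $\widetilde C$. If $v$ is the root of $\widetilde{\cT}$, take $P_1 = 1$ and $D_1 = \widetilde C$. If the output gate is a sum $\widetilde C = \sum_\alpha \widetilde C_\alpha$, each $\widetilde C_\alpha$ is itself UPT with the same canonical parse tree $\widetilde{\cT}$, and I recurse on each $\widetilde C_\alpha$ with the same $v$ and concatenate. If the output gate is a product $\widetilde C = \widetilde C_L \times \widetilde C_R$, I first swap the two factors if necessary---this preserves both the polynomial and $\widetilde{\cT}$---so that $\deg(\widetilde C_L) \leq \deg(\widetilde C_R)$. Using right-heaviness of $\widetilde{\cT}$ together with Proposition~\ref{clm:prelim} one then identifies the canonical parse tree of $\widetilde C_R$ as $\widetilde{\cT}_{v_1}$, where $v_1$ is the right child of the root of $\widetilde{\cT}$; I recurse on $\widetilde C_R$ with the same $v$ and absorb $\widetilde C_L$ into each new $P_i$.

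Given the subsidiary claim, the main induction proceeds as follows. The base case $t = 1$ reduces by Lemma~\ref{lem:deg-seq} to $d = 1$, so $C$ is a linear form and we take $Q_{1,1} = C$. For the inductive step $t \geq 2$, let $v$ be the spine node chosen by $\degseq(\cT)$, so $d_1 = d - \leaves(v)$. Apply the subsidiary claim to $C$ with this $v$ to obtain $C = \sum_i P_i \cdot D_i$ with $\deg(P_i) = d_1$ and each $D_i$ a UPT sub-formula with canonical parse tree $\cT_v$, whose degree sequence $\degseq(\cT_v) = (d_2, \ldots, d_t)$ has shorter length. The inductive hypothesis applied to each $D_i$ yields $D_i = \sum_j Q_{i,j,2} \cdots Q_{i,j,t}$ with $\deg(Q_{i,j,k}) = d_k$ and at most $\size(D_i)$ summands. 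Setting $Q_{(i,j),1} := P_i$ and collecting terms produces the required decomposition $C = \sum_{i,j} Q_{(i,j),1} Q_{(i,j),2} \cdots Q_{(i,j),t}$; the total number of summands is at most $\sum_i \size(D_i) \leq \size(C)$ by the antichain property.

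The main subtlety is in making the bookkeeping in the subsidiary claim fully rigorous---specifically, ensuring that at every recursive step the relevant sub-formula's canonical parse tree is exactly the expected subtree of $\widetilde{\cT}$. This is where Proposition~\ref{clm:prelim} is invoked: subtrees of right-heavy trees are right-heavy and hence canonical, so after the WLOG swap at a product gate the canonical parse tree of $\widetilde C_R$ equals $\widetilde{\cT}_{v_1}$ on the nose, letting $v$ be reused verbatim as a right-spine node at the next recursive level. The size bound $s \leq \size(C)$ is not an additive count of formula operations but leverages the tree structure of formulas: the terminal gates in the peeling process form an antichain, so their sub-formulas occupy pairwise disjoint portions of $C$.
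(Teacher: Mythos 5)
Your proof is correct but follows a genuinely different route from the paper's. The paper inducts on $\size(C)$, peeling off a \emph{single} gate $g$ at a time via the substitution identity $C = A_g\cdot C_g + C_{g\leftarrow 0}$ and then recursing on both $C_g$ (whose canonical parse tree is $\cT(C)_v$) and $C_{g\leftarrow 0}$ (whose canonical parse tree is still $\cT(C)$); the size bound follows because $C_g$ and $C_{g\leftarrow 0}$ occupy disjoint sets of gates. You instead induct on $t = |\degseq(\cT(C))|$ and, in one stroke, extract \emph{all} the gates corresponding to $v$ by proving a peeling claim via structural induction on the formula; your antichain observation is then exactly what replaces the paper's disjointness of $C_g$ and $C_{g\leftarrow 0}$. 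In effect your peeling claim is the closed form obtained by unrolling the paper's one-gate-at-a-time recursion on the $C_{g\leftarrow 0}$-branch, so the two arguments encode the same decomposition; the trade-off is that your subsidiary claim (with its antichain bookkeeping) is somewhat more elaborate, but your outer induction on $t$ is cleaner and makes the degree-sequence matching completely transparent, whereas the paper must verify it after the fact by comparing $\degseq(\cT(C)) = (d - \leaves(v), \degseq(\cT(C)_v))$ with the two inductive decompositions. One small fix you should make: at a product gate with $\deg(\widetilde C_L) = \deg(\widetilde C_R)$, the WLOG swap based on degree alone does not determine which factor corresponds to $\widetilde{\cT}_{v_1}$, since the canonicalization breaks ties using $\encoding$. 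You should instead phrase the WLOG as ``swap so that $\cT(\widetilde C_R) = \widetilde{\cT}_{v_1}$,'' which is always achievable by Proposition~\ref{clm:prelim} and still preserves both the polynomial and $\widetilde{\cT}$. You should also state explicitly that the $P_i$'s are homogeneous (this follows at each step since $\widetilde C_L$ is homogeneous by homogeneity of $\widetilde C$, and $P_i = \widetilde C_L \cdot P_i'$).
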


\subsection{UPT formulas have high residue}
Now we show that there exists a value of $k$ that has high $\entropy$ with respect to the degrees of the factors given by the above log-product lemma.

\begin{lemma}\label{lem:uptResLB}[{\bf High $\entropy$ for a degree sequence}]
	For any given canonical parse tree $\cT$ with $d \ge 1$ leaves, let $(d_1,\dots,d_t):=\degseq(\cT)$ and $k:=\funK(d_1,\dots,d_t)$ where the function $\funK$ is described in Algorithm \ref{alg:2}. Then $$\entropy_k(d_1,\dots,d_t)  \ge \frac{\log_3 d - 10}{216}.$$
	\label{lem:gamma-upt-lb}
\end{lemma}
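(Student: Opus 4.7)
The plan is to produce a $k$ (returned by $\funK$) for which at least a constant fraction of the $t$ indices $i \in [t]$ have $\fract{kd_i/d}$ bounded away from $0$ and $1$; combined with $t \ge \log_3 d + 1$ from Lemma~\ref{lem:deg-seq}, this yields the desired $\Omega(\log d)$ bound. First I would rewrite
$$\entropy_k(d_1,\ldots,d_t) \;=\; \tfrac{1}{2}\sum_{i=1}^t \rho_i(k), \quad \text{where } \rho_i(k) := \min\bigl\{\fract{kd_i/d},\; 1-\fract{kd_i/d}\bigr\},$$
so it suffices to exhibit a $k$ making $\rho_i(k) \ge c$ for some absolute constant $c > 0$ on $\Omega(t)$ indices~$i$.

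The main structural input is Lemma~\ref{lem:deg-seq}. The recursion $e_i \in (e_{i-1}/3,\;2e_{i-1}/3]$ forces $d_i = e_{i-1} - e_i$ to lie in $[e_{i-1}/3,\;2e_{i-1}/3)$, so the $d_i$ shrink quasi-geometrically from $d_1 \in [d/3,\,2d/3]$ down to $d_t = 1$ in $t = \Theta(\log d)$ steps. In particular, only $O(1)$ of the $d_i$ can lie in any fixed dyadic band, and consecutive ratios $d_{i+1}/d_i$ are bounded. This spread of the $d_i$'s across many scales is precisely what makes it possible for a single $k$ to be simultaneously ``non-resonant'' with many of them.

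Next I would analyze $\funK$ (Algorithm~\ref{alg:2}): it enumerates a small, carefully chosen candidate list of $k$-values tailored to the scales of the $d_i$ (for instance, values of the form $\lfloor \alpha d\rfloor$ for $\alpha$ in a finite set of rationals, or values tied to specific $d_{i^{*}}$), and returns one with largest residue. To show some candidate suffices, I would use an averaging argument: for each fixed $i$, as $k$ ranges over $\{1,\ldots,d\}$, the values $kd_i/d \bmod 1$ are (approximately) equidistributed on the lattice $\{j\cdot\gcd(d_i,d)/d : 0 \le j < d/\gcd(d_i,d)\}$, so $\mathbb{E}_k[\rho_i(k)] \ge 1/8$ up to boundary effects. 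Linearity of expectation then gives $\mathbb{E}_k[\entropy_k] = \Omega(t)$, so some $k^*$ attains $\entropy_{k^*} = \Omega(t) = \Omega(\log d)$.

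The main obstacle is reconciling this averaging-based existence argument with the explicit, restricted family of $k$'s actually enumerated by $\funK$: one has to check that this family, though small enough to keep the algorithm efficient, is rich enough to contain a $k$ non-resonant with $\Omega(t)$ indices. The geometric spread from Lemma~\ref{lem:deg-seq} is what enables this. Tracking absolute constants through both the averaging step and the restricted enumeration then accounts for the quantitative form of the bound: the additive $-10$ absorbs boundary indices (the smallest few $d_i$, which are $O(1)$ and unavoidably resonate with every $k$, and small-$d$ edge cases), while the factor $1/216$ collects the $1/2$ from the definition of $\entropy$, the $1/4$-style average for $\min(\fract{x},1-\fract{x})$, and constant-fraction losses from the restricted enumeration. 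I do not anticipate conceptual difficulties beyond this bookkeeping.
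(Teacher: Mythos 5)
Your high-level plan—show that $\funK$ returns a $k$ for which $\Omega(t)$ of the residues $\rho_i(k) = \min\{\fract{kd_i/d},\,1-\fract{kd_i/d}\}$ are bounded away from zero, and combine with $t = \Theta(\log d)$—is the right shape, and your observation that the geometric decay $e_i \in (e_{i-1}/3, 2e_{i-1}/3]$ from Lemma~\ref{lem:deg-seq} is the enabling structure is also correct. However, the central step of your argument does not match Algorithm~\ref{alg:2}, and the ``main obstacle'' you identify and then dismiss as bookkeeping is in fact the whole content of the proof.

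Concretely: you guess that $\funK$ ``enumerates a small, carefully chosen candidate list of $k$-values \dots and returns one with largest residue,'' and your averaging argument ($\E_k[\rho_i(k)] \geq 1/8$, linearity, pigeonhole) establishes only that \emph{some} $k^* \in [d]$ has large total residue. But Algorithm~\ref{alg:2} performs no search. It deterministically constructs a single $k = \floor{\alpha d}$, where $\alpha = \sum_{p=1}^m a_p/3^{3p}$ is a base-$27$ expansion whose digits $a_p \in \{0,1\}$ are chosen adaptively, one scale at a time: at step $i$ it looks at the index $j = \funJ(3i)$ (so $e_j \approx 3^{3i}$ and $d_{j+1} \approx 3^{3i}$), computes $b_0 = (\sum_{p<i} a_p/3^{3p})\cdot d_{j+1}$, and sets $a_i$ so that $\fract{(\sum_{p\le i} a_p/3^{3p})\cdot d_{j+1}} \in [1/18, 17/18]$. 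The technical heart is that the \emph{tail} contribution from future digits, $s_3 = (\sum_{p>i} a_p/3^{3p})\cdot d_{j+1} \leq 1/39$, is small because of the stride-$3$ gap in $\funJ(3i)$ — so whatever choices are made later cannot push the fractional part back near $0$ or $1$. This adaptive, base-$27$, stride-$3$ construction has no analogue in your proposal, and there is no way to ``check that this family is rich enough'' because there is no family: $\funK$ outputs one $k$, and you must show that this particular $k$ works. An existence-via-averaging argument, however correct in isolation, does not prove the lemma as stated.

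There is also a secondary gap worth noting: the paper needs the output $k$ to lie in $[\frac{d}{30}, \frac{d}{2}]$, which the constructive $\funK$ guarantees for free (since $a_1=1$ forces $\alpha \in [\frac{1}{27}, \frac{1}{26}]$), and which Lemma~\ref{lem:hard-poly} requires downstream. Your averaging argument picks $k^*$ from all of $\{1,\dots,d\}$, so the winning $k^*$ need not lie in the usable range; restricting the average to $[\frac{d}{30},\frac{d}{2}]$ is possible but is yet another step you would have to make explicit. In short, what each approach buys: your averaging argument is simpler and gives a slightly better constant if one is allowed to redefine $\funK$ as a search; the paper's constructive argument is necessary to prove the lemma for the Algorithm~\ref{alg:2} as actually defined, and it produces a $k$ in the right range without further work. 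As written, your proof has a genuine gap at exactly the point you flag, and the missing idea — adaptive greedy digit selection against a geometrically separated subsequence of scales, with the tail bounded by the stride — is not ``bookkeeping.''
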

\begin{proof}
The definition of $\entropy$ implies that it is sufficient to show that $\sum \limits_{p=1}^t \abs{k_p - \frac{k}{d}\cdot d_p} \ge \frac{\log_3 d - 10}{108}$ for arbitrary integers $k_1,\dots,k_t$. We now argue that this is indeed the case. Let us assume that $d>3^{10}$ as otherwise, the lower bound on $\entropy$ is trivial.
We import the definitions of $\set{e_i}_{i\in [0..t]}$ (line \ref{line:ei-s}), $m$ (line \ref{line:m-val}), and the function $\funJ$ (line \ref{line:funJ}) from Algorithm~\ref{alg:2}. By Lemma~\ref{lem:deg-seq}, we have $d_{t} = 1$, $e_t=0$, $t \ge \log_3 d + 1 \ge 11$, and for all $i\in[t-1]$, 
\begin{equation}
    e_{i} \in \bigg(\frac{e_{i-1}}{3}, \frac{2\cdot e_{i-1}}{3}\bigg ].
    \label{eqn:e}
\end{equation}
We have the following property of $\funJ$.
\begin{claim}
\label{claim:J}
    For all $i\in [3m]$, we have $3^{i-1} < e_{\funJ(i)} \le 3^i$, hence $\funJ:[3m] \to [t-2]$ is an injective mapping.
\end{claim}	
A proof of the above claim can be found in Section \ref{subsec:proof-J}. We now fix an $i\in [m]$ and $j:=\funJ(3i)\in [t-2]$ at line~\ref{line:j-value} in Algorithm~\ref{alg:2} and continue our analysis. Note that, by the definition of $e_j$ and Equation \eqref{eqn:e},
	
	\begin{algorithm}
		\caption{The value of $k$ for a given sequence of degrees}
		\label{alg:2}
		\begin{algorithmic}[1]
			\Function{\funK}{$d_1,\dots,d_t$}
			\Comment{Returns $k$ which shall be the order of derivatives for the $\SP$ and $\APP$ measures.}
			\State $d = d_1 + \dots + d_t$.
			\For{$i \in [0..t]$}
			\State $e_i  \gets d- \sum \limits_{j=1}^i d_j$. \label{line:ei-s}
			\EndFor
			\State $m \gets \floor{\frac{\log_3 d-1}{3}}$.   \label{line:m-val}
             \Comment{Defining a function $\funJ:[3m] \to [t-2]$.}
			\For{$ i\in [3m]$}
			\State $\funJ(i) \gets \min \set{j \in [0..t]:e_j \le 3^i}$. \label{line:funJ}
			\EndFor
			\State $(a_1,\dots,a_m) \gets \emph{undefined}$.
	    \For{$i \in [m]$}
			\State $j \gets \funJ(3i)$. \label{line:j-value}
			\State $b_0 \gets \paren{\sum \limits_{p=1}^{i-1} \frac{a_p}{3^{3p}} }\cdot d_{j+1}$. \label{line:b0}
			\Comment{$b_1$ defined below is not used in the algorithm but will be useful in the analysis.}
            \State $b_1 \gets \paren{\sum \limits_{p=1}^{i-1} \frac{a_p}{3^{3p}} + \frac{1}{3^{3i}}}\cdot d_{j+1}$. \label{line:b1}
			\If{$\fract{b_0} \in \brac{\frac{1}{18}, \frac{17}{18}}$} \label{line:if}
			\State $a_i \gets 0$.
			\Else \State $a_i \gets 1$.
			\EndIf
			\EndFor
			\Statex
			\State $\alpha \gets \sum \limits_{p=1}^{m} \frac{a_p}{3^{3p}}$ \label{line:alpha} 
			\State $k \gets \floor{\alpha\cdot d}$ \label{line:k-val}
			\State \Return $k$.
			\EndFunction
		\end{algorithmic}
	\end{algorithm}

	\begin{equation}
		d_{j+1}=e_j - e_{j+1} \in \bigg [\frac{e_j}{3},\frac{2e_j}{3}\bigg ) \subseteq \paren{3^{3i-2}, 2\!\cdot\! 3^{3i-1}},
		\label{eqn:r}	
	\end{equation} 
	where the last containment follows from the fact that $e_j \in (3^{3i-1}, 3^{3i}]$ (by applying Claim~\ref{claim:J} for the index $3i$). From line~\ref{line:alpha}, we have 
	$$\alpha = \sum \limits_{p=1}^{m} \frac{a_p}{3^{3p}}.$$
	
	\noindent Then, 
	\begin{equation*}
		\alpha\cdot d_{j+1} = \paren{\sum \limits_{p=1}^m 	\frac{a_p}{3^{3p}} }\cdot d_{j+1} =  \paren{\sum \limits_{p=1}^{i-1} \frac{a_p}{3^{3p}} + \frac{a_{i}}{3^{3i}} + \sum \limits_{p={i+1}}^m \frac{a_p}{3^{3p}} }\cdot d_{j+1} = s_1 + s_2 + s_3,
		\label{eqn:sum}
	\end{equation*}
	where 
	\begin{equation*}
		s_1:=\paren{\sum \limits_{p=1}^{i-1} \frac{a_p}{3^{3p}} }\cdot d_{j+1},
		\label{eqn:s1}
	\end{equation*}
	\begin{equation*}
		s_2:=\paren{\frac{a_{i}}{3^{3i}}}\cdot d_{j+1},
		\label{eqn:s2}
	\end{equation*} and
	\begin{align*}
		s_3 & :=\paren{\sum \limits_{p={i+1}}^m \frac{a_p}{3^{3p}}}\cdot d_{j+1} 
        \le \paren{\sum \limits_{p=i+1}^{\infty} \frac{1}{3^{3p}} }\cdot d_{j+1}  
		\le \paren{\frac{27}{26\cdot 3^{3(i+1)}}}\cdot d_{j+1} 
		 \le \frac{27\cdot 2\cdot 3^{3i-1}}{26\cdot 3^{3(i+1)}}  \tag{using Equation \eqref{eqn:r}}
		& = \frac{1}{39}.
		\label{eqn:s3}
	\end{align*}
	Note that $\abs{b_1 - b_0} = b_1 - b_0 = \frac{d_{j+1}}{3^{3i}} \subseteq \brac{\frac{1}{9}, \frac{8}{9}}$. We now consider two cases $\fract{b_0} \in [\frac{1}{18},\frac{17}{18}]$ and $\fract{b_0} \notin [\frac{1}{18},\frac{17}{18}]$ based on the if-else condition at line~\ref{line:if}.  The following simple claim whose proof can be found in Section \ref{subsec:proof-1} will be helpful in analysing these cases.
		\begin{claim}
			\label{fact:1}
			For real numbers $b_0$ and $b_1$, if $|b_1 - b_0| \in \brac{\frac{1}{9}, \frac{8}{9}}$, then either $\fract{b_0} \in \brac{\frac{1}{18},\frac{17}{18}}$ or $\fract{b_1} \in \brac{\frac{1}{18},\frac{17}{18}}$.
		\end{claim}
%    \noindent Now, there are two cases.\\
\noindent \textbf{Case 1:} $\fract{b_0} \in [\frac{1}{18},\frac{17}{18}]$. In this case, $a_i=0$, so $s_2=0$. Because $s_1 = b_0$, we have
		\begin{align*}
			\fract{\alpha\cdot d_{j+1}} & = \fract{s_1+s_2+s_3} = \fract{b_0 + 0 + s_3} = \fract{\fract{b_0} + s_3}\\
			& = \fract{b_0} + s_3 \in \brac{ \frac{1}{18}-\frac{1}{39}, \frac{17}{18}+\frac{1}{39}}=\brac{\frac{7}{234},\frac{227}{234}}. \tag{as $s_3 \le 1/39$}
		\end{align*}
\textbf{Case 2:} $\fract{b_0} \notin [\frac{1}{18},\frac{17}{18}]$. By Claim \ref{fact:1}, $\fract{b_1} \in [\frac{1}{18},\frac{17}{18}]$ . As $a_i=1$ in this case, 
		\begin{align*}
			\fract{\alpha\cdot d_{j+1}} & = \fract{(s_1+s_2)+s_3} = \fract{b_1 + s_3} = \fract{\fract{b_1} + s_3}\\
			& = \fract{b_1} + s_3 \in \brac{ \frac{1}{18}-\frac{1}{39}, \frac{17}{18}+\frac{1}{39}}=\brac{\frac{7}{234},\frac{227}{234}}. \tag{as $s_3 \le 1/39$}
		\end{align*}	
Thus, in both the cases, we have, for all $i\in [m]$ and $j=\funJ(3i)$ that,
	\begin{equation*}
		\fract{\alpha\cdot d_{j+1}} \in \brac{\frac{7}{234},\frac{227}{234}}
		\label{eqn:fract}.
	\end{equation*}
	As $k_{j+1}$ is an integer,
	\begin{align*}
		\abs{k_{j+1} - \alpha\cdot d_{j+1}} & \ge 
		\abs{\floorceil{\alpha \cdot d_{j+1}}-\alpha \cdot d_{j+1}}\\
		& = \min \set{\fract{\alpha\cdot d_{j+1}}, 1- \fract{\alpha\cdot d_{j+1}}}  \\
		&  \ge \frac{7}{234}.
		\label{eqn:diff}
	\end{align*}
	
	\noindent Now let $\cX := \set{\funJ(3i)+1:i\in [m-2]}\subseteq[t-1]$. Then the above condition translates to: For all $p \in \cX$,
	\begin{equation}
		|k_p - \alpha\cdot d_p| \ge \frac{7}{234}
		\label{eqn:large-fract}
	\end{equation}
	and Equation \eqref{eqn:r} implies that
	\begin{equation}
		d_p \le 2\cdot 3^{3(m-2)-1} \le \frac{d}{3^6}. \label{eqn:small-d}
	\end{equation}

	\noindent We thus have 
	\begin{align*}
		 \sum \limits_{p=1}^t \abs{k_p - \frac{k\cdot d_p}{d}} 
		& \ge \sum \limits_{p\in \cX} \abs{k_p - \frac{\floor{\alpha\cdot d}}{d}\cdot d_p} \tag{using line~\ref{line:k-val}}\\
		& \ge \sum \limits_{p \in \cX}  |k_p - \alpha\cdot d_p| - \abs{\set{\alpha\cdot d}\cdot \frac{d_p}{d}}  \tag{using $|x - y| \ge |x| - |y|$}\\
		& \ge \sum \limits_{p \in \cX}  |k_p - \alpha\cdot d_p| - \frac{d_p}{d} \\
		& \ge  \sum \limits_{p \in \cX}  \frac{7}{234} - \frac{1}{3^6} \tag{using Equations \eqref{eqn:large-fract} and ~\eqref{eqn:small-d}}\\
		& \ge \frac{1}{36}\cdot |\cX|=\frac{m-2}{36} \tag{as $\funJ$ is injective} \\
		& =\frac{1}{36}\cdot \paren{\floor{\frac{\log_3 d-1}{3}}-2}\\
		& \ge \frac{\log_3 d-10}{108}.
	\end{align*}
	
\end{proof}

Hence, we have a lower bound for UPT formulas computing the $IMM$ polynomial as well as a polynomial related to the $NW$ polynomial -- the next section contains the details.

\subsection{Putting everything together: the UPT formula lower bound}
In this section, we prove the following theorem.
\begin{theorem}[{\bf UPT formula lower bound for $IMM$}] \label{thm:upt_lb_IMM}
	For $n \in \N$ and $d \le \epsilon \cdot \log n\cdot \log \log n$, where $\epsilon > 0$ is a small enough constant, any UPT formula computing $IMM_{n,d}$ over any field $\F$ has size $n^{\Omega(\log d)}$.
\end{theorem}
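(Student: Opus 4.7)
The argument is the UPT analog of the low-depth proof of Theorem~\ref{thm:final-lb}, with Lemma~\ref{lem:low-depth-structural} replaced by the log-product decomposition (Lemma~\ref{lem:log-pdt}) and Lemma~\ref{lem:low-k-gamma} replaced by the UPT residue estimate (Lemma~\ref{lem:gamma-upt-lb}). As in the low-depth proof, $IMM$ is first reduced to a word polynomial $P_{\vecw}$ whose $\SP$ measure is known to be large, then the decomposition and the residue bound are fed into the abstract lower bound Lemma~\ref{lem:sp-final-lb}.

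Concretely, let $C$ be a UPT formula of size $s$ computing $IMM_{n,d}$ and let $\cT = \cT(C)$ be its canonical parse tree. Define $(d_1,\ldots,d_t) := \degseq(\cT)$ and $k := \funK(d_1,\ldots,d_t)$ using Algorithms~\ref{alg:1} and~\ref{alg:2}, and choose the bias parameter $h := \floor{\log n}$. Apply Lemma~\ref{lem:hard-poly} to this value of $k$ to obtain an $h$-unbiased word $\vecw \in [-h..h]^d$ together with parameters $n_0$ and $\ell$ for which the word polynomial $P_{\vecw}$ on $\tilde n = \sum_{i \in [d]} 2^{|w_i|}$ variables satisfies
\[
\SP_{k,\ell}(P_{\vecw}) \;\ge\; 2^{-O(d)}\cdot M(\tilde n, k)\cdot M(\tilde n, \ell).
\]
Since $n \ge 2^h$, Lemma~\ref{lem:imm-proj} yields a UPT formula $C'$ of size at most $s$ that computes $P_{\vecw}$ and whose canonical parse tree equals $\cT$.

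Now apply Lemma~\ref{lem:log-pdt} to $C'$ to get a representation
\[
P_{\vecw} \;=\; \sum_{i=1}^{s'} Q_{i,1}\cdots Q_{i,t}, \qquad s' \le s,
\]
where \emph{every} summand has the same degree sequence $(d_1,\ldots,d_t)$ (this uniformity is the key structural feature afforded by UPT-ness: a single canonical parse tree determines the factorization pattern of every summand). Lemma~\ref{lem:gamma-upt-lb} then supplies a single uniform lower bound $\entropy_k(d_1,\ldots,d_t) \ge \gamma := \Omega(\log d)$, valid for every term. Plugging the $\SP$ lower bound together with this residue bound into Lemma~\ref{lem:sp-final-lb} yields
\[
s \;\ge\; s' \;\ge\; 2^{-O(d)}\cdot \paren{\tfrac{\tilde n}{d}}^{\Omega(\log d)} \;\ge\; n^{\Omega(\log d)},
\]
where the last inequality uses $\tilde n \ge 2^h \ge n/2$ and the hypothesis $d \le \epsilon \log n \cdot \log \log n$ to absorb the $2^{-O(d)}$ prefactor into the $n^{\Omega(\log d)}$ bound (for $\epsilon$ small enough, $d = o(\log d \cdot \log n)$).

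The main technical obstacle is calibrating the value of $k$ returned by $\funK$ (which lies in $[0, d/26]$ since $\alpha = \sum_p a_p/3^{3p} < 1/26$) with the range of $k$ for which Lemma~\ref{lem:hard-poly} constructs a hard word, and simultaneously ensuring that the resulting $\vecw$ is $h$-unbiased with $\tilde n$ comparable to $n$. This is handled either by tailoring the hard word to the specific $k$ arising from $\funK$ (rather than invoking Lemma~\ref{lem:hard-poly} verbatim) or by exploiting the symmetry $\entropy_{k} = \entropy_{d-k}$ to reduce to a more convenient range of derivative orders. Once this calibration is in place, the proof is a clean composition of: (i) the $IMM$-to-$P_{\vecw}$ reduction, (ii) the UPT log-product decomposition, (iii) the per-sequence residue bound, and (iv) the shifted-partials upper bound for a product.
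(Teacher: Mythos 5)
Your proposal follows the same route as the paper's proof: reduce $IMM$ to the word polynomial $P_{\vecw}$ via Lemma~\ref{lem:imm-proj}, apply the log-product decomposition of Lemma~\ref{lem:log-pdt} to obtain a sum of products with a \emph{uniform} degree sequence, invoke the residue bound of Lemma~\ref{lem:gamma-upt-lb} for the $k$ returned by $\funK$, and feed everything into Lemma~\ref{lem:sp-final-lb}. The composition and parameter choices ($h=\floor{\log n}$, $\widetilde{n}\ge 2^h\ge n/2$, absorbing $2^{-O(d)}$ into $n^{\Omega(\log d)}$ when $d\le \epsilon\log n\log\log n$) all match.

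The only substantive issue is the ``main technical obstacle'' you flag in the final paragraph. You assert that $k=\funK(d_1,\dots,d_t)$ lies in $[0,d/26]$, worry that this may fall outside the range $k\in[\tfrac{d}{30},\tfrac{d}{2}]$ required by Lemma~\ref{lem:hard-poly}, and suggest tailoring the hard word to the specific $k$ or exploiting the symmetry $\entropy_k=\entropy_{d-k}$. Neither workaround is needed, and treating this as an open calibration issue leaves a gap. The resolution is a one-line observation about Algorithm~\ref{alg:2}: for $i=1$, line~\ref{line:b0} gives $b_0=0$, so $\fract{b_0}=0\notin[\tfrac{1}{18},\tfrac{17}{18}]$, the else-branch is taken, and $a_1=1$ \emph{always}. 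Hence
\[
k=\floor{\alpha d}\;\ge\;\frac{a_1}{3^3}\,d-1\;=\;\frac{d}{27}-1\;\ge\;\frac{d}{30},
\]
together with $k\le \alpha d\le \bigl(\sum_{p\ge 1}3^{-3p}\bigr)d= d/26\le d/2$, which places $k$ squarely inside the range demanded by Lemma~\ref{lem:hard-poly}. You should verify and record this lower bound on $k$ rather than leaving the calibration as an unresolved step; once that is done your argument is complete and matches the paper's.
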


\begin{proof}
	Let $C$ be a UPT formula computing $IMM_{n,d}$, $(d_1,\dots,d_t) := \degseq(\cT(C))$, and $k:=\funK(d_1,\dots,d_t)$. We now delve into the program $\funK(d_1,\dots,d_k)$ in Algorithm~\ref{alg:2} to understand the range of the $k$ that it outputs. Notice that for $i=1$, $b_0$ gets the value 0 (at line~\ref{line:b0}), so the condition in line~\ref{line:if} fails resulting in the value of $a_1$ being 1. Hence, $k = \floor{\alpha\cdot d} \ge \paren{\frac{a_1}{3^3}}\cdot d-1 = \frac{d}{27}-1 \ge \frac{d}{30}$. On the other hand, $k \le \alpha\cdot d \le \paren{\sum \limits_{p=1}^\infty \frac{1}{3^{3p}}}\cdot d \le \frac{d}{26} \le \frac{d}{2}$. Hence Lemma~\ref{lem:hard-poly} is applicable with $k=\funK(\cT(C))$ and $h = \floor{\log n}$ -- giving a polynomial $P_{\vecw}$ in $\widetilde{n}$ (say) variables such that its $\SP$ measure is large. By Lemma~\ref{lem:imm-proj}, this means that there exists a UPT formula $C'$ of similar size computing $P_{\vecw}$ (in fact, $\cT(C') = \cT(C)$). 
	
    Lemma~\ref{lem:log-pdt} and Lemma~\ref{lem:gamma-upt-lb} when put together give
	$$ P_{\vecw} = \sum_{i=1}^s Q_{i,1}\dots Q_{i,t}$$ for some  $s \le \size(C')$ such that for all $i \in s$, $\entropy_k(\degseq(\cT(C'))) \ge \Omega\paren{\log d}$. Here we are using the fact that $k=\funK(\degseq(\cT(C')))$ as $\cT(C) = \cT(C')$. Now since $\vecw$ is obtained by Lemma~\ref{lem:hard-poly}, applying Lemma~\ref{lem:sp-final-lb} with $\gamma=\Omega(\log d)$ gives that $s \ge 2^{-O(d)}\cdot n^{\Omega\paren{\log d}}$. Hence, $\size(C) \ge \size(C') \ge s \ge 2^{-O(d)}\cdot n^{\Omega\paren{\log d}}$.
	
	If $d \leq \epsilon\cdot \log n\cdot \log \log n$ for some  $\epsilon > 0$, then $\frac{d}{\log d} \leq \frac{\epsilon \cdot \log n\cdot \log \log n}{\log \log n + \log \log \log n} \leq \epsilon'\cdot \log n$ for some $0 < \epsilon' \leq \epsilon$. Hence, $d = \epsilon' \cdot \log n\cdot \log d$ and $2^{-O(d)}\cdot n^{\Omega(\log d)}=n^{\Omega(\log d)}$ if $\epsilon$ (and thus $\epsilon'$) is a small enough constant.
\end{proof}

\begin{remark}
    The above theorem can also be derived by using the complexity measure studied in \cite{lst1} along with the observation that the {\em unbounded-depth} set-multilinearization due to \cite{Raz13} (which increases the size by a factor of $2^{O(d)}$) preserves parse trees. 
    
    %However, if we want the depth to remain the same, which is the case for a low-depth lower bound, we would have to use the conversion in~\cite{lst1} which suffers a multiplicative loss of $d^{O(d)}$. Therefore, our low-depth lower bound (Theorem~\ref{thm:final-lb}) which suffers a loss of only a factor of $2^{O(d)}$ is new.
\end{remark}

We also get an analogous theorem for a polynomial related to the $NW$ polynomial.

\begin{theorem} \label{thm: UPT lb for NW}
	Let $n \in \N$, $d \le \epsilon \cdot \log n\cdot \log \log n$, where $\epsilon > 0$ is a small enough constant, and $q$ be the largest prime number between $\floor{\frac{n}{2d}}$ and $\floor{\frac{n}{d}}$. Then, any UPT formula computing $P = \sum\limits_{i = \floor{d/30}}^{\ceil{d/2}}y_i \cdot NW_{q,d,i}$ (where the $y$ variables are distinct from the $\vecx$ variables), over any field $\F$ has size $n^{\Omega(\log d)}$.
\end{theorem}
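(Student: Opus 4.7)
The strategy mirrors that of Theorem~\ref{thm:upt_lb_IMM}, but because the hardness guarantee for $NW_{q,d,k}$ couples the NW-index to the order of derivatives we must take, I first carve out the ``right'' copy of $NW$ from $P$ before running the UPT machinery.

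Let $C$ be any UPT formula computing $P$, set $(d_1,\dots,d_t):=\degseq(\cT(C))$ (so $d_1+\cdots+d_t=d+1$), and let $k:=\funK(d_1,\dots,d_t)$. Inspecting Algorithm~\ref{alg:2} exactly as in the proof of Theorem~\ref{thm:upt_lb_IMM} gives $k\in[\tfrac{d+1}{30},\tfrac{d+1}{26}]$, so $k$ is an integer lying in the summation range $[\floor{d/30},\ceil{d/2}]$ (hence $y_k$ is actually a variable of $P$) and also in the range $[\tfrac{d}{30},\tfrac{d}{2}-\tfrac{\sqrt{d}}{8}]$ required by Lemma~\ref{lemma:NW_hard_poly}. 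Applying the log-product decomposition (Lemma~\ref{lem:log-pdt}) to $C$ yields
\[
P \;=\; \sum_{i=1}^{s} Q_{i,1}\cdots Q_{i,t},\qquad s\le \size(C),\quad \deg(Q_{i,j})=d_j.
\]

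To extract $NW_{q,d,k}=\partial_{y_k}P$, I apply the product rule to each summand and then set all $y$-variables to $0$ (which leaves the $y$-free polynomial $NW_{q,d,k}$ unchanged):
\[
NW_{q,d,k}\;=\;\sum_{i=1}^{s}\sum_{j=1}^{t}\widetilde{Q}_{i,1}\cdots \widetilde{Q}^{(k)}_{i,j}\cdots \widetilde{Q}_{i,t},
\]
where $\widetilde{Q}_{i,\ell}:=Q_{i,\ell}|_{y=0}\in\F[\vecx]$ is either $0$ or homogeneous of $\vecx$-degree $d_\ell$, and $\widetilde{Q}^{(k)}_{i,j}:=(\partial_{y_k}Q_{i,j})|_{y=0}\in\F[\vecx]$ is either $0$ or homogeneous of $\vecx$-degree $d_j-1$ (homogeneity is preserved because $P$ is linear in every $y$-variable, so the surviving monomials of $Q_{i,j}$ after the substitution are exactly those with $y_k$-degree $1$ and no other $y$). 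Each non-vanishing summand is thus a product of $t=O(\log d)$ homogeneous $\vecx$-polynomials with degree sequence $(d_1,\dots,d_j-1,\dots,d_t)$ summing to $d$, and the entire decomposition has at most $s\cdot t$ summands.

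Finally, I feed this decomposition into Lemma~\ref{lem:sp-final-lb} with $n=qd$, using the hard-polynomial bound $\SP_{k,\ell}(NW_{q,d,k})\ge 2^{-O(d)}M(qd,k)M(qd,\ell)$ from Lemma~\ref{lemma:NW_hard_poly}; the parameters $\ell$ and $n_0$ match between the two lemmas by construction. The one new technical point --- which I expect to be the main obstacle --- is to verify that the perturbed residues $\entropy_k(d_1,\dots,d_j-1,\dots,d_t)$ (now normalized by $d$ instead of $d+1$) remain $\Omega(\log d)$, since Lemma~\ref{lem:gamma-upt-lb} only guarantees this for the unperturbed sequence. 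A direct estimate handles this: changing the normalizer from $d+1$ to $d$ shifts each target value $k d_i/D$ by $k d_i/(d(d+1))$ (whose sum over $i$ is at most $k/(d+1)\le 1$), and subtracting $1$ from the $j$-th entry shifts the $j$-th target by at most $k/d\le 1$, so the quantity $\sum_i|k_i-kd_i/D|$ changes by only $O(1)$ uniformly in the $k_i$'s. Combined with the $\Omega(\log d)$ slack from Lemma~\ref{lem:gamma-upt-lb}, this yields $\entropy_k(d_1,\dots,d_j-1,\dots,d_t)\ge \Omega(\log d)$. Lemma~\ref{lem:sp-final-lb} then gives $s\cdot t\ge 2^{-O(d)}\,q^{\Omega(\log d)}$; since $q=\Theta(n/d)$, $t=O(\log d)$, and the hypothesis $d\le \epsilon\log n\log\log n$ absorbs the $2^{-O(d)}$ factor, we conclude $s=n^{\Omega(\log d)}$.
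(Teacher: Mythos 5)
Your argument is correct, but it takes a genuinely different route from the paper's. The paper does not decompose $P$ directly: it substitutes $y_i \leftarrow 0$ for every $i \neq k$ in $C$, observes (via the argument inside the proof of Lemma~\ref{lem:log-pdt}) that the resulting formula $C'$ is still UPT with $\cT(C') = \cT(C)$ and computes $y_k \cdot NW_{q,d,k}$, and then reruns the $IMM$ argument on $C'$ verbatim. That route leaves the degree sequence $(d_1,\dots,d_t)$ untouched, so Lemma~\ref{lem:gamma-upt-lb} applies with no perturbation and there is no factor-of-$t$ blowup in the number of summands. You instead apply Lemma~\ref{lem:log-pdt} to $C$ itself and extract $NW_{q,d,k}$ by $\partial_{y_k}$ followed by $y\leftarrow 0$; this perturbs one degree entry and the total degree by $1$, which forces the extra residue-stability step. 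Your stability estimate is correct: the targets $k d_i/D$ shift by at most $k/D \le 1$ in aggregate when $D$ drops from $d+1$ to $d$, and the $j$-th target shifts by at most $k/d \le 1$ more when $d_j$ drops by $1$, so the minimizing sum $\sum_i |k_i - \text{target}_i|$ changes by $O(1)$ and the residue still exceeds $\Omega(\log d)$; likewise the extra factor $t=O(\log d)$ in the summand count is absorbed. One cosmetic point to flag: for $j=t$ one always has $d_t=1$ (Lemma~\ref{lem:deg-seq}), so $\widetilde{Q}^{(k)}_{i,t}$ is a scalar and must be absorbed rather than treated as a factor, since Lemma~\ref{lem:measure-upper-bd} requires non-constant factors; this does not affect the residue, as a zero degree-entry contributes nothing. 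On balance your route is slightly longer but arguably more self-contained: it avoids the small degree and variable-count mismatch (degree $d+1$ over $qd+1$ variables for $y_k \cdot NW_{q,d,k}$ versus degree $d$ over $qd$ variables in Lemma~\ref{lemma:NW_hard_poly}) that the paper's one-line appeal to the $IMM$ proof glosses over.
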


\begin{proof}
    Let $C$ be any UPT formula computing $P$, $(d_1, \ldots, d_t) := \degseq(\cT(C))$, and $k:=\funK(d_1,\\ \dots,d_t)$. Observe that the formula $C'$ obtained by setting $y_i = 0$ for all $i \in [d]\setminus \set{k}$ in $C$ computes $NW_{q,d,k}(\vecx)$. It follows from the proof of Lemma \ref{lem:log-pdt} that not only is $C'$ a UPT formula, but also that its canonical parse tree $\cT(C')$ is the same $\cT(C)$. Hence, $(d_1, \ldots, d_t) := \degseq(\cT(C'))$ and $k = \funK(\degseq(\cT(C')))$. Then, arguing as in the proof of Theorem \ref{thm:upt_lb_IMM}, we get that $\size(C') \geq (qd)^{\Omega(d)} = n^{\Omega(d)}$. As $\size(C') \leq \size(C)$, this proves the theorem.
\end{proof}
        \section{Conclusion} \label{sec:conclusion}
In this work, we prove superpolynomial lower bounds for low-depth arithmetic circuits as well as UPT formulas using the $\SP$ and $\APP$ measures. Unlike \cite{lst1}, which proves low-depth circuit lower bounds by proving lower bounds for low-depth set-multilinear formulas, we directly upper bound the $\SP$ and $\APP$ of low-depth homogeneous formulas. Our approach has a potential advantage: Since the set-multilinearization process incurs a loss of a factor of $d^{O(d)}$, it is not clear if proving exponential lower bounds for low-depth set-multilinear formulas would yield exponential lower bounds for low-depth homogeneous formulas. A direct approach does not seem to incur such an inherent loss. So, it is conceivable that one might be able to prove exponential lower bounds for low-depth homogeneous formulas or other related models using a direct approach. \\

\noindent\textbf{Problem 1.} Prove exponential lower bounds for low-depth homogeneous arithmetic formulas. Prove exponential lower bounds for low-depth, \textit{multi-r-ic} formulas.\footnote{A formula is said to be multi-r-ic, if the formal degree of every gate with respect to every variable is at most $r$ \cite{KayalS17, KayalST16a}.}\\

The UPT formula lower bound proved in this work is for formulas computing polynomials of degree at most $O(\log n \cdot \log \log n)$. It would be interesting to increase the range of degrees for which our bound works. In the non-commutative setting, exponential lower bounds are known for formulas with exponentially many parse trees \cite{lls19}. It is natural to ask if the same can also be done in the commutative setting.  \\
    
\noindent\textbf{Problem 2.} Prove an $n^{\Omega(\log d)}$ lower bound for UPT formulas for $d = n^{O(1)}$. Prove a superpolynomial lower bound for formulas with ``many'' parse trees. \\
    
Our work also raises the prospect of learning low-depth homogeneous formulas given black-box access using the `learning from lower bounds' paradigm proposed in \cite{GargKS20, KayalS19}. \\

\noindent\textbf{Problem 3.} Obtain efficient learning algorithms for random low-depth homogeneous formulas.\\

To upper bound $\SP$ or $\APP$ of a homogeneous formula $C$, we first show in Section \ref{sec:der-structural} that the space of partial derivatives of $C$ has some structure and then exploit this structure using shifts or affine projections. There might be a better way to exploit the structure of the space of partials and get better lower bounds. Some candidates for an alternate way to exploit this structure are going modulo an appropriately chosen ideal or using random restrictions along with shifts as done in \cite{KayalLSS17, KumarS17} for homogeneous depth-$4$ formulas. Exploring this possibility is also an interesting direction for future work.
	
	%\section*{Acknowledgment}
	%To be added...
	
	\bibliographystyle{alpha}
	\bibliography{references}

	\appendix
        \addtocontents{toc}{\protect\setcounter{tocdepth}{1}}
        \section{Proofs from Section \ref{sec:prelim}}\label{app:prelim}

\subsection{Proof of Lemma~\ref{lem:binomial-coeffs}}
\label{app:lem:binomial-coeffs}
	\begin{enumerate}
	    \item 
	    \begin{align*}
	    M(a,b) = \frac{(a+b-1)\cdots(a)}{b!} \ge \frac{a^b}{b^b} \text{~, and}
	    \end{align*}
	    \begin{align*}
	    M(a,b) & \le \frac{(a+b-1)^b}{\paren{\frac{b}{e}}^b} \tag{using $b! \ge \paren{b/e}^b$}\\
	    & \le \frac{(2a)^b\!\cdot\!e^b}{b^b} \le \paren{\frac{6a}{b}}^b.
	    \end{align*}
	    \item 
	    \begin{align*}
	    \frac{M(a,b+c)}{M(a,b)} = \paren{\frac{a+b+c-1}{b+c}}\cdots\paren{\frac{a+b}{b+1}}.
	    \end{align*}
	    The bounds follow from the fact that each of the above $c$ many fractions lies between $\frac{a}{2b}$ and $\frac{2a}{b}$. 
	    \item \begin{align*}
	    \frac{M(c,d)}{M(b,d)} = \paren{\frac{c+d-1}{b+d-1}}\cdots\paren{\frac{c}{b}}.
	    \end{align*}
	    The lower bound follows from the fact that each of the above $d$ many fractions is at least $\frac{c}{b}$. 
	\end{enumerate} \qed

\begin{algorithm}
	\caption{Canonical tree of a binary tree}
	\label{alg:canon}
	\begin{algorithmic}[1]
		\Function{\canon}{$\cT$}
		\If{$\cT$ is an empty tree} \Return $\cT$.
		\EndIf
		\State $v \gets \textnormal{root node of $\cT$}$.
		\Comment{If $v$ has no left (or right) child, $v_L$ (resp. $v_R$) defined below is treated as an empty node and the corresponding subtree is considered empty.}
		\State $v_{L} \gets \textnormal{left child of } v$. 
		\State $v_{R} \gets \textnormal{right child of } v$.
		\Comment{Recursively ``canonizing'' the left and right subtrees.}
		\State $\cT_{v_L} \gets \canon(\cT_{v_L})$.
		\State $\cT_{v_R} \gets \canon(\cT_{v_R})$.\label{line:two-canons}
        \Comment{\emph{encoding} is any fixed 1-1 map from binary trees to positive integers.}
		\If{$\leaves(v_L)  > \leaves(v_R)$ or ($\leaves(v_L) = \leaves(v_R)$ and $\encoding(\cT_{v_L}) > \encoding(\cT_{v_R})$)} 
        \Comment{The left and right subtrees at the root are swapped.}
		\State \swap($\cT_{v_L},\cT_{v_R}$). \label{line:swap-right-left-trees}
		
		\EndIf 
%		\EndIf
		\State \Return $\cT$
		\EndFunction
	\end{algorithmic}
\end{algorithm}

\subsection{Proof of Proposition~\ref{clm:prelim}}
\label{app:clm:prelim}
    The lemma is trivially true for empty trees, so we will assume that $\cT$ has at least one leaf.
    \begin{enumerate}
        \item In Algorithm \ref{alg:canon}, we swap the left and right subtrees of a node when the left-subtree has more leaves than the right-subtree and do not swap if the right-subtree has more nodes than the left-subtree. Because of this, it follows from a simple inductive argument that $\canon(\cT)$ is right-heavy and is isomorphic to $\cT$.
        
        \item Suppose $\canon(\cT) = \canon(\widetilde{\cT})$. Using Item 1, we get that $\cT$ and $\widetilde{\cT}$ are isomorphic as both of them are isomorphic to $\canon(\cT)$. Conversely, suppose that $\cT$ is isomorphic to $\widetilde{\cT}$ via a bijection $\phi$. We shall use induction on the number of leaves of $\cT$ and $\widetilde{\cT}$; if both of them have just one leaf, then we trivially have that $\canon(\cT) = \canon(\widetilde{\cT})$.  Denoting the left and right children of the root $v$ of $\cT$ by $v_L$ and $v_R$, we get that $\cT_{v_L}$ and $\widetilde{\cT}_{\phi(v_L)}$ are isomorphic under $\phi$ and so are $\cT_{v_R}$ and $\widetilde{\cT}_{\phi(v_R)}$. As these trees have fewer leaves than $\cT$ and $\widetilde{\cT}$, we have $\canon(\cT_{v_L}) = \canon(\widetilde{\cT}_{\phi(v_L)})$ and $\canon(\cT_{v_R}) = \canon(\widetilde{\cT}_{\phi(v_R)})$ by induction. Hence, after line \ref{line:swap-right-left-trees} in the execution of Algorithm \ref{alg:canon} on inputs $\cT$ and $\widetilde{\cT}$, either $\canon(\cT_{v_L})$ and $\canon(\widetilde{\cT}_{\phi(v_L)})$ are both the left child of $v$ and $\phi(v)$ respectively, or both are the right child. The same is true for $\canon(\cT_{v_R})$ and $\canon(\widetilde{\cT}_{\phi(v_R)})$. Thus $\canon(\cT) = \canon(\widetilde{\cT})$.
        
        %We observe that $\phi(v)$ is the root of $\widetilde{\cT}$ with children ${\phi(v_L)}$ and $\phi(v_R)$, and the multisets $\set{\canon(\cT_{v_L}),\canon(\cT_{v_R})}$  and $\set{\canon(\widetilde{\cT}_{\phi(v_L)}),\canon(\widetilde{\cT}_{\phi(v_R)})}$ are identical. 
        %Hence, the execution of Algorithm~\ref{alg:canon} with inputs $\cT$ and $\widetilde{\cT}$ coincide after line \ref{line:two-canons} and give the same output.
        
        \item Note that $\phi$ also induces an isomorphism from $\cT_v$ to $\canon(\cT)_{\phi(v)}$. Hence from Item 2, $\canon(\cT_v) = \canon(\canon(\cT)_{\phi(v)})$. By our design of the function \canon(), note that all subtrees of a canonical tree are themselves canonical trees i.e., there exists a binary tree $\cT'$ such that $\canon(\cT') = \canon(\cT)_{\phi(v)}$. Thus, $\canon(\cT_v) = \canon(\canon(\cT)_{\phi(v)})=\canon(\canon(\cT'))=\canon(\cT')=\canon(\cT)_{\phi(v)}$. 
    \end{enumerate} \qed

\subsection{$\APP$ vs skewed partials} \label{subsec: app vs skewp}

In this section, we show that there are some lower bounds that can be proved using $\APP$ but not with the skewed partials measure ($\SkewP$). Consider circuits of the form $C = Q_1^{e_1} + \cdots + Q_s^{e_s}$, where $Q_1, \ldots, Q_{s}$ are arbitrary polynomials of degree at most $d \leq \frac{n}{2e}$. How large an $s$ do we need for $C$ to compute the monomial $P := x_1\cdots x_n$? \cite{Kayal12a} introduced the $\SP$ measure to show that $s = 2^{\Omega\paren{\frac{n}{d}}}$. Here, we prove the same using the $\APP$ measure. However, this lower bound cannot be proved using $\SkewP$.
Let $\F$ be a field of size greater than $n$. \footnote{This restriction on $\F$ is not required. If $|\F| < n$, we take any extension $\K$ of size more than $n$, consider $C$ and $P$ over $\K$, and prove the lower bound on $s$. Observe that the bound will continue to hold over $\F$.} \\

\noindent\textbf{Analysing $\SkewP(P)$.} The $\SkewP$ measure is defined in \cite{KayalNS20}. If $\vecx = \vecy \uplus \vecz$, then $$\SkewP_{\vecy,k}(P) := \dim\spacespanned{\brac{\der{P}{m}}_{\vecy = 0}: m \text{ is a monomial of degree } k \text{ in } \vecy}.$$ Observe that $\SkewP_{\vecy, k}(P) \leq 1$ for all $\vecy \subseteq \vecx$ and $k$. Hence, we cannot hope to get a lower bound on $s$ using skewed partials. \\

\noindent\textbf{Analysing $\APP$ of $C$ and $P$.}  Let $k = \floor{\frac{n}{2ed}}$ and $n_0 = k+1$. We begin by proving an upper bound on $\APP_{n_0,k}(C)$.
\begin{claim}
    $\APP_{n_0,k}(C) \leq s\cdot \binom{n_0 + dk - k}{n_0}$.
\end{claim}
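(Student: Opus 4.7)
The plan is to fix an arbitrary linear map $L:\vecx \to \spacespanned{\vecz}$ and use sub-additivity (Proposition~\ref{prop:sub-additive}) to reduce the task to bounding $\dim \spacespanned{\pi_L\paren{\partialf^k Q_i^{e_i}}}$ for each fixed $i \in [s]$. Once I show that this per-term dimension is at most $\binom{n_0 + kd - k}{n_0}$, summing over $i$ and then taking the supremum over $L$ will give the desired bound on $\APP_{k,n_0}(C)$.

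The key observation will be a clean factorization of $k$-th order partial derivatives of pure powers. For any multiset $m$ of size $k$, iterated application of the product rule (Proposition~\ref{prop:ders}) expresses $\partial_m(Q_i^{e_i})$ as an $\F$-linear combination of terms of the form $c\cdot Q_i^{e_i - r}\prod_{j=1}^{r} \partial_{S_j} Q_i$, where $\set{S_1,\ldots,S_r}$ is a partition of $m$ into $r \le k$ non-empty parts. In the main case $e_i \ge k$ (which, since $C = x_1\cdots x_n$, follows from $e_i\deg(Q_i) = n$ and $\deg(Q_i) \le d$, forcing $e_i \ge n/d > k$), each such term is divisible by $Q_i^{e_i - k}$, and the quotient has total degree $e_i\deg(Q_i) - k - (e_i - k)\deg(Q_i) = k\deg(Q_i) - k \le kd - k$. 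Since $\pi_L$ does not raise the degree of any polynomial, it follows that every element of $\pi_L\paren{\partialf^k Q_i^{e_i}}$ can be written as $\pi_L(Q_i)^{e_i - k}\cdot g$ for some polynomial $g \in \F[\vecz]$ with $\deg(g) \le kd - k$.

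Consequently, the span of $\pi_L\paren{\partialf^k Q_i^{e_i}}$ sits inside the image of the space of such $g$'s under multiplication by the fixed polynomial $\pi_L(Q_i)^{e_i - k}$, and therefore has dimension at most $\binom{n_0 + kd - k}{n_0}$, the number of monomials in $n_0$ variables of degree at most $kd - k$. The corner case $e_i < k$ is handled by the cruder bound $\deg\paren{\partial_m Q_i^{e_i}} \le e_i\deg(Q_i) - k \le kd - k$, which places $\pi_L\paren{\partialf^k Q_i^{e_i}}$ directly inside the space of polynomials in $\vecz$ of degree at most $kd - k$ and gives the same bound. Summing over $i \in [s]$ via Proposition~\ref{prop:sub-additive} and taking the supremum over $L$ then yields $\APP_{k,n_0}(C) \le s\cdot \binom{n_0 + kd - k}{n_0}$. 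The only genuinely substantive step is the product-rule factorization, but it follows from a straightforward induction on $k$, so I anticipate no real obstacle.
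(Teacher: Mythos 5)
Your proof is correct and takes essentially the same route as the paper: both rely on the observation that $\partialf^k(Q_i^{e_i}) \subseteq \vecx^{\le kd-k}\cdot Q_i^{\max\{e_i-k,0\}}$ together with sub-additivity of $\APP$, and your detailed derivation even corrects an apparent typo in the paper's statement of the inclusion (the paper writes exponent $d(k-1)$ but uses the correct $k(d-1)=dk-k$ in the binomial coefficient). One small slip: the parenthetical justification that $e_i\ge k$ must hold because $e_i\deg(Q_i)=n$ is not valid in general since the $Q_i$ are arbitrary (possibly non-homogeneous) polynomials of degree at most $d$, so $Q_i^{e_i}$ need not be homogeneous of degree $n$ even though the sum equals $x_1\cdots x_n$; fortunately this does not matter because you handle the case $e_i<k$ separately anyway.
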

\begin{proof}
    Observe that for all $i \in [s]$, 
    $$\partialf^k\paren{Q_i^{e_i}} \subseteq \set{\vecx^{\leq d(k - 1)}Q_i^{\max\set{e_i - k,0}}}.\footnote{Here $\vecx^{\leq d(k-1)}$ denotes the set of all monomials of degree at most $d(k-1)$.}$$
    Thus, $\APP_{k, n_0}\paren{Q_i^{e_i}} \leq \binom{n_0 + dk - k}{n_0}$. The claim follows from the sub-additivity of $\APP$.
\end{proof}
We now compute  $\APP_{k,n_0}(P)$.
\begin{claim}
    $\APP_{k,n_0}(P) = \binom{n}{k}$.
\end{claim}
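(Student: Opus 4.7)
The plan is to prove the equality $\APP_{k,n_0}(P) = \binom{n}{k}$ via matching upper and lower bounds. For the upper bound, since $P = x_1 \cdots x_n$ is multilinear, $\partial_m P = 0$ whenever $m \in \vecx^k$ has a repeated variable, and otherwise $m$ corresponds to a $k$-subset $S \subseteq [n]$ with $\partial_m P = \prod_{i \in [n] \setminus S} x_i$. The $\binom{n}{k}$ resulting multilinear monomials of degree $n-k$ are linearly independent, so $\dim \spacespanned{\partialf^k P} = \binom{n}{k}$. Since $\pi_L$ is a linear map for any choice of $L$, we immediately get $\dim \spacespanned{\pi_L\paren{\partialf^k P}} \le \binom{n}{k}$.

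For the matching lower bound I would exhibit a specific $L$ achieving dimension $\binom{n}{k}$. Using $|\F| > n$, pick distinct $\alpha_1, \ldots, \alpha_n \in \F$ and define the Vandermonde-style linear forms $L(x_i) = \ell_i := \sum_{j=0}^{k} \alpha_i^{j} z_{j+1}$ in the $n_0 = k + 1$ variables $\vecz$. Then $\pi_L\paren{\partial_{\prod_{i \in S}x_i}P} = \prod_{i \in [n] \setminus S} \ell_i$, so it suffices to show the $\binom{n}{k}$ products $\set{\prod_{i \in [n] \setminus S}\ell_i : S \subseteq [n], |S| = k}$ are linearly independent in the space of homogeneous degree-$(n-k)$ polynomials in $\vecz$. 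Since this ambient space has dimension $M(n_0, n-k) = \binom{n}{k}$, linear independence would force these products to form a basis.

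The key auxiliary step is the following classical lemma: if $\ell_1, \ldots, \ell_n$ are linear forms in $r$ variables such that any $r$ of them are linearly independent (``general position''), then the products $\set{\prod_{i \in T}\ell_i : T \subseteq [n], |T| = n - r + 1}$ form a basis of the space of homogeneous degree-$(n-r+1)$ polynomials. Our Vandermonde-style forms satisfy general position with $r = k + 1$ since the $\alpha_i$'s are distinct (any $k+1$ of them give an invertible Vandermonde matrix), so the lemma yields the required linear independence. I would prove the lemma by induction on $n$; the base case $n = r$ is immediate since the products become the singletons $\ell_1, \ldots, \ell_r$, which form a basis of the space of linear forms by general position. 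For $n > r$, I would quotient by $\ell_n$ and split the products by whether they contain $\ell_n$. Those containing $\ell_n$ span $\ell_n$ times the homogeneous degree-$(n-r)$ polynomial space, by the inductive hypothesis applied to $\ell_1, \ldots, \ell_{n-1}$ in $r$ variables (with products of size $n-r$). Those not containing $\ell_n$ descend in the quotient $\F[\vecz]/(\ell_n)$, a polynomial ring in $r-1$ variables, to products of size $n-r+1$ of the restricted forms $\bar{\ell}_1, \ldots, \bar{\ell}_{n-1}$, which remain in general position because any $r-1$ of the $\ell_i$'s together with $\ell_n$ form $r$ linearly independent forms; by the inductive hypothesis with parameters $(n-1, r-1)$, these span the quotient in degree $n-r+1$. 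Combining, the full set spans, and since the cardinality matches the ambient dimension, it is a basis. The one delicate piece is arranging the double-parameter induction so that both reductions (decreasing $n$ alone, and decreasing both $n$ and $r$) work simultaneously, but the quotient-and-split step handles this cleanly.
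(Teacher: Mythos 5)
Your proof is correct and uses the same Vandermonde construction $L(x_i) = \ell_i := z_1 + \alpha_i z_2 + \cdots + \alpha_i^{k} z_{k+1}$ as the paper, so the only real difference is how the two arguments establish linear independence of the $\binom{n}{k}$ products $\bigl\{\prod_{i \in [n]\setminus S}\ell_i\bigr\}$. The paper gives a direct, one-shot argument: for a putative relation $\sum_S \beta_S \prod_{i\in S}\ell_i = 0$ with $|S| = n-k$, it fixes $S'$ and reduces modulo the ideal $I = \bigl(\ell_j : j\notin S'\bigr)$; every term with $S \ne S'$ dies since some $\ell_j$, $j\notin S'$, divides it, and the surviving product $\prod_{i\in S'}\ell_i$ is nonzero mod $I$ because $\F[\vecz]/I$ is an integral domain (a polynomial ring in one variable) and no $\ell_i$, $i\in S'$, lies in the span of the $k$ generators, by general position. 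You instead prove the more general classical statement that if $\ell_1,\dots,\ell_n$ are linear forms in $r$ variables with any $r$ of them independent, then all $(n-r+1)$-fold products form a basis of the degree-$(n-r+1)$ homogeneous polynomials, via a double induction that quotients by $\ell_n$ and splits the products. Both arguments are sound; yours proves a reusable, strictly stronger ``basis'' statement at the cost of an induction whose bookkeeping (the $(n,r)\to(n-1,r)$ and $(n,r)\to(n-1,r-1)$ reductions, and preservation of general position in the quotient) must be tracked carefully, whereas the paper's ideal-theoretic reduction delivers exactly the needed independence with no recursion. One small presentational point: the paper only explicitly proves $\APP_{k,n_0}(P)\ge\binom{n}{k}$ and treats the matching upper bound as obvious; you include that step explicitly, which is harmless and arguably cleaner given the claim is stated as an equality.
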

\begin{proof}
    Fix distinct $\alpha_1, \ldots, \alpha_n \in \F$ and let $\vecz = \set{z_1, \ldots, z_{n_0}}$ be a fresh set of variables. Let $L$ map $x_i$ to $\ell_i(\vecz) := z_1 + \alpha_i\cdot z_2 + \alpha_i^2 \cdot z_3 + \cdots + \alpha_i^{n_0 - 1}z_{n_0}$ for all $i \in [n]$. Observe that any $n_0 = k+1$ many linear forms in $\set{\ell_1, \ldots, \ell_{n}}$ are linearly independent. Now,
    $$\partialf^k(P) = \set{\prod_{i \in S}x_i : S \subseteq [n], |S| = n-k}.$$
    We now argue that the polynomials of the set $\pi_{L}\paren{\partialf^k (P)} = \set{\pi_L(\prod_{i \in S}x_i): S \subseteq [n], |S| = n-k} = \set{\prod_{i \in S}\ell_i: S \subseteq [n], |S| = n-k}$ are linearly independent; this would prove the claim.

    Consider any $\F$-linear combination
    $$\sum_{|S| = n-k} \beta_S \cdot \prod_{i \in S}\ell_i = 0$$ and fix an arbitrary $S'$ of size $n-k$. Observe that for every $S \neq S'$, at least one of the linear forms $\set{\ell_j : j \notin S'}$ divides $\prod_{i \in S}\ell_i$. Thus, if $I$ is the ideal generated by $\set{\ell_j : j \notin S'}$, then
    $$\beta_{S'}\prod_{i \in S'}\ell_i = 0 \mod I.$$ Since $I$ is an ideal generated by  linear forms, $\F[\vecz]/I$ is a polynomial ring. This implies that $\beta_{S'} = 0$ or $\prod_{i \in S'}\ell_i = 0 \mod I$. The latter is not true: It implies that there exists an $i \in S'$ such that $\ell_i = 0 \mod I$, i.e., $\ell_i$ is an $\F$-linear combination of $\set{\ell_j : j \notin S'}$. However, as $|[n] \setminus S'| = k$, this contradicts the fact that every $k+1$ linear forms in $\set{\ell_1, \ldots, \ell_n}$ are linearly independent. Thus, $\beta_{S'} = 0$.  Repeating this argument for all $S' \subset [n]$ such that $|S'| = n-k$, we get that the elements of $\set{\prod_{i \in S}\ell_i: S \subseteq [n], |S| = n-k}$ are linearly independent.
\end{proof}

From the above two claims, and using $k = \floor{\frac{n}{2ed}}$, $n_0 = k+1$, we get,
$$s \geq \frac{\binom{n}{k}}{\binom{n_0 + dk - k}{n_0}} \geq  \frac{\paren{\frac{n}{k}}^k}{\paren{\frac{e(n_0 + kd-k))}{n_0}}^{n_0}} \geq \frac{\paren{\frac{n}{k}}^k}{\paren{\frac{e(kd + 1)}{k+1}}^{k+1}} \geq \frac{\paren{\frac{n}{k}}^k}{n^{O(1)}\paren{\frac{n/2}{n/2ed}}^{k}} \geq \frac{1}{n^{O(1)}}\paren{\frac{n}{ked}}^k \geq \frac{2^k}{n^{O(1)}} \geq 2^{\Omega\paren{\frac{n}{d}}}.$$

	\section{Proofs from Section \ref{sec:der-structural}}\label{app:der-structural}

\subsection{Proof of Claim~\ref{clm:ders-as-rs}}
\label{app:clm:ders-as-rs}
        \begin{flalign*}
		&&\partial_{\mu([k])} \paren{\prod \limits_{i\in [t]} Q_i}
		 & = \sum_{\substack{\kappa:[t]\to 2^{[k]}\text{~s.t.~}\\ \sqcup_{i \in [t]} \kappa_i = [k]}} \prod \limits_{i \in [t]} {\partial_{\mu(\kappa_i)} Q_i} \\
		  && &= \sum_{\substack{S \subseteq [t]}} \sum_{\substack{\kappa:[t]\to2^{[k]}\text{~s.t.~}\\ \sqcup_{i \in [t]} \kappa_i = [k] \\ \set{i \in [t]: |\kappa_i| \le \frac{k}{d}.d_i} = S}} \prod \limits_{i \in [t]} {\partial_{\mu(\kappa_i)} Q_i} \\
		  %&= \sum_{\substack{S \subseteq [t]}} ~\sum_{\substack{\kappa':S \to 2^{[k]}\text{ and }\widetilde{\kappa}:\overline{S} \to 2^{[k]}\text{~s.t.~}\\ \kappa = \kappa' \sqcup \widetilde{\kappa} \\ \sqcup_{i \in [t]} \kappa_i = [k] \\ \forall i\in {S},~|\kappa'_i| \le \frac{k}{d}.d_i\\ \forall i\in \overline{S},~|\widetilde{\kappa}_i| > \frac{k}{d}.d_i }} \prod \limits_{i \in [t]} {\partial_{\mu(\kappa_i)} Q_i} \nonumber\\
		 && & = \sum_{\substack{S \subseteq [t]}}~\sum_{\substack{\widetilde{\kappa}:\overline{S}\to 2^{[k]}\text{ s.t. }\\ \forall i \in \overline{S}, |\widetilde{\kappa}_i|>\frac{k}{d}.d_i}}~ \sum_{\substack{\kappa' : S \to 2^{[k]} \text{ s.t. }\\ \kappa = \kappa' \sqcup \widetilde{\kappa}\\ \sqcup_{i \in [t]} \kappa_i = [k]\\ \forall i \in S, |\kappa'_i| \le \frac{k}{d}.d_i}} \prod_{i \in [t]} \partial_{\mu(\kappa_i)} Q_i \\
		  && &= \sum_{\substack{S \subseteq [t]}}~\sum_{\substack{\widetilde{\kappa}:\overline{S}\to 2^{[k]}\text{ s.t. }\\ \forall i \in \overline{S}, |\widetilde{\kappa}_i|>\frac{k}{d}.d_i}}~ \sum_{\substack{\kappa : [t] \to 2^{[k]} \text{ s.t. }\\ \kappa \text{ extends } \widetilde{\kappa}\\ \sqcup_{i \in [t]} \kappa_i = [k]\\ \forall i \in S, |\kappa_i| \le \frac{k}{d}.d_i}} \prod_{i \in [t]} \partial_{\mu(\kappa_i)} Q_i \\
		 && &= \sum_{\substack{S \subseteq [t]}}~\sum_{\substack{\widetilde{\kappa}:\overline{S}\to 2^{[k]}\text{ s.t. }\\ \forall i \in \overline{S}, |\widetilde{\kappa}_i|>\frac{k}{d}.d_i}} R_{S,\widetilde{\kappa}}.  && \text{ \raggedleft{(by the definition of $R_{S,\widetilde{\kappa}}$ in \eqref{eqn:poly-defn})}} 
	\end{flalign*} \qed

\subsection{Proof of Claim~\ref{clm:cx}}
\label{app:clm:cx}
	\begin{flalign*}
	&& U_{S,\kappa} &= \paren{\partial_{\mu(P_S)} \prod \limits_{i \in S}Q_i} \cdot \prod_{i \in \overline{S}} \partial_{\mu(\kappa_i)} Q_i\\
        && &= \sum_{\substack{\kappa':S \to 2^{P_S}\text{~s.t.~}\\\sqcup_{i \in S} \kappa'_i = P_S}} \prod \limits_{i \in S} {\partial_{\mu(\kappa'_i)} Q_i}\cdot \prod_{i \in \overline{S}} \partial_{\mu(\kappa_i)} Q_i\\
	&& & = \sum \limits_{T \subseteq S} ~\sum_{\substack{\kappa':S \to 2^{P_S}\text{~s.t.~}\\ \sqcup_{i \in S} {\kappa'}_i = P_S \\ \set{i \in S : |{\kappa'_i}| \le \frac{k}{d}.d_i} = T}}\prod \limits_{i \in S} {\partial_{\mu({\kappa'}_i)} Q_i}\cdot \prod \limits_{i \in \overline{S}} \partial_{\mu(\kappa_i)} Q_i && \text{(reordering the summation based on $T$)}\\
		&& & = \sum \limits_{T \subseteq S} ~\sum_{\substack{\kappa'':S \setminus T \to 2^{P_S} \text{~and~} \kappa''':T \to 2^{P_S}\text{~s.t.}\\\kappa' = \kappa'' \sqcup \kappa''' \\ \sqcup_{i \in S} {\kappa'}_i = P_S \\  \forall i \in S \setminus T, |\kappa''_i| > \frac{k}{d} \cdot d_i \\ \forall i\in T, |\kappa'''_i| \le \frac{k}{d}\cdot d_i \\}} ~\prod_{i \in S} \partial_{\mu(\kappa'_i)} Q_i\cdot \prod_{i \in \overline{S}} \partial_{\mu(\kappa_i)} Q_i \\
		&& & = \sum \limits_{T \subseteq S}~ \sum_{\substack{\kappa'':S\setminus T \to 2^{P_S} \text{~s.t.}\\ \forall i \in S \setminus T, |\kappa''_i| > \frac{k}{d}\cdot d_i}}~ \sum_{\substack{\kappa''':T \to 2^{P_S} \text{~s.t.}\\\kappa^* = \kappa'' \sqcup \kappa''' \sqcup \widetilde{\kappa}\\ \forall i \in T, |\kappa^*_i| \le \frac{k}{d}\cdot d_i \\ \sqcup_{i \in T} \kappa'''_i \sqcup \sqcup_{i \in S \setminus T} \kappa''_i = P_S}} \prod_{i \in [t]} \partial_{\mu(\kappa^*_i)} Q_i \\
		\intertext{\raggedleft{(as $\kappa^*$ extends $\kappa'=\kappa'' \sqcup \kappa'''$ and $\kappa$ extends $\widetilde{\kappa}$ )}}\\
		&& & = \sum \limits_{T \subseteq S}~ \sum_{\substack{\kappa'':S\setminus T \to 2^{P_S} \text{~s.t.}\\ \forall i \in S \setminus T, |\kappa''_i| > \frac{k}{d}\cdot d_i}}~ \sum_{\substack{\kappa^*:[t] \to 2^{[k]} \text{~s.t.}\\\kappa^* = \kappa'' \sqcup \kappa''' \sqcup \widetilde{\kappa}\\ \forall i \in T, |\kappa^*_i| \le \frac{k}{d}\cdot d_i \\ \sqcup_{i \in [t]} \kappa^*_i = [k]}} \prod_{i \in [t]} \partial_{\mu(\kappa^*_i)} Q_i \\
		\intertext{\raggedleft{(because $\sqcup_{i \in [t]} \kappa^*_i = \paren{\sqcup_{i \in S\setminus T} \kappa''_i \sqcup \sqcup_{i \in T} \kappa'''_i} \sqcup \sqcup_{i \in \overline{S} } \widetilde{\kappa}_i = P_S \sqcup \sqcup_{i \in \overline{S}} \kappa_i = \sqcup_{i \in [t]} \kappa_i = [k]$ from \eqref{eqn:s-pi})}}\\
		&& & = \sum \limits_{T \subseteq S}~ \sum_{\substack{\kappa'':S\setminus T \to 2^{P_S} \text{~s.t.}\\ \forall i \in S \setminus T, |\kappa''_i| > \frac{k}{d}\cdot d_i}} R_{T,\kappa'' \sqcup \widetilde{\kappa}} 
		\intertext{\raggedleft{($R_{T,\kappa'' \sqcup \widetilde{\kappa}}$ is well-defined because $\kappa^*$ extends $\kappa'' \sqcup \widetilde{\kappa}$ and \eqref{eqn:s-pi})}}\\
		&& & = R_{S,\widetilde{\kappa}} + \sum_{\substack{T \subsetneq S \text{~and~} \kappa'':S\setminus T \to 2^{P_S} \text{~s.t.}\\ \forall i \in S \setminus T, |\kappa''_i| > \frac{k}{d}.d_i}} R_{T,\kappa'' \sqcup \widetilde{\kappa}}. \label{eqn:decompose3}&& \text{(separating out the case $T=S$)}
	\end{flalign*} \qed

\subsection{Proof of Lemma~\ref{lem:measure-upper-bd} (continued)}
\label{app:lem:measure-upper-bd}
    \begin{align*}
	    \APP_{k,n_0}(Q) & = \max \limits_{L : \vecx \to \spacespanned{\vecz}} \dim \spacespanned{\pi_L\paren{\partialf^k(Q_1\cdots Q_t)}}\\
	    & \le \max \limits_{L : \vecx \to \spacespanned{\vecz}} \dim \spacespanned{\pi_L \paren{\sum \limits_{\substack{S \subseteq [t]; ~k_0, \ell_0 \ge 0 \\ k_0 + \frac{k}{d-k}\cdot \ell_0 \le k - \entropy_k(d_1,\dots,d_t)}} \spacespanned{\vecx^{\ell_0}\cdot \partialf^{k_0}\paren{ \prod \limits_{i \in S} Q_i}}} }  \tag{from Lemma ~\ref{lem:der-structure}} \\
	    & \le \max \limits_{L : \vecx \to \spacespanned{\vecz}} \dim \spacespanned{ \sum \limits_{\substack{S \subseteq [t]; ~k_0, \ell_0 \ge 0 \\ k_0 + \frac{k}{d-k}\cdot \ell_0 \le k - \entropy_k(d_1,\dots,d_t)}} \spacespanned{\pi_L\paren{\vecx^{\ell_0}\cdot \partialf^{k_0}\paren{ \prod \limits_{i \in S} Q_i}}} } \tag{as $\pi_L$ distributes over addition} \\
	    & \le \max \limits_{L : \vecx \to \spacespanned{\vecz}} \sum \limits_{\substack{S \subseteq [t]; ~k_0, \ell_0 \ge 0 \\ k_0 + \frac{k}{d-k}\cdot \ell_0 \le k - \entropy_k(d_1,\dots,d_t)}} \dim  \spacespanned{\pi_L\paren{\vecx^{\ell_0}}\cdot\pi_L\paren{\partialf^{k_0}\paren{ \prod \limits_{i \in S} Q_i}}} \tag{using Proposition~\ref{prop:sub} (Item 4) and $\pi_L$ distributes over multiplication}\\
	    & \le \max \limits_{L : \vecx \to \spacespanned{\vecz}} \sum \limits_{\substack{S \subseteq [t]; ~k_0, \ell_0 \ge 0 \\ k_0 + \frac{k}{d-k}\cdot \ell_0 \le k - \entropy_k(d_1,\dots,d_t)}} \dim  \spacespanned{\pi_L\paren{\vecx^{\ell_0}}}\cdot \dim \spacespanned{\pi_L\paren{\partialf^{k_0}\paren{ \prod \limits_{i \in S} Q_i}}} \tag{from Proposition~\ref{prop:sub} (Item 5)}\\
	    & \le \max \limits_{L : \vecx \to \spacespanned{\vecz}} \sum \limits_{\substack{S \subseteq [t]; ~k_0, \ell_0 \ge 0 \\ k_0 + \frac{k}{d-k}\cdot \ell_0 \le k - \entropy_k(d_1,\dots,d_t)}} \abs{\pi_L\paren{\vecx^{\ell_0}}}\cdot \abs{\pi_L\paren{\partialf^{k_0}\paren{ \prod \limits_{i \in S} Q_i}}} \\
	    & \le \max \limits_{L : \vecx \to \spacespanned{\vecz}} \sum \limits_{\substack{S \subseteq [t]; ~k_0, \ell_0 \ge 0\\ k_0 + \frac{k}{d-k}\cdot \ell_0 \le k - \entropy_k(d_1,\dots,d_t)}} \abs{ \vecz^{\ell_0}}\cdot \abs{ \pi_L\paren{\partialf^{k_0}\paren{ \prod \limits_{i \in S} Q_i}}} \tag{as $L$ is a map from $\vecx$ to $\spacespanned{\vecz}$, $\pi_L(m) \in \vecz^{\ell_0}$ for any monomial $m$ over $\vecx$ of degree $\ell_0$} \\
	    & \le 2^t\!\cdot\!d^2\!\cdot\!\max \limits_{\substack{k_0, \ell_0 \ge 0 \\ k_0 + \frac{k}{d-k}\cdot \ell_0 \le k - {\entropy_k(d_1,\dots,d_t)}}} M(n_0, \ell_0)\!\cdot\!M(n,k_0).
	\end{align*} \qed	
	\section{Proofs from Section \ref{sec:low-depth-lb}}\label{app:low-depth-lb}
\subsection{Proof of Claim~\ref{clm:entropy-sub}}
\label{app:entropy-sub}
    	We prove the claim by analysing the following three sub-cases.\\
    	
	\noindent \textbf{Case (i):} $m_j < k_j$. Then, $\eta \ge \abs{k_j - \alpha\!\cdot\!m_j} = k_j - \alpha\!\cdot\! m_j \ge k_j - m_j \ge 1$. 
 
    Now, let $\alpha_1:= \sum\limits_{\nu=0}^{\delta}  \frac{(-1)^\nu}{\tau^{2^\nu-1}}$, $\alpha_2:= \frac{(-1)^{\delta+1}}{\tau^{2^{\delta+1}-1}}$ and $\alpha_3 := \sum\limits_{\nu=\delta+2}^{\Delta-1}  \frac{(-1)^\nu}{\tau^{2^\nu-1}}$. Then, let $\alpha_4:= \tau^{2^\delta-1}\!\cdot\! \paren{k_j - m_j\!\cdot\! \alpha_1}$. Noting that $\alpha = \alpha_1 + \alpha_2 + \alpha_3$ we have, 
	
	\begin{align*}
	    \eta = \tau^{2^\delta-1}\!\cdot\! \abs{k_j - \alpha\!\cdot\! m_j} & = \abs{\tau^{2^\delta-1}\!\cdot\! k_j - \tau^{2^{\delta}-1}\!\cdot\! m_j\!\cdot\! (\alpha_1+\alpha_2+\alpha_3)} \tag{as $\alpha = \alpha_1 + \alpha_2 + \alpha_3$ by definition}\\
	    & \ge \abs{\alpha_4 - \tau^{2^\delta-1}\!\cdot\! m_j\!\cdot\! \frac{(-1)^{\delta+1}}{\tau^{2^{\delta+1}-1}} - \tau^{2^{\delta}-1}\!\cdot\! m_j\!\cdot\! \alpha_3 } \\
	    & \ge \abs{\alpha_4 - \tau^{2^\delta-1}\!\cdot\! m_j\!\cdot\! \frac{(-1)^{\delta+1}}{\tau^{2^{\delta+1}-1}}} - \abs{\tau^{2^{\delta}-1}\!\cdot\! m_j\!\cdot\! \alpha_3}\\
	    & \ge \abs{\abs{\alpha_4} - \frac{m_j}{\tau^{2^\delta}}} - \abs{\tau^{2^{\delta}-1}\!\cdot\! m_j\!\cdot\! \alpha_3} \\
	    & = \abs{\abs{\alpha_4} - \frac{m_j}{\tau^{2^\delta}}} - \abs{\sum\limits_{\nu=\delta+2}^{\Delta-1}\!\cdot\! \frac{(-1)^\nu\!\cdot\! \tau^{2^\delta-1}\!\cdot\! m_j}{\tau^{2^\nu-1}}}\\
	    & \ge \abs{\abs{\alpha_4} - \frac{m_j}{\tau^{2^\delta}}} - \abs{\frac{\tau^{2^\delta-1}\!\cdot\! m_j}{\tau^{2^{\delta+2}-1}}} \tag{taking only the leading term of the summation}\\
	    & \ge \abs{\abs{\alpha_4} - \frac{m_j}{\tau^{2^\delta}}} - \abs{\frac{\tau^{2^\delta-1}\!\cdot\! \tau^{2^\delta}}{\tau^{2^{\delta+2}-1}}} \tag{since $m_j \le d_j \approx_2 \tau^{2^{\delta}}$}\\
	    & \ge \abs{\abs{\alpha_4} - \frac{m_j}{\tau^{2^\delta}}} - \frac{1}{\tau^2} \\
	    & = \abs{\abs{\alpha_4} - \frac{m_j}{\tau^{2^\delta}}} - o(1).
	\end{align*}
        Notice that, as $m_j \leq d_j \approx_2 \tau^{2^\delta}$, $\frac{m_j}{\tau^{2^\delta}} \leq 1$.
	
	\noindent \textbf{Case (ii):} $k_j \le m_j \le 6\!\cdot\! k_j$. Note that $$m_j = \frac{6\!\cdot\! m_j + m_j}{7} \le \frac{6\!\cdot\! m_j + 6\!\cdot\! k_j}{7} = \frac{6}{7}\!\cdot\! d_j \le \frac{6}{7}\!\cdot\! \tau^{2^\delta},\text{~and}$$ 
	$$m_j \ge \frac{m_j + k_j}{2} = \frac{d_j}{2} \ge \frac{1}{4}\!\cdot\! \tau^{2^\delta}.$$
	Thus $\frac{m_j}{\tau^{2^\delta}} \in \brac{\frac{1}{4}, \frac{6}{7}}$. On the other hand, $\alpha_4= \tau^{2^\delta-1}\!\cdot\! k_j - \tau^{2^\delta-1}\!\cdot\! m_j\!\cdot\! \alpha_1$ is an integer since the denominators of all the terms in $\alpha_1$ divide $\tau^{2^\delta-1}$. Therefore $\frac{m_j}{\tau^{2^\delta}}$ is at least $\min\set{1/4,3/4, 6/7,1/7}=1/7$ distance from any integer, and from $\abs{\alpha_4}$ in particular. That is, $\abs{\abs{\alpha_4} - \frac{m_j}{\tau^{2^\delta}}} \ge 1/7$ and $\eta \ge \abs{\abs{\alpha_4} - \frac{m_j}{\tau^{2^{\delta}}}} - o(1) \ge \Omega(1)$.\\
	
	\noindent \textbf{Case (iii):} $m_j > 6\!\cdot\! k_j$. Then,
	\begin{align*}
	    -k_j + m_j\!\cdot\! \alpha_1 
	    & = -k_j + m_j\!\cdot\! \paren{\sum\limits_{\nu=0}^\delta \frac{(-1)^\nu}{\tau^{2^\nu-1}}}\\
	    & = -k_j + m_j -  m_j  \!\cdot\! \paren{\sum\limits_{\nu=1}^\delta \frac{(-1)^{\nu-1}}{\tau^{2^\nu-1}}}\\
	    & \ge -k_j+m_j - \frac{m_j}{\tau}\\
	    & \ge \frac{m_j}{2} - k_j \tag{as $\tau = \omega(1)$}\\
	    & \ge \frac{m_j}{2} - \frac{m_j}{6}
	     = \frac{m_j}{3}
	     \ge \frac{2}{7}\cdot \paren{m_j + k_j} = \frac{2\cdot d_j}{7}
	     \ge \frac{\tau}{7}
	     \ge 2.
	\end{align*}
		Hence, $ \abs{\abs{\alpha_4} - \frac{m_j}{\tau^{2^\delta}}} = \abs{\tau^{2^\delta-1}\!\cdot\! \paren{-k_j + m_j\!\cdot\! \alpha_1} - \frac{m_j}{\tau^{2^\delta}}} \ge 2 - 1 =1$ and $\eta \ge \Omega(1)$.
\qed

\subsection{Proof of Lemma~\ref{lem:app-final-lb}}
\label{app:lem:app-final-lb}
    Using Lemma~\ref{lem:measure-upper-bd} (Item 2) and the fact that $\APP$ is sub-additive (Proposition~\ref{prop:sub-additive}), we get
    $$\APP_{k,n_0}(P) \le \sum \limits_{i=1}^s \APP_{k,n_0}(Q_{i,1}\cdots Q_{i,t_i}) \le s\cdot 2^t\cdot d^2 \cdot \max \limits_{\substack{k_0,\ell_0 \ge 0 \\ k_0 + \frac{k}{d-k}\cdot\ell_0~\le~k - \gamma}} M(n,k_0)\!\cdot\!M(n_0,\ell_0).$$
    On the other hand, we have $\APP_{k,n_0}(P) \ge 2^{-O(d)}\!\cdot\!M(n,k)$.
    Putting these two together, we get for some integers $k_0,\ell_0 \ge 0$ satisfying
    \begin{align*}
        k_0 + \frac{k}{d-k}\cdot \ell_0 \le k - \gamma,
        \label{eqn:d1-d2-upper}
    \end{align*}
    that,
    \begin{align*}
        s \ge   2^{-O(d)}\!\cdot\!2^{-t}\!\cdot\!d^{-2}\!\cdot\!\frac{M(n,k)}{M(n,k_0)\!\cdot\!M(n_0,\ell_0)}
         \ge 2^{-O(d)}\!\cdot\!\paren{\frac{n}{d}}^{\Omega(\gamma)}. \tag{Using Lemma \ref{lem:binomial-coeffs} (item 1), absorbing $6^{\ell_0}$ in $2^{-O(d)}$, and borrowing calculations beginning from \eqref{eqn:calcs}}\\    
    \end{align*}\qed

\subsection{Proof of Lemma~\ref{lem:hard-poly}}
\label{app:lem:hard-poly}
    We construct the word $\vecw$ as follows. Let $h' = \frac{h\cdot k}{d-k}\in \brac{\frac{h}{29},h}$. The word $\vecw$ shall consist of the following elements (the ordering of these elements shall be fixed shortly): $h,\dots,h$ ($k$ times), $-\floor{h'},\dots,-\floor{h'}$ ($k_1$ times), $-\ceil{h'},\dots,-\ceil{h'}$ ($k_2$ times), where $k_1:=(d-k)\ceil{h'}-kh$ and $k_2:=d-k-k_1$. We note that $k_1,k_2 \in \Z_{\ge 0}$ and $k+k_1+k_2=d$. Assuming $\floor{h'} = \ceil{h'}-1$ (even if $h'\in \Z$, the calculations are similar), the total sum of the weights is 
    \begin{align}
        \sum_{i\in[d]} w_i & = kh-k_1\floor{h'}-k_2\ceil{h'} = kh-k_1(\ceil{h'}-1)-k_2\ceil{h'} = kh - k_1\ceil{h'}+k_1 - k_2\ceil{h'} \nonumber \\
        & =kh-k_1\ceil{h'}+(d-k)\ceil{h'}-kh-k_2\ceil{h'}=0.
        \label{eqn:zero-sum}
    \end{align}
    
    Now we fix the ordering of the above weights. For $i=1$ to $d$ in this order, if the sum $\sum_{j \in [i-1]} w_j$ is non-negative (for example, this happens for $i=1$), set $w_i$ to be an arbitrary negative weight that is available, otherwise set it to be the positive weight $h$ (if available). 
    
    If the above procedure never runs out of positive or negative weights at any step $i\in[d]$, then for all $i\in [d]$, $\abs{w_1 + \dots + w_i} \le h$. In other words, $\vecw$ is $h$\unbiased. Now suppose the procedure runs out of negative weights at an index $i \in [d]$. This means that the sum $\sum_{j \in [i-1]} w_j$ is non-negative but there are no negative weights available among the unused weights. But then, the total sum of the weights would be equal to $\sum_{j \in [i-1]} w_j$ plus the sum of unused weights, which is greater than $0$, contradicting \eqref{eqn:zero-sum}. We get a similar contradiction if there are insufficient positive weights at any point.
    For the rest of the proof, we fix $\vecw$ to be the above word. Then, \begin{equation}
        k\!\cdot\! 2^h \le n \le d\!\cdot\! 2^h, \text{~so~} 2^h \approx_{30} \paren{\frac{n}{k}}.
        \label{eqn:k-n-d}
    \end{equation}
    
    Denoting the variables of $P_{\vecw}$ by $\vecx=\vecy \sqcup \vecz$, where $\vecy$ are the positive variables and $\vecz$ are the negative variables, we take $$n_0:=\abs{\vecz} \approx_2 (d-k)\!\cdot\! 2^{\ceil{h'}}.$$ Note that $n_0  \approx_2 (d-k)\!\cdot\! 2^{\ceil{h'}} \approx 2(d-k)\!\cdot\!2^{h'} = 2(d-k)\!\cdot\! 2^{\frac{hk}{d-k}} \le 2k\!\cdot\! 2^{h} = 2(n-n_0)$ where the last inequality follows from the fact that $\frac{d-k}{k}\cdot 2^{\frac{hk}{d-k}}$ is an increasing function of $k$ when $k\in \brac{\frac{d}{30},\frac{d}{2}}$ and $h > 100$. That is, $n_0 \le 2n/3$ and $n_0 \approx 2(d-k)\!\cdot\! \paren{\frac{n}{k}}^{\frac{k}{d-k}}$ by \eqref{eqn:k-n-d} and $k \leq \frac{d}{2}$. Also, $n_0 \geq \frac{d-k}{2}\cdot 2^{\ceil{h'}} \geq \frac{d-k}{2}\cdot 2^{h'} \geq \frac{d-k}{2}\cdot 2^{\frac{h}{29}} \geq \frac{d-k}{2}\cdot 2^3 \geq 2d$ as $h > 100$ and $k \leq \frac{d}{2}$. Define a map $L:\vecx \to \spacespanned{\vecz}$ as follows:
    
    \begin{equation*}
        L(x) = \begin{cases}
        0, \text{~if~} x \in \vecy,\\
        x, \text{~if~} x \in \vecz.
        \end{cases}
    \end{equation*}
    
    We can lower bound the $\APP$ measure by using $L$ and considering only the derivatives with respect to the set-multilinear monomials over all the positive sets, i.e.,  $\cM_+(\vecw)$. By the definition of the polynomial $P_{\vecw}$ and because $\sum_{i\in [d]} w_i =0$, for every $m_+ \in \cM_+$, there exists a unique $m_- \in \cM_-$ such that $m_+\cdot m_-$ is a monomial in $P_\vecw$ and vice versa.\footnote{Recall the definition of $P_{\vecw}$ from Section \ref{sec:prelim}. Because $|\cM_+| = |\cM_-|$, the bit representations of $m_+$ and $m_-$ are the same. However, they can have different degrees.} Hence the set of all derivatives of $P_\vecw$ with respect to monomials in $\cM_+$ is exactly $\cM_-$, yielding
    \begin{equation}\partialf^k \paren{P_{\vecw}} \supseteq \cM_-(\vecw). \label{eqn:some-ders} \end{equation}
    
    \noindent Using the fact that $\sum_{i \in [d]} w_i =0$ and \eqref{eqn:k-n-d}, the size of $\cM_-(\vecw)$ is 
    \begin{align}
        \abs{\cM_-(\vecw)} = 2^{\sum_{i\in [d]: w_i < 0} \abs{w_i}} = 2^{hk} \ge 2^{-O(k)}\!\cdot\! \paren{\frac{n}{k}}^k \ge 2^{-O(d)}\!\cdot\! M(n,k).
        \label{eqn:neg-monomials}
    \end{align}
    The last bound follows from Lemma~\ref{lem:binomial-coeffs} (Item 1), as $n \ge n_0 \ge d \ge k$. As the substitution $\pi_L$ does not affect negative variables, thus,
    \begin{align*}
        \APP_{k,n_0}(P_{\vecw}) & \ge \dim{\spacespanned{\pi_L\paren{\partialf^k \paren{ P_{\vecw}}}}}
         \ge \dim \spacespanned{\pi_L(\cM_-(\vecw))}  = \dim \spacespanned{\cM_-(\vecw)} \ge 2^{-O(d)}\!\cdot\! M(n,k).
    \end{align*}
    
    We now analyze the shifted partials of the same polynomial with $\ell := \floor{\frac{n\cdot d}{n_0}}$. Recall that $n_0 \leq 2n/3$.
    \begin{align*}
        \SP_{k,\ell}(P_{\vecw}) & \ge \dim \spacespanned{\vecx^\ell\cdot  \partialf^k \paren{P_{\vecw}}}\\
        & \ge \dim \spacespanned{\vecy^\ell\cdot \cM_-(\vecw) } \tag{as $\vecx \supseteq \vecy$ and \eqref{eqn:some-ders}}\\
        & \ge \abs{\vecy^\ell\cdot \cM_-(\vecw)}\\
        & = \abs{\vecy^\ell}\cdot \abs{\cM_-(\vecw)} \tag{since $\cM_-(\vecw) \subseteq \vecz^{d-k}$ and $\vecy \cap \vecz = \Phi$}\\
        & = M(n-n_0,\ell)\!\cdot\! 2^{-O(d)}\!\cdot\! M(n,k) \tag{using $\abs{\vecy} = \abs{\vecx} - \abs{\vecz}$ and \eqref{eqn:neg-monomials}}\\
        & \ge M(n, \ell)\!\cdot\! \paren{1-\frac{n_0}{n}}^\ell\!\cdot\! 2^{-O(d)}\!\cdot\! M(n,k) \tag{using Lemma~\ref{lem:binomial-coeffs} (Item 3)}\\
        & \ge M(n, \ell)\!\cdot\! \paren{1-\frac{n_0}{n}}^{\frac{n}{n_0}\cdot d}\!\cdot\!2^{-O(d)}\!\cdot\! M(n,k)\\
        & \ge 2^{-O(d)}\!\cdot\! M(n,k)\!\cdot\! M(n,\ell) \tag{since $(1-x)^{1/x} \ge 1/3\sqrt{3}$ for $x:=n_0/n \le 2/3$}.
    \end{align*}\qed

\subsection{Proof of Lemma \ref{lemma:NW_hard_poly}}\label{subapp:proof_NW_hard_poly}

We begin by obtaining bounds on the value of $\ell$.

\begin{claim}\label{clm: bounds on ell}
    $n_0 = o(qd)$, $d^2 = o(\ell)$ and $\ell = o(qd)$. 
\end{claim}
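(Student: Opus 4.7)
Throughout, note that by the choice of $q$, we have $\frac{n}{2d} - 1 \le q \le \frac{n}{d}$, so $qd = \Theta(n)$ and in particular $\log q = \log n - O(\log d) = \log n - O(\log\log n)$. The plan for all three bounds is to take logarithms of $n_0 = 2(d-k)(qd/k)^{k/(d-k)}$ and exploit the constraints $\frac{d}{30} \le k \le \frac{d}{2} - \frac{\sqrt{d}}{8}$ and $d \le \frac{1}{150}(\log n/\log\log n)^2$.

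For $n_0 = o(qd)$: Since $k \le d/2$, one has $k/(d-k) \le 1$; combining this with $k \ge d/30$ yields
$$\log \tfrac{n_0}{qd} \;\le\; \log(2(d-k)) + \tfrac{k}{d-k}\log\tfrac{qd}{k} - \log(qd) \;\le\; \log\tfrac{2d}{k} - \tfrac{d-2k}{d-k}\log q,$$
and the second term dominates by the estimate below. For $\ell = o(qd)$: it suffices to show $n_0 = \omega(d)$, which is immediate since $n_0 \ge 2(d-k)\cdot q^{k/(d-k)}$ with $k/(d-k) \ge 1/29$ and $q = \Theta(n/d) \to \infty$.

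The main obstacle, and the step that forces the constant $\frac{1}{150}$ in the hypothesis, is $d^2 = o(\ell)$, equivalent to showing $n_0 = o(q)$. Here I would compute directly:
$$\log\tfrac{n_0}{q} \;=\; \log(2(d-k)) - \tfrac{d-2k}{d-k}\log q + \tfrac{k}{d-k}\log\tfrac{d}{k}.$$
Using $k \le \frac{d}{2} - \frac{\sqrt{d}}{8}$ gives $d - 2k \ge \sqrt{d}/4$ and hence $\frac{d-2k}{d-k} \ge \frac{1}{4\sqrt{d}}$; using $k \ge d/30$ gives $\frac{k}{d-k}\log(d/k) \le \log 30$. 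Therefore
$$\log\tfrac{n_0}{q} \;\le\; \log(60d) - \tfrac{\log q}{4\sqrt{d}}.$$
Finally, the hypothesis $d \le \frac{1}{150}(\log n/\log\log n)^2$ gives $\sqrt{d} \le \log n/(\sqrt{150}\,\log\log n)$, so $\frac{\log q}{4\sqrt{d}} \ge \frac{\sqrt{150}}{4}\log\log n - O(1) \approx 3.06\log\log n$, while $\log(60d) \le 2\log\log n + O(1)$. The difference tends to $-\infty$, proving $n_0 = o(q)$ and hence $d^2 = o(\ell)$.

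Combining the three bounds then verifies the claim; the only nontrivial calibration is that the constant $\frac{1}{150}$ in the hypothesis was chosen precisely so that $\sqrt{150}/4 > 2$, giving the required strict dominance in the last estimate.
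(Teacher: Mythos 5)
Your proof is correct and takes essentially the same approach as the paper: both establish $n_0 = o(q)$ by taking logarithms, using $d - 2k \ge \sqrt{d}/4$ to get a $\frac{\log q}{\Theta(\sqrt{d})}$ decay, and comparing against $\log d$ under the degree hypothesis $d \le \frac{1}{150}(\log n/\log\log n)^2$. Your route to $\ell = o(qd)$ via the clean observation $n_0 = \omega(d)$ is a minor simplification of the paper's direct bound $\ell \le k \cdot (qd)^{28/29}$, but the substance is the same.
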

\begin{proof}
    \begin{flalign*}
        && n_0  &= 2(d-k)\paren{\frac{qd}{k}}^{\frac{k}{d-k}} \\
        && &\leq 2(d-k)\paren{\frac{qd}{k}}^{\frac{d/2 - \sqrt{d}/8}{d/2 + \sqrt{d}/8}}&& \paren{\text{because $k = o(qd)$ and $k\leq \frac{d}{2} - \frac{\sqrt{d}}{8}$}}\\
        && & = 2(d-k)\cdot\frac{qd}{k}\cdot \paren{\frac{k}{qd}}^{\frac{2}{4\sqrt{d}+1}}\\
        && &  \leq 2d\cdot qd \cdot \frac{1}{(qd)^{\frac{1}{2.5\sqrt{d}}}}\\
        && & \leq 2d\cdot qd \cdot \frac{1}{2^{\frac{\log qd}{2.5\sqrt{d}}}}.
    \end{flalign*}
    As $d \leq \frac{1}{150}\paren{\frac{\log n}{\log \log n}}^{2}$ and $qd \geq \frac{n}{4}$, $\frac{\log qd}{2.5\sqrt{d}} \geq \frac{12\log\log n}{3} = 4 \log \log n$. As $\log d^2 \leq 4 \log \log n - \omega(1)$, $\frac{\log qd}{2.5\sqrt{d}} = \log d^2 + \omega(1)$ and $2^{\frac{\log qd}{2.5\sqrt{d}}} = \omega(d^2)$. Thus, $n_0 = o(q) = o(qd)$.\footnote{In this proof, we need $n_0 = o(q)$. However, we require $n_0 = o(n)$, in Section \ref{subsec: calcs for nsm hard poly} and so we have mentioned $n_0 = o(qd)$ in the statement of the claim.} Now, 
    $\ell \geq \frac{qd^2}{n_0} - 1 \geq \frac{qd^2}{o(q)} - 1 = \omega(d^2)$. Thus, $d^2 = o(\ell)$. Also,
\begin{flalign*}
    && \ell &\leq \frac{qd^2}{n_0} \\
        && &= \frac{qd^2}{2(d-k)}\paren{\frac{k}{qd}}^{\frac{k}{d-k}} \\
        && &\leq \frac{qd^2}{2(d-k)}\paren{\frac{k}{qd}}^{\frac{1}{29}} && \paren{\text{because $k = o(qd)$ and $\frac{d}{30} \leq k$}} \\
        && &\leq k\cdot (qd)^\frac{28}{29} && \paren{\text{as $k \leq \frac{d}{2}$}}  \\
        && &= o(n) && \paren{\text{because $k \leq \log^2 n$ and $qd \leq n$}}.
\end{flalign*}
%$n_0 = o(qd)$ immediately follows from $d^2 = o(\ell)$.
\end{proof}
Let 
\[S = \set{\prod\limits_{i \in [k+1\ldots d]} x_{i, h(i)}: h \in \F_q[z], \deg(h) < k}\]
and 
\[T = \set{m : \exists \text{ monomials } m_1, \ m_2, \ \deg(m_1) = \ell, m_2 \in S \text{ and } m = m_1m_2}.\]
Observe that $T \subseteq \spacespanned{\vecx^{\ell}\partialf^{k}NW_{q,d,k}}$ and so, $\SP_{k, \ell}(NW_{q,d,k}) \geq |T|$. We obtain a lower bound on $|T|$. For $h\in \F_q[z]$ such that $\deg(h) < k$, let 
\[T_h = \set{m_1\prod\limits_{i \in [k+1\ldots d]} x_{i, h(i)} : \deg(m_1) = \ell}.\] 
Then, $T = \cup_{\substack{h(z) \in \F_q[z]:\\ \deg(h) < k}} T_h$. Thus, from the inclusion-exclusion principle,
\begin{align}\label{eqn:lb_size_T}
	|T| \geq \sum_{\substack{h \in \F_q[z]:\\ \deg(h) < k}}|T_h| - \sum_{\substack{h_1 \neq h_2 \in \F_q[z]:\\ \deg(h_1), \deg(h_2) < k}} \abs{T_{h_1}\cap T_{h_2}}.
\end{align}

\noindent\textbf{Lower bound on $\sum_{h} |T_h|$.} Fix an $h \in \F_q[z]$ such that $\deg(h) < k$. Then, since for monomials $m_1 \neq m_2$, $m_1\cdot \prod\limits_{i \in [k+1\ldots d]} x_{i, h(i)} \neq m_2 \cdot \prod\limits_{i \in [k+1\ldots d]} x_{i, h(i)}$, $|T_h| = {qd + \ell - 1 \choose qd - 1}$. Hence, 
\begin{align}\label{eqn:lb_qty_1_ie}
	\sum_{\substack{h \in \F_q[z]:\\ \deg(h) < k}}|T_h| = |S|^k\cdot {qd + \ell - 1 \choose qd - 1} = q^k\cdot {qd + \ell - 1 \choose qd - 1}.
\end{align}

\noindent\textbf{Upper bound on $\sum_{h_1 \neq h_2} |T_{h_1}\cap T_{h_2}|$.} For $h_1, h_2 \in \F[z]$ such that $\deg(h_1), \deg(h_2) < k$, we say that $|h_1 \cap h_2| = r$ if $|\set{h_1(k+1), \ldots, h_1(d)} \cap \set{h_2(k+1), \ldots, h_2(d)}| = r$. Now
\begin{align}\label{eqn:ub_size_T_1_T_2}
    \sum_{h_1 \neq h_2} |T_{h_1}\cap T_{h_2}| = \sum_{r = 0}^{k-1}\sum_{\substack{h_1\neq h_2: \\ |h_1 \cap h_2| = r}} |T_{h_1} \cap T_{h_2}|.
\end{align}
Fix $h_1$ and $h_2$ such that $|h_1 \cap h_2| = r$. Let $m_1 = \prod_{i \in [k+1..d]} x_{i, h_1(i)}$ and  $m_2 = \prod_{i \in [k+1..d]} x_{i, h_2(i)}$. A monomial $m \in T_{h_1} \cap T_{h_2}$ if and only if there exist degree $\ell$ monomials $m_1'$ and $m_2'$ such that $m = m_1'm_1 = m_2'm_2$.  Thus $\frac{m_2}{\gcd(m_1, m_2)}$ must divide $m_1'$. As $|h_1 \cap h_2| = r$, $\gcd(m_1, m_2)$ has degree $r$, and so $\frac{m_2}{\gcd(m_1, m_2)}$ has degree $d - k - r$. Hence the number of possible monomials $m_1'$, and thus the number of possible monomials $m$ is at most $\binom{qd+\ell-d+k+r-1}{qd-1}$. Now, the number of possible polynomials $h_1$ and $h_2$ such that $|h_1 \cap h_2| = r$ is at most $\binom{d-k}{r}q^{k-r} q^k = q^{2k-r}\binom{d-k}{r}$.\footnote{This is so because $|h_1 \cap h_2| = r$ implies that $h_1 - h_2 = (z-\alpha_1)\cdots(z-\alpha_r)\cdot g(z)$, where $\alpha_1, \ldots, \alpha_r$ are distinct elements in $[k+1..d]$ and $g(z)$ is a polynomial of degree at most $d-k$.} Hence,
\begin{align}\label{eqn:ub_T_1_T_2_r}
    \sum_{\substack{h_1\neq h_2: \\ |h_1 \cap h_2| = r}} |T_{h_1} \cap T_{h_2}| \leq q^{2k-r}\cdot \binom{d-k}{r}\binom{qd+\ell-d+k+r-1}{qd-1}.
\end{align}

\begin{claim}
    For $r \in [0..k-1]$, let $\chi(r) = q^{2k-r}\cdot \binom{d-k}{r}\binom{qd+\ell-d+k+r-1}{qd-1}$. Then $\chi(0) \geq \chi(r)$ for all $r \in [k-1]$.
\end{claim}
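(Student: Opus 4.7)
The plan is to show that $\chi$ is monotonically non-increasing in $r$ on $[0..k-1]$, from which $\chi(0) \geq \chi(r)$ follows immediately. To this end, I would compute the ratio $\chi(r+1)/\chi(r)$ exactly and then bound it using the size estimates already established for $\ell$ in Claim \ref{clm: bounds on ell} (namely $d^2 = o(\ell)$ and $\ell = o(qd)$) together with the trivial fact that $q$ is much larger than $d$ (since $q \approx n/(2d)$ and $d \leq (\log n/\log\log n)^2/150$).

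Using the standard identities $\binom{d-k}{r+1}/\binom{d-k}{r} = (d-k-r)/(r+1)$ and
$$\binom{qd+\ell-d+k+r}{qd-1}\bigg/\binom{qd+\ell-d+k+r-1}{qd-1} = \frac{qd+\ell-d+k+r}{\ell-d+k+r+1},$$
one obtains
$$\frac{\chi(r+1)}{\chi(r)} \;=\; \frac{1}{q}\cdot \frac{d-k-r}{r+1}\cdot \frac{qd+\ell-d+k+r}{\ell-d+k+r+1}.$$
This is the key identity and the rest is elementary estimation.

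Since $\ell = \omega(d^2)$ and $r \le k-1 \le d/2$, the denominator $\ell-d+k+r+1$ is at least, say, $\ell/2$ for $d$ large enough. Since $\ell = o(qd)$, the numerator $qd+\ell-d+k+r$ is at most $2qd$. Combined with the crude bounds $d-k-r \le d$ and $r+1 \ge 1$, this yields
$$\frac{\chi(r+1)}{\chi(r)} \;\le\; \frac{1}{q}\cdot d \cdot \frac{2qd}{\ell/2} \;=\; \frac{4d^2}{\ell} \;=\; o(1).$$
In particular, for our regime of parameters the ratio is strictly less than $1$ for every $r \in [0..k-2]$, so $\chi$ is strictly decreasing and $\chi(0) \ge \chi(r)$ for all $r \in [k-1]$.

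The argument is essentially a routine calculation; the only thing to watch out for is that the estimates $\ell = \omega(d^2)$ and $\ell = o(qd)$ provided by Claim \ref{clm: bounds on ell} are used simultaneously---the former to guarantee that the denominator of the third factor is dominated by $\ell$, and the latter to guarantee that the numerator is dominated by $qd$. No other subtlety arises; in particular there is no need to treat small values of $r$ separately since the ratio bound is uniform in $r \in [0..k-2]$.
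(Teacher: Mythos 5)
Your proof is correct and follows essentially the same route as the paper: compute the exact ratio $\chi(r+1)/\chi(r) = \frac{1}{q}\cdot\frac{d-k-r}{r+1}\cdot\frac{qd+\ell-d+k+r}{\ell-d+k+r+1}$ and then use the estimates $d^2 = o(\ell)$ and $\ell = o(qd)$ from Claim~\ref{clm: bounds on ell} to show the ratio is $O(d^2/\ell) = o(1)$. The only cosmetic difference is that you use explicit constant bounds ($2qd$, $\ell/2$) where the paper writes $(1\pm o(1))$ factors.
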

\begin{proof}
    We shall show that for all $r \in [0..k-2]$, $\frac{\chi(r+1)}{\chi(r)} < 1$; this will prove the claim. Fix any $r \in [0..k-2]$.
    \begin{align*}
       \frac{\chi(r+1)}{\chi(r)} &= \frac{q^{2k-r-1}\cdot\binom{d-k}{r+1}\binom{qd+\ell-d+k+r}{qd-1}}{q^{2k-r}\cdot\binom{d-k}{r}\binom{qd+\ell-d+k+r-1}{qd-1}} \\
        &= \frac{1}{q}\cdot\frac{\frac{(d-k)!}{(r+1)!(d-k-r-1)!}}{\frac{(d-k)!}{r!(d-k-r)!}}\cdot \frac{\frac{(qd+\ell-d+k+r)!}{(qd-1)!(\ell-d+k+r+1)!}}{\frac{(qd+\ell-d+k+r-1)!}{(qd-1)!(\ell-d+k+r)!}}\\
        &= \frac{1}{q}\cdot\frac{d-k-r}{r+1}\cdot\frac{qd+\ell-d+k+r}{\ell-d+k+r+1}\\
        &\leq \frac{d}{q}\cdot \frac{(1+o(1))qd}{(1-o(1))\ell} \tag{by Claim \ref{clm: bounds on ell}}\\
        &= (1 + o(1))\frac{d^2}{\ell} \\
        &= o(1), \tag{by Claim \ref{clm: bounds on ell}}
    \end{align*}
    where the second to last inequality follows from $k, r \geq 0$, $\ell, d, k, r = o(n)$ and $d, k, r = o(\ell)$ (Claim \ref{clm: bounds on ell}), and the last equality from the fact that $d^2 = o(\ell)$ (Claim \ref{clm: bounds on ell}).
\end{proof}

From Equations \eqref{eqn:ub_size_T_1_T_2}, \eqref{eqn:ub_T_1_T_2_r}, and the above claim, we get
\begin{align}\label{eqn:ub_qty_2_ie}
    \sum_{h_1\neq h_2}|T_{h_1} \cap T_{h_2}| \leq k\cdot q^{2k}\cdot \binom{qd+\ell-d+k-1}{qd-1}.
\end{align}
Thus from Equations \eqref{eqn:lb_size_T}, \eqref{eqn:lb_qty_1_ie}, and \eqref{eqn:ub_qty_2_ie},
\begin{align*}\label{eqn:lb_T}
    |T| &\geq q^k\cdot {qd + \ell - 1 \choose qd - 1} - k\cdot q^{2k}\cdot\binom{qd+\ell-d+k-1}{qd-1}\\
    &= q^k\cdot {qd + \ell - 1 \choose qd - 1} \paren{1 - \frac{k\cdot q^{2k}\cdot\binom{qd+\ell-d+k-1}{qd-1}}{q^k\cdot {qd + \ell - 1 \choose qd - 1}}}. \numberthis
\end{align*}
\begin{claim}
    $\frac{k\cdot q^{2k}\cdot\binom{qd+\ell-d+k-1}{qd-1}}{q^k\cdot {qd + \ell - 1 \choose qd - 1}}\leq \frac{1}{2}$.
\end{claim}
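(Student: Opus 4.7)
The plan is to cancel common factors, bound the ratio of binomial coefficients by a product of simple fractions, and then plug in the chosen value of $\ell$.

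First I would simplify the expression. Cancelling $q^k$ in the numerator and denominator, it suffices to show
\[
k \cdot q^{k} \cdot \frac{\binom{qd+\ell-d+k-1}{qd-1}}{\binom{qd+\ell-1}{qd-1}} \;\le\; \frac{1}{2}.
\]
Writing both binomial coefficients as $\binom{\cdot}{\ell - d + k}$ and $\binom{\cdot}{\ell}$ respectively, the ratio of binomials telescopes into a product of $d-k$ factors:
\[
\frac{\binom{qd+\ell-d+k-1}{qd-1}}{\binom{qd+\ell-1}{qd-1}} \;=\; \prod_{i=0}^{d-k-1} \frac{\ell - i}{qd+\ell-1-i}.
\]

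Next, I would bound each factor individually. The elementary inequality $\frac{\ell - i}{qd+\ell-1-i} \le \frac{\ell}{qd}$ holds for every $0 \le i \le d-k-1$: clearing denominators it reduces to $\ell^2 - \ell + i(qd - \ell) \ge 0$, which holds because $\ell \ge 1$ and $\ell \le qd$ by Claim~\ref{clm: bounds on ell}. Hence
\[
\frac{\binom{qd+\ell-d+k-1}{qd-1}}{\binom{qd+\ell-1}{qd-1}} \;\le\; \left(\frac{\ell}{qd}\right)^{d-k}.
\]

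Now I would substitute the choice of $\ell$. Since $\ell \le qd^2/n_0$ with $n_0 = 2(d-k)(qd/k)^{k/(d-k)}$,
\[
\left(\frac{\ell}{qd}\right)^{d-k} \;\le\; \left(\frac{d}{2(d-k)}\right)^{d-k} \cdot \left(\frac{k}{qd}\right)^{k}.
\]
Using $k \le d/2 - \sqrt{d}/8 \le d/2$, the factor $\left(\frac{d}{2(d-k)}\right)^{d-k} \le 1$. Multiplying by $k q^k$, the $q^k$ cancels against $1/q^k$ in $(k/qd)^k$, leaving
\[
k \cdot q^{k} \cdot \left(\frac{\ell}{qd}\right)^{d-k} \;\le\; k \cdot \left(\frac{k}{d}\right)^{k} \;\le\; \frac{k}{2^{k}}.
\]
Finally, since $k \ge d/30 \ge 4$ (using $d \ge 120$), we have $k/2^k \le 4/16 = 1/4 \le 1/2$, which completes the proof.

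The main ``obstacle'' here is really just arithmetic bookkeeping: verifying that $\ell \le qd$ (so that the per-factor bound is valid) and that $k \le d/2$ (so that $(k/d)^k \le 2^{-k}$). Both inputs are already furnished by Claim~\ref{clm: bounds on ell} and the hypotheses on $k$, so no new ideas are needed.
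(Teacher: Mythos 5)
Your proof is correct and follows essentially the same approach as the paper: cancel $q^k$, expand the binomial ratio into a telescoping product of $d-k$ fractions, bound each by $\ell/(qd)$, substitute the values of $\ell$ and $n_0$, and finish using $k \le d/2$ and $k \ge 4$. Your per-factor bound $\frac{\ell-i}{qd+\ell-1-i} \le \frac{\ell}{qd}$ is a slightly cleaner step than the paper's, which uses $\frac{\ell}{qd-1}$ and then absorbs the difference into a $(1+o(1))$ factor, but the argument is conceptually identical.
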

\begin{proof}
    \begin{flalign*}
         \frac{k\cdot q^{2k}\cdot\binom{qd+\ell-d+k-1}{qd-1}}{q^k\cdot {qd + \ell - 1 \choose qd - 1}} &= k \cdot q^k \cdot \frac{\frac{(qd+\ell-d+k-1)!}{(qd-1)!(\ell-d+k)!}}{\frac{(qd+\ell-1)!}{(qd-1)!\ell!}} \\
         &= k \cdot q^k \cdot \frac{\ell \cdot (\ell-1)\cdot (\ell-2)\cdots (\ell-d+k+1)}{(qd+\ell-1)\cdot(qd+\ell-2)\cdot(qd+\ell-3)\cdots(qd+\ell-d+k)} \\
         &= k \cdot q^k \cdot \frac{1}{\paren{\frac{qd-1}{\ell} + 1}\cdot\paren{\frac{qd-1}{\ell-1} + 1}\cdot\paren{\frac{qd-1}{\ell-2} + 1}\cdots {\paren{\frac{qd-1}{\ell-d+k+1} + 1}}}\\
         &\leq k \cdot q^k \cdot \frac{1}{\paren{\frac{qd-1}{\ell}}^{d-k}} \\
         & \leq k \cdot q^k \cdot \paren{\frac{\ell}{qd}}^{d-k}e^{\frac{2(d-k)}{qd}} \qquad\qquad\qquad \text{\raggedleft{(as $1-x \geq e^{-2x}$ for $x \in [0,1/2]$)}}\\ \\
         &= (1 + o(1))\cdot k \cdot q^k \cdot\paren{\frac{d}{n_0}}^{d-k} \qquad\qquad \text{\raggedleft{(as $d-k = o(qd)$ and $\ell \leq \frac{qd^2}{n_0}$)}}\\      
        &= (1 + o(1))\cdot k \cdot q^k \cdot \paren{\frac{d}{2(d-k)}}^{d-k}\cdot \paren{\frac{k}{qd}}^k \\
        &\leq (1 + o(1))\cdot k \cdot\paren{\frac{1}{2}}^k \qquad\qquad\qquad\qquad\qquad\qquad\quad \text{\raggedleft{(as $k \leq \frac{d}{2}$)}}\\
        &\leq \frac{1}{2},
    \end{flalign*}
    when $d \geq 120$.
\end{proof}

Thus, from Equation \eqref{eqn:lb_T} and $k = \Theta(d)$, we get
$$|T| \geq \frac{1}{2}\cdot q^k \cdot \binom{qd + \ell - 1}{qd - 1} \geq 2^{-O(d)} \cdot \paren{\frac{qd}{k}}^k\cdot\binom{qd + \ell - 1}{qd - 1} \geq 2^{-O(d)} \cdot \binom{qd + k - 1}{qd - 1}\cdot\binom{qd + \ell - 1}{qd - 1},$$
\noindent where the last inequality follows from Lemma \ref{lem:binomial-coeffs}. Recall that $\SP_{k,\ell}(NW_{q,d,k}) \geq |T|.$ Hence, $\SP_{k,\ell}(NW_{q,d,k}) \geq 2^{-O(d)}\cdot M(qd,k)\cdot M(qd, \ell)$.

\subsection{Details about the non-set-multilinear hard polynomial} \label{subsec: calcs for nsm hard poly}
Using an analysis similar to the one in the proof of Claim \ref{clm: bounds on ell}, it can be shown that $n_0 = o(n)$. Also, from the proof of Lemma \ref{lem:low-k-gamma}, $k \in \brac{\frac{d}{4}, \frac{d}{2}}$. Let us assume that $\Delta \geq 2$; the case of $\Delta = 1$ is simple and can be handled separately. Then, as $\frac{k}{d-k} \leq \alpha \leq 1 - \frac{1}{2\tau} \leq 1 - \frac{1}{2\sqrt{d}}$, $k \leq d - \frac{\sqrt{d}}{2} - k +\frac{k}{2\sqrt{d}}$, and hence $k \leq \frac{d}{2} - \frac{\sqrt{d}}{4} + \frac{k}{4\sqrt{d}} \leq \frac{d}{2} - \frac{\sqrt{d}}{4} + \frac{d}{8\sqrt{d}}  = \frac{d}{2} - \frac{\sqrt{d}}{8}$.

\begin{claim}
    $|\cM_y| \leq |\cM_z|$.
\end{claim}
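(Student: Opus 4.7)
The set $\cM_y$ consists of the monic monomials of degree $k$ in $n_1 = n - n_0$ variables and $\cM_z$ of those of degree $d-k$ in $n_0$ variables, so the claim is equivalent to $M(n_1,k) \le M(n_0,d-k)$, i.e.\
$\binom{n_1+k-1}{k} \le \binom{n_0+d-k-1}{d-k}$. The plan is to get sufficiently tight estimates on each side via Stirling's formula and then use the gap $k \le d/2 - \sqrt{d}/8$ that is available from the preceding paragraph to close the inequality for $d \ge 120$. The cruder bounds from Lemma~\ref{lem:binomial-coeffs} are \emph{not} tight enough: they lead to a ratio $\le 6^k/2^{d-k}$, which would require $k \le d/3.585$.

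First, I would upper bound $M(n_1,k) \le n^k/k!$ (using that each of the $k$ factors in the numerator is at most $n_1+k-1$, which is $\le n$ because $k \le d/2 \le n_0+1$, a consequence of $n_0 \ge 2(d-k) \ge d$). Similarly, $M(n_0,d-k) \ge n_0^{d-k}/(d-k)!$ by bounding each numerator factor below by $n_0$. Then, setting $A := 2(d-k)(n/k)^{k/(d-k)}$, the floor in $n_0 = \lfloor A \rfloor$ gives $n_0 \ge A-1 = A(1 - 1/A)$, so
\[
n_0^{d-k} \ge (2(d-k))^{d-k} (n/k)^k \cdot (1 - 1/A)^{d-k}.
\]
Substituting and cancelling $(n/k)^k$, the target ratio becomes
\[
\frac{M(n_1,k)}{M(n_0,d-k)} \;\le\; \frac{k^k\,(d-k)!}{(2(d-k))^{d-k}\,k!}\cdot (1-1/A)^{-(d-k)}.
\]

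Next, I would apply Stirling in the form $k! \ge \sqrt{2\pi k}\,(k/e)^k$ and $(d-k)! \le e\sqrt{d-k}\,((d-k)/e)^{d-k}$ to get
\[
\frac{k^k(d-k)!}{(2(d-k))^{d-k}k!} \;\le\; \frac{e^{2k-d+1}\sqrt{(d-k)/(2\pi k)}}{2^{d-k}}.
\]
Using $k \le d/2 - \sqrt{d}/8$ from the preceding paragraph, we obtain $e^{2k-d} \le e^{-\sqrt{d}/4}$ and $2^{d-k} \ge 2^{d/2}$, so (absorbing the $(1-1/A)^{-(d-k)}$ correction into a constant) the whole ratio is at most $O(\sqrt{d}\,e^{-\sqrt{d}/4})/2^{d/2}$, which is $\ll 1$ for $d \ge 120$.

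The one bookkeeping hurdle I expect is controlling the floor-induced factor $(1 - 1/A)^{d-k}$: it stays bounded away from $0$ only as long as $(d-k)/A$ is small, equivalently $(n/k)^{k/(d-k)}$ is at least a sufficiently large constant. This is where the parameter regime $d \le \tfrac{1}{150}(\log n / \log\log n)^2$ together with $k \ge d/30$ is used, since together they force $(n/k)^{k/(d-k)} \ge (n/k)^{1/30}$ to grow with $n$; in fact this same estimate has already been used earlier (e.g.\ in the proof of Claim~\ref{clm: bounds on ell} and in establishing $n_0 = o(n)$), so no new tools are needed.
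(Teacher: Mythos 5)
Your proof is correct and follows essentially the same route as the paper: write $|\cM_y|/|\cM_z|$ as a ratio of $M(n_1,k)$ and $M(n_0,d-k)$, substitute $n_0 \approx 2(d-k)(n/k)^{k/(d-k)}$ so that $n_0^{d-k}$ cancels the $(n/k)^k$, apply Stirling to the remaining factorial ratio, and observe the result is dominated by a $2^{-(d-k)}$-type factor for $d\ge 120$. The only cosmetic differences are that you upper- and lower-bound $|\cM_y|$ and $|\cM_z|$ separately rather than manipulating the ratio directly, and you track the floor correction explicitly via $(1-1/A)^{d-k}$ while the paper absorbs it into $(2-o(1))^{d-k}$; also note that the gap $k \le d/2 - \sqrt{d}/8$ is a convenience but not strictly needed — both your final bound and the paper's already give $< 1$ once $k \le d/2$ since the $2^{-(d-k)}$ (resp.\ $(1.8)^{d-k}$) term is overwhelmingly dominant.
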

\begin{proof}
    $|\cM_y| = \binom{n_1 + k - 1}{n_1 - 1}$ and $\cM_z = \binom{n_0 + d - k - 1}{n_0-1}$. Thus,
    \begin{align*}
        \frac{|\cM_z|}{|\cM_y|} &= \frac{(n_0+d-k-1)(n_0+d-k-2)\cdots n_0}{(n_1+k-1)(n_1+k-2)\cdots n_1}\cdot \frac{k!}{(d-k)!} \\
        &\geq \frac{k!}{(d-k)!}\cdot \frac{n_0^{d-k}}{n_1^k}\cdot \frac{1}{\paren{1 + \frac{k-1}{n_1}}^k} \\
        &\geq (1 - o(1))\cdot \frac{k!}{(d-k)!}\cdot \paren{(2-o(1))(d-k)}^{d-k}\paren{\frac{n}{k}}^k\cdot \frac{1}{n^k} \\
        \intertext{\raggedleft{(replacing $n_0$ by its value and as $n_0 = o(n), n_1 = \Theta(n) = \omega(k^2)$)}}
        &\geq (1.9)^{d-k} \frac{(1-o(1))\sqrt{2\pi k} \paren{\frac{k}{e}}^k}{(1+o(1))\sqrt{2\pi (d-k)} \paren{\frac{d-k}{e}}^{d-k}}\cdot \frac{(d-k)^{d-k}}{k^k} \\
        \intertext{\raggedleft{(using Sterling's approximation)}}
        &\geq (1.8)^{d-k} \cdot e^{d-2k}\cdot \sqrt{\frac{k}{d-k}} \\
        &\geq 1,
    \end{align*}
    for $d \geq 120$.
\end{proof}

\begin{claim}
    $M(n_1,k) \geq 2^{-O(k)}M(n,k)$.
\end{claim}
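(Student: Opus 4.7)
The plan is to reduce the claim to the previously established fact that $n_0 = o(n)$, via the binomial ratio estimate in Lemma~\ref{lem:binomial-coeffs}. First, since $n_1 = n - n_0 \le n$, applying Item~3 of Lemma~\ref{lem:binomial-coeffs} with $c = n_1$, $b = n$, and $d = k$ gives
\[
\frac{M(n_1,k)}{M(n,k)} \;\ge\; \paren{\frac{n_1}{n}}^{k} \;=\; \paren{1 - \frac{n_0}{n}}^{k}.
\]
So it suffices to lower bound $(1 - n_0/n)^k$ by $2^{-O(k)}$.

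For this, I would invoke the previously established bound $n_0 = o(n)$ (shown in Claim~\ref{clm: bounds on ell} in the proof of Lemma~\ref{lemma:NW_hard_poly}, and re-derived in the opening paragraph of Section~\ref{subsec: calcs for nsm hard poly} for the present setting). In particular, for all sufficiently large $n$ we have $n_0/n \le 1/2$, and hence
\[
\paren{1 - \frac{n_0}{n}}^{k} \;\ge\; \paren{\frac{1}{2}}^{k} \;=\; 2^{-k}.
\]
Combining the two displays gives $M(n_1,k) \ge 2^{-k}\, M(n,k) = 2^{-O(k)}\, M(n,k)$, which is the desired bound.

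The step that needs to be handled carefully is just ensuring the asymptotic $n_0 = o(n)$ holds with the parameters specified for $P_\sigma$ (in particular $\tau = \floor{d^{2^{1-\Delta}}}$, $k = \floor{\alpha d/(1+\alpha)}$, and $d \le \tfrac{1}{150}(\log n/\log\log n)^2$); but this is exactly the computation already carried out in Claim~\ref{clm: bounds on ell}, adapted by replacing $qd$ with $n$ in the proof (the estimate $k \le d/2 - \sqrt{d}/8$ used there was precisely derived from $\alpha \le 1 - 1/(2\sqrt{d})$ and hence applies identically in the present context). There is no genuine obstacle here; the claim is essentially a routine consequence of Lemma~\ref{lem:binomial-coeffs} once one knows that $n_0$ is a negligible fraction of $n$.
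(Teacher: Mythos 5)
Your proof is correct and rests on the same two pillars as the paper's: the binomial estimates of Lemma~\ref{lem:binomial-coeffs} and the fact that $n_0 = o(n)$. The technical route differs in a small but clean way. The paper applies Item~1 of Lemma~\ref{lem:binomial-coeffs} to each of $M(n_1,k)$ (lower bound $(n_1/k)^k$) and $M(n,k)$ (upper bound $(6n/k)^k$) separately, then rewrites $n_1/k = (n/k)(1 - n_0/n)$ and bounds $(1-n_0/n)^k \ge e^{-2kn_0/n} = 2^{-o(k)}$; the extraneous $6^k$ factor coming from the two-sided bound is what forces the $2^{-O(k)}$ rather than, say, $2^{-o(k)}$. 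You instead apply Item~3 directly to the ratio $M(n_1,k)/M(n,k) \ge (n_1/n)^k$, which cancels the constants by design, and then invoke the crude bound $(1 - n_0/n)^k \ge 2^{-k}$ once $n_0/n \le 1/2$. Your version trades the slightly sharper exponential estimate for eliminating the bookkeeping around the $6^k$, and both land on the same $2^{-O(k)}$ conclusion. Your remark about the prerequisite $n_0 = o(n)$ holding for the $P_\sigma$ parameters is also right: the paper establishes exactly this at the start of Section~\ref{subsec: calcs for nsm hard poly} by reusing the Claim~\ref{clm: bounds on ell} computation with $qd$ replaced by $n$, and the bound $k \le d/2 - \sqrt{d}/8$ (derived from $\alpha \le 1 - 1/(2\tau)$) transfers unchanged. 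No gaps.
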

\begin{proof}
As $n_1 = \Theta(n) \geq k$, from Lemma \ref{lem:binomial-coeffs}, we get 
    $$M(n_1,k) \geq \paren{\frac{n_1}{k}}^k = \paren{\frac{n}{k}}^k\paren{1-\frac{n_0}{n}}^k \geq \paren{\frac{n}{k}}^k\cdot e^{-\frac{2kn_0}{n}} \geq 2^{-O(k)}\cdot \paren{\frac{6n}{k}}^k \geq 2^{-O(k)}\cdot M(n,k),$$
    where the third to last inequality follows from $n_0 = o(n)$ and the last inequality follows from $n \geq k$ and Lemma \ref{lem:binomial-coeffs}.
\end{proof}

\subsection{Can we avoid the $2^{-O(k)}$ factor loss in our lower bounds?} 
\label{app:2d-discuss}
In this section, we will work with the shifted partials measure, but the conclusion also holds for $\APP$. For any homogeneous polynomial $P\in \F[x_1,\dots,x_n]$ of degree $d$ and any choice of parameters $k < d$ and $\ell$, note that $$\SP(P) \le \min \set{M(n,k)\!\cdot\! M(n,\ell),  M(n,\ell+d-k)}.$$ Hence, the best possible lower bound \textit{with our current estimates for the measure} (from Lemma \ref{lem:measure-upper-bd}), for fixed values of $n,d,k,\gamma$, is at most 
\begin{align}
\Lambda := \max_{\ell}\set{\frac{\min\set{M(n,k)\!\cdot\! M(n,\ell),  M(n,\ell+d-k)}}{\max\limits_{\substack{k_0,\ell_0\geq 0: \\ k_0 + \frac{k}{d-k}\cdot\ell_0 \le k - \gamma}}M(n,k_0)\!\cdot\! M(n,\ell+\ell_0)}}.
\label{eqn:best-lb-1}
\end{align}

\noindent We claim that the above quantity is at most $2^{-\Omega\paren{k}}\cdot n^{O(\gamma)}$ as long as $d \le {n}^{1/4}$.
We will assume that $k \le d/2$ as the other case simply reduces to this case. Also, we assume that $\gamma = o(k)$ as otherwise the multiplicative factor of $2^{-\Omega\paren{k}}$ is irrelevant. %As $\frac{M(n,k)\cdot M(n,\ell)}{M(n,k_0)\cdot M(n,\ell+\ell_0)}$ increases with $\ell$ and $\frac{M(n,\ell+d-k)}{M(n,k_0) \cdot M(n,\ell+\ell_0)}$ decreases with $\ell$ for any fixed $d,k_0,\ell_0$, the maximum value for 
The maximum of the R.H.S. in ~\eqref{eqn:best-lb-1} is attained for the choice of $\ell$ when $M(n,k) \!\cdot\! M(n,\ell)$ and $M(n,\ell+d-k)$ are the closest. Notice that $\frac{M(n,k)\cdot M(n, \ell)}{M(n, \ell + d - k)}$ increases along with $\ell$. Also, as $k \leq d/2$, there are values of $\ell$ for which the ratio is less than 1 and greater than 1. Thus we can fix $\ell$ (for a given $k$) such that $M(n,k) \!\cdot\! M(n,\ell) = M(n,\ell+d-k)$ (we use an exact equality here for brevity). %: such an $\ell$ exists because $k \le d/2$.
%Let us fix the value of $\ell$ for which the ratio on the R.H.S. of \eqref{eqn:best-lb-1} is maximised. 
Then,
\begin{align*}
\Lambda &\leq \frac{M(n,k)\!\cdot\! M(n,\ell)}{\max\limits_{\substack{k_0,\ell_0\geq 0: \\ k_0 + \frac{k}{d-k}\cdot\ell_0 \le k - \gamma}}M(n,k_0)\!\cdot\! M(n,\ell+\ell_0)} \\ &= \min\limits_{\substack{k_0,\ell_0\geq 0: \\ k_0 + \frac{k}{d-k}\cdot\ell_0 \le k - \gamma}}\frac{M(n,k)\!\cdot\! M(n,\ell)}{M(n,k_0)\!\cdot\! M(n,\ell+\ell_0)} \numberthis
\label{eqn:best-lb-2}
\end{align*}
We have two cases depending on how large the value of $\ell$ is. \\%The values of the parameters $k_0$ and $\ell_0$ in the following calculations are fixed at the end.

\noindent \textbf{Case 1:} $\ell \ge d^2$.
\begin{align*}
    \frac{M(n,k) \!\cdot\! M(n,\ell)}{M(n,k_0)\!\cdot\! M(n,\ell+\ell_0)}
    & \le \frac{M(n,k)}{M(n,k_0)\!\cdot\! \paren{1+\frac{n-1}{\ell+\ell_0}}^{\ell_0}} \\
    & \le \frac{4\!\cdot\!M(n,k)}{M(n,k_0)\!\cdot\! \paren{1+\frac{n-1}{\ell+1}}^{\ell_0}} 
    \tag{$\ell_0 \leq d$ and $\ell \geq d^2$ implies that $\paren{\frac{\ell + \ell_0}{\ell + 1}}^{\ell_0} \leq 4$}\\
    & \le \frac{4\!\cdot\! M(n,k)}{M(n,k_0)\!\cdot\!M(n,k)^{\ell_0/\paren{d-k}}} \\
    \intertext{\raggedleft{(using $\paren{1+\frac{n-1}{\ell+1}}^{d-k} \geq \frac{M(n,\ell+d-k)}{M(n,\ell)}$ and $M(n,k)\!\cdot\!M(n,\ell) = M(n,\ell+d-k)$}}
    & \le \frac{4e\sqrt{2\pi k_0}\!\cdot\!\paren{1+o(1)}^k\!\cdot\! {\paren{en/k}^{k-\frac{k}{d-k}\cdot \ell_0}}}{ \paren{en/k_0}^{k_0}}. \tag{using Sterling's approximation and $k_0 = o(n)$}\\
\end{align*}

\noindent \textbf{Case 2:} $\ell \le d^2 \le \sqrt{n}$. In this regime, ignoring some polynomial factors, $M(n,k) = \paren{\frac{en}{k}}^k$, $M(n,\ell)=\paren{\frac{en}{\ell}}^\ell$ (unless $\ell$ is 0, in which case we may take $\ell=1$ as this does not alter the quantities we are considering by more than linear factors) and $M(n,\ell+d-k)=\paren{\frac{en}{\ell+d-k}}^{\ell+d-k}$. Similarly, $M(n,k_0) =  \paren{\frac{en}{k_0}}^{k_0}$ and $M(n,\ell+\ell_0) = \paren{\frac{en}{\ell+\ell_0}}^{\ell+\ell_0}$. These approximations follow as we can upper bound $\paren{1+\frac{x}{y}}^x$, and upper and lower bound $\paren{1-\frac{x}{y}}^x$ by a constant when $x^2 = O(y)$. Note that the real valued function defined by $f(x):= x\!\cdot\! \ln (en/x)$ over $x \in [1,\ell+d-k]$ is concave. Hence applying Jensen's inequality we obtain that $f(\ell+\ell_0) - f(\ell) \ge \frac{\ell_0}{d-k} \cdot \paren{f(\ell+d-k) -f(\ell)}$ or equivalently, $\frac{\paren{\frac{en}{\ell+\ell_0}}^{\ell+\ell_0}}{\paren{\frac{en}{\ell}}^\ell} \ge \paren{\frac{\paren{\frac{en}{\ell+d-k}}^{\ell+d-k}}{\paren{\frac{en}{\ell}}^\ell}}^{\frac{\ell_0}{d-k}} = \poly(n,d)\cdot \paren{\frac{en}{k}}^{\frac{k}{d-k}\cdot \ell_0}$ as $M(n,k)M(n,\ell) = M(n,\ell+d-k)$ and $\ell_0 \leq d-k$. Therefore, 

\begin{align*}
    \frac{M(n,k)\!\cdot\!M(n,\ell)}{M(n,k_0)\!\cdot\!M(n,\ell+\ell_0)} & \le \poly(n,d) \cdot \frac{\paren{\frac{en}{k}}^k\!\cdot\!\paren{\frac{en}{\ell}}^\ell}{\paren{\frac{en}{k_0}}^{k_0}\!\cdot\!\paren{\frac{en}{\ell+\ell_0}}^{\ell+\ell_0}}\\
    & \le \poly(n,d) \cdot \frac{\paren{en/k}^{k-\frac{k}{d-k}\cdot \ell_0}}{\paren{en/k_0}^{k_0}}.
\end{align*}

\noindent Thus, in both cases, ignoring some polynomial factors and $\paren{1+o(1)}^k \le 2^{o(k)}$ because they are asymptotically smaller than the $2^{\Omega(k)}$ factor, we get that

\begin{align*}
    \Lambda &\leq \min\limits_{\substack{k_0,\ell_0\geq 0: \\ k_0 + \frac{k}{d-k}\cdot\ell_0 \le k - \gamma}} \frac{\paren{en/k}^{k-\frac{k}{d-k}\cdot \ell_0}}{\paren{en/k_0}^{k_0}} \\
    &= \min\limits_{\substack{k_0,\ell_0\geq 0: \\ k_0 + \frac{k}{d-k}\cdot\ell_0 \le k - \gamma}} \frac{\paren{en/k}^{k-k_0-\frac{k}{d-k}\cdot \ell_0}}{\paren{k/k_0}^{k_0}}, \numberthis 
    \label{eqn:lambda-final}
\end{align*}
Thus to show that $\Lambda \leq 2^{-\Omega\paren{k}}\!\cdot\! n^{O(\gamma)}$, we only need to show that there exist values of $k_0$ and $\ell_0$ for which the fraction in the R.H.S. of \eqref{eqn:lambda-final} is at most $2^{-\Omega\paren{k}}\!\cdot\! n^{O(\gamma)}$. To find such $k_0, \ell_0$, we let $k_0$ be a function of $\ell_0$ defined as $k_0 = \floor{k-\gamma- \frac{k}{d-k}\cdot \ell_0}$. As we keep adding 1 to $\ell_0$ starting from $\floor{\frac{d-k}{3}}$ to $\floor{\frac{d-k}{2}}$, note that $k_0$ decreases from nearly $\floor{2k/3-\gamma}$ to nearly $\floor{k/2 -\gamma}$, with the decrease being at most  1 at a time (as $k \le d-k$). Therefore, as $\gamma = o(k)$, there must be a choice of $\ell_0 \in [0..(d-k)]$ such that $k_0 \approx_2 {2k/3}$. Further, for this value of $\ell_0$ and $k_0$, we also have $k-\gamma-1 \leq k_0 + \frac{k}{d-k}\cdot \ell_0 \leq k - \gamma$. Thus,~\eqref{eqn:lambda-final} yields $\Lambda \le 2^{-\Omega\paren{k}}\!\cdot\! n^{O(\gamma)}$.

	\section{Proofs from Section \ref{sec:upt-lb}}\label{app:upt-lb}
\subsection{Proof of Lemma \ref{lem:deg-seq}} \label{subsec:proof-deg-seq}
By induction on $d$. If $d=1$, then $\degseq(\cT)=(d_1)=(1)$, so all the conditions are trivially met. For $d\ge 2$, consider the node $v$ of $\cT$ defined at line~\ref{line:v-node} of Algorithm~\ref{alg:1}. Suppose the left and right children of $v$ are $v_L$ and $v_R$ respectively. 
	We have $$\leaves(v)= \leaves(v_L) + \leaves(v_R) \le 2\cdot \leaves(v_R) \le \frac{2d}{3}.$$ Then by line~\ref{line:d-d1}, $$e_1 = d - d_1 = \leaves(v) \in \bigg (\frac{d}{3}, \frac{2d}{3}\bigg] = \bigg (\frac{e_0}{3}, \frac{2e_0}{3}\bigg].$$ Note that if $v$ itself was a leaf, then $\leaves(v) \le \frac{2d}{3}$ still holds as $d\ge 2$. From line~\ref{alg:1:recurse}, it is evident that $(d_2,\dots,d_t) = \degseq(\cT_v)$. As the number of leaves in $\cT_v$ is $\leaves(v) < d$, by induction, we have that for all $i \in [t-2]$, $$f_i \in \bigg(\frac{f_{i-1}}{3}, \frac{2 f_{i-1}}{3}\bigg ]$$ where $f_i:= \leaves(v)-\sum \limits_{j=2}^{i+1} d_j$. But, notice that $f_i = d-d_1-\sum \limits_{j=2}^{i+1} d_j = e_{i+1}$. Hence, we have, for $i \in [t-1]$, $$e_i=f_{i-1} \in \bigg(\frac{f_{i-2}}{3}, \frac{2f_{i-2}}{3} \bigg] = \bigg(\frac{e_{i-1}}{3}, \frac{2e_{i-1}}{3} \bigg].$$ 
	
Also, by induction, we trivially get $d_t=1$ and $e_t=d-d_1-\sum \limits_{j=2}^{t} d_j= \leaves(v) - \sum \limits_{j=2}^{t} d_j=0$. Using $e_i \in \bigg(\frac{e_{i-1}}{3},\frac{2e_{i-1}}{3} \bigg]$ and $e_0=d$, we get $e_i \in \bigg( \frac{d}{3^i}, \frac{2^i\cdot d}{3^i}\bigg]$, so $1=d_t=e_{t-1}-e_t = e_{t-1}$. Hence, $\frac{d}{3^{t-1}} \leq 1 \leq \frac{2^{t-1}\cdot d}{3^{t-1}}$ and $\log_3 d + 1 \leq t \leq \log_{3/2} d  + 1$. \qed

\subsection{Proof of Lemma \ref{lem:log-pdt}} \label{subsec:proof-log-pdt}
The proof is by induction on the size of the formula $C$. In the base case, $C$ computes a variable, say $x_1$. Then, we have $d=s=\size(C)=t=d_1=1$ and $f=x_1$, all consistent with the lemma statement.
	
Let $v$ be the node in the canonical parse tree $\cT:=\cT(C)$ at line~\ref{line:v-node} in Algorithm~\ref{alg:1} -- that is, it is the last node in the rightmost path of $\cT$ which has more than $\frac{d}{3}$ leaves in its subtree. Now, let $\cP$ be an arbitrary parse tree of $C$ and let $\phi$ denote an isomorphism from $\cP$ to $\cT$ -- we know such an isomorphism exists from Proposition \ref{clm:prelim}. Let $u$ be the node in $\cP$ such that $\phi(u) = v$. Let $g$ be the multiplication gate in $C$ that corresponds to $u$.
	
Recall that $C_g$ and $C_{g \leftarrow y}$, respectively, denote the sub-formula at $g$ and the formula obtained by replacing the gate $g$ with $y$, where $y$ could be a new variable or a constant from the field. Due to homogeneity, $\deg(C_g) = \leaves(u) = \leaves(v)$. Also, note that $C_g$ is a UPT formula as it is a sub-formula of $C$. However, the formula $C_{g\leftarrow 0}$ need not be homogeneous, as we require all the leaves of homogeneous formulas to be labelled by variables. Nevertheless, one can easily eliminate the zero gates by simplifying the formula using the rules $g' \times 0 = 0$ and $g' + 0 = g'$; this ultimately results in a homogeneous formula. In fact, we argue below that it would be a UPT formula. %\footnote{Note that 0 can be considered as a UPT formula with the empty tree being its canonical parse tree}.
	
Let $h$ be the first addition gate on the path from $g$ to the root of $C$. If no such gate exists, then $C_{g\leftarrow 0} = 0$ and thus can be considered to be a UPT formula with its canonical parse tree being the empty tree. Otherwise, let $h'$ be the child of $h$ such that $g$ is in the subformula rooted at $h'$. Then, $C_{g\leftarrow 0} = 0$ is equivalent to the formula obtained by removing the edge from $h'$ to $h$ in $C$ and the entire sub-formula at $h'$. Now, a parse tree is constructed by picking only one child of every addition gate. Since $C_{g\leftarrow 0} = 0$ is equivalent to the formula obtained by removing a child of an addition gate from $C$, every parse tree of $C_{g\leftarrow 0}$ is also a parse tree of $C$. Thus, $C_{g \gets 0}$ is a UPT formula. We will make use of the fact that $C_g$ and $C_{g\leftarrow 0}$ are UPT formulas later.
	
For a new variable $y$, the formula $C_{ g \leftarrow y}$ uses exactly one copy of $y$, so $C_{ g \leftarrow y}$ computes a polynomial in $\F[x_1,\dots,x_n,y]$ of the form $A_g\cdot y + B_g$, where $A_g$ and $B_g$ are polynomials in $\F[x_1,\dots,x_n]$. Substituting the variable $y$ to 0 in $C_{g\leftarrow y} = A_g\cdot y + B_g$, we see that $B_g$ is computed by the formula $C_{g\leftarrow 0}$. Plugging in $y$ to be the polynomial computed by $C_g$ results in the original formula $C$. Therefore,
\begin{equation}
	C = C_{g \leftarrow y} | _{ y = C_g} = A_g\cdot C_g + B_g = A_g\cdot C_g + C_{g \leftarrow 0}.
	\label{eqn:1}
\end{equation}
It is easy that $A_g$ is a homogeneous polynomial of degree $d-\leaves(v)$.
%Since $C$, $C_g$ and $C_{g\leftarrow 0}$ compute homogeneous polynomials of degrees $d$, $\leaves(v)$, and $d$ respectively, we can ignore the homogeneous components of $A_g$ that are not of degree $d-\leaves(v)$. Hence, we can safely assume that $A_g$ is a homogeneous polynomial of degree $d-\leaves(v)$.
	
As $C_g$ and $C_{g\leftarrow 0}$ are smaller formulas compared to $C$, by induction, we have the expressions:
	$$C_g = \sum \limits_{i=1}^{s_1} Q^{(1)}_{i,1}\dots Q^{(1)}_{i,t_1},$$ and 
	$$C_{g\leftarrow 0} = \sum \limits_{i=1}^{s_2} Q^{(2)}_{i,1}\dots Q^{(2)}_{i,t_2},$$
for some homogeneous polynomials $\set{Q^{(1)}_{i,j}}_{i,j}$ and $\set{Q^{(2)}_{i,j}}_{i,j}$, $t_1\ge 1,t_2 \ge 1$, $s_1 \in [0..\size(C_g)]$, and $s_2 \in [0..\size(C_{g \leftarrow 0})]$. Plugging these expressions in \eqref{eqn:1}, we get 
\begin{equation}
    C = A_g\cdot C_g + C_{g\leftarrow 0} =  A_g\cdot \sum \limits_{i=1}^{s_1} Q^{(1)}_{i,1}\dots Q^{(1)}_{i,t_1} + \sum \limits_{i=1}^{s_2} Q^{(2)}_{i,1}\dots Q^{(2)}_{i,t_2}
		\label{eqn:2}
\end{equation}
Since the canonical parse tree of $C_g$ is
\begin{align}
    \cT(C_g) & = \canon(\cP_u) \tag{as $\cP$ is a parse tree of $C$, $\cP_{u}$ is a parse tree of $C_g$}\\
    & = \canon(\cP)_{\phi(u)} \tag{from~Proposition~\ref{clm:prelim}}\\
    & = \canon(\cP)_v \tag{as $\phi(u) = v$}\\
    & = \cT(C)_{v} \tag{as $\cT(C) = \canon(\cP)$}
\end{align}
and that of $C_{g \leftarrow 0}$ is $\cT(C)$, we have $\degseq({\cT(C)}_v) = (d_1^{(1)},\dots,d_{t_1}^{(1)})$ and $\degseq(\cT(C)) = (d_1^{(2)},\dots,d_{t_2}^{(2)})$, where $d^{(\kappa)}_{j}:= \deg\paren{Q^{(\kappa)}_{i,j}}$ for any $\kappa\in[2], i \in [s_\kappa], j\in [t_\kappa]$. From line~\ref{alg:1:recurse} of the function $\degseq$ for input $\cT(C)$, we have $\degseq(\cT(C)) = (d - \leaves(v), \degseq(\cT(C)_v))$; comparing these two sequences element-wise, we get, $d^{(2)}_1 = d-\leaves(v)$, and $d^{(2)}_{j+1}=d^{(1)}_j$ for $j$ in $[t_1]$, and $t_2=1+t_1$. From \eqref{eqn:2}, we have
\begin{equation*}
    C =  \sum \limits_{i=1}^{s} Q_{i,1}\dots Q_{i,{t}}~,
\end{equation*}
where $s:=s_1+s_2$, $t:=t_2$, and 
$$Q_{i,j}:=\begin{cases}
    A_g & \text{ for }i \in [s_1] \text{ and } j=1\\
    Q^{(1)}_{i,j-1} & \text{ for }i \in [s_1] \text{ and } j \in [2, t_2]\\
    Q^{(2)}_{i-s_1,j} & \text{ for }i \in [s_1+1, s_1+s_2] \text{ and } j \in [t_2]
\end{cases}$$
All that remains to be checked now is that the degrees of $Q_{i,j}$ for different $j$'s matches the degree sequence $\degseq(\cT(C))=(d^{(2)}_1,\dots,d^{(2)}_{t_2})$. 

For $i > s_1$, clearly $\deg(Q_{i,j})= \deg(Q^{(2)}_{i-s_1,j})= d^{(2)}_j$. For $i \le s_1$, we have $\deg(Q_{i,1}) = \deg(A_g) = d - \leaves(v) =  d^{(2)}_1$, and for $j\in [2, t_2]$, $\deg(Q_{i,j})=\deg(Q^{(1)}_{i,j-1})=d^{(1)}_{j-1}=d^{(2)}_{j}$. As desired, $s=s_1+s_2\le \size(C_g) + \size(C_{g\leftarrow0}) \le \size(C)$. \qed

\subsection{Proof of Claim \ref{claim:J}} \label{subsec:proof-J}
First, the function $\funJ:[3m] \to [t-2]$ is well-defined: Consider any $i \in [3m]$. Then the set $\set{j\in[0..t]:e_j \le 3^i}$ is not empty; it contains $t-2$, as $e_{t-2} < 3\cdot e_{t-1} = 3 \le 3^i$.\footnote{$e_{t-1} = 1$ is shown in the proof of Lemma \ref{lem:deg-seq}.} And, $0$ is not contained in this set as $$e_0 = d > 3^{3\floor{\frac{\log_3 d - 1}{3}}} = 3^{3m} \ge 3^i.$$ 
		
Let $j:=\funJ(i)\in[t-2]$. We show that $3^{i-1} < e_j$ by contradiction. Suppose that $e_j \le 3^{i-1}$. From the minimality condition in the definition of $\funJ$, we have $e_{j-1} > 3^i$. Hence, $e_{j} \le 3^{i-1} < \frac{e_{j-1}}{3}$, which contradicts Equation \eqref{eqn:e}.
It follows that $\funJ$ is injective because an integer cannot lie between two different pairs of consecutive powers of 3. \qed

\subsection{Proof of Claim \ref{fact:1}} \label{subsec:proof-1}
As $b_0$ and $b_1$ are close to each other, so are their integer parts. Indeed, $|\floor{b_1} - \floor{b_0}|$ is 0 or 1. We will only analyse the former case as the latter case can be easily reduced to the former. Hence we have, $|\fract{b_1} - \fract{b_0}| = |\paren{b_1-\floor{b_1}} - \paren{b_0-\floor{b_0}}| = |b_1 - b_0| \in \brac{\frac{1}{9}, \frac{8}{9}}$. As $\fract{b_0}$ and $\fract{b_1}$ are in the range $\brac{0,1}$, if neither of them lies in $\brac{\frac{1}{18}, \frac{17}{18}}$, the difference between them cannot be in the range $\brac{\frac{1}{9}, \frac{8}{9}}$. Therefore, at least one of $\fract{b_0}$ or $\fract{b_1}$ lies in $\brac{\frac{1}{18}, \frac{17}{18}}$. \qed

\section{Large-degree set-multilinear lower bound (using~\cite{lst1})}\label{sec:sml-lb}

\noindent One of the primary motivations for our alternate approach of the lower bounds of~\cite{lst1} is based on a hope that we can achieve exponential constant-depth homogeneous formula lower bounds (rather than simply superpolynomial) and resolve Open Problem 1.2 from Section~\ref{sec:intro}. A necessary condition for us to be able to achieve this is to at least get such lower bounds for {\em set-multilinear} constant-depth formulas. Although exponential set-multilinear (or even multilinear) constant-depth lower bounds already follow from~\cite{RazY09, ChillaraL019, ChillaraEL018}, these techniques do not give non-FPT lower bounds, unlike~\cite{lst1}.

In this section, we show how to obtain an exponential non-FPT lower bound against constant-depth set-multilinear formulas by making a small adjustment to the lower bound proof in~\cite{lst1}. However, we note that this change does not allow us to retain the iterated matrix multiplication as the hard polynomial, instead we are only able get the lower bound for some polynomial in $\VNP$. For the sake of simplicity, we shall present the idea only for depth-5 set-multilinear formulas; the parameters that work for larger depths are similar to the parameters we use in Section~\ref{sec:low-depth-lb}. We will assume that the reader is familiar with the original paper~\cite{lst1}, where the authors  handle degree at most $O(\log^2 n)$ (Lemma 13 in~\cite{lst1}), whereas the following claim handles the full range of degrees up to $n$.

\begin{claim}
    For any integers $d \le n$, there is a set-multilinear polynomial (family) $P \in \VNP$ of degree $d$ over at most $N=nd$ variables such that any set-multilinear formula of product-depth 2 computing it has size $n^{\Omega(\sqrt{d})}$. In particular, this gives a $N^{\Omega(N^{1/4})}$ lower bound.
\end{claim}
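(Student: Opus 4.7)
The plan is to invoke the relative-rank framework of~\cite{lst1} directly for set-multilinear formulas, bypassing the hardness-escalation step (which is the sole source of the degree restriction in their main theorem) and also giving up on exhibiting our hard polynomial as a projection of $IMM$ (which is impossible once $d \gg \log n$ since $IMM_{n,d}$ with polynomially many variables simply does not have enough ``room''). Instead I would take the hard polynomial to be a word polynomial $P_\vecw$ and show directly that $P_\vecw \in \VNP$. Observe that the shifted-partials route developed in this paper cannot be used for this regime: as noted in Remark~\ref{rem:remove}, the $2^{-O(d)}$ multiplicative loss in Lemma~\ref{lem:sp-final-lb} dominates whenever $\sqrt d \log n = O(d)$, i.e.\ once $d \gg (\log n)^2$, while the relative-rank measure of~\cite{lst1} does not carry such a loss.

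For the construction, set $h := \floor{\log n}$ and pick $\vecw \in \{-h,+h\}^d$ to be an $h$\unbiased~word with $\sum_{i=1}^d w_i = 0$; a greedy sign assignment (as in the proof of Lemma~\ref{lem:hard-poly}, but with only two weight magnitudes) works. Then $P_\vecw$ is set-multilinear with respect to $d$ variable sets each of size $2^h \le n$, for a total of $N = d \cdot 2^h \le nd$ variables. Membership in $\VNP$ follows from Valiant's criterion: the coefficient of a set-multilinear monomial $m_+ \cdot m_-$ is $1$ iff $\sigma(m_+) \sim \sigma(m_-)$, a predicate decidable in time polynomial in the encoding length $O(d \log n)$ by prefix comparison of two bit strings.

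For the lower bound, I would first apply the set-multilinear analogue of the decomposition lemma (Lemma~\ref{lem:low-depth-structural} with $\Delta = 2$) to write any candidate set-multilinear formula of size $s$ and product-depth $2$ computing $P_\vecw$ as $\sum_{i \leq s} Q_{i,1} \cdots Q_{i,t_i}$, where each summand has a balanced factorization pattern (most factors of degree $\approx \sqrt d$). For each such summand, the relative-rank upper bound of~\cite{lst1} gives
\[
\relrk\!\left(Q_{i,1} \cdots Q_{i,t_i}\right) \;\leq\; 2^{-\frac{1}{2} \sum_{j=1}^{t_i}\, \mathrm{imbalance}_j},
\]
and a purely combinatorial calculation over the degree sequence $(\deg Q_{i,1},\ldots,\deg Q_{i,t_i})$, mirroring the $\Delta = 2$ case of Lemma~\ref{lem:low-k-gamma}, yields total imbalance $\Omega(\sqrt d \cdot h)$ for each $i$ (using that $h$ is the positive weight of $\vecw$ and $h/2$ plays the role of the projection dimension). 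On the other hand, a direct computation gives $\relrk(P_\vecw) \geq 2^{-O(h)}$ for our word, since the relevant partial-derivative matrix is a scaled permutation matrix on $\cM_+(\vecw) \times \cM_-(\vecw)$ owing to $\sum_i w_i = 0$. Combining, $s \geq 2^{\Omega(\sqrt d \cdot h) - O(h)} = n^{\Omega(\sqrt d)}$, and specializing $d = n$ produces $N = n^2$ and a bound of $n^{\Omega(\sqrt n)} = N^{\Omega(N^{1/4})}$.

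The main obstacle will be verifying that the imbalance calculation of~\cite{lst1}'s Section~4 does not implicitly rely on $d \ll n$; in particular, I need the imbalance lower bound $\Omega(\sqrt d \cdot h)$ for depth-2 decompositions to remain valid when $d$ is comparable to $n$. I expect this to be a routine, if careful, rechecking of the LST argument, since the imbalance bound depends only on the degree sequence of the summand (combinatorial) and the relative-rank computation for $P_\vecw$ depends only on the word $\vecw$ (number-theoretic/arithmetic); neither mentions $n$ except through $h = \log n$. A secondary minor point is ensuring that the decomposition of a set-multilinear product-depth-$2$ formula into a sum of products inherits the set-multilinear structure factor-by-factor, which is standard and does not change with $d$.
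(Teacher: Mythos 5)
Your high-level plan — bypass hardness escalation, take a word polynomial directly, verify membership in $\VNP$ via Valiant's criterion, and run the relative-rank argument for product-depth 2 — is exactly the paper's strategy, and your $\VNP$ argument (polynomial-time computable coefficients by prefix comparison) is sound. But the specific word you choose is fatal to the argument.

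The choice $\vecw \in \{-h,+h\}^d$ makes all $d$ variable sets have the \emph{same} size $2^h$, and that destroys the imbalance that powers the relative-rank lower bound. Recall that for a set-multilinear factor $Q_j$ over a subset $S_j\subseteq[d]$, the quantity that matters is $|w_{S_j}| = h\cdot\bigl| |S_j\cap A| - |S_j\setminus A|\bigr|$, where $A$ is the set of positive indices. Since $\sum_i w_i = 0$ forces $|A| = d/2$, a depth-2 set-multilinear formula is free to group the variable sets so that every factor $S_j$ has equal numbers of positive and negative indices (e.g.\ using only even-degree factors, which the decomposition permits — it only constrains degrees to lie around $\sqrt{d}$, not to have a particular parity). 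Then $w_{S_j}=0$ for all $j$, the total imbalance is $0$ (or $O(h)$ after accounting for a single unbalanced remainder factor), and the relative-rank upper bound for the summand is essentially $1$, yielding no lower bound whatsoever. Your parenthetical remark that ``$h/2$ plays the role of the projection dimension'' does not repair this: with your $\vecw$, both the positive and negative set sizes are $2^h$, and there is no $h/2$-sized anything in the construction.

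The paper's fix is precisely to use \emph{two distinct magnitudes}: positive weight $h$ and negative weight $h_0 = h - c\cdot h/\sqrt{d}$ for a constant $c\in[0.9,1]$ chosen so that $n_0 = 2^{h_0}$ is an integer. The small gap $h - h_0 \approx h/\sqrt{d}$ is tuned so that for a factor of degree $m\approx\sqrt{d}$, the equation $ah = bh_0$ (with $a+b=m$ integers) would require $b-a\approx b/\sqrt{d}\approx 1/2$, which is never an integer; hence $|ah - bh_0| = \Omega(h)$ for \emph{every} split, and summing over $\Omega(\sqrt d)$ factors gives total imbalance $\Omega(\sqrt d\cdot h)$ regardless of how the formula partitions $[d]$. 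The price for this is that $n_0$ is generically not a power of $2$, so the hard polynomial cannot be a projection of $IMM$ — which is exactly why the paper abandons $P_\vecw$ in favor of a rank-pairing polynomial $P := \sum_{\rank(m_+)=\rank(m_-)} m_+m_-$ and verifies directly that $P\in\VNP$ and $\relrk(P)\ge n^{-O(1)}$. When $d\le O(\log^2 n)$ one has $h/\sqrt d \ge 1$ and can use integer $h_0$ and powers-of-$2$ set sizes (this is the LST regime), but for larger $d$ the non-power-of-$2$ relaxation is essential; this is the content of the paper's adjustment.
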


\begin{proof}
    The main idea to handle a larger degree is that the set sizes need not be powers of 2 if we are content with proving the lower bound for some polynomial in $\VNP$, and not necessarily for the $IMM$ polynomial (or its projection). 
    
    Let $\vecx = \vecx_1 \sqcup \dots \sqcup \vecx_d$ be the variable sets; of which some are ``positive'' and the rest are ``negative'' sets; in particular, let $\vecw=(w_1,\dots,w_d)$ be a tuple of (not necessarily integers) formed by $h$ and $-h_0$ such that all the prefix sums are at most $h$ in absolute value. According to the signs of the $w_i$'s, we classify the variable sets as positive or negative, and fix the sizes of the positive sets to be $n=2^h$, and that of negative sets to be $n_0 = 2^{h_0}$. We set $h=\log n$ and $h_0 = h-c\cdot h/\sqrt{d}$ for an appropriate constant $c \in [0.9,1]$ that we fix later so that $n_0$ is an integer. 
    
    Let $\cM_+$ be the set-multilinear monomials over the positive sets and similarly $\cM_-$ be the ``negative monomials'', both sets ordered in lexicographical order, where we express a monomial in a canonical way by ordering the variables in increasing order of the corresponding variable set indices. Then we define
    \[P := \sum_{\substack{{m_+ \in \cM_+},~{m_- \in \cM_-}\\ \rank(m_+) = \rank(m_-)}} m_+ \cdot m_{-},\]
    where $\rank(m)$ is the position of the positive or negative monomial $m$ in the corresponding monomial set. It is not hard to show that $P \in \VNP$ and $\relrk(P) \ge 2^{-O(h)}=n^{-O(1)}$. By following the same proof of Claim 14 in~\cite{lst1}, we conclude that any product-depth 2 set-multilinear formula $C$ must satisfy $\relrk(C) \le \frac{\size(C)}{n^{\Omega(\sqrt{d})}}$. Hence, $\size(C) \ge n^{\Omega(\sqrt{d})}$ if $C$ computes $P$. 
    
    All that remains to show is that there exists a value of $c\in [0.9,1]$ such that $n_0 = 2^{h_0} = 2^{h-c\cdot h/\sqrt{d}}$ is an integer. This is true because $n_0$ is a monotonous real function of $c$ with range 
    \begin{align*}2^{h-0.9 h/\sqrt{d}} - 2^{h-h/\sqrt{d}} 
    & = 2^{h\paren{1-1/\sqrt{d}}}\paren{2^{\frac{0.1 h}{\sqrt{d}}}-1}\\
    & \ge 2^{h\paren{1-1/\sqrt{d}}}\paren{\frac{0.1 h}{\log e\sqrt{d}}} \\
    & \ge n^{1-1/\sqrt{d}}/\sqrt{d} \tag{assuming $h$ is large enough} \\
    & \ge n^{1-1/\sqrt{d}}/\sqrt{n} \\
   & \ge 1.\end{align*}
    As the range of $n_0$ is at least 1, it must take an integral value for some setting of $c$.
\end{proof}

\section{Geometric intuition behind $\SP$ and $\APP$ measures}\label{sec:geometric}

\noindent In this section, we give the geometric intuition which led to the development of the shifted partials measure $\SP_{k,\ell}$ and the affine projection of partials measure $\APP_{k,n_0}$. We will see, however, that this intuition breaks down when $k$ is {\em large} and prior lower bounds did indeed only use small values of $k$. By using higher values of $k$ here, we depart from this geometric intuition. \\

\noindent{\bf Preliminaries: the algebra-geometry dictionary.}
Let $\mathbb{V} \subseteq \mathbb{F}^n$ be the zero set of polynomials $\inbrace{g_1(\vecx), g_2(\vecx), \ldots, g_s(\vecx)} \subseteq \FF[\vecx]^{=d}$. Let $P_{\ell}(\mathbb{V}) \subseteq \FF[\vecx]^{=\ell}$ be the vector space of polynomials of degree $\ell$ which vanish at every point in $\mathbb{V}$. Now, if $\V$ has a small codimension, then $\mathbb{V}$ {\em contains a lot of} points which in turn imposes a lot of (linear) constraints on $P_{\ell}(\mathbb{V})$ and so intuitively we should have that:  

\begin{proposition} {\bf (Informal/Qualitative).} 
    If codimension of $\mathbb{V}$ is small then dimension of $P_{\ell}(\mathbb{V})$ is also {\em small}. 
\end{proposition}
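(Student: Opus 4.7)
The plan is to formalize this qualitative statement through the Hilbert function of the coordinate ring of $\V$. The starting observation is that (via Hilbert's Nullstellensatz, assuming $\FF$ algebraically closed) $P_\ell(\V)$ is exactly the degree-$\ell$ graded piece of the homogeneous radical ideal $I(\V) \subseteq \FF[\vecx]$, so the short exact sequence
\[ 0 \to P_\ell(\V) \to \FF[\vecx]^{=\ell} \xrightarrow{\mathrm{res}} (\FF[\vecx]/I(\V))^{=\ell} \to 0 \]
yields the identity
\[ \dim P_\ell(\V) \;=\; \binom{n+\ell-1}{n-1} \;-\; h_\V(\ell), \]
where $h_\V(\ell) := \dim (\FF[\vecx]/I(\V))^{=\ell}$ is the Hilbert function of $\V$. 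The task thus reduces to lower bounding $h_\V(\ell)$ by a quantity that grows as $\mathrm{codim}(\V)$ shrinks.

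For this I would invoke the Hilbert--Serre theorem: for $\ell$ sufficiently large, $h_\V(\ell)$ agrees with the Hilbert polynomial of $\V$, which has degree $\dim(\V) = n - \mathrm{codim}(\V)$ in $\ell$ and positive leading coefficient $\deg(\V)/(\dim \V)!$. So small codim forces $h_\V(\ell)$ to be a polynomial of large degree in $\ell$, approaching the ambient $\binom{n+\ell-1}{n-1} \sim \ell^{n-1}/(n-1)!$ in a quantitative way; the resulting shortfall $\dim P_\ell(\V)$ is precisely the ``smallness'' claimed by the proposition. The extreme cases make this transparent: $\mathrm{codim}(\V) = 0$ gives $I(\V) = 0$ and $P_\ell(\V) = 0$, while $\mathrm{codim}(\V) = 1$ with $\V = Z(f)$, $\deg(f) = r$, gives $\dim P_\ell(\V) = \binom{n+\ell-r-1}{n-1}$, trailing the ambient dimension by a polynomial of degree $n-2$ in $\ell$ rather than $n-1$. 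For a clean quantitative version in the nicest case, I would specialize to $\V$ cut out by a regular sequence $g_1, \dots, g_c$ of degrees $d_1, \dots, d_c$, where the Koszul resolution provides the explicit Hilbert series
\[ \sum_{\ell \ge 0} h_\V(\ell)\, t^\ell \;=\; \frac{\prod_{i=1}^{c} (1 - t^{d_i})}{(1-t)^n}, \]
from which $\dim P_\ell(\V)$ is read off as an explicit alternating sum of binomial coefficients that manifestly shrinks (relative to $\binom{n+\ell-1}{n-1}$) as $c$ decreases.

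The main obstacle is that the proposition is designed to be only qualitative: no single clean inequality captures it for all $\V$, since the precise sense of ``small'' depends on the degrees of the defining equations, on whether the generating ideal is radical, and on whether $\V$ is a complete intersection. For the purposes of this section, however, the qualitative picture suffices --- it is the heuristic that motivates both the $\SP$ and $\APP$ measures, by identifying polynomials computed by ``structured'' circuits with small-codim varieties via their partial-derivative spaces. As the subsequent discussion in the appendix makes explicit, this intuition is meaningful only when the order of derivatives $k$ is small, so that the variety cut out by the $k$-th partials genuinely has small codimension; for $k \gg t$ the geometric picture collapses, and it is precisely in that regime that the paper's direct combinatorial analysis via the $\entropy$ quantity takes over.
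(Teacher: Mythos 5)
The paper offers no proof of this proposition at all: it is explicitly labelled \emph{Informal/Qualitative}, and the surrounding text gives only the one-line heuristic that a small-codimension $\V$ ``contains a lot of points,'' which in turn ``imposes a lot of (linear) constraints'' on $P_\ell(\V)$. The closest the paper comes to a formal justification is a footnote on the \emph{next} proposition, which simply remarks that an asymptotic version follows from the Hilbert polynomial of $\V$.

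Your proposal makes exactly that footnoted route explicit, and is essentially correct: identifying $P_\ell(\V)$ with the degree-$\ell$ graded piece of $I(\V)$ (valid here since the $g_i$'s are homogeneous, so $\V$ is a cone), the short exact sequence gives $\dim P_\ell(\V) = \binom{n+\ell-1}{n-1} - h_\V(\ell)$, and Hilbert--Serre then ties the growth order of $h_\V(\ell)$ to $\dim(\V)$. This is a genuinely more rigorous formalization than anything in the paper; what it buys you is a precise sense in which ``small codimension $\Rightarrow$ small $P_\ell(\V)$'' can fail to be a single clean inequality, which you correctly flag at the end. Two small nitpicks: (i) since $\V$ is a cone, the Hilbert polynomial has degree $\dim(\V)-1$ in $\ell$ (the projective dimension), not $\dim(\V)$ --- your hypersurface example actually illustrates this correctly, where $h_\V(\ell)$ has degree $n-2$; (ii) the Koszul/complete-intersection specialization is a nice quantitative illustration but is a stronger hypothesis than the paper invokes (the paper only assumes $\V$ is cut out by \emph{some} degree-$d$ forms). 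Neither issue undermines the qualitative conclusion, and your closing paragraph correctly situates the heuristic relative to the paper's actual regime of interest, $k \gg t$, where this geometric picture indeed breaks down.
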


\noindent Now notice that for all $\ell \geq d$, the set of polynomials $\vecx^{(\ell-d)} \cdot \inbrace{g_1(\vecx), g_2(\vecx), \ldots, g_s(\vecx)}$ is contained in $P_{\ell}(\mathbb{V})$ and so intuitively we should have:

\begin{proposition} {\bf (Informal/Qualitative).} 
    If codimension of $\mathbb{V}$ is small then dimension of $\vecx^{(\ell-d)} \cdot \inbrace{g_1(\vecx), g_2(\vecx), \ldots, g_s(\vecx)}$ is also {\em small}\footnote{
        An asymptotic statement that captures this qualitative intuition is obtained through the notion of the Hilbert polynomial of the variety $\mathbb{V}$.
    }. 
\end{proposition}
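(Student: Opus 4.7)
\medskip

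\noindent\textbf{Proof plan.} The statement is qualitative, so the goal of the proof is to make precise the sense in which ``small codimension of $\V$'' forces the shift space $\spacespanned{\vecx^{\ell-d}\cdot\{g_1,\dots,g_s\}}$ to have small dimension, and the natural device for this is the Hilbert function of the coordinate ring of $\V$. Concretely, let $I=\langle g_1,\dots,g_s\rangle\subseteq R:=\FF[\vecx]$ be the homogeneous ideal generated by the $g_i$'s, and let $I_\ell$ denote its degree-$\ell$ graded piece. Since $\vecx^{\ell-d}\cdot g_i\in I_\ell$ for every $i$, we immediately have
\[
\dim\spacespanned{\vecx^{\ell-d}\cdot\{g_1,\dots,g_s\}}\ \le\ \dim_\FF I_\ell\ =\ M(n,\ell)-H_{R/I}(\ell),
\]
where $H_{R/I}(\ell)=\dim_\FF (R/I)_\ell$ is the Hilbert function. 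Thus upper bounding the shift space reduces to lower bounding $H_{R/I}(\ell)$.

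\smallskip

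\noindent The next step is to replace $I$ by its radical $\sqrt{I}$, whose projective zero set (essentially) coincides with $\V$. Passing from $R/I$ to $R/\sqrt{I}$ can only decrease the Hilbert function, so it suffices to lower bound $H_{R/\sqrt{I}}(\ell)$. Here one invokes the classical fact (see, e.g., Hartshorne or Eisenbud) that the Hilbert function of the homogeneous coordinate ring of a projective variety of dimension $r$ agrees, for all large enough $\ell$, with a polynomial $P_{\V}(\ell)$ of degree $r$ and leading coefficient $\tfrac{\deg(\V)}{r!}$. Here $r=n-1-\mathrm{codim}_{\mathbb{P}^{n-1}}(\V)$ (or $r=n-\mathrm{codim}(\V)$ in the affine picture), so small codimension translates directly to large $r$, and therefore to $H_{R/I}(\ell)=\Theta(\ell^{r})$ as $\ell\to\infty$.

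\smallskip

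\noindent Putting the two ingredients together, when the codimension of $\V$ is small (say $\mathrm{codim}(\V)=c\ll n$), we obtain
\[
\dim\spacespanned{\vecx^{\ell-d}\cdot\{g_1,\dots,g_s\}}\ \le\ M(n,\ell)\ -\ \Omega\!\left(\ell^{\,n-c}\right),
\]
which shows that the shift space occupies a vanishingly small fraction of the full ambient space $\FF[\vecx]^{=\ell}$ of dimension $M(n,\ell)=\binom{n+\ell-1}{\ell}$. This is precisely the quantitative content promised by the footnote, and it matches the intuition that passing from $\{g_1,\dots,g_s\}$ to the shifted space $\vecx^{\ell-d}\cdot\{g_1,\dots,g_s\}$ is an injection into the ideal of the (low codimension) variety $\V$, which is itself ``small'' relative to $R_\ell$.

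\smallskip

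\noindent \textbf{Main obstacle.} The only real subtlety is the standard but nontrivial fact that the Hilbert function of a graded coordinate ring eventually coincides with a polynomial of degree equal to the dimension of the variety, together with the effective constants hidden in the $\Omega(\cdot)$ above. For the low-depth circuit application one in fact needs this bound to kick in already at moderate $\ell$ (not just asymptotically), which is why the geometric intuition only yields useful lower bounds when the order of derivatives $k$ is small enough that the generators $\partialf^{k}(Q_j)$ still cut out a variety of small codimension; when $k$ is large (e.g., $k\approx d/2$ as in our work) this geometric picture degenerates, motivating the more combinatorial/residue-based analysis used in the body of the paper.
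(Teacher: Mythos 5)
Your approach — equate the shift space with $I_\ell$, express its dimension via the graded Hilbert function $H_{R/I}$, pass to the radical, and invoke the Hilbert polynomial — is exactly the reading that the paper's footnote invites, and the paper itself offers no proof of this informal proposition beyond the observation that $\vecx^{\ell-d}\cdot\{g_1,\dots,g_s\}\subseteq P_\ell(\mathbb{V})$. However, there is an off-by-one in your exponent. The graded Hilbert function of the homogeneous coordinate ring of a projective variety of dimension $r$ eventually agrees with a polynomial of degree $r$, and for a cone $\mathbb{V}\subseteq\FF^n$ of affine codimension $c$ the associated projective variety in $\mathbb{P}^{n-1}$ has dimension $r=n-1-c$; the parenthetical ``$r=n-\text{codim}(\mathbb{V})$ in the affine picture'' conflates the Krull dimension of $R/I$ (which is $n-c$) with the degree of its graded Hilbert polynomial (which is one less). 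So the bound should read $M(n,\ell)-\Omega(\ell^{\,n-1-c})$. As a sanity check: when $c=0$ one has $\mathbb{V}=\FF^n$, $I=0$, and $H_{R/I}(\ell)=M(n,\ell)=\Theta(\ell^{n-1})$, matching $n-1-c$ but not $n-c=n$.

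More importantly, the conclusion drawn from the bound — that ``the shift space occupies a vanishingly small fraction of $\FF[\vecx]^{=\ell}$'' — is not supported by it and is in fact false for any fixed $c\ge 1$. Since $M(n,\ell)=\Theta(\ell^{n-1})$, the subtracted term $\Omega(\ell^{\,n-1-c})$ is itself a vanishing fraction of $M(n,\ell)$ as $\ell\to\infty$, so the upper bound $M(n,\ell)-\Omega(\ell^{\,n-1-c})=M(n,\ell)\bigl(1-O(\ell^{-c})\bigr)$ permits the shift space to fill up nearly all of $\FF[\vecx]^{=\ell}$. The correct qualitative content is only that the shift space is bounded \emph{away} from the full space by an amount of order $\ell^{\,n-1-c}$ — which grows as $c$ shrinks — and what actually drives the lower bounds is a comparison against the $\SP$ value of a generic polynomial, not against $M(n,\ell)$. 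Your closing paragraph does correctly locate the real obstruction: the Hilbert-polynomial asymptotics need not hold at the moderate values of $\ell$ used in applications (and once $k$ exceeds the number of factors the variety $\V$ can even be empty), which is precisely why the body of the paper abandons this geometric picture in favour of the residue-based combinatorial analysis.
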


\noindent We also have that 
\begin{fact}
    If $L \subseteq \FF^n$ is a random linear subspace, then the codimension of $\VV \cap L$ inside $L$ is the same as the codimension of $\VV$ inside $\FF^n$.  
\end{fact}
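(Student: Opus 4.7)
The plan is to establish both inequalities separately. Set $c := \mathrm{codim}_{\FF^n}(\VV)$ and let $m$ denote the dimension of the subspace $L$, with $m \geq c$ (otherwise $\VV \cap L$ is generically just the origin and the claim is vacuous). The goal is to show that $\dim(\VV \cap L) = m - c$ for $L$ in a non-empty Zariski-open subset of the Grassmannian $\mathrm{Gr}(m, n)$.

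First I would establish the easy inequality $\mathrm{codim}_L(\VV \cap L) \leq c$, which holds for \emph{every} $L$. Since $\VV$ is the zero locus of polynomials $g_1, \dots, g_s$ and has codimension $c$, one can extract $c$ of the $g_i$'s (say at a smooth point of $\VV$) whose Jacobian has full rank $c$; the subvariety $\VV \cap L$ is then the zero locus in $L$ of the restrictions $g_{i_1}|_L, \dots, g_{i_c}|_L$, so Krull's height theorem yields $\mathrm{codim}_L(\VV \cap L) \leq c$. Equivalently, this is the standard affine intersection bound $\dim(A \cap B) \geq \dim A + \dim B - n$ specialized to $A = \VV$, $B = L$.

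The matching bound $\mathrm{codim}_L(\VV \cap L) \geq c$ for generic $L$ is the substantial part. I would form the incidence variety $\cI := \{(x, L) \in \VV \times \mathrm{Gr}(m, n) : x \in L\}$ and consider its two projections $\pi_1 : \cI \to \VV$ and $\pi_2 : \cI \to \mathrm{Gr}(m, n)$. Every fiber of $\pi_1$ over a nonzero $x$ consists of the $m$-planes containing $x$, which is isomorphic to $\mathrm{Gr}(m-1, n-1)$ and thus has dimension $(m-1)(n-m)$. This identifies $\dim \cI = (n - c) + (m-1)(n-m)$. Since $\dim \mathrm{Gr}(m, n) = m(n-m)$ and $\pi_2$ is dominant (as $m \geq c$ implies the expected dimension is non-negative), the fiber-dimension theorem says the generic fiber has dimension exactly $\dim \cI - \dim \mathrm{Gr}(m, n) = m - c$. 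Upper semicontinuity of fiber dimension then yields a non-empty Zariski-open subset of $\mathrm{Gr}(m, n)$ on which $\dim(\VV \cap L) = m - c$ holds.

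The main obstacle will be handling $\VV$ that is reducible or non-equidimensional: the dimension count above implicitly treats $\VV$ as irreducible, and the general case requires applying the argument to each top-dimensional component of $\VV$ separately and intersecting the resulting open sets (the lower-dimensional components contribute only to lower-dimensional pieces of $\VV \cap L$ and do not affect the top dimension). A related subtlety is the meaning of ``random'' over a finite base field $\FF$: a non-empty Zariski-open subset of $\mathrm{Gr}(m, n)$ may miss all $\FF$-rational subspaces. This is addressed either by passing to a sufficiently large extension of $\FF$, or by a Schwartz--Zippel argument on the Plücker coordinates of $L$ to guarantee that a uniformly random rational $L$ lies in the generic open set with high probability.
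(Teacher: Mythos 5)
The paper does not actually prove this Fact: it is stated without argument in the informal appendix on geometric intuition, where it serves only as motivating folklore for the $\SP$ and $\APP$ measures and is never invoked in any lower-bound proof. There is therefore no paper argument to compare against. Your incidence-variety sketch (fiber over a point of $\VV$ is a Grassmannian $\mathrm{Gr}(m-1,n-1)$, count dimensions, then apply the theorem on dimension of fibers to the other projection) is the standard textbook route, and with the two obstacles you already flag handled, it is correct.

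One step in your write-up of the ``easy'' inequality $\mathrm{codim}_L(\VV\cap L)\le c$ is wrong as stated, though your fallback sentence is the right argument. Extracting $c$ of the defining polynomials $g_{i_1},\dots,g_{i_c}$ does not give you that $\VV\cap L$ equals their common zero locus in $L$: those $c$ polynomials cut out a variety $\VV'\supseteq\VV$, so Krull's height theorem controls $\mathrm{codim}_L(\VV'\cap L)$, not $\mathrm{codim}_L(\VV\cap L)$, which could a priori be larger. What you actually need is the affine/projective intersection-dimension theorem, $\dim(\VV\cap L)\ge\dim\VV+\dim L-n$, which is the ``equivalent'' reformulation you mention (but it is not equivalent to the Jacobian argument; it is the correct one). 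That inequality requires $\VV\cap L$ to be non-empty, which here is automatic because the $g_i$ are homogeneous, so $\VV$ is a cone and $0\in\VV\cap L$ for every $L$ — this is worth stating explicitly, since without it the claim ``holds for every $L$'' would fail. Your other two caveats — applying the count to each top-dimensional component when $\VV$ is reducible or non-equidimensional, and interpreting ``random'' over a small field by passing to $\overline{\FF}$ or running Schwartz--Zippel on Pl\"ucker coordinates — are exactly the right points to handle, and are consistent with the paper treating the statement as an informal heuristic rather than a theorem it relies on.
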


\noindent We can use the above intuition to formulate the shifted partials measure $\SP_{k,\ell}$ and the affine projection of partials measure $\APP_{k,n_0}$ as follows. \\

\noindent{\bf Formulating the measures.}
Arithmetic circuits and formulas admit various depth reductions and in particular, can be reduced to depth $4$ without increasing the size {\em too much}. Consider a term $T$ of such a depth $4$ formula. Specifically, let $T(\vecx) \in \F[\vecx]$ be a product of $t$ polynomials, i.e. $T = Q_{1}(\vecx) \cdot Q_{2}(\vecx) \cdot \ldots \cdot Q_{t}(\vecx) $. First observe that for any $k<t$, the zero set of $k$-th partials of $T$, denoted $\VV(\partialf^k T)$ is of codimension $(k+1)$. The above intuition then suggests that  
    $$ \SP_{k,\ell} := \dim \spacespanned{\vecx^{\ell} \cdot \partialf^{k} T} $$
should also be small. Also notice that for a subspace $L \subseteq \FF^n$, $\pi_L\paren{\partialf^k T}$ is the set of polynomials whose zero set is $L \cap \VV(\partialf^k T) $
and so for such a $T$ 
    $$ \APP_{k,n_0}(P) := \max \limits_{L : \vecx \to \spacespanned{\vecz} } \dim \spacespanned{ \pi_L\paren{\partialf^k P} } $$
should be small as well. Prior work \cite{GKKS14,KayalSS14,GargKS20} showed that when the degree of the factors of $T$ are also small, then these measures are indeed significantly small for such a $T$ (as compared to a random polynomial), thereby leading to lower bounds for certain classes of depth $4$ formulas (and also for some subclasses of higher depth formulas via appropriate depth reductions). But notice that the geometric intuition behind these measures holds only when $k$ is less than the number of factors $t$. In particular when $k \geq t$ then $\VV(\partialf^k T)$ could be empty\footnote{
    For example when $T = (x_1^3 + x_2^3 + \ldots + x_n^3)^t$, then $\VV(\partialf^k T)$ is empty for all $k \geq t$.
} and so its no longer clear why either $\SP_{k, \ell}(T) $ or $\APP_{k, n_0}(T)$ should be small in this regime. Our conceptual contribution is that even when $k>t$ then $\SP_{k, \ell}(T)$ and $\APP_{k, n_0}(T)$ can be small (this depends on $\gamma(k)$) and this can be used to more directly obtain the lower bounds in \cite{lst1}.

\end{document}